\pdfoutput=1
\documentclass[12pt]{article}
\usepackage[utf8]{inputenc}
\usepackage[english]{babel}
\usepackage{amsmath,amsthm,amssymb}
\usepackage{graphicx,psfrag,epsf}
\usepackage{enumerate}
\usepackage{natbib}
\usepackage{url} 
\usepackage{xcolor}
\usepackage{tikz}
\usetikzlibrary{automata}
\usepackage{bm}
\usepackage{subfig}
\usepackage{xr-hyper}
\usepackage{amsfonts}
\usepackage{adjustbox}
\usepackage{booktabs}
\usepackage{fancyhdr}
\usepackage{siunitx}
\usepackage{etoc}
\usepackage[colorlinks=true,linkcolor=red,citecolor=blue]{hyperref}
\usepackage{multirow}
\usepackage{appendix}
\usepackage{mathtools}
\usepackage{algorithm}
\usepackage{algpseudocode}
\usepackage{array}
\usepackage{tabularx}
\newcommand{\blind}{1}

\makeatletter
\newcommand*{\addFileDependency}[1]{
\typeout{(#1)}
\@addtofilelist{#1}
\IfFileExists{#1}{}{\typeout{No file #1.}}
}\makeatother

\newcommand{\mP}{\mathbb{P}}
\newcommand{\mG}{\mathbb{G}}

\newcommand{\calX}{\mathcal{X}}
\newcommand{\bbeta}{\boldsymbol{\beta}}

\newcommand{\bGamma}{\boldsymbol{\Gamma}}
\newcommand{\bX}{\mathbf{X}}
\newcommand{\bb}{\mathbf{b}}

\newcommand{\bx}{\mathbf{x}}

\newcommand{\bH}{\mathbf{H}}

\newcommand{\bh}{\mathbf{h}}

\newcommand{\bZ}{\mathbf{Z}}

\newcommand{\mO}{\mathcal{O}}

\newcommand{\bM}{\mathbf{M}}

\newcommand{\bv}{\mathbf{v}}

\newcommand{\lb}{\left\{}
\newcommand{\rb}{\right\}}
\newcommand{\lsb}{\left(}
\newcommand{\rsb}{\right)}

\newcommand{\calL}{\mathcal{L}}
\newcommand{\bE}{\mathbb{E}}
\newcommand{\bP}{\mathbf{P}}
\newcommand{\bG}{\mathbf{G}}
\newcommand{\lessp}{\lesssim_\mP}
\newcommand{\textop}{\text{op}}
\newtheorem{assumption}{Assumption}%
\newtheorem{remark}{Remark}%
\newtheorem{theorem}{Theorem}%
\newtheorem{prop}{Proposition}%
\newtheorem{example}{Example}%

\definecolor{orange}{rgb}{1,0.5,0}

\newcommand{\bt}{\boldsymbol{t}}

\newcommand{\lbb}{\left[}
\newcommand{\rbb}{\right]}
\newcommand{\calT}{\mathcal{T}}
\newcommand{\calG}{\mathcal{G}}

\newtheorem{lemma}{Lemma}

\makeatother
\begin{document}

\def\spacingset#1{\renewcommand{\baselinestretch}%
{#1}\small\normalsize} \spacingset{1}


\if1\blind
{
  \title{\bf Nonparametric heterogeneous causal mediation with orthogonal machine learning}
 \author{
   Jiaqi Tong$^{1}$, Yi Zhao$^{2}$, Bhramar Mukherjee$^{1}$, and Fan Li$^{1,*}$
   \vspace{0.2cm} 
   \\
   $^{1}$Department of Biostatistics, Yale School of Public Health, New\\ Haven, CT, USA \\
   $^{2}$Department of Biostatistics and Health Data Science, Indiana\\ University School of Medicine\\
   $^{*}$\emph{email}: fan.f.li@yale.edu
 }

  \maketitle
} \fi

\if0\blind
{
  \bigskip
  \bigskip
  \bigskip
  \begin{center}
    {\LARGE\bf Nonparametric heterogeneous causal mediation with orthogonal machine learning}
\end{center}
  \medskip
} \fi

\bigskip
\begin{abstract}
Causal mediation analysis decomposes the total effect of an intervention on an outcome into a direct pathway and an indirect pathway transmitted through a mediator, but standard methods typically summarize these pathways using population average effects. In many applications, however, the indirect effect may vary substantially across individual profiles. We propose an orthogonal statistical learning framework for estimating heterogeneous causal mediation effects conditional on individual characteristics. The method constructs a class of weighted Neyman orthogonal losses motivated by influence function representations of weighted population average effects. These losses directly target conditional mediation estimands whose minimizers are locally insensitive to nuisance estimation errors. We implement the resulting learners under a two-stage meta-learning framework with regularized linear sieves as second-stage smoothers, and introduce a combination of targeted learning and orthogonal learning designed to improve stability when mediator density ratios are unstable. We establish $L^2$ and uniform limit theory and develop pointwise and uniform confidence bands. Simulation studies show that the proposed orthogonal learners reduce the mean integrated squared error by more than $50\%$ compared with existing model-based methods and provide computationally efficient inference in nonlinear settings. The CARDIA, PSACR, and STAR analyses reveal heterogeneous mediated effects across cardiometabolic, psychological, and educational settings.
\end{abstract}

\noindent%
{\it Keywords:} Causal mediation, conditional natural indirect effect, precision medicine, targeted learning, overlap weights, weighted orthogonal learning
\vfill

\newpage
\spacingset{1.8} 

\section{Introduction}

\subsection{Background and related literature}

Causal mediation analysis decomposes the total effect of an intervention on an outcome into an indirect effect transmitted through an intermediate variable, or \emph{mediator}, and a direct effect operating through pathways that bypass the mediator. It is widely used in applied studies to assess, for example, whether an exposure affects a later health outcome partly through an intermediate biomarker, such as whether cigarette smoking influences subsequent systolic blood pressure through abdominal intermuscular adipose tissue volume. The counterfactual outcomes framework has provided estimand definitions, identification formulas, and semiparametric efficiency theory for direct and indirect mediation effects \citep{imai2010identification,tchetgen2012semiparametric}. Much of this literature targets population average mediation effects. In many scientific applications, however, the mediated pathway itself may vary across individual profiles. Across different research contexts, exploring such variation can help identify which individuals are most affected through a particular mechanism and inform the design of more personalized and targeted interventions.

For estimating treatment effect heterogeneity without mediators, debiased machine learning and orthogonal statistical learning provide flexible two-stage meta-learning procedures for conditional average treatment effects by constructing Neyman orthogonal losses that directly target the causal contrast while reducing sensitivity to first-stage nuisance estimation error \citep{nie2021quasi,kennedy2023towards,foster2023orthogonal}. In contrast, optimal estimators for heterogeneous mediation analysis remain less developed. Existing work has primarily adopted model-based or Bayesian approaches. Among model-based methods, \cite{zhao2025estimation} and \cite{xue2022heterogeneous} proposed LASSO-regularized ordinary least squares estimators and fused-LASSO-regularized estimators, respectively, based on linear structural equation models (LSEMs), with an emphasis on high-dimensional covariates. In addition, \cite{wang2021causal} considered mixture models with parameters estimated using EM algorithms, whereas \cite{li2026modeling} considered a Cox proportional hazards model for survival outcomes. Within the Bayesian framework, \cite{ting2025estimating} and \citet{liu2026heterogeneous} employed Bayesian additive regression trees to estimate conditional mediation effects. Other related methods consider either a simpler heterogeneous mediation estimand conditional on a single categorical variable \citep{qin2017weighting} or a plug-in approach with specific nuisance estimation methods \citep{huan2024individualized}. However, the performance of model-based and simple plug-in approaches may depend heavily on model assumptions and the quality of nuisance estimation, and can deteriorate when nuisance functions are misspecified or too complex to estimate efficiently. This issue is especially pronounced in mediation analysis because the estimands depend on several nuisance components, including the treatment propensity score, the conditional mediator densities, and the conditional outcome regressions. Moreover, Bayesian methods may lack strong asymptotic guarantees, and computationally efficient pointwise and simultaneous inference procedures without strong model assumptions generally remain unavailable.

\subsection{Our contributions}

We develop a flexible two-stage causal machine learning framework for estimating heterogeneous causal mediation effects conditional on baseline characteristics. Our primary contributions are threefold. First, we propose a class of weighted orthogonal learners for conditional mediation effects, formulated as minimizers of Neyman orthogonal loss functions whose directional derivatives with respect to the target estimand are locally insensitive to perturbations of the nuisance functions around their true values \citep{foster2023orthogonal}. Consequently, the orthogonal learners avoid linear error propagation and depend only on second-order first-stage nuisance estimation errors. We exemplify this class through three interpretable choices of the weight function---inverse probability weight, treated weight and overlap weight; the latter two choices can improve stability when positivity is empirically weak. Second, to improve the stability of inverse mediator density weighting, we propose combining targeted learning with the orthogonal learners, which can be viewed as a nonparametric extension of commonly used stabilization techniques for studying population average mediation effects. Third, we establish the asymptotic theory and computationally efficient statistical inference procedures for the more complex mediation estimands. Leveraging regularized least squares sieves \citep{chen2007large}, we establish $L^2$ and uniform limit theory for the proposed class of orthogonal learners and show that they can achieve oracle efficiency, as if the nuisance functions were known. Moreover, we show that the empirical loss tailored to nonparametric targeted learning converges uniformly to the orthogonal loss; therefore, targeted learning improves stability without breaking insensitivity to first-stage nuisance function estimation.
To quantify uncertainty, we further construct pointwise and uniform confidence bands, where {the former are} {based on a closed-form calculation} and {the latter are} {based on a computationally efficient Gaussian bootstrap}.

Our simulation studies demonstrate that the proposed orthogonal learners are substantially less sensitive to nuisance model complexity and more robust to working model misspecification than existing model-based learners. In nonlinear settings, they reduce mean integrated squared error by more than $50\%$ and deliver reliable pointwise and uniform coverage. The primary CARDIA application examines heterogeneity in the pathway from cigarette smoking through intermuscular adipose tissue volume to later systolic blood pressure, while supplementary PSACR-002 and Tennessee STAR applications extend the analysis to psychological and educational settings.

The remainder of this article is organized as follows. Section~\ref{sec:notation-assumptions-identification} introduces the estimands and identification results and 
provides a review of a simple plug-in T-learner. Section~\ref{sec:OSL} develops the proposed weighted orthogonal learning framework. Section~\ref{sec:targeted-orthogonal-learning} introduces targeted orthogonal learners designed to improve finite-sample stability. Section~\ref{sec:6-repre-ortho-learners} summarizes representative learners induced by different weighting choices. Section~\ref{sec:asymptotic-inference} establishes asymptotic theory and develops pointwise and uniform inference. Sections~\ref{sec:simulation} and~\ref{sec:data-application} present simulation studies and an empirical application, respectively. Section~\ref{sec:conclusion} concludes.

\section{Notation, assumptions, and identification}\label{sec:notation-assumptions-identification}
We consider a study with $n$ individuals and denote the treatment indicator as $A$, where $A=1$ represents the treated condition and $A=0$ represents the control condition. Let $\bX\in\mathcal{X}\subseteq\mathbb{R}^{p}$ denote the vector of measured pre-treatment covariates. 
Suppose a scalar outcome $Y\in\mathbb{R}$ is observed, and a scalar mediator $M\in\mathcal{M}\subseteq\mathbb{R}$ is measured prior to the final outcome. Thus, the analyst collects $n$ i.i.d. copies of the observed data vector $\mO=(\bX^\top,A,M,Y)^\top$. Let $M(a)$ denote the counterfactual mediator that would have been observed under treatment value $A=a$, and let $Y(a,m)$ denote the counterfactual outcome that would have been observed under mediator value $M=m$ and treatment value $A=a$. The observed mediator and outcome satisfy the consistency assumption: $M(a)=M$ if $A=a$ and $Y(a,m)=Y$ if $A=a$ and $M=m$. We invoke the composition assumption such that $Y(a):=Y(a,M(a))$; that is, the counterfactual outcome under treatment value $A=a$ coincides with the potential outcome that would have been observed under treatment value $A=a$ and the mediator taking the natural value it would have been observed to take under the same treatment value. Causal mediation analysis studies the decomposition of the total effect (TE) as the sum of the natural direct effect (NDE) that bypasses the mediator and the natural indirect effect (NIE) that operates through the mediator \citep{imai2010identification}. Formally, we define $\text{TE}=\bE\{Y(1)-Y(0)\}$, $\text{NDE}=\bE\{Y(1,M(0))-Y(0,M(0))\}$, $\text{NIE}=\bE\{Y(1,M(1))-Y(1,M(0))\}$,
and obtain the canonical decomposition as $\text{TE}=\text{NDE}+\text{NIE}$.

To quantify treatment effect heterogeneity in causal mediation analysis, we focus on the following conditional analogues: $\text{CTE}(\bX)=\bE\{Y(1)-Y(0)|\bX\}$, $\text{CNDE}(\bX)=\bE\{Y(1,M(0))-Y(0,M(0))|\bX\}$, and $\text{CNIE}(\bX)=\bE\{Y(1,M(1))-Y(1,M(0))|\bX\}$.
Here, CTE represents the conditional total effect and is of central interest in treatment effect heterogeneity estimation \citep{kennedy2023towards}; CNDE represents the conditional natural direct effect, quantifying the personalized natural direct effect conditional on individual profiles $\bX$; and CNIE represents the conditional natural indirect effect, quantifying the personalized natural indirect effect that operates through the mediator. Finally, the canonical decomposition carries over as $\text{CTE}(\bX)=\text{CNDE}(\bX)+\text{CNIE}(\bX)$.
To point-identify the conditional causal mediation effects defined above, we invoke the following set of standard assumptions.

\begin{assumption}[\emph{Sequential Ignorability}]\label{assump:SI}
For all $a,a'\in\{0,1\}$ and all $m\in\mathcal{M}$, (i) $\{Y(a',m),M(a)\}\perp A|\bX$ and (ii) $Y(a',m)\perp M(a)|A=a,\bX$ hold. 
\end{assumption}


\begin{assumption}
\label{assump:positivity}
For every $\bx\in\mathcal X$, $0<\pi(\bx):=\Pr(A=1|\bx)<1$. Moreover, the conditional distribution of the mediator under control is
absolutely continuous with respect to its conditional distribution under
treatment: $P_{M\mid A=0,\bX=\bx}
\ll
P_{M\mid A=1,\bX=\bx}$. 
\end{assumption}

Sequential ignorability (SI) consists of two parts. SI(i) assumes that treatment is as if randomized given measured baseline covariates, and {SI(ii) rules out post-treatment mediator-outcome confounding, including that induced by treatment}.
Assumption \ref{assump:positivity} requires (i) treatment positivity, ensuring treatment overlap, and (ii) absolute continuity of the control mediator distribution with respect to the treated distribution, permitting a change of measure from treatment to control. The following proposition shows that, under Assumptions \ref{assump:SI}-\ref{assump:positivity}, the conditional causal mediation effects are point-identifiable.

\begin{prop}\label{prop:identification-CCME}
    Under Assumptions \ref{assump:SI}-\ref{assump:positivity}, the structural parameters $\theta_{a_1a_2}(\bX):=\bE\{Y(a_1,M(a_2))|\bX\}$ with $(a_1,a_2)\in\{(0,0),(1,0),(1,1)\}$ are identified by $\theta_{a_1a_2}(\bX)=\int \mu_{a_1}(m,\bX)f(m|a_2,\bX)dm$, where $\mu_a(M,\bX)=\bE\{Y|A=a,M,\bX\}$ is the conditional outcome mean and $f(m|a,\bX)$ is the conditional density of the mediator given treatment and baseline covariates. The conditional causal mediation effects are additive combinations of these parameters.
\end{prop}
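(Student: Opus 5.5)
I will follow the standard mediation formula argument of \cite{imai2010identification}, carried out conditionally on $\bX$. Fix $(a_1,a_2)\in\{(0,0),(1,0),(1,1)\}$. The plan is to disintegrate the nested counterfactual over the distribution of $M(a_2)$ given $\bX$, and then use each part of Assumption~\ref{assump:SI} to replace counterfactual quantities by observed-data regressions. Assumption~\ref{assump:positivity} guarantees that every conditional expectation written along the way is well defined.

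\emph{Step 1.} By iterated expectations,
\[
\theta_{a_1a_2}(\bX)=\int \bE\{Y(a_1,m)\mid M(a_2)=m,\bX\}\,dP_{M(a_2)\mid \bX}(m).
\]
\emph{Step 2 (mediator law).} By SI(i), $M(a_2)\perp A\mid \bX$. Hence $P_{M(a_2)\mid\bX}=P_{M(a_2)\mid A=a_2,\bX}$, and by consistency this equals $P_{M\mid A=a_2,\bX}$, with density $f(m\mid a_2,\bX)$. Treatment positivity ensures that $\{A=a_2\}$ has positive probability given $\bX$.

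\emph{Step 3 (outcome regression).} I show that $\bE\{Y(a_1,m)\mid M(a_2)=m,\bX\}=\mu_{a_1}(m,\bX)$ for $P_{M\mid A=a_2,\bX}$-almost every $m$, via the chain
\begin{align*}
\bE\{Y(a_1,m)\mid M(a_2)=m,\bX\}
&=\bE\{Y(a_1,m)\mid A=a_1,M(a_2)=m,\bX\}\\
&=\bE\{Y(a_1,m)\mid A=a_1,\bX\}\\
&=\bE\{Y(a_1,m)\mid A=a_1,M(a_1)=m,\bX\}\\
&=\bE\{Y\mid A=a_1,M=m,\bX\}.
\end{align*}
The first equality uses the joint independence $\{Y(a_1,m),M(a_2)\}\perp A\mid\bX$ from SI(i). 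The second uses SI(ii) with $a=a_1$ and $a'=a_1$, taking the mediator index in SI(ii) to be $a_2$; it states $Y(a_1,m)\perp M(a_2)\mid A=a_1,\bX$. The third uses SI(ii) again, now with mediator index $a_1$. The last equality is consistency.

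\emph{Main obstacle: the cross-world case $(a_1,a_2)=(1,0)$.} The conditional mean $\mu_1(m,\bX)$ is pinned down only on the support of $P_{M\mid A=1,\bX}$. The final integral, however, is taken against $P_{M\mid A=0,\bX}$. This is exactly where the second half of Assumption~\ref{assump:positivity} enters. Because $P_{M\mid A=0,\bX}\ll P_{M\mid A=1,\bX}$, any $P_{M\mid A=1,\bX}$-null set on which $\mu_1$ is undefined or arbitrary is also $P_{M\mid A=0,\bX}$-null. So the integral $\int\mu_1(m,\bX)f(m\mid 0,\bX)dm$ does not depend on the version of $\mu_1$ chosen.

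The same absolute continuity is needed when applying SI(ii) at the third equality. Conditioning on $M(a_1)=m$ is meaningful only for $m$ in the support of $P_{M\mid A=a_1,\bX}$, and absolute continuity ensures that the $m$ values charged by $P_{M\mid A=0,\bX}$ lie in that support almost surely. I would make this rigorous by writing the chain of equalities in terms of regular conditional distributions and stating each equality almost everywhere. For $a_1=a_2$ this issue does not arise.

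\emph{Step 4 (conclusion).} Combining Steps 1--3 gives $\theta_{a_1a_2}(\bX)=\int\mu_{a_1}(m,\bX)f(m\mid a_2,\bX)\,dm$. Composition gives $Y(a)=Y(a,M(a))$, so $\text{CTE}=\theta_{11}-\theta_{00}$, $\text{CNDE}=\theta_{10}-\theta_{00}$, and $\text{CNIE}=\theta_{11}-\theta_{10}$. These are additive combinations of the $\theta_{a_1a_2}$, which establishes the final claim of Proposition~\ref{prop:identification-CCME}.
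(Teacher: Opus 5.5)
Your overall structure matches the paper's proof: iterated expectations over the law of $M(a_2)$ given $\bX$, then SI and consistency to reduce the inner regression and the mediator law. The gap is in the second equality of your Step~3, which is mis-justified in exactly the cross-world case. SI(ii), as stated in Assumption~\ref{assump:SI}, reads $Y(a',m)\perp M(a)\mid A=a,\bX$. The counterfactual index of the mediator and the conditioning treatment arm are the same $a$, so you cannot set $a=a_1$ while taking the mediator index to be $a_2$. For $(a_1,a_2)=(1,0)$ your step needs $Y(1,m)\perp M(0)\mid A=1,\bX$. The relevant instance of SI(ii), however, is $Y(1,m)\perp M(0)\mid A=0,\bX$. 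So the case you singled out as the main obstacle is precisely where the assumption you cite does not deliver what you use. The version/support issue you discuss there is genuine but secondary, and it is a good point that the paper leaves implicit.

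The gap is easy to close, in either of two ways.
\begin{enumerate}
\item Note that by SI(i), which gives $\{Y(1,m),M(0)\}\perp A\mid\bX$, the joint conditional law of $\{Y(1,m),M(0)\}$ given $(A=1,\bX)$ equals the one given $(A=0,\bX)$. The latter factorizes by SI(ii) with $a=0$, $a'=1$, so the independence you need does hold.
\item Follow the paper's ordering. In your first equality, insert $A=a_2$ rather than $A=a_1$. Apply SI(ii) with $a=a_2$, $a'=a_1$ to drop the event $M(a_2)=m$. Then use SI(i) twice to pass from $\bE\{Y(a_1,m)\mid A=a_2,\bX\}$ through $\bE\{Y(a_1,m)\mid\bX\}$ to $\bE\{Y(a_1,m)\mid A=a_1,\bX\}$. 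Your last two equalities (SI(ii) with $a=a'=a_1$, then consistency) then go through unchanged.
\end{enumerate}
With this repair, your Steps 1, 2 and 4, together with your absolute-continuity argument making $\int\mu_1(m,\bX)f(m\mid 0,\bX)\,dm$ independent of the version of $\mu_1$, complete the proof.
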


For simplicity but with a slight abuse of notation, we use $g$ and $g(\bX)$ to denote the average and conditional causal mediation effects, respectively, where $g \in \{\text{TE (CTE)}, \text{NDE (CNDE)},\allowbreak \text{NIE (CNIE)}\}$.

\subsection{Some technical notation}\label{sec:technical-notation}
To proceed, the following technical notation is defined for subsequent use. Let $[n]=\{1,\ldots,n\}$ denote the set of positive integers up to $n$. Let $\bullet^c$ denote the set complement, and let $A \backslash B = A \cap B^c$ denote the set difference. Let $\| \bv \|_q=(\sum_{l=1}^L|v_l|^q)^{1/q}$ denote the usual $\ell^q$ norm. Let $\| \bullet \|_{\mathrm{op}}$ denote the matrix operator or spectral norm and $\| f \|_{\mP,q} = \left( \int |f|^q d\mP \right)^{1/q}$ the $L^q(\mP)$ norm. In particular, $\| f \|_{\mP,\infty}=\sup_\bx |f(\bx)|$ is the uniform norm. Let $d_u(\bullet)=\max_{1\leq q\leq Q}\| \widehat{\bullet}^q-\bullet\|_{\mathbb{P},u}$ denote the nuisance estimation error in the $L^u(\mathbb{P})$ norm. Let $\psi_k(\bullet)$ be the $k$th eigenvalue of a generic matrix $\bullet\in\mathbb{R}^{K\times K}$ such that $\psi_1\leq\ldots\leq\psi_K$. We write $a_n\lesssim b_n$ if $a_n\leq cb_n$ for some constant $c$ independent of $n$, and $a_n\asymp b_n$ if $a_n\lesssim b_n$ and $b_n\lesssim a_n$. We use $X =_d Y$ to denote that two random variables $X$ and $Y$ have the same distribution, and write $X =_d Y + o_\mP(a_n)$ if $(X - Y)/a_n$ converges to zero in probability. We also use $a_n \lesssim_\mP b_n$ to denote that the stochastic sequence $a_n$ is of order at most $b_n$ in probability, i.e., $a_n = O_\mP(b_n)$. The notations $a_n \lesssim_\mP b_n$ and $a_n = O_\mP(b_n)$ are used interchangeably for convenience. Let $\mP_n(V)=n^{-1}\sum_{i=1}^n V_i$ denote the empirical mean of a generic random object $V$, and let $\mG_n\{f(V)\}=n^{-1/2}\sum_{i=1}^n [f(V_i)-\bE\{f(V_i)\}]$ denote the empirical process indexed by a generic function $f$. Let $\text{diam}(\calX):=\sup_{\bx_1,\bx_2\in\calX}\|\bx_1-\bx_2\|_2$ denote the diameter of the covariate support \(\calX\). Let $a \wedge b:=\min(a,b)$ and $a \vee b:=\max(a,b)$.

\subsection{A simple nonorthogonal learner: the T-learner}\label{sec:T-learner}

Following \cite{2019tlearnerkunzel}, the $g$-computation identification formulas in Proposition \ref{prop:identification-CCME} suggest a simple T-learner that estimates $g(\bX)$ by taking the difference between two estimated nuisance components. Define $\eta_{a_1a_2}(\bX)=\int\mu_{a_1}(m,\bX)f(m|a_2,\bX)dm$, and, in particular, $\eta_{a_1a_1}(\bX)=\bE(Y|A=a_1,\bX)$. Formally, the T-learner is given by $\widehat{\text{CTE}}(\bX)=\widehat{\eta}_{11}(\bX)-\widehat{\eta}_{00}(\bX)$, $\widehat{\text{CNDE}}(\bX)=\widehat{\eta}_{10}(\bX)-\widehat{\eta}_{00}(\bX)$, and $\widehat{\text{CNIE}}(\bX)=\widehat{\eta}_{11}(\bX)-\widehat{\eta}_{10}(\bX)$,
where, for $a_1,a_2\in\{0,1\}$, $\widehat{\eta}_{a_1a_2}(\bX)=\int \widehat{\mu}_{a_1}(m,\bX)\widehat{f}(m|a_2,\bX)\,dm$, with the integral evaluated by either Monte Carlo simulation or Gaussian quadrature for continuous mediators. \cite{zhao2025estimation} proposed a parametric T-learner under the LSEM assumptions that $\mu_a(M,\bX)$ is linear in $(M,\bX^\top)^\top$ given $a\in\{0,1\}$, and that $M=\underline{\mu}_a(\bX)+\text{error}$ with mean-zero error, where $\underline{\mu}_a(\bX)=\bE(M\mid A=a,\bX)$ is also linear in $\bX$ given $a$. \cite{zhao2025estimation} recommended estimation and inference based on OLS with model-based standard errors for low-dimensional $\bX$, and LASSO for high-dimensional $\bX$. In contrast, \cite{liu2026heterogeneous} considered a Bayesian analogue of the T-learner using flexible BART models for the mediators and outcomes that incorporate clever covariates. Although the calculation of the T-learner is straightforward, it has several potential disadvantages. First, the T-learner does not directly target $g(\bX)$ but instead focuses on learning the building blocks $\theta_{a_1a_2}(\bX)$. That is, the T-learner may achieve an optimal bias-variance trade-off for $\theta_{a_1a_2}(\bX)$, but not for $g(\bX)$ itself, which may worsen performance when the contrast $g(\bX)$ is much simpler than the building blocks $\theta_{a_1a_2}(\bX)$. Second, the T-learner is not Neyman orthogonal in the sense of \cite{foster2023orthogonal}, with the precise meaning given later in Section \ref{sec:OSL}. As a result, low-quality nuisance estimation may carry over to the T-learner and lead to a slower convergence rate. In addition, for the parametric T-learner, when the linear outcome and mediator mean models are misspecified, the nuisance estimates, and hence the corresponding parametric T-learner, are not consistent. Third, computationally efficient pointwise and uniform confidence bands for the T-learner are generally unavailable without imposing strong model assumptions; neither \citet{zhao2025estimation} nor \citet{liu2026heterogeneous} provided methods to construct uniform confidence bands for their estimators.

\section{Introducing a class of weighted orthogonal learners}\label{sec:OSL}
For illustration, we focus mainly on the optimal learning procedures for the CNIE, because the primary scientific question in the motivating data example concerns the indirect pathway.
Alternatively, if the primary interest lies in the direct pathway, a similar procedure can be constructed for optimally learning the CNDE; details are provided in Supplementary Material Section \ref{supp;sec:orthogonal-cnde}. Of note, the separate optimal learners for $\text{CTE}$, $\text{CNDE}$, and $\text{CNIE}$ may not numerically satisfy the effect decomposition, because naive addition or subtraction of the optimal learners for any two of these estimands does not necessarily yield the optimal bias-variance trade-off for the remaining one. We leave the joint learning of $\text{CTE}$, $\text{CNDE}$, and $\text{CNIE}$ under the effect-decomposition constraint for future research. To aid interpretation when the effect decomposition is of scientific interest, we recommend a simpler strategy by optimally estimating any two of these estimands and obtaining the remaining one by addition or subtraction.


To improve upon the T-learner, we propose a weighted orthogonal statistical learning framework for constructing optimal learners for heterogeneous causal mediation effects. To proceed, we first focus on a family of weighted population average causal mediation effects, which can be formally defined as $\check{g}_w = {\bE\{w(\bX)Y_g\}}/{\bE\{w(\bX)\}}$,
where $w$ denotes the weight function and $Y_g$ represents the corresponding potential outcome contrast for the estimand $g$. Specifically, (i) for $g = \text{TE}$, $Y_g = Y(1) - Y(0)$ and $\check{g}_w$ represents the weighted TE; (ii) for $g = \text{NDE}$, $Y_g = Y(1, M(0)) - Y(0, M(0))$ and $\check{g}_w$ represents the weighted NDE; and likewise (iii) for $g = \text{NIE}$, $Y_g = Y(1, M(1)) - Y(1, M(0))$ and $\check{g}_w$ represents the weighted NIE. Following the balancing weight framework of \cite{li2018balancing}, (i) the target population is the combined population when $w = 1$; (ii) the target population is the treated population when $w = \pi(\bX)$; and (iii) the target population is the overlap population when $w^\ast_{\text{TE}} = \pi(\bX)\{1-\pi(\bX)\}$. 

\cite{morzywolek2023weighted} constructed a class of weighted orthogonal learners for the $\text{CTE}$ that includes the R-learner \citep{nie2021quasi} and the DR-learner \citep{kennedy2023towards} as special cases. 
We generalize their methods to the optimal learning of $\text{CNIE}$; in what follows, $g$ denotes the $\text{CNIE}$ or NIE. We first consider the following family of weighted least squares loss functions:
\begin{align}\label{eq:weighted-LS-LOSS}
   \calL(w) = \bE[w(\bX)\{Y(1,M(1))-Y(1,M(0)) - g(\bX)\}^2].
\end{align}
We assume that the weight function $w(\bX)$ is almost surely positive with $\Pr\{w(\bX)\neq0\}=1$ and can depend on the propensity score $\pi(\bX)$ only through $w(\bX)=\omega\{\pi(\bX)\}$, where $\omega:[0,1]\to\mathbb{R}_{\geq 0}$ is a twice continuously differentiable function. 
Consequently, under Assumptions \ref{assump:SI}–\ref{assump:positivity}, 
the minimizer of the loss function defined in Equation \eqref{eq:weighted-LS-LOSS}, denoted by $g_{\min}(\bX; w)$, is $\text{CNIE}(\bX)$; that is, $g_{\min}(\bX;w)=\text{CNIE}(\bX)=\kappa(\bX)$,
where $\kappa(\bX)=\eta_{11}(\bX)-\eta_{10}(\bX)$. Interestingly, the choice of weighting function does not alter the minimizer but can motivate a class of orthogonal learners, which is the core idea of \cite{morzywolek2023weighted} for studying CTE. 

To construct an orthogonal learner, we leverage the heuristic correspondence between the uncentered efficient influence function (EIF) under the nonparametric model for the chosen finite-dimensional smoothed parameter $\check{g}_w$ and the orthogonal learner, a connection recognized in the literature for estimating CTE~\cite[e.g.,][]{semenova2021debiased,kennedy2023towards,morzywolek2023weighted}. To facilitate the derivation of the nonparametric EIF for $\check{g}_w$, we define, for $a\in\{0,1\}$, $\phi_{10}(\mO)=\pi(\bX)^{-1}{A}r(M,\bX)\{Y-\mu_1(M,\bX)\}+\{1-\pi(\bX)\}^{-1}(1-A)\{\mu_1(M,\bX)-\eta_{10}(\bX)\}+\eta_{10}(\bX)$ and $\phi_{aa}(\mO)=\pi(\bX)^{-a}\{1-\pi(\bX)\}^{a-1}{\mathbb{I}(A=a)}\{Y-\eta_{aa}(\bX)\}+\eta_{aa}(\bX)$,
where $r(M,\bX)=f(M|0,\bX)/f(M|1,\bX)$ and $\mathbb{I}(\bullet)$ denotes the indicator function. Here, $\zeta(\mO):=\phi_{11}(\mO)-\phi_{10}(\mO)$ and $\phi_{10}(\mO)-\phi_{00}(\mO)$ are the uncentered nonparametric EIFs for $\text{NIE}$ and $\text{NDE}$, respectively \citep{tchetgen2012semiparametric}. Theorem \ref{thm:EIF-smoothed-checkg} below derives the nonparametric EIF for $\check{g}_w$ when $g=\text{CNIE}$.

\begin{theorem}\label{thm:EIF-smoothed-checkg}
Under Assumptions \ref{assump:SI}-\ref{assump:positivity}, the nonparametric EIF for $\check{g}_w$ can be written as $\varphi_{\text{NIE}}=\bE\lb w(\bX)\rb^{-1}(\phi_n^{\text{NIE}}-\check{g}_w\phi_d^{\text{NIE}})$,
where $\phi_d^{\text{NIE}}=w(\bX)+\omega'\{\pi(\bX)\}\lb A-\pi(\bX)\rb$ and $\phi_n^{\text{NIE}}=\kappa(\bX)\lb \phi_d^{\text{NIE}}-w(\bX)\rb+w(\bX)\zeta(\mO)$.

\end{theorem}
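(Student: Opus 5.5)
The plan is to write $\check{g}_w=\Psi_n/\Psi_d$ with $\Psi_n=\bE\{w(\bX)\kappa(\bX)\}$ and $\Psi_d=\bE\{w(\bX)\}$. This representation holds by Proposition \ref{prop:identification-CCME} together with iterated expectations, since $\bE\{w(\bX)Y_g\}=\bE[w(\bX)\bE\{Y_g\mid\bX\}]=\bE\{w(\bX)\kappa(\bX)\}$. The ratio rule for pathwise derivatives then gives
\[
\mathrm{IF}(\check{g}_w)=\Psi_d^{-1}\bigl\{\mathrm{IF}(\Psi_n)-\check{g}_w\,\mathrm{IF}(\Psi_d)\bigr\}.
\]
So it suffices to derive the two influence functions and show that, up to the centering constants, they equal $\phi_n^{\text{NIE}}$ and $\phi_d^{\text{NIE}}$. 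The centering constants drop out because $\Psi_n-\check{g}_w\Psi_d=0$. I would carry this out with the Gateaux/point-mass contamination calculus, or equivalently by differentiating along a regular parametric submodel $P_t$ with score $s(\mO)=s(\bX)+s(A\mid\bX)+s(M\mid A,\bX)+s(Y\mid M,A,\bX)$, and then identify the integrand $\varphi$ satisfying $\partial_t\Psi(P_t)|_{t=0}=\bE\{\varphi s\}$.

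\textbf{Denominator.} Since $w(\bX)=\omega\{\pi(\bX)\}$, a chain rule gives
\[
\partial_t\bE_t\{\omega(\pi_t(\bX))\}=\bE[\{w(\bX)-\Psi_d\}s(\bX)]+\bE[\omega'\{\pi(\bX)\}\,\partial_t\pi_t(\bX)].
\]
The second term uses $\partial_t\pi_t(\bX)=\bE[\{A-\pi(\bX)\}s(A\mid\bX)\mid\bX]$. The resulting uncentered influence function is $w(\bX)+\omega'\{\pi(\bX)\}\{A-\pi(\bX)\}=\phi_d^{\text{NIE}}$. The twice-differentiability assumption on $\omega$ justifies the interchange of derivative and expectation.

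\textbf{Numerator.} Differentiating $\bE_t\{\omega(\pi_t(\bX))\kappa_t(\bX)\}$ yields three pieces.
\begin{enumerate}[(a)]
\item The marginal-of-$\bX$ piece, $w(\bX)\kappa(\bX)$.
\item The propensity piece, $\kappa(\bX)\,\omega'\{\pi(\bX)\}\{A-\pi(\bX)\}=\kappa(\bX)\{\phi_d^{\text{NIE}}-w(\bX)\}$.
\item The piece from perturbing $\kappa=\eta_{11}-\eta_{10}$ at fixed $\bX$, weighted by $w(\bX)$.
\end{enumerate}
For piece (c), I would invoke the known conditional-on-$\bX$ influence-function computation underlying the EIF of the NIE \citep{tchetgen2012semiparametric}. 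That result states that the pathwise derivative of $\eta_{11}(\bX)-\eta_{10}(\bX)$, integrated against any bounded function $h(\bX)$, equals $\bE[h(\bX)\{\zeta(\mO)-\kappa(\bX)\}s]$. To verify it, differentiate $\eta_{10}=\int\mu_1(m,\bX)f(m\mid0,\bX)dm$. The perturbation of $\mu_1$ produces $A\pi^{-1}r(M,\bX)\{Y-\mu_1\}$, where the change of measure uses the absolute continuity in Assumption \ref{assump:positivity}. The perturbation of $f(\cdot\mid0,\bX)$ produces $(1-A)(1-\pi)^{-1}\{\mu_1-\eta_{10}\}$. The analogous computation for $\eta_{11}$ is immediate. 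Taking $h=w$ and combining (a)–(c), the $w\kappa$ terms cancel, leaving $\kappa(\bX)\{\phi_d^{\text{NIE}}-w(\bX)\}+w(\bX)\zeta(\mO)=\phi_n^{\text{NIE}}$.

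Finally, substituting both pieces into the ratio formula gives the stated $\varphi_{\text{NIE}}$. I also need to check that this is a mean-zero function in $L^2$. Mean-zero holds because $\bE\{w\zeta\}=\Psi_n$ and $\bE\{\phi_d^{\text{NIE}}\}=\Psi_d$. Since the model is nonparametric, the tangent space is saturated, so this influence function is the EIF. The main obstacle is the bookkeeping in step (c). Because $\eta_{10}$ mixes the outcome regression under $A=1$ with the mediator law under $A=0$, the conditional scores must be written relative to the correct conditioning sets and converted into indicator-weighted residuals via the density ratio $r$. I would handle this by working with the point-mass contamination $P_t=(1-t)P+t\delta_{\tilde o}$ on a discretized version and then passing to the limit, which keeps each term transparent.
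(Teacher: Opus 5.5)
Your proposal is correct and follows essentially the same route as the paper: the ratio (product) rule for $\check{g}_w=\theta_n/\theta_d$, a chain-rule computation giving $\phi_d^{\text{NIE}}$ for the denominator, and a split of the numerator's pathwise derivative into marginal-of-$\bX$, propensity, and $\kappa$-perturbation pieces. The only difference is that the paper computes the $\kappa$-piece explicitly as its terms $T_6$ and $T_7$, where $T_7$ is split into the $\mu_1$ and $f(m\mid 0,\bX)$ perturbations via the density ratio $r$; you instead import this piece from \citet{tchetgen2012semiparametric} and sketch the same calculation.
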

Theorem \ref{thm:EIF-smoothed-checkg} motivates the following orthogonal loss function for learning $\text{CNIE}$: $\widetilde{\calL}(w)=\bE\{l_w(\mO;\bGamma,g)\}$, where
\begin{align}\label{eq:loss-define-learner}
l_w(\mO;\bGamma,g)=&\phi_d^{\text{NIE}}\lb \kappa(\bX)-g(\bX)\rb^2-2w(\bX)\lb \zeta(\mO)-\kappa(\bX)\rb g(\bX).
\end{align}
Several remarks are in order based on the orthogonal loss function. First, one can verify that the minimizers of $\mathcal{L}(w)$ and $\widetilde{\calL}(w)$ are both $\text{CNIE}(\bX)$. Second, $\widetilde{\calL}(w)$ is Neyman orthogonal in the sense of \cite{foster2023orthogonal}, whereas the original weighted least squares loss $\mathcal{L}(w)$ is not. In other words, the learner constructed by minimizing the orthogonal loss $\widetilde{\calL}(w)$ is locally insensitive to nuisance function error propagation and has the mixed-bias property in the sense of \cite{rotnitzky2021characterization} and \citet{cheng2025inverting}. To see this, one can verify that $\partial l_w/\partial g\propto \varphi_{\text{NIE}}$ by treating $g=\check{g}_w$ as a scalar. Finally, our construction of the loss function refines the work of \cite{morzywolek2023weighted}; see Remark \ref{remark:refine-loss} below.
\begin{remark}\label{remark:refine-loss}
    Our proposed loss function in Equation \eqref{eq:loss-define-learner} refines the intuitive construction in \cite{morzywolek2023weighted}, namely, setting $l_w = \phi_d^{\text{NIE}}\{\phi_n^{\text{NIE}}/\phi_d^{\text{NIE}} - g(\bX)\}^2$. The intuitive loss $l_w$ can be interpreted as a weighted least squares loss with weight $\phi_d^{\text{NIE}}$ and pseudo-outcome $\phi_n^{\text{NIE}}/\phi_d^{\text{NIE}}$. However, it requires special care when $\phi_d^{\text{NIE}}$ can equal zero. In contrast, our refinement avoids division by the potentially zero $\phi_d^{\text{NIE}}$ by adding a shift term independent of the target function $g$ while preserving the Neyman orthogonal property with $\partial l_w/\partial g\propto \varphi_{\text{NIE}}$, and therefore does not change the minimizer of the loss. Such a refinement also applies to the CTE estimation considered in \cite{morzywolek2023weighted}. This refinement is useful because $\phi_d^{\text{NIE}}$ can equal zero, for example, when $w(\bX)=\pi(\bX)$ and $\phi_d^{\text{NIE}}=A$.
\end{remark}

Although the orthogonal loss function $\widetilde{\calL}(w)$ can in theory motivate many different nonparametric estimators, we proceed with the regularized linear sieve method \citep{chen2007large} because it is computationally efficient, enables feasible pointwise and uniform inference, and nests many familiar parametric submodels. Using a lower-case $\bx \in \mathcal{X}$, the expression $g(\bx)$ denotes the CNIE evaluated at a fixed point $\bx$, distinguishing it from the random function $g(\bX)$. Suppose that the function $g$ belongs to a specified function class $\mathcal{G}$, such as a Hölder class of smoothness order $s$, denoted by $\mathcal{H}_s$, or a Sobolev space. The sieve method uses a sequence of approximating spaces $\mathcal{G}_n$ such that $\mathcal{G}_n$ becomes asymptotically dense in $\mathcal{G}$ as the sample size $n \to \infty$, in the sense that functions in $\mathcal{G}_n$ can approximate any function in $\mathcal{G}$ arbitrarily well \citep{chen2007large}. Linear sieves typically refer to sieve spaces constructed as the linear span of a set of basis functions, that is, $\mathcal{G}_n=\{\bb(\bx)^\top\bbeta:\bb(\bx)=(b_1(\bx),\ldots,b_K(\bx))^\top,\bbeta=(\beta_1,\ldots,\beta_K)^\top,K=K_n\}$, where the dimension $K_n$ grows slowly with the sample size \citep{belloni2015some}. Common well-studied choices of basis functions include polynomials, splines, and wavelets. The approximation quality of the sieve space $\mathcal{G}_n$ is controlled by $\xi_K := \sup_{\bx\in\mathcal{X}} \| \bb(\bx) \|_2$, and the optimal rate $\xi_K\asymp\sqrt{K}$ is attained by many basis functions, including spline, wavelet, and Fourier series. Finally, the proposed class of weighted orthogonal learners for estimating the CNIE is constructed through a two-stage procedure with sample cross-fitting, as shown in Algorithm \ref{alg:cnie_cross_fitting}. Specifically, let $\bGamma=\{\pi,f_0,f_1,\allowbreak\mu_1,\eta_{11}\}$ denote the collection of nuisance functions. In Stage 1, the full sample is randomly partitioned into $Q$ folds of approximately equal size, up to rounding. Let $\mathcal{F}_q \subseteq [n]$, $q\in[Q]$, denote the set of indices in the $q$th fold. For each individual $i\in\mathcal{F}_q$, the predicted values $\widehat{\phi}_d^{\text{NIE}}(A_i,\bX_i)={\phi}_d^{\text{NIE}}(A_i,\bX_i;\widehat{\bGamma}^q)$ and $\widehat{\phi}_n^{\text{NIE}}(\mO_i)={\phi}_n^{\text{NIE}}(\mO_i;\widehat{\bGamma}^q)$ are obtained using the estimated nuisance functions $\widehat{\bGamma}^q$ trained on the disjoint training sample $\mathcal{F}_q^c = [n] \backslash \mathcal{F}_q$. In Stage 2, the proposed class of weighted orthogonal learners is defined as the empirical minimizer of the orthogonal loss $\widetilde{\calL}(w)$ under a generalized ridge penalty over the linear sieve $\mathcal{G}_n$. That is, it can be expressed as $\widehat{g}(\bx)=\bb(\bx)^\top\widehat{\bbeta}$, where
\begin{align}\label{eq:plss-hat-beta}
\widehat{\bbeta}=&\arg\min_{\bbeta} \frac{1}{2}\left[\mP_n\lb l_w(\mO;\widehat{\bGamma},\bb(\bX)^\top\bbeta)\rb+\lambda\bbeta^\top\bP\bbeta\right].
\end{align}
Moreover, it is computationally efficient because $\widehat{\bbeta}$ admits the closed-form solution $\widehat{\bbeta}=(\widehat{\bH}+\lambda \bP)^{-1}\widehat{\bh}$, where $\widehat{\bH}=\mP_n\{\widehat{\phi}_d^{\text{NIE}}\bb(\bX)\bb(\bX)^\top\}$ and $\widehat{\bh}=\mP_n\{\bb(\bX)\widehat{\phi}_n^{\text{NIE}}\}$. 

\begin{algorithm}[htbp]
\caption{Weighted orthogonal learning under sample cross-fitting}
\label{alg:cnie_cross_fitting}
\begin{algorithmic}[1]
\fontsize{11}{9}\selectfont
\Require Observed data $\{\mathcal{O}_i\}_{i=1}^n$ and $Q$ folds 
\Ensure Weighted orthogonal learner for the CNIE

\State Randomly partition the full sample into $Q$ disjoint folds $\mathcal{F}_q, q\in[Q]$.

\State \textbf{Stage 1: Nuisance function estimation}
\For{$q = 1$ to $Q$}
    \State Obtain the estimated nuisance functions $\widehat{\boldsymbol{\Gamma}}^q$ using the training sample $\mathcal{F}_q^c = [n] \setminus \mathcal{F}_q$.
    \State Compute predicted values $\widehat{\phi}_{d,i}^{\text{NIE}}={\phi}_d^{\text{NIE}}(\mO_i,\widehat{\boldsymbol{\Gamma}}^q)$ and $\widehat{\phi}_{n,i}^{\text{NIE}}={\phi}_n^{\text{NIE}}(\mO_i;\widehat{\boldsymbol{\Gamma}}^q)$ for all $i \in \mathcal{F}_q$.
\EndFor

\State \textbf{Stage 2: Orthogonal loss minimization}
\State Empirically minimize the penalized loss $\widetilde{\mathcal{L}}(w)$ over the pooled sample.

\end{algorithmic}
\end{algorithm}

\section{Fusing targeted learning with orthogonal learning}\label{sec:targeted-orthogonal-learning}
Although choosing appropriate weights, such as $w(\bX)=\pi(\bX)$ or $w(\bX)=\pi(\bX)\{1-\pi(\bX)\}$, can remove the explicit division by the propensity score in the orthogonal loss function in \eqref{eq:loss-define-learner}, and thus reduce vulnerability to practical instability in the propensity score, for example, when $\widehat{\pi}(\bX)$ is near 0 or 1, the loss function may still be affected by practical violations of the boundedness of $f(M| 1,\bX)$ (see the term $r(M,\bX)$ in $\phi_{10}$). To further alleviate this issue, we generalize the idea of combining targeted learning and orthogonal learning, in the spirit of the EP learner in \cite{van2024combining} and the i-learner in \cite{vansteelandt2025orthogonal}. We do not restrict attention to the penalized least squares series estimator defined in Equation \eqref{eq:plss-hat-beta}; the following discussion accommodates arbitrary approximation methods beyond the linear sieves $\mathcal{G}_n$, such as nonlinear artificial neural networks, local regression, and kernel methods.

The primary objective is to carry out empirical minimization of the following targeted loss function: $\check{l}_w(\mO;\bGamma,g)=l_w(\mO;\bGamma,g)-2\Delta(\mO;\bGamma,g)$,
where the debiasing drift term $\Delta(\mO;\bGamma,g)$ is defined by 
\begin{align*}
   \Delta(\mO;\bGamma,g)=w(\bX)\frac{A}{\pi(\bX)}r(M,\bX)\{Y-\mu_1(M,\bX)\}g(\bX).
\end{align*}
Here, the debiasing drift term $\Delta(\mO;\bGamma,g)$, despite having mean zero, may not be sufficiently close to zero empirically when $r(M,\bX)$ is numerically unstable. In this case, the targeted loss function $\check{l}_w$ completely removes dependence on the potentially unstable density ratio term $r(M,\bX)$. However, the targeted loss $\check{l}_w$ is not exactly Neyman orthogonal. To render $\check{l}_w$ approximately orthogonal asymptotically, we seek to make empirical minimization of $\mP_n\{l_w(\mO;\widehat{\bGamma},g)\}$ as close as possible to empirical minimization of $\mP_n\{\check{l}_w(\mO;\widehat{\bGamma},g)\}$. Because the debiasing drift term $\Delta(\mO;\bGamma,g)$ depends on an infinite-dimensional function $g(\bX)$, it is generally not possible to establish an exact equivalence between the two empirical minimization problems over $g(\bX)$ based on $l_w$ and $\check{l}_w$. To ensure the learners minimizing $\mP_n\{\check{l}_w(\mO;\widehat{\bGamma},g)\}$ are approximately orthogonal, we follow \cite{van2024combining} and apply infinite-dimensional targeted learning using the method of sieves. This sieve method differs from the previous one in Equation \eqref{eq:plss-hat-beta} and serves solely to ensure that the empirical loss $\mP_n\{\check{l}_w(\mO;\widehat{\bGamma},g)\}$ closely approximates the empirical orthogonal loss $\mP_n\{l_w(\mO;\widehat{\bGamma},g)\}$. 
For simplicity, we still refer to the resulting targeted and approximately orthogonal learners as orthogonal learners.

To proceed, we propose a refined nuisance estimator for $\mu_1$, denoted by $\widehat{\mu}_1^\ast$, obtained by updating the initial estimator $\widehat{\mu}_1$ using an asymptotically dense linear sieve $\check{\mathcal{G}}_n=\{\check{\bb}(\bX)^\top\check{\boldsymbol{\epsilon}}:\dim(\check{\boldsymbol{\epsilon}})=\check{K}_n\}$. As noted earlier, $\check{\mathcal{G}}_n$ need not coincide with $\mathcal{G}_n$. 
We summarize the targeted learning procedure with $\check{\mathcal{G}}_n$ in Algorithm \ref{alg:targeted-learner}. 
We then formalize the preceding claim that empirical minimization of the targeted loss provides a close approximation to empirical minimization of the orthogonal loss. To proceed, let $N(\rho,\mathcal{G},\|
\bullet\|_{\mP,\infty})$ be the $\rho$-covering number for $\mathcal{G}$ equipped with the uniform norm, i.e., the smallest number of balls of $L^\infty$-radius $\rho$ needed to cover $\mathcal{G}$, and $J(\delta,\mathcal{G},\|
\bullet\|_{\mP,\infty}):=\int_{0}^{\delta} \sqrt{\log N(\rho,\mathcal{G},\|
\bullet\|_{\mP,\infty})}d\rho$ be the induced entropy integral. Provided that the complexity of $\mathcal{G}$ is properly controlled through the entropy integral, Theorem \ref{prop:uniform-convergence-drift-term-targeted-learning} below shows that the sample average debiasing drift term $\mathbb{P}_n\{\Delta(\mathcal{O}; \widehat{\pi}, \widehat{f}_0, \widehat{f}_1, \widehat{\mu}_1^\ast, g)\}$ converges to zero uniformly over $g \in \mathcal{G}$, and derives its uniform convergence rate. In other words, the empirical targeted loss converges uniformly to the empirical orthogonal loss, indicating that minimization over either loss is asymptotically equivalent and therefore preserves Neyman orthogonality.

\begin{algorithm}[htbp]
\caption{Targeted learning with linear sieves}
\label{alg:targeted-learner}
\begin{algorithmic}[1]
\fontsize{10}{9}\selectfont
\Require Observed data $\{\mathcal{O}_i\}_{i=1}^n$ and initial cross-fitted predictions $\{\widehat{\mu}_1(M_i,\bX_i),\widehat{\pi}(\bX_i),\widehat{r}(M_i,\bX_i)\}_{i=1}^n$
\Ensure Refined predictions $\{\widehat{\mu}_1^\ast(M_i,\bX_i)\}_{i=1}^n$ for $\mu_1$

\State Conditional on $A_i=1$, fit a weighted ordinary least squares regression of the outcome $Y_i$ on the regressors $\check{\bb}(\bX_i)$, with an offset of $\widehat{\mu}_1(M_i,\bX_i)$ and weights $\omega\{\widehat{\pi}(\bX_i)\}\widehat{r}(M_i,\bX_i)/\widehat{\pi}(\bX_i)$. Obtain $\widehat{\boldsymbol{\epsilon}}$ as the estimated coefficients.

\State Obtain the refined predictions as $\widehat{\mu}_1^\ast(M_i,\bX_i)=\widehat{\mu}_1(M_i,\bX_i)+\widehat{\boldsymbol{\epsilon}}^\top\check{\bb}(\bX_i)$.
\State Output the final learner by minimizing $\mP_n\{\check{l}_w(\mO)\}$ based on the updated predictions.

\end{algorithmic}
\end{algorithm}

\begin{theorem}\label{prop:uniform-convergence-drift-term-targeted-learning}
Assume that (i) there exists a stable linear projection $\Pi(g)$ onto $\check{\mathcal{G}}_n$, with the stability constant $\Lambda_{\check{K}}:=\sup_{\|g\|_{\mP,\infty}\neq0}\|\Pi(g)\|_{\mP,\infty}/\|g\|_{\mP,\infty}$; (ii) there exist finite constants $c_{\check{K}}$ and $l_{\check{K}}$ satisfying, for all $g\in\mathcal{G}$, $\|g-\Pi(g)\|_{\mP,2}\leq c_{\check{K}}$ and $\|g-\Pi(g)\|_{\mP,\infty}\leq l_{\check{K}}c_{\check{K}}$; (iii) $J(\delta,\mathcal{G},\|
\bullet\|_{\mP,\infty})\lesssim\delta^{1-1/(2\tau)}$ for some $\tau>1/2$; (iv) $d_2(f_0)\vee d_2(f_1)\vee d_2(\pi)=o_{\mP}(1)$; (v) Assumptions \ref{asp:regularityL2}(a) and \ref{asp:regularityL2}(d) hold, with $\bb(\bX)$ and $K_n$ replaced by $\check{\bb}(\bX)$ and $\check{K}_n$, respectively, and with the additional uniform boundedness of $\widehat{\mu}_1^\ast$; (vi) $\xi_{\check K}^2\log(\check K\vee2)/n=o(1)$; and (vii) $\|A\{Y-\mu_1(M,\bX)\}\|_{\mP,\infty}\leq \epsilon_3$ for some constant $\epsilon_3>0$.
Then $\|\widehat{\mu}_1^\ast-\mu_1\|_{\mP,2}\lessp r_{1n}:= \sqrt{\check{K}/n}+(1+\xi_{\check{K}}/\sqrt{n})d_2(\mu_1)$ and 
\begin{align}
    &\sup_{g\in\mathcal{G}}|\mP_n\{l_w(\mO;\widehat{\mu}_1^\ast,g)\}-\mP_n\{\check{l}_w(\mO;\widehat{\mu}_1^\ast,g)\}|=2\sup_{g\in\mathcal{G}}\left|\mP_n\{\Delta(\mO;\widehat{\pi},\widehat{f}_0,\widehat{f}_1,\widehat{\mu}_1^\ast,g)\}\right| \lessp c_{\check{K}}r_{1n}+n^{-\frac{1}{2}}\times\nonumber\\
    &{\{(c_{\check{K}}+ n^{-\frac{1}{2}})^{1-\frac{1}{2\tau}}(1+r_{1n}^{\frac{1}{2\tau}})+{d_\infty^\dagger}^{\frac{1}{2\tau}}(r_{2n}+n^{-\frac{1}{2}})^{1-\frac{1}{2\tau}}\}(1+\Lambda_{\check{K}})^{\frac{1}{2\tau}}}+l_{\check{K}}c_{\check{K}}
\sqrt{\frac{\check{K}\log n}{n}}
\lsb
r_{1n}\vee
\sqrt{\frac{\check{K}\log n}{n}}
\rsb,\nonumber
\end{align}
where $r_{2n}:=(d_\infty^\dagger c_{\check{K}})\wedge (d_2^\dagger l_{\check{K}}c_{\check{K}})$ and $d_q^\dagger:=d_q(\pi)+\sum_{a=0,1}d_q(f_a)$.
\end{theorem}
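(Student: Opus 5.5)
The proof rests on the score equation solved by the weighted least-squares update in Algorithm~\ref{alg:targeted-learner}. Its normal equations give
\[
\mP_n\bigl[\omega\{\widehat{\pi}(\bX)\}\,A\,\widehat{\pi}(\bX)^{-1}\widehat{r}(M,\bX)\{Y-\widehat{\mu}_1^\ast(M,\bX)\}\check{\bb}(\bX)\bigr]=0 .
\]
Hence $\mP_n\{\Delta(\mO;\widehat{\pi},\widehat{f}_0,\widehat{f}_1,\widehat{\mu}_1^\ast,h)\}=0$ for every $h\in\check{\mathcal{G}}_n$, because $\Delta$ is linear in $g$. For a generic $g\in\mathcal{G}$ I write $g=\Pi(g)+\{g-\Pi(g)\}$. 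The projected part contributes exactly zero, so it suffices to bound
\[
\sup_{g\in\mathcal{G}}\bigl|\mP_n\{\widehat{D}\,(g-\Pi g)\}\bigr|,
\qquad
\widehat{D}:=\widehat{w}\,A\,\widehat{\pi}^{-1}\widehat{r}\,(Y-\widehat{\mu}_1^\ast).
\]
The identity $l_w-\check{l}_w=2\Delta$ is immediate from the definitions. It gives the equality in the display, with the factor 2.

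\textbf{Step 1: the rate $r_{1n}$.} The update is a weighted series regression of $Y-\widehat{\mu}_1$ on $\check{\bb}$. Under (v) and (vi), the usual matrix LLN (Rudelson/Tropp, using $\xi_{\check K}^2\log\check K/n=o(1)$) makes the weighted Gram matrix well conditioned. The fitted correction then splits into two pieces. The first is a variance piece of order $\sqrt{\check K/n}$. The second is the projection of the bias $\mu_1-\widehat{\mu}_1$, of order $d_2(\mu_1)$, plus a term of order $(\xi_{\check K}/\sqrt n)d_2(\mu_1)$ from the empirical-versus-population projection discrepancy. Adding $\widehat{\mu}_1-\mu_1$ back in gives $\|\widehat{\mu}_1^\ast-\mu_1\|_{\mP,2}\lesssim_\mP r_{1n}$.

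\textbf{Step 2: decomposition.} Write $u_g=g-\Pi g$, so that $\|u_g\|_{\mP,2}\le c_{\check K}$ and $\|u_g\|_{\mP,\infty}\le l_{\check K}c_{\check K}$. I decompose
\[
\mP_n(\widehat{D}u_g)=\mP(\widehat{D}u_g)+n^{-1/2}\mG_n(\widehat{D}u_g).
\]
\emph{Mean term.} By SI, $\bE[A\pi^{-1}r\{Y-\mu_1\}\mid M,\bX]=0$, so $\mP(\widehat{D}u_g)$ reduces to two mixed-bias pieces. The first is $\mP[\widehat{w}A\widehat{\pi}^{-1}\widehat{r}(\mu_1-\widehat{\mu}_1^\ast)u_g]$, which Cauchy--Schwarz bounds by $c_{\check K}r_{1n}$. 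The second involves nuisance errors in $(\pi,f_0,f_1)$ multiplying the mean-zero residual, and it vanishes conditionally. This produces the $c_{\check K}r_{1n}$ term. \emph{Empirical-process term.} I split $\widehat{D}=D_0+(\widehat{D}-D_0)$, where $D_0$ uses the true nuisances and $\widehat{\mu}_1^\ast$. Because of (vii) and cross-fitting, conditional on the training folds each class $\{\widehat{D}u_g\}$ is a bounded Lipschitz image of $\{u_g\}$. The class $\{u_g\}$ has entropy controlled through $J(\cdot,\mathcal{G})$ and the stability constant: a $\rho$-cover of $\mathcal{G}$ induces a $(1+\Lambda_{\check K})\rho$-cover of $\{u_g\}$. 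A maximal inequality with envelope-variance $\sigma$ gives $\bE\sup|\mG_n|\lesssim J(\sigma)\{1+J(\sigma)/(\sigma^2\sqrt n)\}$, and $J(\sigma)\lesssim\sigma^{1-1/(2\tau)}(1+\Lambda)^{1/(2\tau)}$ after rescaling. I take $\sigma\asymp c_{\check K}+n^{-1/2}$ for the $D_0$ part; the factor $(1+r_{1n}^{1/(2\tau)})$ comes from the $\widehat{\mu}_1^\ast$ dependence. For the nuisance-difference part, the $L^2$ size is $r_{2n}$, computed as the minimum of $d^\dagger_\infty c_{\check K}$ and $d_2^\dagger l_{\check K}c_{\check K}$ by Hölder. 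The sup-norm factor contributes ${d^\dagger_\infty}^{1/(2\tau)}$ through an interpolation between the $L^\infty$ and $L^2$ radii.

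\textbf{Step 3: the remaining cross term and the main obstacle.} $\widehat{\mu}_1^\ast$ is not independent of the evaluation sample, because the update $\widehat{\boldsymbol\epsilon}$ uses the full data. I handle this by writing $\widehat{\mu}_1^\ast=\widehat{\mu}_1+\check{\bb}^\top\widehat{\boldsymbol\epsilon}$ and bounding $\sup_g|\mG_n(\cdot\,\check{\bb}^\top u_g)|$ uniformly over the $\check K$-dimensional coefficient, using a union/Bernstein bound over the sieve directions. This costs $l_{\check K}c_{\check K}\sqrt{\check K\log n/n}$ times $\|\widehat{\boldsymbol\epsilon}\|$-scale, namely $r_{1n}\vee\sqrt{\check K\log n/n}$, which gives the last term. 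I expect this step to be the delicate part: the dependence of $\widehat{\mu}_1^\ast$ on the full sample must be decoupled cleanly from the chaining over $\mathcal{G}$, and the exponent bookkeeping in Step 2 must be matched. Condition (iv) only ensures that the nuisance-difference terms are $o_\mP(1)$ in the appropriate radii.
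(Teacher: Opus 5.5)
Your skeleton matches the paper's in several places. The score equation of the weighted regression annihilates $\Pi(g)$. Your Step~1 is essentially the paper's bound on $\|\widehat{\mu}_1^\ast-\mu_1\|_{\mP,2}$. The mean term is handled exactly as in the paper: the weight-error piece vanishes by iterated expectations, and the other piece is $\lesssim c_{\check K}r_{1n}$ by Cauchy--Schwarz.

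\textbf{Where it breaks: the pooled fluctuation.} The gap is in how you treat the pooled $\widehat{\boldsymbol{\epsilon}}$ in the empirical-process term. Your split $\widehat D=D_0+(\widehat D-D_0)$ keeps $\widehat{\mu}_1^\ast$ in \emph{both} pieces. You then claim that, conditional on the training folds, each class $\{\widehat D u_g\}$ is a fixed Lipschitz image of $\{u_g\}$. That is false, because $\widehat{\mu}_1^\ast$ depends on the evaluation fold through $\widehat{\boldsymbol{\epsilon}}$. In particular, the conditional maximal inequality you use to get the $d_\infty^{\dagger}$, $r_{2n}$ term does not apply to $A(\widehat w\widehat r/\widehat\pi-wr/\pi)(Y-\widehat{\mu}_1^\ast)u_g$.

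The paper avoids this by subtracting the fully true drift $\Delta(\mO;\pi,f_0,f_1,\mu_1,u_g)$, which is mean zero and deterministic. It writes the remainder as $(V^{(1)}+V^{(2)})u_g$. Here $V^{(2)}=A(\widehat w\widehat r/\widehat\pi-wr/\pi)(Y-\mu_1)$ is fixed given $\mathcal F_q^c$. All pooled dependence is isolated in $V^{(1)}=A\widehat w\widehat r\widehat\pi^{-1}(\mu_1-\widehat\mu_1^q-\check{\bb}^\top\widehat{\boldsymbol{\epsilon}})$, which carries the \emph{estimated} weights.

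\textbf{The missing idea: controlling $V^{(1)}$.} What remains is to bound $\sup_g|(\mP_{n_q}-\mP)(V^{(1)}u_g)|$, and here the paper proceeds in three steps.
\begin{enumerate}
\item It shows $\|\widehat{\boldsymbol{\epsilon}}\|_\infty\lesssim 1$, using boundedness of $\widehat\mu_1,\widehat\mu_1^\ast$ and $\psi_1(\underline{\bG})$ bounded below, and $\|V^{(1)}\|_{\mP,2}\lessp r_{1n}$.
\item With probability at least $1-\upsilon$, it embeds $V^{(1)}$ in a class $\mathcal T_q$ that is deterministic given $\mathcal F_q^c$, VC of dimension $O(\check K)$, and $L^2$-localized at radius $M_\upsilon r_{1n}$.
\item It applies the product-class inequality of Lemma~\ref{lemma:local-maximam-inequality-2} to $\mathcal R_0\cdot\mathcal T_q$, with $\mathcal R_0=\{g-\Pi(g):g\in\mathcal G\}$.
\end{enumerate}
The first term of that lemma, $\|\mathcal T_q\|_{\mP,2}\,J\{(c_{\check K}\vee n^{-1/2})/\|\mathcal T_q\|_{\mP,2},\mathcal R_0\}$, is precisely the source of $r_{1n}^{1/(2\tau)}(c_{\check K}\vee n^{-1/2})^{1-1/(2\tau)}(1+\Lambda_{\check K})^{1/(2\tau)}$. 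Its second term gives $l_{\check K}c_{\check K}\sqrt{\check K\log n/n}\,(r_{1n}\vee\sqrt{\check K\log n/n})$.

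You assert the $r_{1n}^{1/(2\tau)}$ factor without derivation, and a union/Bernstein bound over sieve directions does not produce it. Bounding $|\sum_k\widehat\epsilon_k\mG_n(\cdot\,\check b_ku_g)|$ by $\|\widehat{\boldsymbol{\epsilon}}\|_1\max_k\sup_g|\mG_n(\cdot\,\check b_ku_g)|$ yields a factor $\sqrt{\check K}\,r_{1n}$ and $\sqrt{\log\check K}$-type terms. That is a bound of a different form, and it does not imply the statement. Taking the supremum over $\mathcal G$ jointly with the coefficient requires chaining over the product class.

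\textbf{A smaller point.} The maximal inequality you quote carries the factor $1+J(\sigma)/(\sigma^2\sqrt n)$. With $J(\sigma)\asymp\sigma^{1-1/(2\tau)}$ and $\sigma\asymp c_{\check K}+n^{-1/2}$, this factor can be as large as $n^{1/(4\tau)}$, so dropping it is not justified. The stated rate instead rests on the sup-norm-entropy inequality of Lemma~\ref{lemma:local-maximam-inequality-1}, which has no such correction term.
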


Several additional remarks are in order based on Theorem \ref{prop:uniform-convergence-drift-term-targeted-learning}. First, the uniform convergence rate for the empirical loss function is general, in the sense that it is expressed in terms of high-level quantities. This rate can be made more explicit when one restricts attention to specific choices of the basis functions $\check{\bb}(\bX)$ and the function class $\mathcal{G}$. For example, under the more specific additional regularity conditions listed in \cite{van2024combining}, the rate can be $o_{\mP}(n^{-1/2})$. Second, Theorem \ref{prop:uniform-convergence-drift-term-targeted-learning} also provides an upper bound for the $L^2$ estimation error of the refined nuisance $\widehat{\mu}_1^\ast$ in terms of the first-stage nuisance estimation errors through $r_{1n}$. This helps clarify how the targeting step affects nuisance estimation quality. Finally, because the targeting step is implemented on the pooled sample and the fluctuation parameter $\boldsymbol{\epsilon}$ has dimension $\check{K}_n$ growing with the sample size, this step may sometimes practically overfit $\widehat{\mu}_1^\ast$, since it is no longer cross-fitted. To mitigate this issue, one may follow \cite{vansteelandt2025orthogonal} by adding a LASSO penalty to the weighted least squares regression model in Algorithm \ref{alg:targeted-learner}.

\section{Six representative orthogonal learners}\label{sec:6-repre-ortho-learners}
In this section, we illustrate the class of weighted orthogonal learners introduced in Section \ref{sec:OSL} by presenting six orthogonal learners defined by three explicit choices of weight function and whether a targeting step is incorporated, as summarized in Table \ref{tab:possible-weight}. For illustration, Examples \ref{ss:TR-learner}–\ref{ss:OW-learner} detail three orthogonal learners without a targeting step. Of note, the choice of weighting function is not limited to the three considered below. For example, one may also use $w(\bX)=1-\pi(\bX)$, which corresponds to the control weight.

\begin{table}[htpb]
\centering
\caption{Summary of the six representative orthogonal learners presented in Examples \ref{ss:TR-learner}–\ref{ss:OW-learner}, their corresponding weight functions, and whether a targeting step is incorporated.}\label{tab:possible-weight}
\begingroup
\setlength{\tabcolsep}{2.8pt}
\renewcommand{\arraystretch}{0.70}
\setlength{\aboverulesep}{0.20ex}
\setlength{\belowrulesep}{0.21ex}
\begin{tabular}{lcccccc}
\toprule
 & TR & TW & OW & TTR & TTW &TOW\\
\midrule
Weight $w(\bX)$ 
& $1$ 
& $\pi(\bX)$ 
& $\pi(\bX)\{1-\pi(\bX)\}$ & $1$ 
& $\pi(\bX)$ 
& $\pi(\bX)\{1-\pi(\bX)\}$\\

Population 
& Combined 
& Treated 
& Overlap& Combined 
& Treated 
& Overlap\\

Targeted &$\times$ &$\times$ &$\times$ &$\checkmark$ &$\checkmark$ &$\checkmark$\\
\bottomrule
\end{tabular}
\endgroup
\end{table}

\begin{example}[\emph{TR learner}]\label{ss:TR-learner}
    We begin by setting the weight function to unity. The resulting learner directly generalizes the DR-learner, or two-stage pseudo-outcome regression method. To see this, note that the proposed TR learner can alternatively be viewed as regressing the pseudo-outcome $\widehat{\phi}_n^{\text{NIE}}$ on $\bX$ using the penalized least squares sieve method in Stage 2. We refer to this learner as the {triply robust (TR)} learner because the corresponding semiparametric one-step estimator for the natural indirect effect is triply robust \citep{tchetgen2012semiparametric}. Under $w=1$, $\check{g}_w$ is the natural indirect effect and its nonparametric EIF is given by $\varphi_{\text{NIE}}=\zeta(\mO)-\check{g}_w$. Moreover, we have that $\widehat{\bH}=\mP_n\{\bb(\bX)\bb(\bX)^\top\}$ and $\widehat{\bh}=\mP_n\{\bb(\bX)\widehat{\zeta}(\mO)\}$. Notably, the Gram matrix $\widehat{\bH}$ for the TR learner does not depend on the estimated nuisance functions. 
\end{example}

\begin{example}[\emph{TW learner}]\label{ss:PSW-learner}
Following \cite{tchetgen2012semiparametric}, the finite-sample performance of the TR learner can be sensitive to instability in the inverse propensity score weights. To mitigate this issue, a natural choice is $w=\pi(\bX)$, since only the treated population is weighted for estimating NIE. We refer to the orthogonal learner with $w=\pi(\bX)$ as the {treatment weighted  (TW)} learner. Under $w=\pi(\bX)$, the EIF components for $\check{g}_w$ simplify to $\phi^{\text{NIE}}_d=A$ and $\phi^{\text{NIE}}_n=\kappa(\bX)\{A-\pi(\bX)\}+\pi(\bX)\zeta(\mO)$. Moreover, we have that $\widehat{\bH}=\mP_n\{A\bb(\bX)\bb(\bX)^\top\}$ and $\widehat{\bh}=\mP_n[\bb(\bX)[\widehat{\kappa}(\bX)\{A-\widehat{\pi}(\bX)\}+\widehat{\pi}(\bX)\widehat{\zeta}(\mO)]]$. Of note, the Gram matrix $\widehat{\bH}$ for the TW learner still does not depend on the nuisance functions, but is averaged only over the treated individuals.
   
\end{example}

\begin{example}[\emph{OW learner}]\label{ss:OW-learner}
Although the weighting estimator for the NIE weights only the treated population, the semiparametrically efficient one-step estimator in \cite{tchetgen2012semiparametric} also includes a correction term for control individuals weighted by $\{1-\pi(\bX)\}^{-1}$; see the form of $\phi_{10}$. Therefore, one may also consider using overlap weights with $w=\pi(\bX)\{1-\pi(\bX)\}$, which we refer to as the {overlap weighted (OW) learner}. In this case, the EIF components for $\check{g}_w$ simplify to $\phi^{\text{NIE}}_d=\{A-\pi(\bX)\}^2$ and $\phi^{\text{NIE}}_n=\kappa(\bX)[\{A-\pi(\bX)\}^2-\pi(\bX)\{1-\pi(\bX)\}]+\pi(\bX)\{1-\pi(\bX)\}\zeta(\mO)$. Of note, the Gram matrix $\widehat{\bH}$ for the OW learner depends on the unknown propensity score. Interestingly, under overlap weights, $\phi_d^{\text{NIE}}=\{A-\pi(\bX)\}^2$ corresponds to residualizing the treatment and has a similar flavor to the R-learner for CTE estimation \citep{nie2021quasi}.
\end{example}

For ease of reference, we denote the six representative learners for the CNIE in Table \ref{tab:possible-weight} by $\widehat{g}^{\text{TR}}(\bX)$, $\widehat{g}^{\text{TW}}(\bX)$, $\widehat{g}^{\text{OW}}(\bX)$, $\widehat{g}^{\text{TTR}}(\bX)$, $\widehat{g}^{\text{TTW}}(\bX)$, and $\widehat{g}^{\text{TOW}}(\bX)$, respectively.

\section{Asymptotic theory and statistical inference}\label{sec:asymptotic-inference}
In this section, we derive the $L^2$ and uniform limit theory for the proposed class of orthogonal learners. Specifically, we (i) establish rates of convergence showing that our estimator can be oracle efficient, achieving the fast rate as if the first-stage nuisance functions were known; and (ii) construct computationally efficient and asymptotically honest pointwise and uniform confidence bands for statistical inference. For ease of exposition, we assume that the nuisance functions are cross-fitted, as for the weighted orthogonal learners in Section~\ref{sec:OSL}. However, the refined nuisance estimator $\widehat{\mu}_1^\ast$ used by the targeted learners in Section~\ref{sec:targeted-orthogonal-learning} does not preserve cross-fitting because the weighted least squares procedure in Algorithm~\ref{alg:targeted-learner} computes $\widehat{\mu}_1^\ast$ using the pooled sample. The confidence band construction is adjusted for the targeted learners and performs well in the simulation studies, although its asymptotic analysis may yield weaker guarantees and require additional technical arguments under stronger entropy conditions. We return to a discussion of this point in Section \ref{sec:conclusion}.

\subsection{$L^2$ convergence}\label{sec:asymptotic;ss:L^2}
We first study the $L^2$ limit theory. Let $g^\ast(\bx)=\bb(\bx)^\top\bbeta^\ast$ denote the weighted $L^2$ projection of the truth $g$ onto the sieve space $\mathcal{G}_n$, that is, $\bbeta^\ast=\underset{\bbeta}{\arg}\min 2^{-1}\bE\left[ w(\bX)\lb\alpha(\bX;g,\bbeta)\rb^2\right]$,
where $\alpha(\bX;g,\bbeta)=g(\bX)-\bb(\bX)^\top\bbeta$ denotes the approximation error for a given true function $g$ and sieve coefficients $\bbeta$. Then the total error $\widehat{g}-g$ can be decomposed into $\widehat{g}-g^\ast$, which captures the estimation and regularization errors, and $g^\ast-g$, which captures the approximation error from approximating $\mathcal{G}$ by asymptotically dense $\mathcal{G}_n$. To facilitate the asymptotic analysis, we introduce the regularity conditions commonly used in the debiased machine learning literature \citep{dml} and the least squares series literature \citep{chen2007large}. 

\begin{assumption}\label{asp:regularityL2} 
(a) The eigenvalues of the unweighted Gram matrix $\bG=\bE\{\bb(\bX)\bb(\bX)^\top\}$ are uniformly bounded above and away from zero. (b) For all $n$ and $K$, there exist finite constants $c_{K}$ and $l_{K }$ such that $\| \alpha(\bX;g,\bbeta^\ast)\|_{\mP,2}\leq c_{K}$ and $\| \alpha(\bX;g,\bbeta^\ast)\|_{\mP,\infty}\leq l_{K }c_{K }$. Moreover, $\bE\{g(\bX)^4\}<\infty$. (c) The complexity of the basis functions satisfies 
$m^{\bH}_n:=\sqrt{\xi_{K}^2 \log K / n} +\xi_K^2d_2(\pi)^2\wedge\xi_Kd_4(\pi)^2\wedge\allowbreak d_\infty(\pi)^2= o_{\mP}(1)$.
(d) There exist strictly positive constants $\epsilon_1\in(0,1/2]$ and $\epsilon_2>0$ such that for all $a\in\{0,1\}$, $q\in[Q]$, $\bX\in\mathcal{X}$, and $M\in\mathcal{M}$, $\sup_{x\in[0,1]}\omega(x)\vee|\omega'(x)|\vee|\omega''(x)| \leq \epsilon_2$, $\inf_\bX w(\bX)\geq \epsilon_1$, 
$-\omega(t)/(1-t)\leq\omega'(t)\leq\omega(t)/t$ for all $t\in[\epsilon_1,1-\epsilon_1]$,
$\epsilon_1\leq\{\pi(\bX),\widehat{\pi}^q(\bX)\}\leq1-\epsilon_1$, $\epsilon_1\leq\{f(M|a,\bX),\widehat{f}^q(M|a,\bX)\}\leq\epsilon_2$, $|\mu_1(M,\bX)|\vee|\widehat{\mu}^q_1(M,\bX)|\leq\epsilon_2$, $\bE[\{Y-\mu_1(M,\bX)\}^2|1,M,\bX]\leq\epsilon_2$, $\bE(\Omega^2|\bX)\leq \epsilon_2$, $\bE(\check{\Omega}^2|\bX)<\epsilon_2$, and  $\sup_k\sup _{\bX}|b_k(\bX)|\leq \epsilon_2$, where $\Omega:=\phi_n^{\text{NIE}} - w(\bX)g(\bX)$ and $\check{\Omega}:=\phi_d^{\text{NIE}}-w(\bX)$.

\end{assumption}


Assumption \ref{asp:regularityL2}(a) ensures that the basis regressors are not overly collinear by assuming the condition number of $\bG$ is bounded. Assumption \ref{asp:regularityL2}(b) provides the $L^2$ approximation error rate $c_K$ for approximating $\mathcal{G}$ by $\mathcal{G}_n$ and the modulus of continuity $l_K$, which relates the uniform approximation error rate to the $L^2$ rate. In addition, we assume that the true CNIE is fourth-order integrable. Assumption \ref{asp:regularityL2}(c) restricts the growth rate of the basis complexity $K$ and ensures that the weighted Gram matrix is consistent, satisfying $\|\widehat{\bH}-\bH\|_\textop\lessp m^{\bH}_n=o_\mP(1)$, where $\bH=\bE\{\phi_d^{\text{NIE}}\bb(\bX)\bb(\bX)^\top\}=\bE\{w(\bX)\bb(\bX)\bb(\bX)^\top\}$ denotes the population analogue of $\widehat{\bH}$. Further discussion of the rate $m^{\bH}_n$ is provided in Remark \ref{remark:rate-weighted-gram}. Finally, Assumption \ref{asp:regularityL2}(d) imposes a set of boundedness conditions commonly invoked in the literature for estimating CTE \citep{belloni2015some,dml}.  Under these regularity conditions, Theorem \ref{thm:L2-rate-of-convergence} below establishes the $L^2$ error bounds for the proposed weighted orthogonal learners.

\begin{remark}\label{remark:rate-weighted-gram}
Importantly, when the weight function $w(\bX)$ is constant, the unweighted Gram matrix $\widehat{\bG}=\mP_n\{\bb(\bX)\bb(\bX)^\top\}$ converges to $\bG$ at the rate $m_n^{\bG}:=\sqrt{\xi_{K}^2 \log K / n}$, which is the standard rate in the sieve literature \citep{chen2007large}. The same oracle rate holds for the TW and TTW learners because $\widehat{\phi}_d^{\mathrm{NIE}}=A$, so their Gram matrices do not depend on estimated nuisance functions. However, for other general propensity-score-dependent weights, we derive a new convergence rate for the weighted Gram matrix $\widehat{\bH}$, which additionally depends on the fastest one of the following three rates: the squared $L^2$ convergence rate of the propensity score estimator multiplied by $\xi_K^{2}$, the squared $L^4$ convergence rate multiplied by $\xi_K$, or the squared uniform convergence rate. This rate exploits the Neyman orthogonality of $\phi_d$ when bounding $\|\widehat{\bH}-\bH\|_\textop\lessp m^\bH_n$; see Theorem \ref{thm:rate-gram-norm} in the Supplementary Material and its proof based on the matrix Bernstein inequality. Finally, the convergence rate for the weighted Gram matrix attains the oracle rate $\sqrt{\xi_{K}^2 \log K / n}$ when $\xi_Kd_2(\pi)^2=o_{\mP}(n^{-1/2})$, $d_4(\pi)=o_{\mP}(n^{-1/4})$, or $d_{\infty}(\pi)^2=o_\mP(\xi_K/\sqrt{n})$, which is similar to the rate conditions in the debiased machine learning literature.

\end{remark}

\begin{theorem}\label{thm:L2-rate-of-convergence}
Under Assumptions \ref{assump:SI}–\ref{asp:regularityL2}, the error of the weighted orthogonal learner in the $L^2(\mathbb{P})$ norm is bounded by
\begin{align*}
&\| \widehat{g}-g\|_{\mP,2}\lessp  \frac{2}{\epsilon_1+2\lambda\psi_{\min}}\left(\frac{\xi_K}{\sqrt{n}}+\sqrt{\frac{K}{n}}+\sum_{j=0}^3m_{jn}+\lambda\psi_{\max}\right)+c_K,
\end{align*}
where $\psi_{\min}=\psi_1(\bP)$, $\psi_{\max}=\psi_K(\bP)$, $m_{0n}=(l_Kc_K\sqrt{{K}/{n}})\wedge({\xi_Kc_K}/{\sqrt{n}})$, $d^{\Sigma}_q:=d_q(\pi)+d_q(\mu_1)+\sum_{a=0}^1d_q(f_a)$, $m_{1n}=\sqrt{K} \{d_2(\pi)d^{\Sigma}_2+d_2(\mu_1)\sum_{a=0}^1d_2(f_a)\}\wedge \{d_4(\pi)d_4^{\Sigma}+d_4(\mu_1)\sum_{a=0}^1d_4(f_a)\}$, $m_{2n}=\xi_K/\sqrt{n}\{d_2^{\Sigma}+d_{4}(\pi)\}$, and $m_{3n}=\sqrt{K}d_4(\pi)^2$. 
\end{theorem}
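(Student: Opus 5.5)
The argument follows the standard series least-squares template of \cite{belloni2015some}, adapted to the weighted orthogonal loss. By the triangle inequality,
\[
\|\widehat g-g\|_{\mP,2}\le\|\widehat g-g^\ast\|_{\mP,2}+\|g^\ast-g\|_{\mP,2}.
\]
The second term is at most $c_K$ by Assumption \ref{asp:regularityL2}(b); this gives the additive $c_K$. For the first term, Assumption \ref{asp:regularityL2}(a) gives $\|\widehat g-g^\ast\|_{\mP,2}\lesssim\|\widehat\bbeta-\bbeta^\ast\|_2$. From the closed form $\widehat\bbeta=(\widehat\bH+\lambda\bP)^{-1}\widehat\bh$ we write
\[
\widehat\bbeta-\bbeta^\ast=(\widehat\bH+\lambda\bP)^{-1}\{\widehat\bh-\widehat\bH\bbeta^\ast-\lambda\bP\bbeta^\ast\}.
\]

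\textbf{Step 1: the inverse.} Since $\bH=\bE\{w(\bX)\bb\bb^\top\}\succeq\epsilon_1\bG$ and $\bG$ has eigenvalues bounded away from zero (normalize so the lower bound is $1$), we have $\psi_1(\bH+\lambda\bP)\ge\epsilon_1+\lambda\psi_{\min}$. Assumption \ref{asp:regularityL2}(c) together with the weighted Gram bound $\|\widehat\bH-\bH\|_\textop\lessp m_n^{\bH}=o_\mP(1)$ (Theorem \ref{thm:rate-gram-norm} in the Supplement) then shows that, with probability tending to one, $\psi_1(\widehat\bH+\lambda\bP)\ge(\epsilon_1+2\lambda\psi_{\min})/2$. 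This yields the prefactor $2/(\epsilon_1+2\lambda\psi_{\min})$. The ridge bias term $\lambda\|\bP\bbeta^\ast\|_2\le\lambda\psi_{\max}\|\bbeta^\ast\|_2\lesssim\lambda\psi_{\max}$, because $\|\bbeta^\ast\|_2\lesssim\|g^\ast\|_{\mP,2}$ is bounded.

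\textbf{Step 2: the score.} The score is $\widehat\bh-\widehat\bH\bbeta^\ast=\mP_n[\bb(\bX)\{\widehat\phi_n-\widehat\phi_d\,g^\ast(\bX)\}]$. Write $g^\ast=g-\alpha$ and split it as
\begin{align*}
&\mP_n[\bb\{\phi_n-\phi_d g\}]+\mP_n[\bb\,\phi_d\,\alpha]+\mP_n[\bb\{(\widehat\phi_n-\phi_n)-(\widehat\phi_d-\phi_d)g\}]\\
&\quad+\mP_n[\bb(\widehat\phi_d-\phi_d)\alpha].
\end{align*}
The four pieces are handled as follows.
\begin{itemize}
\item The first piece has mean zero by the first-order condition at the truth, i.e.\ $\bE(\Omega-g\check\Omega\mid\bX)=0$. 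Its second moment is bounded via $\bE(\Omega^2\mid\bX),\bE(\check\Omega^2\mid\bX)\le\epsilon_2$ and $\bE g^4<\infty$, giving order $\sqrt{K/n}$. A cruder sup bound gives $\xi_K/\sqrt n$.
\item For the second piece, the mean $\bE\{\bb\,w\,\alpha\}=0$ by the defining normal equations of the weighted projection $\bbeta^\ast$. Its fluctuation is at most $\sqrt{K/n}\,l_Kc_K$ or $\xi_Kc_K/\sqrt n$, which is $m_{0n}$.
\item The third piece is the nuisance term. Conditioning on the training folds (cross-fitting), split it into its conditional mean and its centered part. The centered part has variance bounded by $\xi_K^2$ times the squared $L^2$ nuisance errors, giving $m_{2n}$; the $d_4(\pi)$ enters through the $\omega'$ term multiplying $\kappa$. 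The conditional mean is the directional drift. Neyman orthogonality (Theorem \ref{thm:EIF-smoothed-checkg}, $\partial l_w/\partial g\propto\varphi_{\text{NIE}}$) kills the first-order part, leaving second-order mixed products. These are $(\widehat\pi-\pi)$ times any other error, $(\widehat\mu_1-\mu_1)(\widehat r-r)$, and $(\widehat\pi-\pi)^2$ from the Taylor expansion of $\omega$ using $|\omega''|\le\epsilon_2$. Taking $\sup_{\|v\|_2=1}\bE[v^\top\bb\,\cdot]$ and applying Cauchy--Schwarz bounds each product either by $\sqrt K$ times the product of $L^2$ norms or by the product of $L^4$ norms via $\|v^\top\bb\|_{\mP,2}\le1$. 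This gives $m_{1n}$ and $m_{3n}=\sqrt Kd_4(\pi)^2$.
\item The fourth piece is dominated by the same bounds.
\end{itemize}

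\textbf{Main obstacle.} The hardest step is the exact mixed-bias computation for the conditional mean in the third piece. One must verify that every first-order term in $\widehat\pi,\widehat f_a,\widehat\mu_1,\widehat\eta_{11}$ cancels in the vector-valued sense, uniformly over directions $v^\top\bb$. The plan for this is to expand $\phi_{10}$ and $\phi_{11}$ term by term, condition on $(A,M,\bX)$, and apply the identity $\int\mu_1 f_0\,dm=\eta_{10}$. The $\omega'$-terms need care: one uses $\bE\{A-\pi\mid\bX\}=0$ and $\bE(\kappa\mid\bX)=g$, so that the $\widehat\kappa-\kappa$ error multiplies only $\widehat\pi-\pi$. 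Collecting all the bounds and noting $\|\widehat\bbeta-\bbeta^\ast\|_2\le\|(\widehat\bH+\lambda\bP)^{-1}\|_\textop\,\|\text{score}+\lambda\bP\bbeta^\ast\|_2$ gives the stated inequality.
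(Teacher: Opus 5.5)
Your proposal is correct and follows essentially the same route as the paper. Both arguments use the triangle inequality with the $c_K$ approximation term, orthonormalization of $\bG$, Weyl's inequality combined with Theorem~\ref{thm:rate-gram-norm} to get $\psi_1(\widehat\bH)\ge\epsilon_1/2$ with probability tending to one, the ridge bias bound $\lambda\psi_{\max}$, and a split of $\widehat\bh-\widehat\bH\bbeta^\ast$ into mean-zero sampling terms (second-moment bounds) and cross-fitted nuisance terms (conditional mixed bias plus a centered variance of order $\xi_K/\sqrt{n}$ times the nuisance errors). The only difference is bookkeeping: you center the score at $g$ and obtain four pieces, while the paper uses five terms centered at $w g$ and $g^\ast$ and bounds the $\widehat\phi_n$ and $\widehat\phi_d$ errors separately. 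In your grouping the $\omega''$ Taylor remainders cancel inside the third piece, so the pure $(\widehat\pi-\pi)^2$ term survives only in the fourth piece, multiplied by $\alpha$. This gives the same bound, or a slightly sharper one, than the stated $m_{3n}$.
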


The $L^2$ convergence rate in Theorem \ref{thm:L2-rate-of-convergence} can be interpreted as follows. First, $\sqrt{K/n}+c_K$ is the oracle convergence rate, attained when $c_K=o(1)$, $\xi_K\asymp\sqrt{K}$, $\lambda\psi_{\max}=O(n^{-1/2})$, and $\sum_{j=1}^3m_{jn}=O(m_{0n})$. For the Hölder class with smoothness order $s$, i.e., when $\mathcal{G}=\mathcal{H}_s$, the oracle convergence rate is minimax optimal when $K \asymp n^{p/(p+2s)}$. Second, the term $\sum_{j=1}^3m_{jn}$ accounts for first-stage nuisance estimation errors and has a structure similar to error analyses in the debiased machine learning literature \citep{kennedy2022semiparametric}, where the variance term $m_{2n}$ is typically dominated by the bias terms $m_{1n}+m_{3n}$. Importantly, Neyman orthogonality induced by the EIF ensures that the bias term has a second-order product error rate, and hence can be of the same order as the variance term. Third, the regularization bias is controlled by requiring the eigenvalues of the scaled penalty matrix $\lambda\bP$, namely $\lambda\psi_{\min}$ and $\lambda\psi_{\max}$, to converge to zero at appropriate rates. Here, the regularization bias shrinks independently of the other components. Therefore, under the optimal bias-variance tradeoff, it does not affect the best achievable rate and only needs to shrink faster than the slower one of the approximation and estimation errors. Finally, we typically require $c_K=o(1)$, which requires the basis to be carefully constructed. In some settings, the chosen basis may be too restrictive for the approximation error to vanish, for example when the basis consists only of separately additive functions but the true function contains higher-order interactions \citep{belloni2015some}.

\subsection{Uniform convergence}\label{sec:asymptotic;ss:uniform}
We then derive uniform convergence rates for the proposed weighted orthogonal learners. As expected, these results require stronger assumptions than those required for $L^2$ convergence. We summarize the additional assumptions below.
\begin{assumption}\label{assump:uniform-further-boundedness} 
Suppose that $\text{diam}(\calX)$ is bounded above uniformly over $n$. Define $\widetilde{\Omega}:=\phi_n^{\text{NIE}} - \phi_d^{\text{NIE}}g(\bX)$. For some $\nu\geq3$, $\bE\{|g^\ast(\bX)|^\nu\}=O(1)$, $\sup_{\bx}\bE(|\Omega|^\nu|\bX=\bx)\vee\bE(|\check{\Omega}|^\nu|\bX=\bx)\vee\bE(|\widetilde{\Omega}|^\nu|\bX=\bx)=O(1)$, $\xi_K^{2\nu/(\nu-2)}\log K/n=O(1)$, $\log \xi_K^L=O(\log K)$, and $\log \xi_K =O(\log K)$, where $\widetilde{\bb}(\bx) := \bb(\bx) / \| \bb(\bx) \|_2$ denotes the normalized basis functions, and the corresponding Lipschitz constant is defined as $\xi_K^L:= \sup_{\bx\neq\bx'}{\| \widetilde{\bb}(\bx)-\widetilde{\bb}(\bx')\|_2}/{\| \bx-\bx'\|_2}$.
\end{assumption}
Assumption \ref{assump:uniform-further-boundedness} imposes additional uniform boundedness conditions on the conditional $\nu$th moments of the regression errors, requires the projection to be $\nu$th-integrable, and further restricts the growth rate of the basis functions. Under this additional assumption, Theorem \ref{thm:uniform-rate} establishes the uniform convergence rate for the proposed orthogonal learners.
\begin{theorem}\label{thm:uniform-rate}
Under Assumptions \ref{assump:SI}–\ref{assump:uniform-further-boundedness}, the error of the weighted orthogonal learner in the uniform norm is bounded by
\begin{align*}
\| \widehat{g}-g\|_{\mP,\infty}\lessp &\frac{\xi_K}{\sqrt{n}}\Bigg[\frac{\sqrt{n}\sum_{j=1}^3m_{jn}+\lambda \psi_{\max}(\sqrt{n}+\sqrt{K}+\sqrt{n}m_{0n})}{\epsilon_1+\lambda\psi_{\min}}+\\
&\frac{m^{\bH}_n\lb \sqrt{n}\sum_{j=0}^3m_{jn}+m_{4n}+\sqrt{n}\lambda\psi_{\max}\rb}{\lb\epsilon_1+\lambda\psi_{\min}\rb\lb\epsilon_1/2+\lambda\psi_{\min}\rb}+\sqrt{\log K}(1+l_Kc_K)\Bigg]+l_Kc_K,
\end{align*}
where $m_{4n}=n^{1/\nu}\sqrt{\log K}+\sqrt{K}l_Kc_K$.
\end{theorem}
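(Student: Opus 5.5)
The plan is to split $\widehat g-g=(\widehat g-g^\ast)+(g^\ast-g)$. The approximation part is bounded directly by Assumption \ref{asp:regularityL2}(b): $\|g^\ast-g\|_{\mP,\infty}\le l_Kc_K$. For the estimation part, note that for every $\bx$,
\[
|\widehat g(\bx)-g^\ast(\bx)|=|\bb(\bx)^\top(\widehat\bbeta-\bbeta^\ast)|\le \xi_K\,\|\widehat\bbeta-\bbeta^\ast\|_2 .
\]
This is the source of the leading factor $\xi_K/\sqrt n$. The closed form $\widehat\bbeta=(\widehat\bH+\lambda\bP)^{-1}\widehat\bh$ together with $\bH\bbeta^\ast=\bE\{w\bb g\}$ gives
\[
\widehat\bbeta-\bbeta^\ast=(\widehat\bH+\lambda\bP)^{-1}\big\{\widehat\bh-\widehat\bH\bbeta^\ast-\lambda\bP\bbeta^\ast\big\}.
\]
I will write $(\widehat\bH+\lambda\bP)^{-1}=(\bH+\lambda\bP)^{-1}+(\widehat\bH+\lambda\bP)^{-1}(\bH-\widehat\bH)(\bH+\lambda\bP)^{-1}$. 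The eigenvalues of $\bH+\lambda\bP$ are at least $\epsilon_1+\lambda\psi_{\min}$, since $w\ge\epsilon_1$ and $\bG$ is well conditioned by Assumption \ref{asp:regularityL2}(a). By Theorem \ref{thm:rate-gram-norm}, $\|\widehat\bH-\bH\|_\textop\lessp m_n^\bH=o_\mP(1)$, so with probability tending to one the smallest eigenvalue of $\widehat\bH+\lambda\bP$ is at least $\epsilon_1/2+\lambda\psi_{\min}$. These give the two denominators in the statement: a first-order term, and a second term carrying the extra factor $m_n^\bH$.

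Next I decompose the score vector $\widehat\bh-\widehat\bH\bbeta^\ast$. It splits into four pieces:
\begin{enumerate}[(i)]
\item the oracle centered term $\mP_n\{\bb(\bX)\widetilde\Omega\}$ at the true nuisances;
\item the approximation term $\mP_n\{\phi_d\bb\,\alpha(\bX;g,\bbeta^\ast)\}$, whose mean vanishes by the weighted projection property;
\item the nuisance-error term $\mP_n[\bb\{(\widehat\phi_n-\phi_n)-(\widehat\phi_d-\phi_d)g^\ast\}]$;
\item the penalty $\lambda\bP\bbeta^\ast$.
\end{enumerate}
Term (iii) is handled as in the proof of Theorem \ref{thm:L2-rate-of-convergence}. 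Cross-fitting lets me condition on the training folds. The conditional mean is a second-order remainder: Neyman orthogonality, through the mixed-bias structure of $\zeta$ and of $\phi_d$, gives the products of errors in $m_{1n}$ and $m_{3n}$. The centered part is controlled by Chebyshev and gives $m_{2n}/\sqrt n$-type terms. I will reuse those $L^2$ bounds in $\ell_2$ norm, which produces $\sqrt n\sum_{j=1}^3m_{jn}$ after rescaling.

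For the first-order piece, the uniform norm requires control of $\sup_\bx|\bb(\bx)^\top(\bH+\lambda\bP)^{-1}\mP_n\{\bb\widetilde\Omega\}|$, not merely $\xi_K\|\cdot\|_2$. I plan to follow the strong-approximation and maximal-inequality route of \cite{belloni2015some}. Normalize by $\|\bb(\bx)\|_2$ to get the process $\bx\mapsto\widetilde\bb(\bx)^\top(\bH+\lambda\bP)^{-1}\mG_n\{\bb\widetilde\Omega\}$. I will apply a truncation argument using the $\nu$th conditional moments of $\widetilde\Omega$ from Assumption \ref{assump:uniform-further-boundedness}, which gives the $n^{1/\nu}$ part of $m_{4n}$. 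I will then use a chaining or covering argument over $\calX$. The covering uses the Lipschitz constant $\xi_K^L$ and the bounded diameter, with $\log\xi_K^L=O(\log K)$, so the covering numbers are polynomial in $K$. This yields $\sqrt{\log K}$. The approximation piece (ii) has envelope $l_Kc_K$, which gives the $\sqrt{\log K}\,l_Kc_K$ term. The penalty contributes $\lambda\psi_{\max}\|\bbeta^\ast\|$-type terms; $\|\bbeta^\ast\|_2$ is bounded using $\bE(g^{\ast2})<\infty$ and the eigenvalue bounds, giving the $\lambda\psi_{\max}(\sqrt n+\sqrt K+\sqrt n m_{0n})$ structure.

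The remainder involving $\widehat\bH-\bH$ is bounded crudely in $\ell_2$. I multiply $m_n^\bH$ by the $\ell_2$ size of the full score, which is $\sqrt n\sum_{j=0}^3m_{jn}+m_{4n}+\sqrt n\lambda\psi_{\max}$ after scaling. Here $\sqrt K l_Kc_K$ enters through $m_{4n}$, and $n^{1/\nu}\sqrt{\log K}$ comes from the maximal inequality.

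I expect the main obstacle to be the uniform maximal inequality for the oracle process. The difficulty is handling the heavy-tailed $\widetilde\Omega$ by truncation at level $n^{1/\nu}$ while keeping the covering step compatible with the growth condition $\xi_K^{2\nu/(\nu-2)}\log K/n=O(1)$. I would first try the Talagrand/Bousquet-type bound used in \cite{belloni2015some}, with the symmetrized sup taken over a $1/n$-net of $\calX$.
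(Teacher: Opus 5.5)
Your overall architecture matches the paper's: split off $g^\ast-g$, linearize $\widehat\bbeta-\bbeta^\ast$, reuse the cross-fitted $L^2$ bounds for the nuisance terms, apply a Belloni-type maximal inequality to the oracle process, and bound the remaining pieces crudely. One step, however, does not deliver the stated rate.

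Write $D:=(\widehat\bH+\lambda\bP)^{-1}-(\bH+\lambda\bP)^{-1}$. You bound $\sup_{\bx}|\widetilde\bb(\bx)^\top D\,\mG_n\{\bb(\bX)\widetilde\Omega\}|$ ``crudely in $\ell_2$'' and say this produces the $n^{1/\nu}\sqrt{\log K}$ part of $m_{4n}$. It does not. The crude bound is $\|D\|_{\mathrm{op}}\|\mG_n\{\bb(\bX)\widetilde\Omega\}\|_2$, and $\bE\|\mG_n\{\bb(\bX)\widetilde\Omega\}\|_2^2=\bE\{\widetilde\Omega^2\|\bb(\bX)\|_2^2\}$ is of order $K$ (exactly that order when $\bE(\widetilde\Omega^2\mid\bX)$ is bounded below). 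So this route gives $m_n^{\bH}\sqrt K$, which is the pointwise remainder of Proposition \ref{thm:pointwise-linearization}, with $\xi_K$ where the theorem has $m_{4n}$.

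Since $\nu$ is only bounded below, $\sqrt K$ can exceed $n^{1/\nu}\sqrt{\log K}$ by a polynomial factor under the growth conditions. For example, take $\nu=10$, $K\asymp n^{0.7}$, $\xi_K\asymp\sqrt K$, $\lambda=0$, $l_Kc_K\to0$, and negligible nuisance error. The crude term is then of order $n^{0.2}\sqrt{\log n}$, while the rest of the bracket is of order $\sqrt{\log n}$, so nothing absorbs it. The rate $n^{1/\nu}\sqrt{\log K}$ is the size of a supremum over the $p$-parameter family of directions $\widetilde\bb(\bx)$, not over the whole unit sphere.

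Relatedly, $n^{1/\nu}$ should not come from truncating the oracle process. With a deterministic matrix, Lemma \ref{lemma:tight-bound-empirical-process-uniform} plus $\xi_K^{2\nu/(\nu-2)}\log K/n=O(1)$ gives only $\sqrt{\log K}$. An unmultiplied $n^{1/\nu}$ there would inflate the leading term.

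The missing idea is the device in Step 1.1 of the paper's proof of Proposition \ref{thm:uniform-linearization}, following \cite{belloni2015some}: hold $D$ fixed by conditioning, symmetrize with Rademacher signs, and chain over $\bx$. This works because of the following facts.
\begin{enumerate}
\item Conditionally, the vectors $\bigl(\widetilde\bb(\bx)^\top D\bb(\bX_i)\widetilde\Omega_i\bigr)_{i\le n}$ have empirical $L^2$ radius at most $\max_i|\widetilde\Omega_i|\,\|D\|_{\mathrm{op}}\|\widehat\bG\|_{\mathrm{op}}^{1/2}$.
\item They are Lipschitz in $\bx$ with a constant carrying $\xi_K^L$. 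Together with bounded $\mathrm{diam}(\calX)$ and $\log\xi_K^L=O(\log K)$, Dudley's integral contributes only $\sqrt{\log K}$.
\item Jensen's inequality with the conditional $\nu$th moment gives $\bE\max_i|\widetilde\Omega_i|\lesssim n^{1/\nu}$.
\item Lemma \ref{lemma:rate-penalized-gram} bounds $\|D\|_{\mathrm{op}}$.
\end{enumerate}
Together these give exactly $m_n^{\bH}n^{1/\nu}\sqrt{\log K}$ over the two denominators. Only the approximation piece $\widetilde\bb(\bx)^\top D\,\mG_n\{\bb\phi_d\alpha\}$ may be bounded crudely, giving $m_n^{\bH}\sqrt K\,l_Kc_K$.

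For the conditioning to be legitimate, you need a $\sigma$-field that fixes $D$ while the scores remain independent and centered. Your score $\widetilde\Omega=w(\bX)\{\zeta(\mO)-\kappa(\bX)\}$ is convenient here, because $\bE(\widetilde\Omega\mid A,\bX)=0$. When $\widehat\pi$ is fit from $(A,\bX)$ only, $\widehat\bH$ is measurable with respect to $\{(A_i,\bX_i)\}_{i\le n}$, so you can condition on that.

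A minor point: $\lambda\bP\bbeta^\ast$ contributes only $\sqrt n\lambda\psi_{\max}$. The $\lambda\psi_{\max}(\sqrt K+\sqrt n m_{0n})$ terms in the paper come from replacing $(\bH+\lambda\bP)^{-1}$ by $\bH^{-1}$ via the resolvent identity. Since you keep $(\bH+\lambda\bP)^{-1}$ in the oracle process, you do not need them, and this is harmless.
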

Several remarks on the uniform convergence rate are in order. First, unlike its independent role in the $L^2$ error bound in Theorem \ref{thm:L2-rate-of-convergence}, the regularization bias enters the uniform error bound through $\lambda\psi_{\max}(\sqrt{n}+\sqrt{K}+\sqrt{n}m_{0n}+\sqrt{n}m_n^{\bH})$, which must be controlled by choosing $\lambda\psi_{\max}$ at an appropriate rate. Second, the oracle uniform convergence rate, ${\xi_K}/{\sqrt{n}}\{m_n^{\bG}m_{4n}+\sqrt{\log K}(1+l_Kc_K)\}+l_Kc_K$, is attainable when the conditions in Remark \ref{remark:rate-weighted-gram} hold, $c_{K}=o(1)$, $\xi_K\asymp\sqrt{K}$, $\sqrt{n}\sum_{j=0}^3m_{jn}=O(m_n^{\bG}m_{4n})$, $\lambda\psi_{\max}=o(1)$, $\sqrt{n}\lambda\psi_{\max}=O(m_n^{\bG}m_{4n})$, and $\lambda\psi_{\max}\sqrt{K}=O(\sqrt{n}\sum_{j=1}^3m_{jn})$. This oracle uniform rate is minimax optimal, with order $(\log n / n)^{s/(2s+p)}$ for $\mathcal{G}=\mathcal{H}_s$ when $K\asymp(\log n / n)^{-p/(2s+p)}$ \citep{belloni2015some}.

\subsection{Pointwise and uniform confidence bands}
To facilitate statistical inference, we construct computationally efficient pointwise and uniform confidence bands based on linear sieves. To motivate our results, Supplementary Material Propositions \ref{thm:pointwise-linearization} and \ref{thm:uniform-linearization} show that $\widehat{\bbeta}$ is asymptotically linear up to some vanishing remainder terms. That is, for any unit vector $\widetilde{\bb}$, $    \sqrt{n}\widetilde{\bb}^\top(\widehat{\bbeta}-\bbeta^\ast)=\widetilde{\bb}^\top\bH^{-1}\mG_n[\bb(\bX)\lb\phi_n^{\text{NIE}}-\phi_d^{\text{NIE}}g^\ast(\bX)\rb]+\text{Rem}$.
Using the above limiting linear representation, we define the following $t$-statistic, with covariance matrix $\mathbb{V}=\bH^{-1}\bE[\lb\phi_n^{\text{NIE}}-\phi_d^{\text{NIE}}g^\ast(\bX)\rb^2\bb(\bX)\bb(\bX)^\top]\bH^{-1}$: for $\bx\in\mathcal{X}$, $    T_n(\bx)=\sqrt{n}\{\widehat{g}(\bx)-g(\bx)\}/{\| \mathbb{V}^{1/2}\bb(\bx)\|_2}$.
The following theorem establishes the pointwise asymptotic normality of the $t$-statistic.

\begin{theorem}
\label{thm:pointwise-normal}
Suppose Assumptions \ref{assump:SI}-\ref{asp:regularityL2} and the conditions in Remark \ref{remark:rate-weighted-gram} hold. Furthermore, assume (i) $c_K=o(1)$, $\xi_K\asymp\sqrt{K}$, $\lambda\psi_{\max}=o(n^{-1/2})$,  $\sum_{j=1}^3m_{jn}=O(m_{0n})$, $\sqrt{K/n}=o(1)$, $\sqrt{n}m_{0n}=o(1)$, and $m_n^{\bG}(\xi_K+\sqrt{n}m_{0n})=o(1)$; (ii) the 
Lindeberg condition holds such that $\sup_{\bx} \bE\{\widetilde{\Omega}^2\mathbb{I}(|\widetilde{\Omega}|>\iota)|\bX=\bx\}\to 0$ as $\iota\to\infty$; (iii) $1\lesssim\inf_{\bx} E(\widetilde{\Omega}^2|\bX=\bx)$; and (iv) $\sqrt{n}\alpha(\bx;g,\bbeta^\ast)=o(\| \mathbb{V}^{1/2}\bb(\bx)\|_2)$. Then for any given design point $\bx\in\mathcal{X}$, $\lim_{n\to\infty}\sup_{t\in\mathbb{R}}|\Pr(T_n(\bx)<t)-\Phi(t)|=0$,
where $\Phi$ is the standard normal cumulative distribution function. 

\end{theorem}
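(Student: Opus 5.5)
The plan is the standard sieve argument of \cite{belloni2015some}: decompose $T_n(\bx)$ into a leading linear term plus remainders, show the remainders are $o_\mP(1)$, and apply a Lindeberg central limit theorem to the linear term. Write $\sigma_n(\bx):=\|\mathbb{V}^{1/2}\bb(\bx)\|_2$. Then
\begin{align*}
T_n(\bx)=\frac{\sqrt{n}\,\bb(\bx)^\top(\widehat{\bbeta}-\bbeta^\ast)}{\sigma_n(\bx)}-\frac{\sqrt{n}\,\alpha(\bx;g,\bbeta^\ast)}{\sigma_n(\bx)}.
\end{align*}
The second term is $o(1)$ by condition (iv). For the first term, invoke the pointwise linearization in Supplementary Proposition \ref{thm:pointwise-linearization} with the unit vector $\widetilde{\bb}=\bb(\bx)/\|\bb(\bx)\|_2$. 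This reduces the first term to
\begin{align*}
\frac{\|\bb(\bx)\|_2}{\sigma_n(\bx)}\Big\{\widetilde{\bb}^\top\bH^{-1}\mG_n\big[\bb(\bX)\{\phi_n^{\text{NIE}}-\phi_d^{\text{NIE}}g^\ast(\bX)\}\big]+\text{Rem}\Big\}.
\end{align*}
To control the prefactor, use (iii) together with Assumptions \ref{asp:regularityL2}(a) and (d). Since $\inf_\bX w\ge\epsilon_1$ and $\omega\le\epsilon_2$, the eigenvalues of $\bH$ lie between $\epsilon_1\psi_1(\bG)$ and $\epsilon_2\psi_K(\bG)$. If also $\bE[\{\phi_n^{\text{NIE}}-\phi_d^{\text{NIE}}g^\ast\}^2\mid\bX]\gtrsim1$, then $\sigma_n(\bx)\gtrsim\|\bb(\bx)\|_2$, so the prefactor is $O(1)$.

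The needed conditional lower bound follows from (iii) after replacing $g^\ast$ by $g$. Write $\phi_n^{\text{NIE}}-\phi_d^{\text{NIE}}g^\ast=\widetilde{\Omega}+\phi_d^{\text{NIE}}\alpha(\bX;g,\bbeta^\ast)$. The correction term has norm $\lesssim l_Kc_K=o(1)$ under the rate conditions, or is at least handled in $L^2$ with $\|\alpha\|_{\mP,\infty}$ bounded. The same decomposition shows the variance built from $g^\ast$ is asymptotically equivalent to the one built from $g$. It then remains to show $\text{Rem}=o_\mP(1)$. The remainder collects four pieces:
\begin{itemize}
\item Gram-matrix estimation error, of order $\|\widehat{\bH}-\bH\|_\textop\cdot\|\mG_n[\cdot]\|$, i.e.\ $m_n^{\bG}(\xi_K+\sqrt{n}m_{0n})$; this uses the conditions of Remark \ref{remark:rate-weighted-gram}, which make $m^{\bH}_n\asymp m_n^{\bG}$;
\item nuisance bias, of order $\sqrt{n}\sum_{j=1}^3 m_{jn}=O(\sqrt{n}m_{0n})=o(1)$;
\item approximation-error cross terms, of order $\sqrt{n}m_{0n}=o(1)$;
\item the ridge term, of order $\sqrt{n}\lambda\psi_{\max}=o(1)$.
\end{itemize}
Each is $o_\mP(1)$ exactly by the hypotheses in (i). Cross-fitting makes the nuisance-dependent empirical-process pieces conditionally mean zero given the training folds, which bounds them by $m_{2n}$-type terms.

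For the leading term, write it as $n^{-1/2}\sum_i Z_{ni}$ with $Z_{ni}=\sigma_n(\bx)^{-1}\bb(\bx)^\top\bH^{-1}\bb(\bX_i)\{\phi_{n,i}^{\text{NIE}}-\phi_{d,i}^{\text{NIE}}g^\ast(\bX_i)\}$. These are i.i.d.\ with mean zero, because $\bbeta^\ast$ solves the weighted projection normal equations and $\bE(\phi_n^{\text{NIE}}\mid\bX)=w(\bX)g(\bX)$, $\bE(\phi_d^{\text{NIE}}\mid\bX)=w(\bX)$. They have unit variance by construction of $\mathbb{V}$. The Lindeberg–Feller theorem for triangular arrays then gives $\mathcal{N}(0,1)$. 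Uniformity in $t$ (Pólya) follows since $\Phi$ is continuous, and Slutsky absorbs the $o_\mP(1)$ terms.

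The main obstacle is verifying the Lindeberg condition for the array. Set $v_n(\bX)=\bb(\bx)^\top\bH^{-1}\bb(\bX)/\sigma_n(\bx)$. The approach is:
\begin{itemize}
\item Bound $|v_n(\bX)|\lesssim\xi_K$ uniformly.
\item Bound $\bE v_n^2\lesssim1$, using the eigenvalue bounds on $\bH$ and the lower bound on $\sigma_n(\bx)$.
\item Condition on $\bX$ and bound $\bE\{Z_{ni}^2\mathbb{I}(|Z_{ni}|>\delta\sqrt{n})\}\le\bE\big[v_n^2\,\sup_{\bx'}\bE\{\widetilde{\Omega}^2\mathbb{I}(|\widetilde{\Omega}|>\delta\sqrt{n}/(C\xi_K))\mid\bX=\bx'\}\big]$, after absorbing the $\alpha$ correction with a crude bound.
\end{itemize}
The right-hand side tends to zero by condition (ii), because $\xi_K/\sqrt{n}\asymp\sqrt{K/n}=o(1)$ makes the truncation level $\delta\sqrt{n}/(C\xi_K)$ diverge. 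Handling the $\alpha$ correction inside the indicator uses $l_Kc_K=o(1)$, which holds under the stated rates; alternatively the boundedness of $\phi_d^{\text{NIE}}$ together with $\bE g^4<\infty$ should suffice.
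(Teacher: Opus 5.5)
Your proposal is correct and follows essentially the paper's route: apply the pointwise linearization (Proposition~\ref{thm:pointwise-linearization}) with $\widetilde{\bb}=\bb(\bx)/\|\bb(\bx)\|_2$, make the remainder $o_\mP(1)$ via (i) and the bias term negligible via (iv), lower-bound $\|\mathbb{V}^{1/2}\widetilde{\bb}\|_2$ via (iii), and verify Lindeberg with weights of order $\xi_K/\sqrt{n}$ as in Theorem 4.2 of \cite{belloni2015some}. One tightening: (i) only gives $\sqrt{n}m_{0n}=(\sqrt{K}l_Kc_K)\wedge(\xi_Kc_K)=o(1)$, not $l_Kc_K=o(1)$, but you do not need the latter. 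Since $\bE(\widetilde{\Omega}\mid A,\bX)=0$, you get $\bE[\{\widetilde{\Omega}+\phi_d^{\text{NIE}}\alpha(\bX;g,\bbeta^\ast)\}^2\mid\bX]\geq\bE(\widetilde{\Omega}^2\mid\bX)$ exactly, and $\bE\{v_n(\bX)^2\alpha(\bX;g,\bbeta^\ast)^2\}\lesssim(\sqrt{n}m_{0n})^2=o(1)$ disposes of the $\alpha$-part of the Lindeberg sum.
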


Of note, Theorem \ref{thm:pointwise-normal} typically requires the undersmoothing condition in (iv), which is well known in the nonparametric statistics literature \citep{belloni2015some} and essentially requires the approximation error (bias) to be of smaller order than the standard error. Therefore, it is natural to construct the following pointwise confidence bands at significance level $\gamma \in (0,1)$: for any $\bx\in\mathcal{X}$, $[\mathfrak{L}_n,\mathfrak{U}_n]:=[\widehat{g}(\bx)-c_n(1-\gamma)\| \widehat{\mathbb{V}}^{1/2}\bb(\bx)\|_2/\sqrt{n},\widehat{g}(\bx)+c_n(1-\gamma)\| \widehat{\mathbb{V}}^{1/2}\bb(\bx)\|_2/\sqrt{n}]$,
where the covariance matrix estimator is given by $\widehat{\mathbb{V}}=\widehat{\bH}^{-1}\mP_n[\{\widehat{\phi}_n^{\text{NIE}}-\widehat{\phi}_d^{\text{NIE}}\widehat{g}(\bX)\}^2\bb(\bX)\bb(\bX)^\top]\widehat{\bH}^{-1}$ and $c_n(1-\gamma)$ is the $(1-\gamma/2)$-quantile of the standard normal distribution. For the targeted learners, the covariance matrix estimator is adjusted as $\widehat{\mathbb{V}}_{\mathrm{tar}}
=
\widehat{\bH}^{-1}
\mP_n[
\widehat{\boldsymbol{U}}_{\mathrm{prof}}(\mO)
\widehat{\boldsymbol{U}}_{\mathrm{prof}}(\mO)^\top
]
\widehat{\bH}^{-1}$, where $\widehat{\boldsymbol{U}}_{\mathrm{prof}}(\mO)$ is defined in Section~\ref{supp;sec:band-targeting-adjustment} of the Supplementary Material and denotes the profiled score accounting for the estimating equation in Algorithm~\ref{alg:targeted-learner}. Moreover, to enable simultaneous inference, we can set the threshold $c_n(1-\gamma)$ to the $(1-\gamma)$-quantile of the supremum of the $t$-statistic, $\sup_{\bx}|T_n(\bx)|$. However, the exact distribution of $\sup_{\bx}|T_n(\bx)|$ is not analytically tractable, so we follow \cite{chernozhukov2014gaussian} and use the Gaussian bootstrap to approximate it. To proceed, we first show in Theorem \ref{thm:uniform-strong-gaussian} that the $t$-statistic can be well approximated by a limiting Gaussian process, up to an error of order $o_\mP(a_n^{-1})$, and then use the derived limiting Gaussian process to approximate the threshold $c_n(1-\gamma)$ via the Gaussian bootstrap in Algorithm~\ref{alg:gaussian-threshold} of the Supplementary Material.

\begin{theorem}
\label{thm:uniform-strong-gaussian}
Suppose Assumptions \ref{assump:SI}–\ref{assump:uniform-further-boundedness} and the conditions in Remark \ref{remark:rate-weighted-gram} hold. Furthermore, assume (i) $c_{K}=o(1)$, $\xi_K\asymp\sqrt{K}$, $\sqrt{n}\sum_{j=0}^3m_{jn}=O(m_n^{\bG}m_{4n})$, $\lambda\psi_{\max}=o(1)$, $\sqrt{n}\lambda\psi_{\max}=o(a_n^{-1})$, and $\lambda\psi_{\max}\sqrt{K}=O(\sqrt{n}\sum_{j=1}^3m_{jn})$; (ii) $m_n^{\bG}m_{4n}=o_{\mP}(a_n^{-1})$; (iii) $1\lesssim\inf_{\bx} E\{\widetilde{\Omega}^2|\bX=\bx\}$; (iv) $a_n^6K^4\xi_K^2(1+l_K^3c_K^3)^2(\log n)^2/n=o(1)$; and (v) $\sup_{\bx}\sqrt{n}|\alpha(\bx;g,\bbeta^\ast)|\allowbreak /\| \mathbb{V}^{1/2}\bb(\bx)\|_2=o(a_n^{-1})$. Then the following strong Gaussian process approximation holds in the $L^\infty$ norm: $T_n(\bx)=_d{\bb(\bx)^\top\mathbb{V}^{1/2}}/{\| \mathbb{V}^{1/2}\bb(\bx)\|_2}\mathcal{N}(0,\boldsymbol{I}_K)+o_{\mP}(a_n^{-1})$.
\end{theorem}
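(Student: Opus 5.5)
The plan follows the strong-approximation argument of \cite{belloni2015some}, adapted to the weighted orthogonal loss. Fix $\bx$ and write
\[
\sqrt{n}\{\widehat g(\bx)-g(\bx)\}=\sqrt{n}\,\bb(\bx)^\top(\widehat\bbeta-\bbeta^\ast)-\sqrt{n}\,\alpha(\bx;g,\bbeta^\ast).
\]
After division by $\|\mathbb{V}^{1/2}\bb(\bx)\|_2$, condition (v) makes the second term $o(a_n^{-1})$ uniformly in $\bx$, so everything reduces to the first term.

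For the first term, apply the uniform linearization of the Supplementary Material (Proposition~\ref{thm:uniform-linearization}) with the unit vector $\widetilde{\bb}(\bx)$. This gives
\[
\sqrt{n}\,\bb(\bx)^\top(\widehat\bbeta-\bbeta^\ast)=\bb(\bx)^\top\bH^{-1}\mG_n[\bb(\bX)\{\phi_n^{\text{NIE}}-\phi_d^{\text{NIE}}g^\ast(\bX)\}]+\text{Rem}(\bx),
\]
and we need $\sup_\bx|\text{Rem}(\bx)|/\|\mathbb{V}^{1/2}\bb(\bx)\|_2=o_\mP(a_n^{-1})$. The remainder splits into four pieces, each controlled by a different hypothesis.
\begin{enumerate}
\item The ridge bias, of size $\sqrt{n}\lambda\psi_{\max}$ (plus the cross term with $\sqrt{K}$), is handled by condition (i).
\item The cross-fitted nuisance error, of size $\sqrt{n}\sum_{j=1}^3 m_{jn}$, uses the second-order mixed-bias structure implied by Neyman orthogonality, exactly as in the proof of Theorem~\ref{thm:L2-rate-of-convergence}.
\item The Gram-matrix error $\|\widehat\bH-\bH\|_{\textop}$ multiplies the score process. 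It is bounded by $m_n^{\bG}$ via Remark~\ref{remark:rate-weighted-gram}, times a bound of order $m_{4n}$ on the sup of the score obtained from Assumption~\ref{assump:uniform-further-boundedness}. This is where condition (ii) enters.
\item The approximation-error contribution $m_{0n}$ is handled by the first requirement in (i).
\end{enumerate}
Since condition (iii) together with the eigenvalue bounds on $\bH$ gives $\|\mathbb{V}^{1/2}\bb(\bx)\|_2\gtrsim\|\bb(\bx)\|_2$, dividing by this normalization leaves all four pieces at $o_\mP(a_n^{-1})$. The same arguments already appear in Theorem~\ref{thm:uniform-rate}; here only the target rate is sharpened from $O_\mP$ to $o_\mP(a_n^{-1})$.

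The leading term is $\bb(\bx)^\top\mathbb{V}^{1/2}\,\mathbb{V}^{-1/2}\bH^{-1}\mG_n[\cdot]$, a normalized sum of independent mean-zero $K$-vectors $\zeta_i=\mathbb{V}^{-1/2}\bH^{-1}\bb(\bX_i)\widetilde\Omega_i^\ast/\sqrt n$ with identity covariance. The step I expect to be the main obstacle is the coupling: we need a Gaussian vector $\mathcal N(0,\boldsymbol I_K)$ with $\|\sum_i\zeta_i-\mathcal N\|_2=o_\mP(a_n^{-1})$, uniformly over the normalized directions. I would use Yurinskii's coupling, as in \cite{belloni2015some}. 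Its error is of order $\beta_n^{1/3}K^{1/3}(1+|\log(K^{-1}\beta_n)|/K)^{1/3}$, where
\[
\beta_n=\sum_i\bE\|\zeta_i\|_2^3\lesssim K\xi_K(1+l_K^3c_K^3)/\sqrt n .
\]
Here the bound on $\beta_n$ uses the third-moment bounds on $\widetilde\Omega$ from Assumption~\ref{assump:uniform-further-boundedness} ($\nu\ge3$), plus the extra term $\alpha=g-g^\ast$ bounded by $l_Kc_K$. The only delicate points are getting the cubic moment of $\phi_n-\phi_d g^\ast$ right and tracking the logarithm. Once that is done, condition (iv) is exactly what makes $a_n$ times the coupling error vanish.

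Finally, Cauchy--Schwarz with the unit vector $\bb(\bx)^\top\mathbb{V}^{1/2}/\|\mathbb{V}^{1/2}\bb(\bx)\|_2$ transfers the vector coupling to the process uniformly in $\bx$. Collecting the three pieces (the leading Gaussian term, the linearization remainder, and the approximation error) yields the stated strong approximation $T_n(\bx)=_d\bb(\bx)^\top\mathbb{V}^{1/2}\mathcal N(0,\boldsymbol I_K)/\|\mathbb{V}^{1/2}\bb(\bx)\|_2+o_\mP(a_n^{-1})$ in $L^\infty$.
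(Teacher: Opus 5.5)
Your proposal is correct and follows essentially the same route as the paper. Both arguments use the uniform linearization of Proposition~\ref{thm:uniform-linearization} with its remainder driven to $o_{\mP}(a_n^{-1})$, condition (iii) to bound $\mathbb{V}$ below so that normalizing by $\|\mathbb{V}^{1/2}\bb(\bx)\|_2$ is harmless, Yurinskii's coupling as in Theorem 4.4 of \cite{belloni2015some} with the same third-moment bound $K\xi_K(1+l_K^3c_K^3)$ that makes (iv) the operative condition, and (v) to absorb the approximation bias. One bookkeeping slip: the nuisance-error piece and the $m_{0n}$ piece of the remainder are made $o_{\mP}(a_n^{-1})$ by combining the requirement $\sqrt{n}\sum_{j=0}^3 m_{jn}=O(m_n^{\bG}m_{4n})$ in (i) with (ii), not by $c_K=o(1)$ alone.
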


Similar to pointwise normality, the strong Gaussian process approximation also requires the undersmoothing condition in (v), which ensures that the approximation error (bias) shrinks faster than the standard error. Finally, Theorem \ref{thm:honest-cov-gauss-boot-ci} shows that the proposed uniform confidence bands are asymptotically honest, achieving exact simultaneous $(1-\gamma)$-coverage under appropriate conditions.

\begin{theorem}
\label{thm:honest-cov-gauss-boot-ci}
Suppose that all the conditions in Theorem~\ref{thm:uniform-strong-gaussian} hold, with some $\nu \geq 4$ and $a_n \asymp \sqrt{\log K}$. In addition, assume that $\|g\|_{\mP,\infty}\lesssim1$, $d_\infty^{\Sigma}=o_\mP(1)$, $m_n^{\bG}m_{4n}+\sqrt{\log K}l_Kc_K\lesssim \sqrt{\log K}$, $(m_n^{\bG}+d_\infty^{\Sigma})n^{1/\nu}=o_{\mP}(1)$, and $\xi_K(\log K)^2/n^{1/2-1/\nu}=o(1)$. Then $\Pr\{g(\bx)\in[\mathfrak{L}_n,\mathfrak{U}_n]\text{ for all $\bx\in\mathcal{X}$}\}\to1-\gamma$. 

\end{theorem}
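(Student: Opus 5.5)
The argument follows the standard route of \cite{belloni2015some} and \cite{chernozhukov2014gaussian}. We first reduce the event of uniform coverage to a statement about $\sup_{\bx}|\widehat T_n(\bx)|$, where $\widehat T_n(\bx)=\sqrt{n}\{\widehat g(\bx)-g(\bx)\}/\|\widehat{\mathbb{V}}^{1/2}\bb(\bx)\|_2$ uses the estimated covariance. We then show that this supremum is approximated by the supremum of the Gaussian process $Z_n(\bx)=\bb(\bx)^\top\mathbb{V}^{1/2}\mathcal{N}(0,\boldsymbol I_K)/\|\mathbb{V}^{1/2}\bb(\bx)\|_2$ up to $o_\mP(a_n^{-1})$. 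Finally, we show that the Gaussian bootstrap threshold $c_n(1-\gamma)$ computed from $\widehat{\mathbb{V}}$ is close to the $(1-\gamma)$-quantile of $\sup_{\bx}|Z_n(\bx)|$. With $a_n\asymp\sqrt{\log K}$, an anti-concentration inequality for suprema of Gaussian processes converts these $o_\mP(a_n^{-1})$ approximations into $o(1)$ errors in coverage probability.

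\emph{Step 1: variance estimation.} We prove $\|\widehat{\mathbb{V}}-\mathbb{V}\|_{\textop}=o_\mP(a_n^{-1})$, uniformly in the sense that
\[
\sup_{\bx}\Bigl|\|\widehat{\mathbb{V}}^{1/2}\bb(\bx)\|_2/\|\mathbb{V}^{1/2}\bb(\bx)\|_2-1\Bigr|=o_\mP\bigl((\log K)^{-1}\bigr).
\]
Since $\inf_{\bx}\bE(\widetilde\Omega^2\mid\bX=\bx)\gtrsim1$ and the eigenvalues of $\bH$ are bounded (Assumption \ref{asp:regularityL2}), $\mathbb{V}$ has eigenvalues bounded away from zero, so ratio bounds follow from operator-norm bounds. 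We handle the sandwich in three pieces. The outer factors are controlled by $\|\widehat\bH^{-1}-\bH^{-1}\|_\textop\lesssim_\mP m_n^{\bH}$, using Remark \ref{remark:rate-weighted-gram}. The middle matrix $\mP_n[\{\widehat\phi_n-\widehat\phi_d\widehat g\}^2\bb\bb^\top]$ we split into three parts:
\begin{enumerate}[(a)]
\item an oracle part, $\mP_n[\widetilde\Omega_\ast^2\bb\bb^\top]$ with $\widetilde\Omega_\ast=\phi_n-\phi_d g^\ast$, which concentrates at rate $m_n^{\bG}n^{1/\nu}$ by a truncated matrix Bernstein (Rudelson) inequality using the $\nu\ge4$ moment bounds;
\item a nuisance perturbation part, controlled by $d_\infty^{\Sigma}$ times the bounded moments;
\item a second-stage perturbation part, controlled by $\|\widehat g-g^\ast\|_{\mP,\infty}$ via Theorem \ref{thm:uniform-rate}.
\end{enumerate}
The assumptions $(m_n^{\bG}+d_\infty^{\Sigma})n^{1/\nu}=o_\mP(1)$, $\|g\|_{\mP,\infty}\lesssim1$ and $m_n^{\bG}m_{4n}+\sqrt{\log K}l_Kc_K\lesssim\sqrt{\log K}$ are exactly what make these pieces vanish fast enough, after multiplication by $\sup_{\bx}|T_n(\bx)|\lesssim_\mP\sqrt{\log K}$. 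This last bound follows from Theorem \ref{thm:uniform-strong-gaussian} and Dudley's inequality for $Z_n$, whose index set has Lipschitz constant $\xi_K^L$, with $\log\xi_K^L=O(\log K)$. Combining, $\sup_{\bx}|\widehat T_n(\bx)-T_n(\bx)|=o_\mP(a_n^{-1})$.

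\emph{Step 2: bootstrap quantile.} Conditional on the data, the bootstrap process is $\widehat Z_n(\bx)=\bb(\bx)^\top\widehat{\mathbb{V}}^{1/2}\mathcal{N}(0,\boldsymbol I_K)/\|\widehat{\mathbb{V}}^{1/2}\bb(\bx)\|_2$. We compare $\widehat Z_n$ with $Z_n$ through a coupling on the same normal vector. Bounding $\sup_{\bx}|\widehat Z_n-Z_n|$ by $\|\widehat{\mathbb{V}}^{1/2}-\mathbb{V}^{1/2}\|_\textop\cdot\|\mathcal{N}\|$-type terms would cost a factor $\sqrt K$. We avoid this by writing the difference as a Gaussian process indexed by $\bx$ whose variance is bounded by $\|\widehat{\mathbb{V}}-\mathbb{V}\|_\textop$, and applying Dudley's entropy bound, which gives order $\sqrt{\log K}\,\|\widehat{\mathbb{V}}-\mathbb{V}\|_\textop^{1/2}$. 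The rate condition $\xi_K(\log K)^2/n^{1/2-1/\nu}=o(1)$ is used here. Given this, the conditional quantile satisfies $c_n(1-\gamma)\in[c_0(1-\gamma-\delta_n)-\epsilon_n,\,c_0(1-\gamma+\delta_n)+\epsilon_n]$ with probability tending to one, where $c_0$ is the quantile of $\sup|Z_n|$ and $\epsilon_n=o(a_n^{-1})$. This is the argument of Lemma 5.2 in \cite{belloni2015some}.

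\emph{Step 3: conclusion.} Coverage equals $\Pr\{\sup_{\bx}|\widehat T_n(\bx)|\le c_n(1-\gamma)\}$. Combining Theorem \ref{thm:uniform-strong-gaussian} with Steps 1 and 2, this lies within $\Pr\{\sup|Z_n|\le c_0(1-\gamma)\pm\epsilon_n'\}+o(1)$ for some $\epsilon_n'=o(a_n^{-1})$. Anti-concentration (Chernozhukov--Chetverikov--Kato) gives
\[
\sup_t\Pr\bigl\{|\sup|Z_n|-t|\le\epsilon_n'\bigr\}\lesssim\epsilon_n'\,\bE\sup|Z_n|\lesssim\epsilon_n'\sqrt{\log K}=o(1),
\]
since $a_n\asymp\sqrt{\log K}$. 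Hence coverage converges to $1-\gamma$.

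The main obstacle is Step 1, specifically controlling the estimated-residual matrix uniformly at rate $o((\log K)^{-1})$ rather than $o(1)$. The cross-fitted nuisance error enters both $\widehat\phi_n$ and $\widehat\phi_d$, and the truncation in matrix Bernstein with heavy-tailed $\widetilde\Omega$ requires the $n^{1/\nu}$ factors. We would first try to reduce everything to the bound $\max_i|\widehat{\widetilde\Omega}_i-\widetilde\Omega_{\ast,i}|\lesssim_\mP d_\infty^{\Sigma}+\|\widehat g-g^\ast\|_{\mP,\infty}$, combined with $\max_i|\widetilde\Omega_i|\lesssim_\mP n^{1/\nu}$.
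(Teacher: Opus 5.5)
Your route is the paper's route. The paper proves strong approximation of $\sup_{\bx}|T_n(\bx)|$ and bounds $\|\widehat{\mathbb{V}}-\mathbb{V}\|_{\textop}$ by splitting the middle matrix into an oracle empirical-process part and an estimated-minus-true residual part. It then delegates the bootstrap-quantile comparison and the anti-concentration step to Theorem 5.6 of \cite{belloni2015some}; your Steps 2--3 write out that delegated part. You are also right that the variance error must be $o_\mP((\log K)^{-1})$, since it is multiplied by $\sup_{\bx}|T_n(\bx)|\asymp\sqrt{\log K}$ and must beat $a_n^{-1}$; the paper's Step 2 only records $o_\mP(1)$.

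\textbf{Step 1 and Step 2 rates.} Your bookkeeping does not reach the needed rate as written. In Step 1 you state $\|\widehat{\mathbb{V}}-\mathbb{V}\|_{\textop}=o_\mP(a_n^{-1})$ as if it gave the ratio bound $o_\mP((\log K)^{-1})$. It does not, because the ratio error is only $\lesssim\|\widehat{\mathbb{V}}-\mathbb{V}\|_{\textop}$. In Step 2 you bound the variance of $\widehat Z_n-Z_n$ by $\|\widehat{\mathbb{V}}-\mathbb{V}\|_{\textop}$, so Dudley gives $\sqrt{\log K}\,\|\widehat{\mathbb{V}}-\mathbb{V}\|_{\textop}^{1/2}$. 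That is $o(a_n^{-1})$ only if $\|\widehat{\mathbb{V}}-\mathbb{V}\|_{\textop}=o((\log K)^{-2})$. Even the oracle part does not achieve this: the condition $\xi_K(\log K)^2/n^{1/2-1/\nu}=o(1)$ gives only $o((\log K)^{-3/2})$. The fix is to use the lower eigenvalue bound on $\mathbb{V}$ that you already have. Then $\|\widehat{\mathbb{V}}^{1/2}-\mathbb{V}^{1/2}\|_{\textop}\lesssim\|\widehat{\mathbb{V}}-\mathbb{V}\|_{\textop}$, so the normalized difference process has standard deviation $O(\|\widehat{\mathbb{V}}-\mathbb{V}\|_{\textop})$. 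A single requirement $\|\widehat{\mathbb{V}}-\mathbb{V}\|_{\textop}=o_\mP((\log K)^{-1})$ then serves both steps.

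\textbf{The nuisance piece.} The $m_n^{\bG}$-driven pieces do meet the rate. These are the oracle part $m_n^{\bG}n^{1/\nu}$, the outer $\widehat{\bH}^{-1}$ factors, and the second-stage perturbation, where $\|\widehat g-g^\ast\|_{\mP,\infty}\lesssim_\mP m_n^{\bG}$ by Theorem~\ref{thm:uniform-rate} together with $m_n^{\bG}m_{4n}+\sqrt{\log K}l_Kc_K\lesssim\sqrt{\log K}$. All are $o((\log K)^{-3/2})$. The nuisance piece does not meet the rate under the reduction in your last paragraph, which is also the paper's. Bounding the cross term by $\max_i|\Delta_i|\cdot\max_i|\Psi_i|$ with $\max_i|\Psi_i|\lesssim_\mP n^{1/\nu}$ yields $d_\infty^{\Sigma}n^{1/\nu}$. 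The hypothesis $(m_n^{\bG}+d_\infty^{\Sigma})n^{1/\nu}=o_\mP(1)$ makes this only $o_\mP(1)$. So your claim that the stated assumptions are \emph{exactly} what is needed fails for this piece.

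The fix is to avoid the maximum over $\Psi$. Set $\Psi=\phi_n-\phi_dg^\ast$ (your $\widetilde\Omega_\ast$) and $\Delta=\widehat\phi_n-\widehat\phi_d\widehat g-\Psi$. Cauchy--Schwarz gives $\|\mP_n\{\Psi\Delta\bb(\bX)\bb(\bX)^\top\}\|_{\textop}\le\max_i|\Delta_i|\,\|\mP_n\{\Psi^2\bb(\bX)\bb(\bX)^\top\}\|_{\textop}^{1/2}\|\widehat{\bG}\|_{\textop}^{1/2}=O_\mP(\max_i|\Delta_i|)$. Then $d_\infty^{\Sigma}=o_\mP(n^{-1/\nu})=o_\mP((\log K)^{-1})$, because $K\lesssim n$ and $n^{1/\nu}/\log n\to\infty$. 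If $\widehat\phi_n-\phi_n$ carries a factor $|Y-\mu_1(M,\bX)|$, keep that factor inside the Cauchy--Schwarz step, where its bounded conditional second moment suffices, rather than taking a maximum over it.
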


\section{Simulation experiments}\label{sec:simulation}
To demonstrate the performance of the proposed methods, we consider a sample of $n=3000$ individuals and simulate $p=3$ uniformly distributed covariates. For illustration, we assume an additive mediator model with an unspecified mean and Gaussian error, and consider a scenario with highly nonlinear mediator and outcome mean models. Due to space limitations, the complete setup is provided in Supplementary Material Section \ref{supp;sec:supporting-info-simulation}. We use the parametric T-learner in \cite{zhao2025estimation} under the LSEM, denoted by $\widehat{g}^{\text{pT}}(\bX)$, as a benchmark comparator, and compare it with the nonparametric meta-learners proposed in this manuscript, including (i) the nonorthogonal T-learner, denoted by $\widehat{g}^{\text{T}}(\bX)$, and (ii) the six representative orthogonal learners presented in Section \ref{sec:6-repre-ortho-learners}. For the nonparametric meta-learners, the nuisance functions are estimated using SuperLearner \citep{luedtke2016super} with an ensemble consisting of \texttt{"SL.glm"}, \texttt{"SL.earth"}, \texttt{"SL.glmnet"}, \texttt{"SL.nnet"}, and \texttt{"SL.rpart"}. For each orthogonal learner, we consider six Stage 2 sieve smoothers implemented using the \texttt{mgcv} package \citep{wood2017generalized}. These methods are defined by penalized spline bases using the full factorial combination of three choices of $(K_1,K_2)$, with $(K_1,K_2)\in\{(4,3),(5,3),(7,3)\}$, where $K_1$ is the univariate basis complexity and $K_2$ is the basis complexity for two-way interactions, and two penalty specifications: a penalty selected by minimizing the generalized cross-validation (GCV) score and no penalty. We use the same sieve smoothers for the targeting step in Algorithm \ref{alg:targeted-learner}; that is, we set $\check{\mathcal{G}}_n=\mathcal{G}_n$. To ensure a fair comparison, we simulate an independent test sample of size $n_{\text{out}}=10^4$ to evaluate out-of-sample performance across $10^3$ Monte Carlo replications. The performance metrics include (i) the integrated squared error (ISE), defined as $\int \{\widehat{g}(\bX)-g(\bX)\}^2d\mP(\bX)\approx n_{\text{out}}^{-1}\sum_{i=1}^{n_{\text{out}}}\{\widehat{g}(\bX_i)-g(\bX_i)\}^2$; (ii) the pointwise bias, Monte Carlo standard deviation (MCSD), average estimated standard error (AESE), and empirical pointwise coverage probability at four randomly chosen representative points; 
and (iii) the empirical uniform confidence band coverage probability based on a $25^3$ equally spaced grid in the cube $[-1,1]^3$. Notably, uniform inference is generally not available for the T-learner, and pointwise inference for the T-learner is available when a working model is specified for the nuisance functions, such as the linear models in \cite{zhao2025estimation}.

\begin{figure}[ht!]
    \centering
    \includegraphics[width=0.93\linewidth]{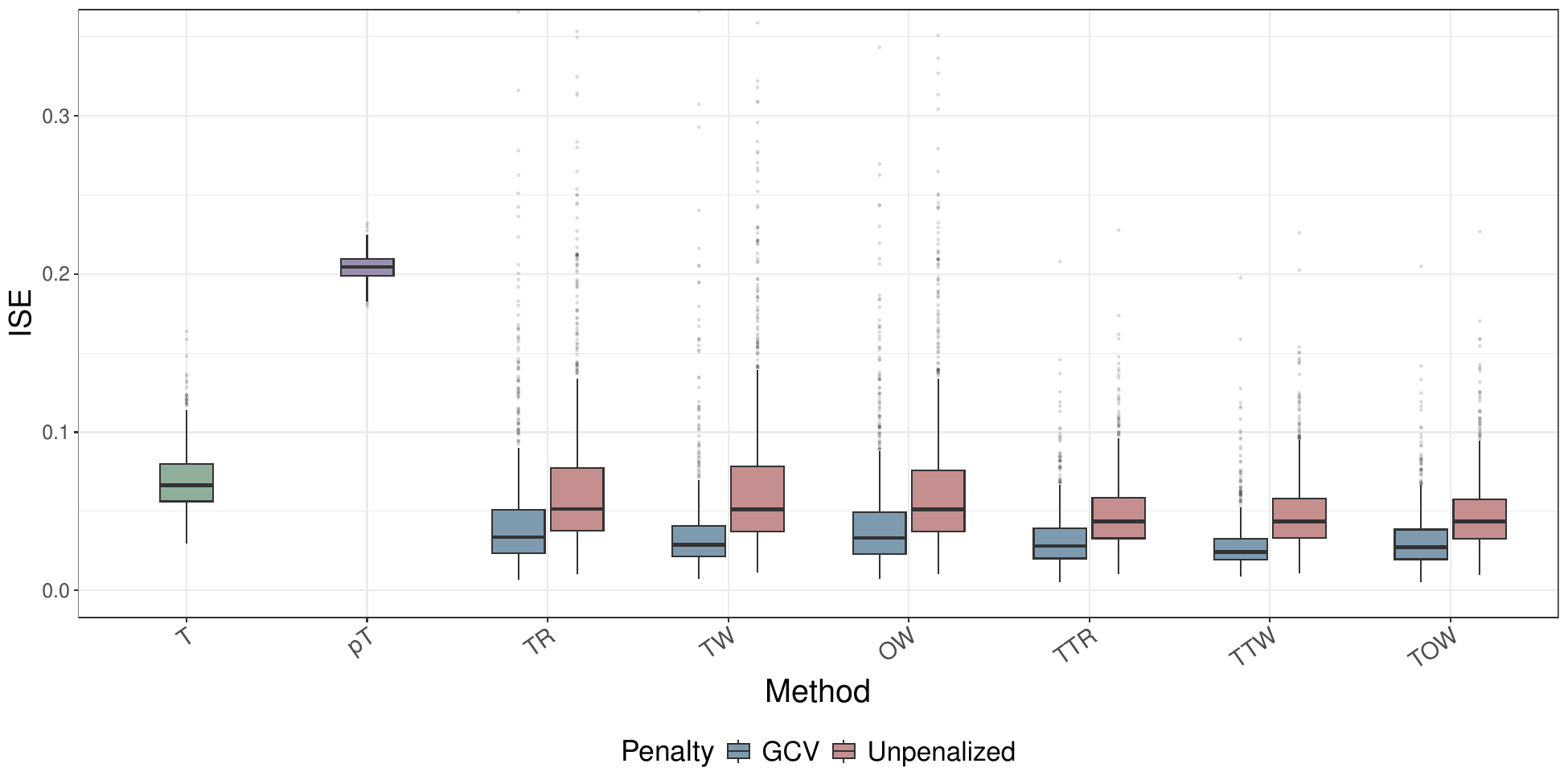}
    \caption{Simulation results presenting box plots of the integrated squared error (ISE) across $10^3$ iterations for the parametric T-learner in \cite{zhao2025estimation}, the nonparametric T-learner, and the six orthogonal learners in Section \ref{sec:6-repre-ortho-learners}. Each orthogonal learner is implemented using $(K_1,K_2)=(4,3)$ and two penalty strategies.}
    \label{fig:sim-results-ise-7,3}
\end{figure}

Several findings are worth noting. First, Figure \ref{fig:sim-results-ise-7,3} and Supplementary Material Figures \ref{fig:sim-results-ise-11,4}--\ref{fig:sim-results-ise-14,5} show box plots of the ISE using basis complexities $(K_1,K_2)=(4,3),(5,3),(7,3)$, respectively. Across all basis complexities, the parametric T-learner has the largest ISE, and the nonparametric T-learner has smaller ISE than the parametric T-learner. However, all nonorthogonal T-learners generally perform substantially worse than the orthogonal learners. This demonstrates that the T-learner is more sensitive to nuisance estimation quality. With misspecified parametric nuisance models, the T-learner is generally not consistent, whereas the nonparametric T-learner based on SuperLearner still suffers from the difficulty of learning complex nuisance functions. In contrast, the orthogonal learners remain locally insensitive to nuisance estimation errors, and among them, the TTW learner appears to have the smallest ISE. The targeted orthogonal learners have more concentrated ISE distributions, with fewer outliers, compared with their nontargeted variants.

Second, Table \ref{tab:pointwise-TPSW} and Tables \ref{tab:pointwise-TR}--\ref{tab:pointwise-TOW} in the Supplementary Material report pointwise inferential performance at four randomly selected representative points for the TTW, TR, TW, OW, TTR, and TOW learners, respectively. The pointwise inference procedures for all proposed orthogonal learners show satisfactory performance across all smoothers. This is reflected in the negligible pointwise bias and empirical coverage probabilities close to the nominal 95\% level. Moreover, the penalized smoother is generally more efficient than the unpenalized variant. Penalization selected by GCV does not appear to induce oversmoothing and therefore maintains at least conservative pointwise coverage. In contrast, the model-based pointwise inference for the parametric T-learner based on \cite{zhao2025estimation} is provided in Supplementary Material Table \ref{tab:pointwise-parametric-T}, which exhibits the highest bias and zero empirical coverage probabilities throughout. This is because the nuisance models are highly nonlinear but are misspecified as linear for the parametric T-learner.

Finally, Supplementary Material Table \ref{tab:uniform-coverage} shows the empirical performance of the uniform inference procedures for the orthogonal learners. Overall, the GCV penalty tuning strategy provides conservative uniform coverage, whereas the unpenalized variant can lead to slight undercoverage. It is worth noting that empirical uniform confidence band coverage is sensitive to localized bias, compared with the generally nominal pointwise coverage. For example, although some unpenalized variants have empirical uniform coverage slightly below 90\%, the average grid noncoverage is generally close to zero ($<0.1\%$), indicating that localized spikes at a small number of points can substantially reduce empirical coverage.

\begin{table}[ht!]
\centering
\caption{Pointwise bias, Monte Carlo standard deviation (MCSD), average estimated standard error (AESE), and empirical pointwise coverage for the TTW learner at four representative points. 
}\label{tab:pointwise-TPSW}
\begingroup
\setlength{\tabcolsep}{6.8pt}
\renewcommand{\arraystretch}{0.53}
\setlength{\aboverulesep}{0.20ex}
\setlength{\belowrulesep}{0.21ex}
\begin{tabular}[t]{ccrrrrrrrr}
\toprule
\multicolumn{2}{c}{ } & \multicolumn{4}{c}{GCV} & \multicolumn{4}{c}{Unpenalized} \\
\cmidrule(l{3pt}r{3pt}){3-6} \cmidrule(l{3pt}r{3pt}){7-10}
$(K_1,K_2)$ & $\bx$ & BIAS & MCSD & AESE & CP(\%) & BIAS & MCSD & AESE & CP(\%)\\
\midrule
\multirow{4}{*}{$(4,3)$} & $\bx_1$ & -0.03 & 0.15 & 0.24 & 99.7 & -0.00 & 0.28 & 0.24 & 92.0\\
 & $\bx_2$ & -0.00 & 0.15 & 0.21 & 99.3 & 0.02 & 0.20 & 0.21 & 96.6\\
 & $\bx_3$ & -0.04 & 0.06 & 0.12 & 99.3 & 0.02 & 0.12 & 0.12 & 95.3\\
 & $\bx_4$ & 0.02 & 0.11 & 0.18 & 99.8 & 0.01 & 0.19 & 0.18 & 93.4\\
\cmidrule{1-10}
\multirow{4}{*}{$(5,3)$} & $\bx_1$ & -0.04 & 0.15 & 0.27 & 99.8 & -0.00 & 0.30 & 0.27 & 92.4\\
 & $\bx_2$ & 0.00 & 0.15 & 0.22 & 99.4 & 0.01 & 0.21 & 0.22 & 96.3\\
 & $\bx_3$ & -0.04 & 0.06 & 0.13 & 99.6 & 0.01 & 0.13 & 0.13 & 95.2\\
 & $\bx_4$ & 0.01 & 0.11 & 0.19 & 99.9 & 0.01 & 0.20 & 0.19 & 93.6\\
\cmidrule{1-10}
\multirow{4}{*}{$(7,3)$} & $\bx_1$ & -0.04 & 0.15 & 0.32 & 100.0 & 0.03 & 0.36 & 0.32 & 92.2\\
 & $\bx_2$ & 0.00 & 0.15 & 0.24 & 99.8 & 0.02 & 0.23 & 0.24 & 96.2\\
 & $\bx_3$ & -0.04 & 0.06 & 0.16 & 99.9 & -0.00 & 0.16 & 0.15 & 94.8\\
 & $\bx_4$ & 0.01 & 0.11 & 0.21 & 100.0 & 0.01 & 0.22 & 0.21 & 95.1\\
\bottomrule
\end{tabular}
\endgroup
\end{table}


\section{Empirical application}\label{sec:data-application}

We illustrate the proposed method using data from the Coronary Artery Risk Development in Young Adults (CARDIA) study \citep{friedman1988cardia}. The supplementary analyses of the PSACR and STAR experiments, presented in Sections~\ref{subsec:data-application-PSACR-002}--\ref{subsec:data-application-STAR} of the Supplementary Material, further illustrate the application of the framework in distinct psychological and educational settings. For illustration, we analyze \(n=2{,}396\) participants who met the eligibility criteria and passed prespecified plausibility screening. The treatment \(A\) is current cigarette smoking at Year 20, with former or never smoking as the control condition. The mediator \(M\) is the natural logarithm of abdominal intermuscular adipose tissue (IMAT) volume at Year 25. The outcome \(Y\) is calibrated systolic blood pressure at Year 30, measured in mm Hg. We consider Year-15 demographic, socioeconomic, smoking history, inflammatory, cardiometabolic, and physical activity characteristics as plausible effect modifiers and mediator--outcome confounders. 
Treatment overlap is weak, with 68.3\% of the cross-fitted propensity score estimates below 0.05, as shown in Figure~\ref{fig:cardia-imat-propensity-overlap} of the Supplementary Material.

Using debiased machine learning based on the triply robust estimator of \citet{tchetgen2012semiparametric}, we estimate a population average TE of \(0.774\) mm Hg (SE, \(2.112\); 95\% CI, \([-3.365, 4.913]\)) and an NIE through IMAT of \(0.544\) mm Hg (SE, \(0.383\); 95\% CI, \([-0.207, 1.295]\)). The estimated mediation proportion is 70.3\%. Clinically, these estimates imply a modest \(0.77\) mm Hg increase in Year-30 systolic blood pressure under current versus former or never smoking, with \(0.54\) mm Hg mediated through smoking-induced changes in Year-25 IMAT. This pattern is consistent with existing evidence that smoking may increase IMAT and that greater IMAT, reflecting greater intermuscular adiposity, is associated with higher blood pressure and hypertension. Because the population average NIE may mask variation across individuals in both the magnitude and direction of the mediated pathway, estimating the CNIE surface is of interest.

For illustration, we implement all six orthogonal learners with sieve smoothers satisfying $\check{\mathcal G}_n=\mathcal G_n$. The nuisance functions are estimated using 10-fold cross-fitting and Super Learner with a library comprising \texttt{SL.mean}, \texttt{SL.glm}, \texttt{SL.glmnet}, \texttt{SL.earth}, and \texttt{SL.gam}. The second-stage sieve models the 11 standardized continuous covariates using thin plate regression splines of dimension three and includes main effects for the eight categorical covariates. To capture plausible heterogeneity while limiting overfitting, it includes a tensor product interaction between baseline BMI and systolic blood pressure and an interaction between sex and baseline waist circumference. We emphasize the TTW learner because it has the smallest average ISE and the most concentrated ISE distribution in the simulation study, and its treatment weighting aligns with the one-sided lack of overlap in CARDIA: 68.3\% of the estimated propensity scores are below 0.05. Supplementary Section~\ref{supp:cardia-additional-information} further compares all six learners.

\begin{figure}[ht!]
   \centering
   \includegraphics[width=0.95\linewidth]
   {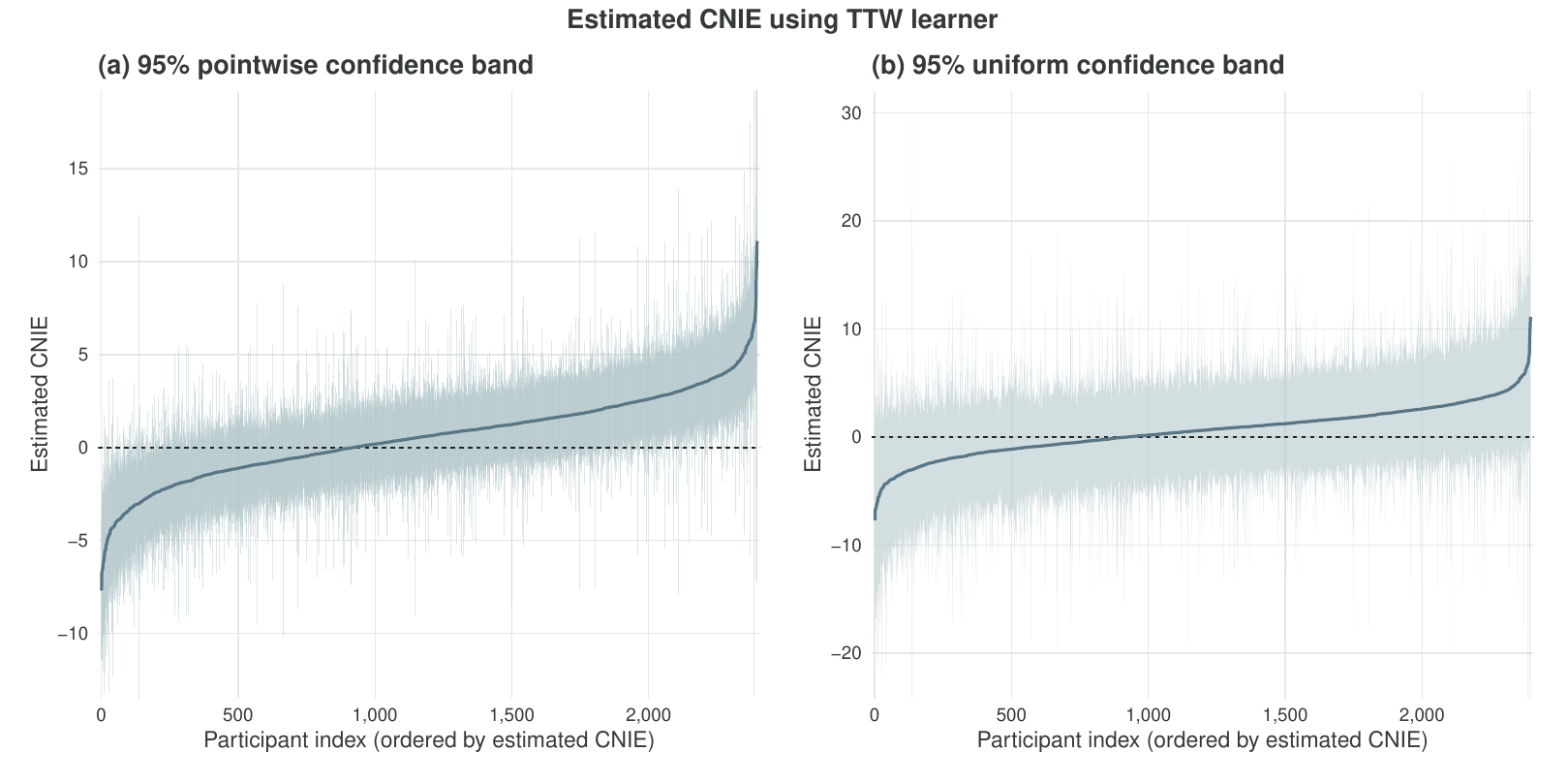}
   \caption{CNIE estimates at the observed covariate profiles in the CARDIA application, obtained using the TTW learner, along with 95\% pointwise and uniform confidence bands.}
   \label{fig:cardia-imat-prediction-ttw}
\end{figure}

Figure \ref{fig:cardia-imat-prediction-ttw} and Supplementary Figures \ref{fig:cardia-imat-prediction-tr}--\ref{fig:cardia-imat-prediction-tow} display the ordered estimated CNIEs, together with pointwise and uniform confidence bands over the 2,396 observed profiles, for the TTW, TR, TW, OW, TTR, and TOW learners. The Gaussian supremum critical value widens the uniform bands relative to the pointwise bands but does not affect their smoothness. For the TTW learner, the estimated CNIEs have a mean of \(0.562\), a median of \(0.623\), and an interquartile range of \([-0.855,1.945]\), with 61.6\% being positive. Targeting reduces mean pointwise band widths by 45.8\% to 60.9\% and mean uniform band widths by 43.9\% to 59.9\% relative to the corresponding untargeted learners.

To interpret the fitted heterogeneity, Figure~\ref{fig:cardia-imat-tree-ttw} presents a shallow regression tree, or fit-the-fit summary, of the TTW-estimated CNIE surface. The first split separates former or current smokers at Year 15 from never smokers, indicating prior smoking status as a prominent modifier. Among former or current smokers, the tree further separates former smokers, with a mean estimated CNIE of \(-1.351\) mm Hg, from current smokers, with a mean near zero (\(0.031\) mm Hg). Among never smokers, household income below versus above \$35,000 yields mean estimated CNIEs of \(0.214\) and \(1.504\) mm Hg, respectively. Supplementary Figures~\ref{fig:cardia-imat-tree-tr}--\ref{fig:cardia-imat-tree-tow} present corresponding summaries for the other learners, identifying baseline hypertension history, household income, age, BMI, high-density lipoprotein cholesterol, and prior smoking as additional descriptive modifiers.

Overall, the CARDIA analysis illustrates the value of estimating heterogeneous mediation effects. The population average NIE estimate suggests a modest positive mediated pathway from current smoking through IMAT to later systolic blood pressure. The individual estimates reveal a more nuanced pattern: the fitted pathway is most negative among participants who were former smokers at Year 15 and most positive among those who had never smoked at Year 15 and had household incomes of at least \$35,000.

\begin{figure}[ht!]
   \centering
   \includegraphics[width=\linewidth]
   {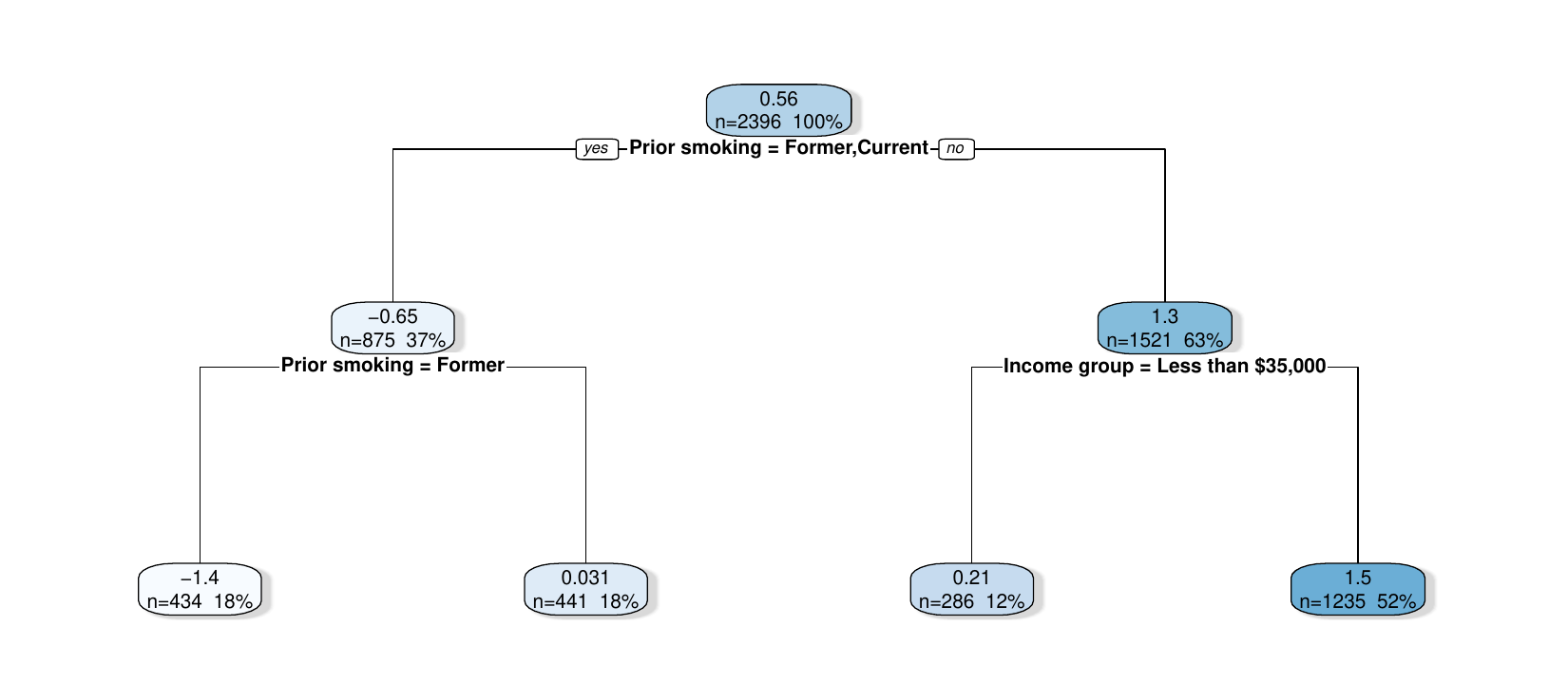}
   \caption{Fit-the-fit plot based on a decision tree summary of the estimated CNIEs from the TTW learner in the CARDIA application.}
   \label{fig:cardia-imat-tree-ttw}
\end{figure}

\section{Concluding remarks}\label{sec:conclusion}

We highlight several directions for future work. First, the targeting step in Algorithm~\ref{alg:targeted-learner} is implemented using the pooled sample and therefore does not preserve cross-fitting. The asymptotic analysis must account for the resulting dependence using suitable empirical process arguments and maximal inequalities, which typically require stronger conditions and may yield weaker guarantees. An alternative procedure following \citet{vansteelandt2025orthogonal} performs targeting only on the training sample, thereby preserving cross-fitting. This approach may reduce the risk of overfitting but provide less finite-sample stabilization. Further investigation of both the targeted learners and this alternative targeting strategy is of interest. Second, the proposed methods rely on the untestable sequential ignorability assumption, which may remain questionable even in randomized experiments. Further sensitivity analysis could therefore be developed.



\section*{Data Availability Statement}
CARDIA data are available from NHLBI BioLINCC (\url{https://biolincc.nhlbi.nih.gov/studies/cardia/}), subject to approval and a data-use agreement. PSACR-002 data are openly available on the Open Science Framework at \url{https://osf.io/jeu73/}. STAR data are openly available in Harvard Dataverse at \url{https://doi.org/10.7910/DVN/SIWH9F}.

\bibliographystyle{chicago}
\bibliography{ref}

\clearpage
\pdfbookmark[0]{Supplementary Material}{supplementary-material}

\setcounter{page}{1}
\renewcommand{\thepage}{S\arabic{page}}
\setcounter{section}{0}
\setcounter{subsection}{0}
\setcounter{subsubsection}{0}
\renewcommand{\thesection}{S\arabic{section}}
\renewcommand{\thesubsection}{\thesection.\arabic{subsection}}
\renewcommand{\thesubsubsection}{\thesubsection.\arabic{subsubsection}}

\setcounter{equation}{0}
\renewcommand{\theequation}{S\arabic{equation}}
\setcounter{figure}{0}
\renewcommand{\thefigure}{S\arabic{figure}}
\setcounter{table}{0}
\renewcommand{\thetable}{S\arabic{table}}
\setcounter{theorem}{0}
\renewcommand{\thetheorem}{S\arabic{theorem}}
\setcounter{prop}{0}
\renewcommand{\theprop}{S\arabic{prop}}
\setcounter{assumption}{0}
\renewcommand{\theassumption}{S\arabic{assumption}}
\setcounter{lemma}{0}
\renewcommand{\thelemma}{S\arabic{lemma}}
\setcounter{algorithm}{0}
\renewcommand{\thealgorithm}{S\arabic{algorithm}}
\setcounter{remark}{0}
\renewcommand{\theremark}{S\arabic{remark}}
\setcounter{coro}{0}
\renewcommand{\thecoro}{S\arabic{coro}}
\setcounter{example}{0}
\renewcommand{\theexample}{S\arabic{example}}
\setcounter{rc}{0}
\renewcommand{\therc}{S\arabic{rc}}
\setcounter{inequality}{0}
\renewcommand{\theinequality}{S\arabic{inequality}}

\renewcommand{\theHsection}{supp.\arabic{section}}
\renewcommand{\theHsubsection}{\theHsection.\arabic{subsection}}
\renewcommand{\theHsubsubsection}{\theHsubsection.\arabic{subsubsection}}
\makeatletter
\@ifundefined{theHequation}
  {\newcommand{\theHequation}{supp.\arabic{equation}}}
  {\renewcommand{\theHequation}{supp.\arabic{equation}}}
\@ifundefined{theHfigure}
  {\newcommand{\theHfigure}{supp.\arabic{figure}}}
  {\renewcommand{\theHfigure}{supp.\arabic{figure}}}
\@ifundefined{theHtable}
  {\newcommand{\theHtable}{supp.\arabic{table}}}
  {\renewcommand{\theHtable}{supp.\arabic{table}}}
\@ifundefined{theHtheorem}
  {\newcommand{\theHtheorem}{supp.\arabic{theorem}}}
  {\renewcommand{\theHtheorem}{supp.\arabic{theorem}}}
\@ifundefined{theHprop}
  {\newcommand{\theHprop}{supp.\arabic{prop}}}
  {\renewcommand{\theHprop}{supp.\arabic{prop}}}
\@ifundefined{theHassumption}
  {\newcommand{\theHassumption}{supp.\arabic{assumption}}}
  {\renewcommand{\theHassumption}{supp.\arabic{assumption}}}
\@ifundefined{theHlemma}
  {\newcommand{\theHlemma}{supp.\arabic{lemma}}}
  {\renewcommand{\theHlemma}{supp.\arabic{lemma}}}
\@ifundefined{theHalgorithm}
  {\newcommand{\theHalgorithm}{supp.\arabic{algorithm}}}
  {\renewcommand{\theHalgorithm}{supp.\arabic{algorithm}}}
\@ifundefined{theHremark}
  {\newcommand{\theHremark}{supp.\arabic{remark}}}
  {\renewcommand{\theHremark}{supp.\arabic{remark}}}
\@ifundefined{theHcoro}
  {\newcommand{\theHcoro}{supp.\arabic{coro}}}
  {\renewcommand{\theHcoro}{supp.\arabic{coro}}}
\@ifundefined{theHexample}
  {\newcommand{\theHexample}{supp.\arabic{example}}}
  {\renewcommand{\theHexample}{supp.\arabic{example}}}
\@ifundefined{theHrc}
  {\newcommand{\theHrc}{supp.\arabic{rc}}}
  {\renewcommand{\theHrc}{supp.\arabic{rc}}}
\@ifundefined{theHinequality}
  {\newcommand{\theHinequality}{supp.\arabic{inequality}}}
  {\renewcommand{\theHinequality}{supp.\arabic{inequality}}}
\makeatother

\allowdisplaybreaks
\spacingset{1.5}

\if1\blind
{
\begin{center}
  {\large\bfseries Supplementary Material for ``Nonparametric heterogeneous causal mediation with orthogonal machine learning'' \mbox{by Tong et al.}\par}
  \vspace{0.55em}
  {\normalsize\bfseries Jiaqi Tong$^{1}$, Yi Zhao$^{2}$, Bhramar Mukherjee$^{1}$, and Fan Li$^{1,*}$\par}
  \vspace{0.15em}
  {\small $^{1}$Department of Biostatistics, Yale School of Public Health, New Haven, CT, USA\par}
  {\small $^{2}$Department of Biostatistics and Health Data Science, Indiana University School of Medicine\par}
\end{center}
}
\fi

\if0\blind
{
  \bigskip
  \bigskip
  \bigskip
  \begin{center}
    {\Large\bf Supplementary Material for ``Nonparametric heterogeneous causal mediation with orthogonal machine learning''}
  \end{center}
  \medskip
}
\fi

\smallskip

\hypersetup{linkcolor=black,linktoc=all}
\setcounter{tocdepth}{2}
\etocsettocstyle{\section*{\contentsname}}{}
\etocsetlocaltop{part}
\makeatletter
\renewcommand*{\l@subsection}{\@dottedtocline{2}{1.5em}{3.5em}}
\makeatother
\footnotesize
\localtableofcontents
\normalsize
\hypersetup{linkcolor=red}
\clearpage

\section{Summary}
\label{sec:intro_supple}
This Supplementary Material is organized as follows.

Section \ref{supp;sec:notation-summary} summarizes the technical notation used in the subsequent statistical analysis.

Section~\ref{supp;sec:band-targeting-adjustment} presents the covariance matrix adjustment used to construct confidence bands for the targeted learners.

Section \ref{supp;sec:orthogonal-cnde} presents the construction of weighted orthogonal learners for the conditional natural direct effect, $\text{CNDE}(\bX)$.

Section \ref{ss:lemma-additional-technical-results-proof} presents useful lemmas with proofs and additional technical results.

Section \ref{sec:proofs-all-technical} presents the proofs of all technical results in the main manuscript and this Supplementary Material.

Section \ref{supp;sec:supporting-info-simulation} presents supporting information for the simulation experiments.

Section~\ref{sec:two-more-data-application} presents two additional empirical applications.

Section \ref{supp;sec:additional-figure} presents additional tables and figures.

Throughout, let $\bH=\bE\{\phi_d^{\text{NIE}}\bb(\bX)\bb(\bX)^\top\}=\bE\{w(\bX)\bb(\bX)\bb(\bX)^\top\}$ and $\bh=\bE\{\bb(\bX)\phi_n^{\text{NIE}}\}=\bE\{\bb(\bX)w(\bX)g(\bX)\}$ denote the population analogues of $\widehat{\bH}$ and $\widehat{\bh}$, respectively.

\section{Some technical notation}\label{supp;sec:notation-summary}
The following technical notation is employed in the main manuscript and is reproduced below for use in the Supplementary Material. Let $[n]=\{1,\ldots,n\}$ denote the set of positive integers up to $n$. Let $\bullet^c$ denote the set complement, and let $A \backslash B = A \cap B^c$ denote the set difference. Let $\lVert \bv \rVert_q=(\sum_{l=1}^L|v_l|^q)^{1/q}$ denote the usual $\ell^q$ norm. Let $\lVert \bullet \rVert_{\mathrm{op}}$ denote the matrix operator or spectral norm and $\lVert f \rVert_{\mP,q} = \left( \int |f|^q d\mP \right)^{1/q}$ the $L^q(\mP)$ norm. Let $d_u(\bullet)=\max_{1\leq q\leq Q}\lVert \widehat{\bullet}^q-\bullet\rVert_{\mathbb{P},u}$ denote the nuisance estimation error in the $L^u(\mathbb{P})$ norm based on the training sample $\mathcal{F}_q^c$ (see Section \ref{sec:OSL}). In particular, $\lVert f \rVert_{\mP,\infty}=\sup_\bx |f(\bx)|$ is the uniform norm. Let $\psi_k(\bullet)$ be the $k$th eigenvalue of a generic matrix $\bullet\in\mathbb{R}^{K\times K}$ such that $\psi_1\leq\ldots\leq\psi_K$. We write $a_n\lesssim b_n$ if $a_n\leq cb_n$ for some constant $c$ independent of $n$, and $a_n\asymp b_n$ if $a_n\lesssim b_n$ and $b_n\lesssim a_n$. We use $X =_d Y$ to denote that two random variables $X$ and $Y$ have the same distribution, and write $X =_d Y + o_\mP(a_n)$ if $(X - Y)/a_n$ converges to zero in probability. We also use $a_n \lesssim_\mP b_n$ to denote that the stochastic sequence $a_n$ is of order at most $b_n$ in probability, i.e., $a_n = O_\mP(b_n)$. The two notations $a_n \lesssim_\mP b_n$ and $a_n = O_\mP(b_n)$ are used interchangeably for convenience. Let $\mP_n(V)=n^{-1}\sum_{i=1}^n V_i$ denote the empirical mean of a generic random object $V$, and let $\mG_n\{f(V)\}=n^{-1/2}\sum_{i=1}^n [f(V_i)-\bE\{f(V_i)\}]$ denote the empirical process indexed by a generic function $f$. Let $\text{diam}(\calX):=\sup_{\bx_1,\bx_2\in\calX}\|\bx_1-\bx_2\|_2$ denote the diameter of the covariate support \(\calX\). Let $a \wedge b:=\min(a,b)$ and $a \vee b:=\max(a,b)$.

\section{Confidence band construction for targeted learners}\label{supp;sec:band-targeting-adjustment}
For the untargeted learners, the covariance matrix estimator is
given by
\[
\widehat{\mathbb{V}}
=
\widehat{\bH}^{-1}
\mP_n\left[
\left\{
\widehat{\phi}_n^{\text{NIE}}
-
\widehat{\phi}_d^{\text{NIE}}\widehat{g}(\bX)
\right\}^2
\bb(\bX)\bb(\bX)^\top
\right]
\widehat{\bH}^{-1}.
\]
For the targeted learners, the fluctuation parameter
$\widehat{\boldsymbol{\epsilon}}$ in Algorithm
\ref{alg:targeted-learner} is estimated using the pooled sample and
therefore contributes to the first-order uncertainty. To account for
this contribution, define $\widehat{w}(\bX)=\omega\{\widehat{\pi}(\bX)\}$ and
\[
\widehat{d}_w(\mO)
=
\frac{A\widehat{w}(\bX)}{\widehat{\pi}(\bX)}
\widehat{r}(M,\bX)
\left\{
Y-\widehat{\mu}_1^\ast(M,\bX)
\right\}.
\]
Let $\widehat{\phi}_{n,\mathrm{tar}}^{\text{NIE}}
=
\widehat{\phi}_{n,\ast}^{\text{NIE}}
+
\widehat{d}_w$ denote the numerator induced by the targeted loss, where
$\widehat{\phi}_{n,\ast}^{\text{NIE}}$ is the original orthogonal
numerator evaluated using $\widehat{\mu}_1^\ast$. Define the
Stage 2 and targeting scores by
\[
\begin{aligned}
\widehat{\boldsymbol{U}}_g(\mO)
=&
\bb(\bX)
\left\{
\widehat{\phi}_{n,\mathrm{tar}}^{\text{NIE}}
-
\widehat{\phi}_d^{\text{NIE}}\widehat{g}(\bX)
\right\},\\
\widehat{\boldsymbol{U}}_{\epsilon}(\mO)
=&
\check{\bb}(\bX)\widehat{d}_w(\mO).
\end{aligned}
\]
Then the corresponding second-order derivatives are
\[
\begin{aligned}
\widehat{\boldsymbol{J}}_{g\epsilon}
=&
\mP_n\left[
\frac{A\widehat{w}(\bX)}{\widehat{\pi}(\bX)}
\bb(\bX)\check{\bb}(\bX)^\top
\right],\\
\widehat{\boldsymbol{J}}_{\epsilon\epsilon}
=&
\mP_n\left[
\frac{A\widehat{w}(\bX)}{\widehat{\pi}(\bX)}
\widehat{r}(M,\bX)
\check{\bb}(\bX)\check{\bb}(\bX)^\top
\right].
\end{aligned}
\]
Profiling the pooled fluctuation parameter out of the joint estimating
equations gives the adjusted score
\[
\widehat{\boldsymbol{U}}_{\mathrm{prof}}(\mO)
=
\widehat{\boldsymbol{U}}_g(\mO)
-
\widehat{\boldsymbol{J}}_{g\epsilon}
\widehat{\boldsymbol{J}}_{\epsilon\epsilon}^{-1}
\widehat{\boldsymbol{U}}_{\epsilon}(\mO).
\]
The covariance matrix estimator for a targeted learner is therefore
\[
\widehat{\mathbb{V}}_{\mathrm{tar}}
=
\widehat{\bH}^{-1}
\mP_n\left[
\widehat{\boldsymbol{U}}_{\mathrm{prof}}(\mO)
\widehat{\boldsymbol{U}}_{\mathrm{prof}}(\mO)^\top
\right]
\widehat{\bH}^{-1}.
\]
When
$\check{\mathcal{G}}_n=\mathcal{G}_n$ with $\check{\bb}(\bX)=\bb(\bX)$, we obtain 
$\boldsymbol{J}_{g\epsilon}
=\boldsymbol{J}_{\epsilon\epsilon}$ because $\bE\{r(M,\bX)\mid A=1,\bX\}=1$.
The profile score then reduces asymptotically to
\[
\bb(\bX)
\left\{
\widehat{\phi}_{n,\mathrm{tar}}^{\text{NIE}}
-
\widehat{d}_w
-
\widehat{\phi}_d^{\text{NIE}}\widehat{g}(\bX)
\right\}
=
\bb(\bX)
\left\{
\widehat{\phi}_{n,\ast}^{\text{NIE}}
-
\widehat{\phi}_d^{\text{NIE}}\widehat{g}(\bX)
\right\}.
\]
Thus, when constructing the covariance matrix under $\check{\mathcal{G}}_n=\mathcal{G}_n$, the adjustment restores the debiasing drift term removed from the targeted loss. Finally, the pointwise and uniform confidence bands are constructed in the same manner as those for the untargeted learners, with the adjustment applied to the covariance matrix estimator. Finally, Algorithm~\ref{alg:gaussian-threshold} below describes the calculation of the uniform confidence band threshold \(c_n(1-\gamma)\).

\begin{algorithm}[ht!] 
\caption{Uniform confidence band threshold $c_n(1-\gamma)$.}
\label{alg:gaussian-threshold}
\begin{algorithmic}[1]
\fontsize{11}{9}\selectfont
\Require Covariance matrix estimate $\widehat{\mathbb{V}}$, dense grid $\{\bx_j\}_{j=1}^J$, number of iterations $B$
\Ensure Threshold $c_n(1-\gamma)$

\For{$b = 1$ to $B$}
    \State Generate a standard normal random vector $\boldsymbol{Z}_b \sim \mathcal{N}(0, \boldsymbol{I}_K)$
    \State Compute $\mathfrak{S}_b = \max_{1\leq j\leq J} | {\bb(\bx_j)^\top \widehat{\mathbb{V}}^{1/2} \boldsymbol{Z}_b}|/{\| \widehat{\mathbb{V}}^{1/2}\bb(\bx_j)\|_2} $
\EndFor
\State \Return $c_n(1-\gamma)$ as the empirical conditional $(1-\gamma)$-quantile of $\{\mathfrak{S}_b\}_{b=1}^B$ given the data.
\end{algorithmic}
\end{algorithm}

\section{A class of weighted orthogonal learners for the conditional natural direct effect}\label{supp;sec:orthogonal-cnde}
When the primary scientific interest lies in the direct pathway that bypasses the mediator, {the same approach can be applied to estimate the conditional natural direct effect}. Specifically, the corresponding weighted orthogonal learners minimize the following loss function:
\begin{align*}
&l_w(\mO;\bGamma,g)\\
=&\phi_d^{\text{NDE}}\lb \eta_{10}(\bX)-\eta_{00}(\bX)-g(\bX)\rb^2-2w(\bX)\lb \phi_{10}(\mO)-\phi_{00}(\mO)- \eta_{10}(\bX)+\eta_{00}(\bX)\rb g(\bX),
\end{align*}
where $\phi_d^{\text{NDE}}=\phi_d^{\text{NIE}}=w(\bX)+\omega'\{\pi(\bX)\}\lb A-\pi(\bX)\rb$ and 
\begin{align*}
    \phi_n^{\text{NDE}}=\{\eta_{10}(\bX)-\eta_{00}(\bX)\}\lb \phi_d^{\text{NDE}}-w(\bX)\rb+w(\bX)\{\phi_{10}(\mO)-\phi_{00}(\mO)\}.
\end{align*}
Using regularized linear sieves, the resulting weighted orthogonal learners have the following closed form:
\begin{align*}
    \widehat{\text{CNDE}}(\bx)=\bb(\bx)^\top\widehat{\bbeta}=\bb(\bx)^\top(\widehat{\bH}+\lambda \bP)^{-1}\widehat{\bh},
\end{align*}
where $\widehat{\bH}=\mP_n\{\widehat{\phi}_d^{\text{NDE}}\bb(\bX)\bb(\bX)^\top\}$ and $\widehat{\bh}=\mP_n\{\bb(\bX)\widehat{\phi}_n^{\text{NDE}}\}$. Furthermore, targeted learning can be implemented exactly as outlined in Algorithm \ref{alg:targeted-learner}, because $\phi_{00}$ has a form analogous to $\phi_{11}$ and the nuisance function $\widehat{\mu}_0$ does not require targeted refinement. Finally, the asymptotic analysis remains largely the same after imposing on $\mu_0$ regularity conditions analogous to those imposed on $\mu_1$.

\section{Some useful lemmas and additional technical results}\label{ss:lemma-additional-technical-results-proof}
Before presenting the proofs of the $L^2$ and uniform limit theory, we state a useful lemma, the \emph{matrix Bernstein concentration inequality} (Theorem 6.1.1 in \cite{tropp2015introductionmatrixconcentrationinequalities}), which provides a non-asymptotic tail bound for mean-zero random matrices and is required to derive the convergence rates for $\widehat{\bH}$. For simplicity, we drop the superscripts in $\phi_d^{\text{NIE}}$ and $\phi_n^{\text{NIE}}$ and write them as $\phi_d$ and $\phi_n$, respectively.
\begin{lemma}[\emph{Matrix Bernstein}]\label{lemma:bernstein-matrix-concentration-inequality}
Let $\{\bM_i\}_{i=1}^n$ be a sequence of independent, mean-zero, symmetric $K\times K$ random matrices such that $\|\bM_i\|_\textop\leq\tau$ almost surely for some $\tau>0$ and all $i\in\{1,\ldots,n\}$. Let $\sigma^2=\|\sum_{i=1}^n\bE(\bM_i^2)\|_\textop$ be the matrix variance statistic of the sum. Then
\begin{align*}
    \bE\left(\left\|\sum_{i=1}^n\bM_i\right\|\right)\leq \sqrt{2\sigma^2\log(2K)}+\frac{1}{3}\tau\log(2K).
\end{align*}
Furthermore, for all $t\geq0$,
\begin{align*}
    \Pr\left( \left\|\sum_{i=1}^n\bM_i\right\|_\textop\geq t\right)\leq 2K\exp\lb\frac{-t^2/2}{\sigma^2+\tau t/3}\rb.
\end{align*}
\end{lemma}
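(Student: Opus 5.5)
The plan follows the matrix Laplace transform method of \cite{tropp2015introductionmatrixconcentrationinequalities}. Write $\bM=\sum_{i=1}^n\bM_i$. Since $\|\bM\|_\textop=\max\{\lambda_{\max}(\bM),\lambda_{\max}(-\bM)\}$, and the summands $-\bM_i$ satisfy the same hypotheses with the same $\tau$ and $\sigma^2$, it suffices to control $\lambda_{\max}(\bM)$ and then combine the two signs.

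\textbf{Step 1 (Laplace transform).} For every $\theta>0$, Markov's inequality combined with $e^{\theta\lambda_{\max}(\bM)}=\lambda_{\max}(e^{\theta\bM})\le\mathrm{tr}\,e^{\theta\bM}$ gives
\[
\Pr\{\lambda_{\max}(\bM)\ge t\}\le e^{-\theta t}\,\bE\,\mathrm{tr}\,e^{\theta\bM}.
\]
Jensen's inequality applied in the same way gives
\[
\bE\,\lambda_{\max}(\bM)\le \theta^{-1}\log \bE\,\mathrm{tr}\,e^{\theta\bM}.
\]

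\textbf{Step 2 (subadditivity of matrix cumulants).} Next I show
\[
\bE\,\mathrm{tr}\exp\Big(\sum_i\theta\bM_i\Big)\le \mathrm{tr}\exp\Big(\sum_i\log\bE\,e^{\theta\bM_i}\Big).
\]
Scalar exponentials do not factor for non-commuting matrices, so this step needs genuinely matrix-specific input, and I expect it to be the main obstacle. The tool is Lieb's concavity theorem: $A\mapsto\mathrm{tr}\exp(H+\log A)$ is concave on positive definite matrices. I would condition on all but one summand, apply Jensen's inequality to that summand through Lieb's theorem, and iterate over $i=1,\ldots,n$ using independence. Lieb's theorem itself I would cite rather than reprove.

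\textbf{Step 3 (mgf bound per summand).} For $0<\theta<3/\tau$, the scalar function $x\mapsto(e^{\theta x}-\theta x-1)/x^2$ is increasing. Hence, for $|x|\le\tau$,
\[
e^{\theta x}\le 1+\theta x+x^2\,\frac{e^{\theta\tau}-\theta\tau-1}{\tau^2}.
\]
Comparing Taylor series with a geometric series bounds the last coefficient by
\[
g(\theta):=\frac{\theta^2/2}{1-\theta\tau/3}.
\]
The transfer rule for symmetric matrices with spectrum in $[-\tau,\tau]$, together with $\bE\bM_i=0$, gives $\bE e^{\theta\bM_i}\preceq \boldsymbol{I}+g(\theta)\bE\bM_i^2$. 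Because $\log$ is operator monotone and $\log(1+u)\le u$, this yields $\log\bE e^{\theta\bM_i}\preceq g(\theta)\bE\bM_i^2$. Monotonicity of $\mathrm{tr}\exp$ then gives
\[
\bE\,\mathrm{tr}\,e^{\theta\bM}\le \mathrm{tr}\exp\{g(\theta)\textstyle\sum_i\bE\bM_i^2\}\le K e^{g(\theta)\sigma^2}.
\]

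\textbf{Step 4 (optimization).} For the tail bound, I insert Step 3 into Step 1 and choose $\theta=t/(\sigma^2+\tau t/3)$, which satisfies $\theta<3/\tau$. This gives
\[
\Pr\{\lambda_{\max}(\bM)\ge t\}\le K\exp\Big\{\frac{-t^2/2}{\sigma^2+\tau t/3}\Big\},
\]
and a union bound over $\pm\bM$ supplies the factor $2K$.

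For the expectation bound, I apply Step 1's Jensen bound to the $2K\times 2K$ block-diagonal matrix $\mathrm{diag}(\bM,-\bM)$. Its largest eigenvalue equals $\|\bM\|_\textop$, and its trace mgf is at most $2Ke^{g(\theta)\sigma^2}$. This gives
\[
\bE\|\bM\|_\textop\le \inf_{0<\theta<3/\tau}\Big\{\frac{\log(2K)}{\theta}+\frac{\theta\sigma^2/2}{1-\theta\tau/3}\Big\}.
\]
Choosing $\theta$ to balance the two terms, as in Tropp's Theorem 6.1.1, yields $\sqrt{2\sigma^2\log(2K)}+\tfrac13\tau\log(2K)$.
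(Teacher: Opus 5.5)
Your proposal is correct. It is the matrix Laplace transform argument behind Theorem 6.1.1 of \cite{tropp2015introductionmatrixconcentrationinequalities}: Lieb's theorem for subadditivity of matrix cumulants, the Bernstein bound $\log\bE e^{\theta\bM_i}\preceq g(\theta)\bE\bM_i^2$, and the block matrix $\mathrm{diag}(\bM,-\bM)$ to obtain the factor $2K$. The paper simply cites that theorem and gives no proof of its own, so you have reconstructed the cited argument. The only loose end is the degenerate case $\sigma^2=0$, where your choices of $\theta$ hit the boundary $3/\tau$; there every $\bM_i=0$ almost surely, so both bounds hold trivially.
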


Lemma \ref{lemma:bernstein-matrix-concentration-inequality} is a well-known concentration inequality for matrices that extends the classical Bernstein inequality for scalar random variables. Theorem \ref{thm:rate-gram-norm} below establishes the convergence rate for $\widehat{\bH}$ under the spectral norm. To proceed, recall that $\bH=\bE\{\phi_d\bb(\bX)\bb(\bX)^\top\}=\bE\{w(\bX)\bb(\bX)\bb(\bX)^\top\}$ and $\bh=\bE\{\bb(\bX)\phi_n\}=\bE\{\bb(\bX)w(\bX)g(\bX)\}$ are defined as the population analogues of $\widehat{\bH}$ and $\widehat{\bh}$ evaluated at the true nuisance functions, respectively. Moreover, we define $\widetilde{\bH}=\mP_n\{\phi_d\bb(\bX)\bb(\bX)^\top\}$ and $\widetilde{\bh}=\mP_n\{\bb(\bX)\phi_n\}$. 
\begin{theorem}\label{thm:rate-gram-norm}
Under Assumptions \ref{assump:SI}–\ref{asp:regularityL2}, the convergence rate for the weighted Gram matrix estimator $\widehat{\bH}=\mP_n\{\widehat{\phi}_d\bb(\bX)\bb(\bX)^\top\}$ is given by
\begin{align*}
    \left\|\widehat{\bH}-\bH\right\|_\textop\lessp&\sqrt{\frac{\xi_K^2\log K}{n}}+\min\lb\xi_K^2d_2(\pi)^2,\xi_Kd_4(\pi)^2,d_\infty(\pi)^2\rb:=m^{\bH}_n.
\end{align*}    
In particular, when $w(\bX)=1$ or $w(\bX)=\pi(\bX)$, it follows that $\left\|\widehat{\bH}-\bH\right\|_\textop\lessp \sqrt{{\xi_K^2\log K}/{n}}$.
\end{theorem}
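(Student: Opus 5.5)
The plan is to split the error into an oracle sampling part and a nuisance part,
\[
\widehat{\bH}-\bH=(\widetilde{\bH}-\bH)+(\widehat{\bH}-\widetilde{\bH}),
\]
where $\widetilde{\bH}=\mP_n\{\phi_d\bb(\bX)\bb(\bX)^\top\}$ uses the true $\pi$. Each piece is bounded separately in operator norm.

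\textbf{Oracle term.} Write $\widetilde{\bH}-\bH=n^{-1}\sum_i\bM_i$ with $\bM_i=\phi_{d,i}\bb(\bX_i)\bb(\bX_i)^\top-\bH$. These are independent, mean-zero and symmetric. By Assumption~\ref{asp:regularityL2}(d), $|\phi_d|\le\epsilon_2+\epsilon_2=O(1)$, so $\|\bM_i\|_\textop\lesssim\xi_K^2$ (note $\|\bH\|_\textop\lesssim\|\bG\|_\textop=O(1)$). For the variance statistic,
\[
\bE(\bM_i^2)\preceq\bE\{\phi_d^2\|\bb\|_2^2\bb\bb^\top\}\preceq C\xi_K^2\bG ,
\]
so $\sigma^2\lesssim n\xi_K^2$ by Assumption~\ref{asp:regularityL2}(a). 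Lemma~\ref{lemma:bernstein-matrix-concentration-inequality} then gives
\[
\bE\|\widetilde{\bH}-\bH\|_\textop\lesssim\sqrt{\xi_K^2\log K/n}+\xi_K^2\log K/n .
\]
Under Assumption~\ref{asp:regularityL2}(c) the second term is dominated by the first, and Markov's inequality yields the $O_\mP$ bound.

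\textbf{Nuisance term.} I would work fold by fold: for each $q\in[Q]$, condition on the training sample $\mathcal{F}_q^c$, so that $\widehat\pi^q$ is fixed and the summands over $\mathcal{F}_q$ are i.i.d. Set $\delta=\widehat\phi_d-\phi_d$. A Taylor expansion of $\omega$, using $|\omega''|\le\epsilon_2$, gives
\[
\delta=\{\omega'(\widehat\pi)-\omega'(\pi)\}(A-\pi)+R,\qquad |R|\lesssim(\widehat\pi-\pi)^2,
\]
because the first-order terms $\omega'(\pi)(\widehat\pi-\pi)-\omega'(\widehat\pi)(\widehat\pi-\pi)$ combine into a second-order remainder. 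The leading term has conditional mean zero given $\bX$. This is the Neyman orthogonality of $\phi_d$.

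Decompose $\mP_n^q\{\delta\,\bb\bb^\top\}$ into its conditional mean and a centered fluctuation.
\begin{itemize}
\item \emph{Conditional mean.} It equals $\bE\{R\,\bb\bb^\top\}$. For a unit vector $\bv$,
\[
\bE\{R(\bv^\top\bb)^2\}\lesssim\min\bigl\{\xi_K^2d_2(\pi)^2,\;\xi_K\,d_4(\pi)^2,\;d_\infty(\pi)^2\bigr\}.
\]
The first bound uses $(\bv^\top\bb)^2\le\xi_K^2$. The second uses Cauchy--Schwarz, $\|\widehat\pi-\pi\|_{4}^2\,\|(\bv^\top\bb)^2\|_{2}$, together with $\bE(\bv^\top\bb)^4\le\xi_K^2\,\bE(\bv^\top\bb)^2\lesssim\xi_K^2$. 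The third pulls out $\sup|R|$ and uses $\|\bG\|_\textop=O(1)$.
\item \emph{Centered fluctuation.} Apply Lemma~\ref{lemma:bernstein-matrix-concentration-inequality} again, with $|\delta|\lesssim|\widehat\pi-\pi|$ bounded. The variance proxy is $\lesssim\xi_K^2$ times the conditional second moment of $\delta$, which gives a bound of order $\sqrt{\xi_K^2\log K/n}\cdot O(1)$. This is absorbed into the oracle rate.
\end{itemize}
Summing over the finitely many folds gives the stated $m_n^{\bH}$.

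\textbf{Special cases.} If $w=1$, then $\omega'=0$ and $\phi_d=1$. If $w=\pi$, then $\phi_d=A$. In both cases $\widehat\phi_d=\phi_d$, so the nuisance term vanishes and only the oracle rate remains.

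\textbf{Main obstacle.} I expect the delicate step to be the $L^4$ branch of the bias: getting $\xi_K$ rather than $\xi_K^2$ requires controlling the fourth moment of $\bv^\top\bb$ uniformly over unit vectors $\bv$. The other point needing care is the conditional application of the Bernstein inequality under cross-fitting. Both rely on Assumption~\ref{asp:regularityL2}(a) and the fold-wise conditioning.
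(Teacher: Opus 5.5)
Your proposal is correct and follows essentially the same route as the paper's proof. Both arguments split the error through $\widetilde{\bH}=\mP_n\{\phi_d\bb(\bX)\bb(\bX)^\top\}$ and apply matrix Bernstein to the oracle term, then split the nuisance term fold by fold into a conditional mean plus a centered fluctuation; the mean is controlled by the second-order remainder $-\tfrac{1}{2}\omega''(\widetilde{\pi})(\widehat{\pi}-\pi)^2$ through the same three bounds (including $\bE(\bv^\top\bb)^4\le\xi_K^2\bE(\bv^\top\bb)^2$ for the $L^4$ branch), and the fluctuation by a second Bernstein application.
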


\begin{proof}
    \textbf{Step 0: }The triangle inequality implies
\begin{align*}
    \|\widehat{\bH}-\bH\|_{\text{op}}\leq&\|\widetilde{\bH}-\bH\|_{\text{op}}+\|\widehat{\bH}-\widetilde{\bH}\|_{\text{op}}.
\end{align*}
We first show that $\phi_d$ is almost surely bounded above. To see this, we have
\begin{align}
    \sup _{A,\bX}|\phi_d|=&\sup_{A,\bX}|w(\bX)+\omega'\{\pi(\bX)\}\{A-\pi(\bX)\}|\nonumber\\
    \leq&\sup_\bX [w(\bX)+2|\omega'\{\pi(\bX)\}|]\leq3\epsilon_2.\label{eq:bounded-root-phid}
\end{align}
where the last inequality is due to Assumption \ref{asp:regularityL2}(d). 

\textbf{Step 1: }Now, we apply Lemma \ref{lemma:bernstein-matrix-concentration-inequality} to derive the bound for $\|\widetilde{\bH}-\bH\|_{\text{op}}$. We let $\bZ=\phi_d\bb(\bX)\bb(\bX)^\top$ and note 
\begin{align*}
    \|\widetilde{\bH}-\bH\|_{\text{op}}=\|(\mP_n-\mP)(\bZ)\|_{\text{op}}=n^{-1}\left\|\sum_{i=1}^n \lb \bZ_i-\bE(\bZ)\rb\right\|_{\text{op}},
\end{align*}
which implies that one can set $\bM_i=\bZ_i-\bE(\bZ)$. Clearly, $\{\bM_i\}_{i=1}^n$ are independent, mean-zero, symmetric $K\times K$ random matrices. Also, we have
\begin{align*}
    \left\|\bM\right\|_\textop\leq& \left\|\bZ\right\|_\textop+\left\|\bE(\bZ)\right\|_\textop\\
    \leq&\left\|\bZ\right\|_\textop+\bE(\left\|\bZ\right\|_\textop)\\
    \leq&6\epsilon_2\left\|\bb(\bX)\bb(\bX)^\top\right\|_\textop=6\epsilon_2\|\bb(\bX)\|_2^2\leq6\epsilon_2\xi_K^2:=\tau,
\end{align*}
where the first inequality follows from the triangle inequality, the second from Jensen's inequality, and the third from \eqref{eq:bounded-root-phid} and the identity $\|\bv\bv^\top\|_\textop=\|\bv\|_2^2$. Moreover, we have
\begin{align*}
    \|\bE(\bM_i^2)\|_\textop\leq&\|\bE(\bZ^2)\|_\textop\\
    \leq&\|\bE\{\phi_d^2\|\bb(\bX)\|_2^2\bb(\bX)\bb(\bX)^\top\}\|_\textop\\
    \leq&9\epsilon_2^2\xi_K^2\|\bE\{\bb(\bX)\bb(\bX)^\top\}\|_\textop\\
    \leq&9\epsilon_2^2\xi_K^2\|\bG\|_\textop,
\end{align*}
where the first inequality follows because $\bE(\bM_i^2)\preceq\bE(\bZ_i^2)$ ($\preceq$ denotes the Loewner order; $\|\mathbf{A}\|_\textop\leq\|\mathbf{B}\|_\textop$ for any positive semi-definite matrices $\mathbf{A}$ and $\mathbf{B}$ such that $\mathbf{A}\preceq \mathbf{B}$) for the symmetric matrix $\bZ_i$ and the third from the fact that $\phi_d^2\|\bb(\bX)\|_2^2\bb(\bX)\bb(\bX)^\top\preceq9\epsilon_2^2\xi_K^2\bb(\bX)\bb(\bX)^\top$ and \eqref{eq:bounded-root-phid}. Thus, we obtain $\sigma^2\leq n\|\bE(\bM_i^2)\|_\textop\leq9\epsilon_2^2n\xi_K^2\|\bG\|_\textop$ by the triangle inequality. Finally, using Lemma \ref{lemma:bernstein-matrix-concentration-inequality} gives
\begin{align*}
    \bE\left(\|\widetilde{\bH}-\bH\|_{\text{op}}\right)=n^{-1}\bE\left(\left\|\sum_{i=1}^n\bM_i\right\|_\textop\right)\leq& n^{-1}\lb\sqrt{18\epsilon_2^2n\xi_K^2\|\bG\|_\textop\log(2K)}+2\epsilon_2\xi_K^2\log(2K)\rb\\
    \lesssim& \sqrt{\frac{\xi_K^2\|\bG\|_\textop\log K}{n}}+\frac{\xi_K^2 \log K}{n}\\
    \lesssim&\sqrt{\frac{\xi_K^2\log K}{n}},
\end{align*}
where the last two inequalities are due to the fact that $K=K_n\to\infty$, Assumption \ref{asp:regularityL2}(a), and Assumption \ref{asp:regularityL2}(c).

\textbf{Step 2.1: }We then analyze the second term. We further decompose the second term into two components as follows:
\begin{align}
    \|\widehat{\bH}-\widetilde{\bH}\|_{\text{op}}=&\left\|\sum_{q=1}^Q\frac{n_q}{n}\mP_{n_q}\{(\widehat{\phi}_d-\phi_d)\bb(\bX)\bb(\bX)^\top\}\right\|_{\text{op}}\nonumber\\
    \leq&\left\|\sum_{q=1}^Q\frac{n_q}{n}\mP\lb(\widehat{\phi}_d-\phi_d)\bb(\bX)\bb(\bX)^\top|\widehat{\bGamma}^q\rb\right\|_{\text{op}}+\label{eq:matrixhatH-tildeH-expectation}\\
    &\left\|\sum_{q=1}^Q\frac{n_q}{n}\left(\mP_{n_q}-\mP\right)\lb(\widehat{\phi}_d-\phi_d)\bb(\bX)\bb(\bX)^\top|\widehat{\bGamma}^q\rb\right\|_{\text{op}}.\label{eq:matrixhatH-tildeH-empirical-process}
\end{align}
To bound the term in \eqref{eq:matrixhatH-tildeH-expectation}, it is straightforward to verify, via a pointwise Taylor expansion, that
\begin{align}
    &\mP\{\widehat{\phi}_d^q-\phi_d|\bX,\widehat{\bGamma}^q\}=-\frac{1}{2}\omega''\{\widetilde{\pi}^q(\bX)\}\lb\widehat{\pi}^q(\bX)-\pi(\bX)\rb^2,\label{eq:expected-bias-phid}
\end{align}
where $\widetilde{\pi}^q(\bX)$ lies between $\widehat{\pi}^q(\bX)$ and $\pi(\bX)$. In total, we propose three methods to bound the term in \eqref{eq:matrixhatH-tildeH-expectation}. The first method is
\begin{align*}
    \eqref{eq:matrixhatH-tildeH-expectation}\leq&Q^{-1}\sum_{q=1}^Q\left\|\bE\lb(\widehat{\phi}_d^q-\phi_d)\bb(\bX)\bb(\bX)^\top|\widehat{\bGamma}^q\rb\right\|_{\text{op}}\\
    \leq&\max_q\bE\left[\left\|\frac{1}{2}\omega''\{\widetilde{\pi}^q(\bX)\}\lb\widehat{\pi}^q(\bX)-\pi(\bX)\rb^2\bb(\bX)\bb(\bX)^\top\right\|_{\text{op}}|\widehat{\bGamma}^q\right]\\
    \leq&\frac{\epsilon_2}{2}\max_q\bE\left[\lb\widehat{\pi}^q(\bX)-\pi(\bX)\rb^2\left\|\bb(\bX)\bb(\bX)^\top\right\|_{\text{op}}|\widehat{\bGamma}^q\right]\\
    \leq&\xi_K^2\max_q\|\widehat{\pi}^q-\pi\|_{\mP,2}^2,
\end{align*}
where the second inequality follows from the law of total expectation (LOTE) and Equation \eqref{eq:expected-bias-phid}, the third inequality follows from Assumption \ref{asp:regularityL2}(d), and the fourth inequality follows from the identity $\|\bv \bv^\top\|_{\mathrm{op}}=\|\bv\|_2^2$ for any vector $\bv$ and Assumption \ref{asp:regularityL2}(d). Second, we can also bound the term in \eqref{eq:matrixhatH-tildeH-expectation} as
\begin{align*}
    \eqref{eq:matrixhatH-tildeH-expectation}\leq&Q^{-1}\sum_{q=1}^Q\left\|\bE\lb(\widehat{\phi}_d^q-\phi_d)\bb(\bX)\bb(\bX)^\top|\widehat{\bGamma}^q\rb\right\|_{\text{op}}\\
    =&Q^{-1}\sum_{q=1}^Q\sup_{\|v\|_2=1}\left|\bE\left[ (\widehat{\phi}_d^q-\phi_d)\{v^\top\bb(\bX)\}^2|\widehat{\bGamma}^q\right]\right|\\
    \leq&\max_q\sup_{\|v\|_2=1}\bE\left[ \left|\frac{1}{2}\omega''\{\widetilde{\pi}^q(\bX)\}\right|\lb\widehat{\pi}^q(\bX)-\pi(\bX)\rb^2\{v^\top\bb(\bX)\}^2|\widehat{\bGamma}^q\right]\\
    \leq&\frac{\epsilon_2}{2}\max_q\sup_{\|v\|_2=1}\bE\left[ \lb\widehat{\pi}^q(\bX)-\pi(\bX)\rb^2\{v^\top\bb(\bX)\}^2|\widehat{\bGamma}^q\right]\\
    \leq&\frac{\epsilon_2}{2}\max_q\sup_{\|v\|_2=1}\|\widehat{\pi}^q-\pi\|_{\mP,4}^2\left[\bE\left[ \{v^\top\bb(\bX)\}^4\right]\right]^{1/2}\\   
    \leq&\frac{\epsilon_2}{2}\xi_K\max_q\|\widehat{\pi}^q-\pi\|_{\mP,4}^2\|\bG\|_\textop^{1/2}\\
    \lesssim&\xi_K\max_q\|\widehat{\pi}^q-\pi\|_{\mP,4}^2,
\end{align*}
where the fourth inequality follows from the Cauchy-Schwarz inequality and the fifth from the fact that $|v^\top\bb(\bX)|\leq\|v\|_2\|\bb(\bX)\|_2$. Third, we can also bound the term in \eqref{eq:matrixhatH-tildeH-expectation} as follows:
\begin{align*}
     \eqref{eq:matrixhatH-tildeH-expectation}\leq&Q^{-1}\sum_{q=1}^Q\left\|\bE\lb(\widehat{\phi}_d^q-\phi_d)\bb(\bX)\bb(\bX)^\top|\widehat{\bGamma}^q\rb\right\|_{\text{op}}\\  
    =&Q^{-1}\sum_{q=1}^Q\sup_{\|v\|_2=1}\left|\bE\left[ \frac{1}{2}\omega''\{\widetilde{\pi}^q(\bX)\}\lb\widehat{\pi}^q(\bX)-\pi(\bX)\rb^2\{v^\top\bb(\bX)\}^2|\widehat{\bGamma}^q\right]\right|\\
    \leq&\frac{\epsilon_2}{2}\max_q \|\hat{\pi}^q-\pi\|_{\mP,\infty}^2\|\bG\|_\textop\lesssim\max_q \|\hat{\phi}_d^q-\phi_d\|_{\mP,\infty}.
\end{align*}

\textbf{Step 2.2: }We then apply Lemma \ref{lemma:bernstein-matrix-concentration-inequality} to bound the empirical process term in \eqref{eq:matrixhatH-tildeH-empirical-process} by constructing new matrices $\bM_i$ (different from those considered in Step 1) such that the lemma applies. Specifically, we let 
\begin{align*}
    \bM_i=&\left[\{\widehat{\phi}_{d}^q(A_i,\bX_i)-\phi_d(A_i,\bX_i)\}\bb(\bX_i)\bb(\bX_i)^\top-\right.\\
    &\left.\bE\left[ \{\widehat{\phi}_d^q(A_i,\bX_i)-\phi_d(A_i,\bX_i)\}\bb(\bX_i)\bb(\bX_i)^\top|\widehat{\bGamma}^q\right]\right].
\end{align*}
We then obtain 
\begin{align*}
    \left\|\bM_i\right\|_\textop\leq&\left\|\{\widehat{\phi}_{d}^q(A_i,\bX_i)-\phi_d(A_i,\bX_i)\}\bb(\bX_i)\bb(\bX_i)^\top\right\|_\textop+\\
    &\left\|\bE\left[ \frac{1}{2}\omega''\{\widetilde{\pi}^q(\bX)\}\{\widehat{\pi}^q(\bX)-\pi(\bX)\}^2\bb(\bX_i)\bb(\bX_i)^\top|\widehat{\bGamma}^q\right]\right\|_\textop\\
    \leq&\left\|\widehat{\phi}_{d}^q-\phi_d\right\|_{\mP,\infty}\xi_K^2+\epsilon_2/2\bE\left[ \{\widehat{\pi}^q(\bX)-\pi(\bX)\}^2\left\|\bb(\bX_i)\bb(\bX_i)^\top\right\|_\textop|\widehat{\bGamma}^q\right]\\
    \leq&\left\|\widehat{\phi}_{d}^q-\phi_d\right\|_{\mP,\infty}\xi_K^2+\epsilon_2/2\xi_K^2\|\widehat{\pi}^q-\pi\|_{\mP,2}^2:=\tau,
\end{align*}
where the first inequality follows from \eqref{eq:expected-bias-phid}, the second from Jensen's inequality and Assumption \ref{asp:regularityL2}(d), and the identity $\|\bv \bv^\top\|_{\mathrm{op}}=\|\bv\|_2^2$. Next, we compute $\sigma^2=\|\sum_{i=1}^{n_q}\bE(\bM_i^2|\widehat{\bGamma}^q)\|_\textop$. We have
\begin{align*}
    \bE(\bM_i^2|\widehat{\bGamma}^q)\preceq &\bE\left[\{\widehat{\phi}_{d}^q(A_i,\bX_i)-\phi_d(A_i,\bX_i)\}^2\bb(\bX_i)\bb(\bX_i)^\top\bb(\bX_i)\bb(\bX_i)^\top|\widehat{\bGamma}^q\right]\\
    \preceq&\xi_K^2\|\widehat{\phi}^q_d-\phi_d\|_{\mP,\infty}^2\bG,
\end{align*}
where the first inequality follows because the variance is bounded by the second moment. Thus, we obtain
\begin{align*}
    \sigma^2\leq&n_q\xi_K^2\|\widehat{\phi}^q_d-\phi_d\|_{\mP,\infty}^2\|\bG\|_\textop.
\end{align*}
Finally, applying Lemma \ref{lemma:bernstein-matrix-concentration-inequality} yields
\begin{align*}
    &\bE\left\{\eqref{eq:matrixhatH-tildeH-empirical-process}\right\}\\\leq&\max_qn^{-1}\bE\left(\left\|\sum_{i=1}^{n_q}\bM_i\right\|_\textop|\widehat{\bGamma}^1,\ldots,\widehat{\bGamma}^Q\right)\\
    \leq& \max_qn^{-1}\lb\sqrt{2n_q\xi_K^2\|\widehat{\phi}^q_d-\phi_d\|_{\mP,\infty}^2\|\bG\|_\textop\log(2K)}+\frac{1}{3}\left(\left\|\widehat{\phi}_{d}^q-\phi_d\right\|_{\mP,\infty}+\frac{\epsilon_2}{2}\left\|\widehat{\pi}^q-\pi\right\|^2_{\mP,2}\right)\xi_K^2\log(2K)\rb\\
    \lesssim& \sqrt{\frac{\xi_K^2\log K}{n}},
\end{align*}
where the last inequality is due to Assumption \ref{asp:regularityL2}(a), Assumption \ref{asp:regularityL2}(c), and Assumption \ref{asp:regularityL2}(d). When $w(\bX)=1$ or $w(\bX)=\pi(\bX)$, we have $\widehat{\phi}_d=\phi_d$ (equal to $1$ or $A$, respectively), so $\widehat{\bH}=\widetilde{\bH}$ and the second term in Step 0 vanishes. By Markov's inequality, combining the results from Steps 0, 1, 2.1, and 2.2 completes the proof.
\end{proof}

By Theorem \ref{thm:rate-gram-norm}, we further establish the following lemma for the rate of convergence of the inverse of the penalized Gram matrix, $(\widehat{\bH}+\lambda \bP)^{-1}$.

\begin{lemma}\label{lemma:rate-penalized-gram}
Under Assumptions \ref{assump:SI}–\ref{asp:regularityL2}, it follows that
\begin{equation*}
    \left\|(\widehat{\bH}+\lambda \bP)^{-1}-(\bH+\lambda \bP)^{-1}\right\|_\textop\lessp\frac{\sqrt{{\xi_K^2\log K}/{n}}+\min\lb\xi_K^2d_2(\pi)^2,\xi_Kd_4(\pi)^2,d_\infty(\pi)^2\rb}{\lb\epsilon_1+\lambda\psi_1(\bP)\rb\lb\epsilon_1/2+\lambda\psi_1(
    \bP
    )\rb}.
\end{equation*}
\end{lemma}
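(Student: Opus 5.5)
The plan rests on the resolvent identity
\[
(\widehat{\bH}+\lambda\bP)^{-1}-(\bH+\lambda\bP)^{-1}=(\widehat{\bH}+\lambda\bP)^{-1}(\bH-\widehat{\bH})(\bH+\lambda\bP)^{-1}.
\]
Submultiplicativity of the operator norm then gives
\[
\|(\widehat{\bH}+\lambda\bP)^{-1}-(\bH+\lambda\bP)^{-1}\|_\textop\leq\|(\widehat{\bH}+\lambda\bP)^{-1}\|_\textop\,\|\widehat{\bH}-\bH\|_\textop\,\|(\bH+\lambda\bP)^{-1}\|_\textop.
\]
The middle factor is $O_\mP(m^{\bH}_n)$ by Theorem \ref{thm:rate-gram-norm}. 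It therefore suffices to bound the two inverse norms by $(\epsilon_1+\lambda\psi_1(\bP))^{-1}$ and, with probability tending to one, $(\epsilon_1/2+\lambda\psi_1(\bP))^{-1}$, respectively.

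\textbf{Population inverse.} Write $\bH=\bE\{w(\bX)\bb(\bX)\bb(\bX)^\top\}$. By Assumption \ref{asp:regularityL2}(d), $w\geq\epsilon_1$, so $\bH\succeq\epsilon_1\bG$. Since $\bP$ is positive semidefinite, $\lambda\bP\succeq\lambda\psi_1(\bP)\boldsymbol{I}$, and Weyl's inequality yields $\psi_1(\bH+\lambda\bP)\geq\epsilon_1\psi_1(\bG)+\lambda\psi_1(\bP)$. To match the stated constant, I will normalize the basis so that $\psi_1(\bG)\geq1$; Assumption \ref{asp:regularityL2}(a) allows this up to a fixed constant, which is absorbed into $\lesssim_\mP$. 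This gives $\|(\bH+\lambda\bP)^{-1}\|_\textop\leq(\epsilon_1+\lambda\psi_1(\bP))^{-1}$ up to constants.

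\textbf{Empirical inverse.} Weyl's inequality gives $\psi_1(\widehat{\bH}+\lambda\bP)\geq\psi_1(\bH+\lambda\bP)-\|\widehat{\bH}-\bH\|_\textop$. By Assumption \ref{asp:regularityL2}(c), $m^{\bH}_n=o_\mP(1)$, so on an event $\mathcal{E}_n$ with $\Pr(\mathcal{E}_n)\to1$ we have $\|\widehat{\bH}-\bH\|_\textop\leq\epsilon_1/2$. On $\mathcal{E}_n$, $\widehat{\bH}+\lambda\bP$ is invertible and $\|(\widehat{\bH}+\lambda\bP)^{-1}\|_\textop\leq(\epsilon_1/2+\lambda\psi_1(\bP))^{-1}$. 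This explains the factor $\epsilon_1/2$ in the denominator.

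Combining the three bounds on $\mathcal{E}_n$, and noting that $\Pr(\mathcal{E}_n^c)\to0$ does not affect an $O_\mP$ statement, gives the claim. The main obstacle is the empirical inverse. The weighted Gram matrix $\widehat{\bH}$ need not be positive semidefinite for nonconstant weights; for example, under overlap weights $\widehat{\phi}_d=\{A-\widehat{\pi}\}^2$ is positive, but in general $\widehat{\phi}_d$ can be negative. Its lower eigenvalue is therefore controlled only through the perturbation argument above, which relies entirely on the consistency $\|\widehat{\bH}-\bH\|_\textop=o_\mP(1)$ from Theorem \ref{thm:rate-gram-norm} together with the uniform lower bound $\inf w\geq\epsilon_1$.
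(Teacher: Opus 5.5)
Your proposal is correct and follows essentially the paper's proof. Both use the resolvent identity and submultiplicativity, then bound $\|(\bH+\lambda\bP)^{-1}\|_\textop\le(\epsilon_1+\lambda\psi_1(\bP))^{-1}$ from $\inf w\ge\epsilon_1$ and the normalized Gram matrix, then bound $\|(\widehat\bH+\lambda\bP)^{-1}\|_\textop$ via Weyl's inequality on the event $\|\widehat\bH-\bH\|_\textop\le\epsilon_1/2$ supplied by Theorem \ref{thm:rate-gram-norm}, which also bounds the middle factor; you apply Weyl to $\widehat\bH+\lambda\bP$ directly rather than to $\widehat\bH$ first, which is immaterial.

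One small correction to your closing aside: under Assumption \ref{asp:regularityL2}(d), the constraint $-\omega(t)/(1-t)\le\omega'(t)\le\omega(t)/t$ together with $\widehat\pi\in[\epsilon_1,1-\epsilon_1]$ forces $\widehat\phi_d\ge0$, so $\widehat\bH$ is positive semidefinite. As you note, this alone gives no lower eigenvalue bound, so the perturbation argument is still what is needed.
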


\begin{proof}
We have
\begin{align}
    \left\|(\widehat{\bH}+\lambda \bP)^{-1}-(\bH+\lambda \bP)^{-1}\right\|_\textop=&\left\| (\widehat{\bH}+\lambda\bP)^{-1}(\bH-\widehat{\bH})(\bH+\lambda\bP)^{-1}\right\|_\textop\nonumber\\
    \leq&\left\| (\widehat{\bH}+\lambda\bP)^{-1}\right\|_\textop\left\|\bH-\widehat{\bH}\right\|_\textop\left\|(\bH+\lambda\bP)^{-1}\right\|_\textop\nonumber\\
    \leq&\frac{1}{\epsilon_1+\lambda\psi_1(
    \bP
    )}\left\| (\widehat{\bH}+\lambda\bP)^{-1}\right\|_\textop\left\|\bH-\widehat{\bH}\right\|_\textop\nonumber\\
    \lessp&\frac{1}{\lb\epsilon_1+\lambda\psi_1(\bP)\rb\lb\epsilon_1/2+\lambda\psi_1(
    \bP
    )\rb}\left\|\widehat{\bH}-\bH\right\|_\textop\nonumber\\
    \lessp&\displaystyle\frac{\sqrt{{\xi_K^2\log K}/{n}}+\min\lb\xi_K^2d_2(\pi)^2,\xi_Kd_4(\pi)^2,d_\infty(\pi)^2\rb}{\lb\epsilon_1+\lambda\psi_1(\bP)\rb\lb\epsilon_1/2+\lambda\psi_1(
    \bP
    )\rb},\nonumber
\end{align}
where the first equality is due to the matrix identity $\mathbf{A}^{-1}-\mathbf{B}^{-1}=\mathbf{A}^{-1}(\mathbf{B}-\mathbf{A})\mathbf{B}^{-1}$ for any invertible matrices $\mathbf{A}$ and $\mathbf{B}$, the first inequality is submultiplicativity of the operator norm, and the second inequality follows from \eqref{eq:bounded-eigven-gram-H}, and the third inequality follows from the fact that $\left\| (\widehat{\bH}+\lambda\bP)^{-1}\right\|_\textop=1/\psi_1(\widehat{\bH}+\lambda\bP)\leq 1/\lb\psi_1(\widehat{\bH})+\lambda\psi_1(\bP)\rb\lessp 1/\lb\epsilon_1/2+\lambda\psi_1(\bP)\rb$ ($\psi_1(\widehat{\bH})\geq\epsilon_1/2$ with probability approaching one, as shown in Step 2 of the proof of Theorem \ref{thm:L2-rate-of-convergence}), and the last inequality follows from Theorem \ref{thm:rate-gram-norm}.
\end{proof}

The next two propositions give linearizations of $\widehat{\bbeta}$ with pointwise and uniform error bounds, which serve as the basis for deriving the convergence rates and Gaussian approximations for the proposed orthogonal learners. Their proofs are given in Sections \ref{ss:proof-propositon2} and \ref{ss:proof-proposition-uniform-linearization}, respectively.

\begin{prop}[\emph{Pointwise linearization}]\label{thm:pointwise-linearization}
Under Assumptions \ref{assump:SI}–\ref{asp:regularityL2}, for any $\widetilde{\bb}$ in the unit sphere $\{\widetilde{\bb}\in\mathbb{R}^K:\| \widetilde{\bb}\|_2=1\}$, the estimator $\widehat{\bbeta}$ is asymptotically linear with 
\begin{align*}
    &\sqrt{n}\widetilde{\bb}^\top\left(\widehat{\bbeta}-\bbeta^\ast\right)=\widetilde{\bb}^\top\bH^{-1}\mG_n\left[\bb(\bX)\lb\phi_n^{\text{NIE}}-\phi_d^{\text{NIE}}g^\ast(\bX)\rb\right]+\text{Rem}_{1n}(\widetilde{\bb}),    
\end{align*}
where the remainder term $\text{Rem}_{1n}(\widetilde{\bb})$ 
is bounded as follows: 
\begin{align*}
    \text{Rem}_{1n}(\widetilde{\bb})\lessp&\frac{\sqrt{n}\sum_{j=1}^3m_{jn}+\lambda \psi_{\max}(\sqrt{n}+\sqrt{K}+\sqrt{n}m_{0n})}{\epsilon_1+\lambda\psi_{\min}}+\\
    &\frac{m^{\bH}_n\lb \xi_K+\sqrt{n}\sum_{j=0}^3m_{jn}+\sqrt{n}\lambda\psi_K(\bP)\rb}{\lb\epsilon_1+\lambda\psi_{\min}\rb\lb\epsilon_1/2+\lambda\psi_1(
    \bP
    )\rb}.
\end{align*}
\end{prop}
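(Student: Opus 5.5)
Starting from the closed form $\widehat{\bbeta}=(\widehat{\bH}+\lambda\bP)^{-1}\widehat{\bh}$, I would write
\[
\widehat{\bbeta}-\bbeta^\ast=(\widehat{\bH}+\lambda\bP)^{-1}\bigl\{\widehat{\bh}-\widehat{\bH}\bbeta^\ast-\lambda\bP\bbeta^\ast\bigr\},
\]
and then replace $(\widehat{\bH}+\lambda\bP)^{-1}$ by $(\bH+\lambda\bP)^{-1}$ and $(\bH+\lambda\bP)^{-1}$ by $\bH^{-1}$. The error of the first replacement is controlled by Lemma \ref{lemma:rate-penalized-gram}, which produces the $m_n^{\bH}/\{(\epsilon_1+\lambda\psi_{\min})(\epsilon_1/2+\lambda\psi_{\min})\}$ factor. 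The second replacement costs
\[
\|(\bH+\lambda\bP)^{-1}-\bH^{-1}\|_\textop\le\lambda\psi_{\max}/\{\epsilon_1(\epsilon_1+\lambda\psi_{\min})\},
\]
using $\psi_1(\bH)\ge\epsilon_1\psi_1(\bG)$. The penalty term $\lambda\bP\bbeta^\ast$ is bounded by $\lambda\psi_{\max}\|\bbeta^\ast\|_2$. Here $\|\bbeta^\ast\|_2\lesssim\|g^\ast\|_{\mP,2}=O(1)$ by Assumption \ref{asp:regularityL2}(a),(b), and it contributes the $\sqrt{n}\lambda\psi_{\max}$ pieces.

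The core is the score $\widehat{\bh}-\widehat{\bH}\bbeta^\ast=\mP_n[\bb(\bX)\{\widehat\phi_n-\widehat\phi_d g^\ast(\bX)\}]$. I would split it into three parts.
\begin{enumerate}
\item \emph{Oracle part.} $\mP_n[\bb\{\phi_n-\phi_d g^\ast\}]$. Its mean is $\bE[\bb\,w(g-g^\ast)]=0$ by the weighted projection definition of $\bbeta^\ast$, so it equals $n^{-1/2}\mG_n[\cdots]$ and gives the leading term.
\item \emph{Conditional-bias part.} On each fold, $\bE[\bb\{(\widehat\phi_n-\phi_n)-(\widehat\phi_d-\phi_d)g^\ast\}\mid\widehat\bGamma^q]$. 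Conditioning on $\bX$ and using the mixed-bias structure of $\zeta$ together with \eqref{eq:expected-bias-phid} gives products such as $(\widehat\pi-\pi)(\widehat\mu_1-\mu_1)$, $(\widehat\mu_1-\mu_1)(\widehat f_a-f_a)$, $(\widehat\pi-\pi)(\widehat f_a-f_a)$ and $(\widehat\pi-\pi)^2$. The $g^\ast$ versus $g$ discrepancy contributes $(\widehat\pi-\pi)^2\alpha$ terms, which are absorbed into $m_{0n}$ and $m_{3n}$. Projecting onto $\widetilde{\bb}$ and applying Cauchy--Schwarz with either the $L^2$ norms times $\sqrt{K}$, or the $L^4$ norms times $\bE\{(v^\top\bb)^2\}^{1/2}\lesssim1$, yields $m_{1n}$ and $m_{3n}$.
\item \emph{Empirical-process part.} Cross-fitting makes the summands conditionally i.i.d. given the training folds. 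A conditional Chebyshev bound on $\widetilde{\bb}^\top(\mP_{n_q}-\mP)[\cdots]$ then gives variance $\lesssim\xi_K^2\{(d_2^\Sigma)^2+d_4(\pi)^2\}/n$, using boundedness of $\bb$ in Assumption \ref{asp:regularityL2}(d). This gives $m_{2n}$.
\end{enumerate}

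For the remaining bookkeeping I would use the bounds
\[
\|\mP_n[\bb\{\phi_n-\phi_dg^\ast\}]\|_2\lessp\sqrt{K/n}+m_{0n}
\quad\text{and}\quad
\|\mG_n[\cdots]\|_2\lesssim\xi_K.
\]
The first splits the oracle score into $\bb(\phi_n-\phi_d g)$, with second moment bounded by $\bE(\widetilde\Omega^2\mid\bX)\lesssim1$, plus $\bb\phi_d\alpha$, which gives $m_{0n}$ via $l_Kc_K\sqrt{K/n}$ or $\xi_Kc_K/\sqrt n$. The second supplies the cross terms: multiplying by $m_n^{\bH}$ gives $m_n^{\bH}(\xi_K+\sqrt{n}\sum_j m_{jn})$, and the $(\sqrt{K}+\sqrt n m_{0n})\lambda\psi_{\max}$ terms arise similarly. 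I would also need $\psi_1(\widehat\bH)\ge\epsilon_1/2$ with probability approaching one, which follows from Theorem \ref{thm:rate-gram-norm} and $m_n^{\bH}=o_\mP(1)$.

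The main obstacle is the conditional-bias computation in the second part. Expanding $\widehat\zeta-\zeta$ for $\phi_{10}$, with the density ratio $\widehat r$, requires iterated expectations over $M\mid A,\bX$. The goal is to show that every first-order term cancels, and in particular that the $\widehat\kappa(\phi_d-w)$ terms offset the first-order error of $\widehat w\widehat\zeta$ when $w$ depends on $\pi$. I would first verify this cancellation pointwise in $\bX$ by a second-order Taylor expansion in $(\pi,f_0,f_1,\mu_1,\eta_{11})$, and only then integrate against $\widetilde{\bb}^\top\bb$.
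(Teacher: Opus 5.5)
Your proposal is correct and is essentially the paper's proof. The paper uses the same ingredients: the exact identity $\sqrt{n}(\widehat{\bbeta}-\bbeta^\ast)=(\widehat{\bH}+\lambda\bP)^{-1}\{\sqrt{n}(\widehat{\bh}-\widehat{\bH}\bbeta^\ast)-\sqrt{n}\lambda\bP\bbeta^\ast\}$; the two inverse replacements via Lemma~\ref{lemma:rate-penalized-gram} and the resolvent identity; the cross-fitted split of the score into an oracle term, a second-order conditional bias (the cancellation you flag is the paper's Taylor step for $\widehat\kappa(\widehat\phi_d-\widehat w)+\widehat w\widehat\zeta$ combined with the triply robust bias expansion) and an empirical-process term; and $\lambda\|\bP\bbeta^\ast\|_2\lesssim\lambda\psi_{\max}$. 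Together these reproduce the paper's ten-term decomposition.

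The differences are minor. You pair the numerator and denominator errors, so the $(\widehat\pi-\pi)^2$ pieces combine into $(\widehat\pi-\pi)^2\alpha$; the paper instead bounds $\mP_n\{\bb(\bX)(\widehat\phi_n-\phi_n)\}$ and $\mP_n\{\bb(\bX)(\phi_d-\widehat\phi_d)g^\ast(\bX)\}$ separately, and both routes give the stated rate. Also, $\|\bbeta^\ast\|_2=O(1)$ is most cleanly obtained from the weighted Pythagorean bound $\bE\{w(\bX)g^\ast(\bX)^2\}\le\bE\{w(\bX)g(\bX)^2\}$. Citing Assumption~\ref{asp:regularityL2}(a),(b) instead would tacitly require $c_K=O(1)$.
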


\begin{prop}[\emph{Uniform linearization}]\label{thm:uniform-linearization}
Under Assumptions \ref{assump:SI}–\ref{assump:uniform-further-boundedness}, the estimator $\widehat{\bbeta}$ is asymptotically linear with 
\begin{align*}
    &\sqrt{n}\widetilde{\bb}(\bx)^\top\left(\widehat{\bbeta}-\bbeta^\ast\right)=\widetilde{\bb}(\bx)^\top\bH^{-1}\mG_n\left[\bb(\bX)\lb\phi_n^{\text{NIE}}-\phi_d^{\text{NIE}}g^\ast(\bX)\rb\right]+\text{Rem}_{1n}\{\widetilde{\bb}(\bx)\},\\ 
    &\sqrt{n}\widetilde{\bb}(\bx)^\top\lsb\widehat{\bbeta}-\bbeta^\ast\rsb=\widetilde{\bb}(\bx)^\top\bH^{-1}\mG_n\left[\bb(\bX)\lb\phi_n^{\text{NIE}}-\phi_d^{\text{NIE}}g(\bX)\rb\right]+\text{Rem}_{1n}\{\widetilde{\bb}(\bx)\}+\text{Rem}_{2n}\{\widetilde{\bb}(\bx)\},   
\end{align*}
where the remainder terms $\text{Rem}_{1n}\{\widetilde{\bb}(\bx)\}$ and $\text{Rem}_{2n}\{\widetilde{\bb}(\bx)\}$ capture the impact of the unknown design and first-stage nuisance estimation errors, and the impact of approximation error, respectively, and are bounded as follows. Define $m_{4n}=n^{1/\nu}\sqrt{\log K}+\sqrt{K}l_Kc_K$. Then
\begin{align*}
    &\sup_{\bx}|\text{Rem}_{1n}\{\widetilde{\bb}(\bx)\}|\\
    \lessp&\frac{\sqrt{n}\sum_{j=1}^3m_{jn}+\lambda \psi_K(\bP)(\sqrt{n}+\sqrt{K}+\sqrt{n}m_{0n})}{\epsilon_1+\lambda\psi_1(\bP)}+\frac{m^{\bH}_n\lb \sqrt{n}\sum_{j=0}^3m_{jn}+m_{4n}+\sqrt{n}\lambda\psi_K(\bP)\rb}{\lb\epsilon_1+\lambda\psi_1(\bP)\rb\lb\epsilon_1/2+\lambda\psi_1(
    \bP
    )\rb},\\
   &\sup_\bx|\text{Rem}_{2n}\{\widetilde{\bb}(\bx)\}|\lessp\sqrt{\log K}l_Kc_K.
\end{align*}
\end{prop}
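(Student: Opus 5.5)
The plan is to reuse the algebraic decomposition behind Proposition~\ref{thm:pointwise-linearization} and to check, term by term, which remainders already hold uniformly over $\bx$. Only two pieces need genuinely new uniform arguments: the Gram-matrix correction term, and the term that carries the sieve approximation error.

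\emph{Algebraic decomposition.} From the closed form $\widehat{\bbeta}=(\widehat{\bH}+\lambda\bP)^{-1}\widehat{\bh}$,
\begin{align*}
\sqrt n(\widehat{\bbeta}-\bbeta^\ast)=(\widehat{\bH}+\lambda\bP)^{-1}\sqrt n\,\mP_n\bigl[\bb(\bX)\{\widehat\phi_n-\widehat\phi_d g^\ast(\bX)\}\bigr]-(\widehat{\bH}+\lambda\bP)^{-1}\sqrt n\,\lambda\bP\bbeta^\ast .
\end{align*}
I split the empirical mean into three parts.
\begin{itemize}
\item[(a)] The oracle part is $\mP_n[\bb(\phi_n-\phi_d g^\ast)]$. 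Its mean is $\bE[\bb\,w\,(g-g^\ast)]=0$ by the defining normal equations of the weighted projection $\bbeta^\ast$. Hence this part equals $n^{-1/2}\mG_n[\bb(\phi_n-\phi_d g^\ast)]$.
\item[(b)] The nuisance part is $\mP_n[\bb\{(\widehat\phi_n-\phi_n)-(\widehat\phi_d-\phi_d)g^\ast\}]$. I handle it fold by fold, as in the pointwise proof. Each fold splits into a conditional-mean piece and a conditionally centered empirical-process piece. The conditional-mean piece has the mixed-bias structure given by \eqref{eq:expected-bias-phid} and its $\phi_n$ analogue, yielding $m_{1n}+m_{3n}$. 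The empirical-process piece is controlled by the conditional second moment, yielding $m_{2n}$.
\item[(c)] The penalty part is $\lambda\bP\bbeta^\ast$.
\end{itemize}
Parts (b) and (c) are bounded in the Euclidean norm of a $K$-vector. Since $\sup_{\bx}\|\widetilde{\bb}(\bx)\|_2=1$, their contributions to $\sup_{\bx}|\text{Rem}_{1n}\{\widetilde{\bb}(\bx)\}|$ follow by Cauchy--Schwarz, with the same bounds as in Proposition~\ref{thm:pointwise-linearization}. This gives the first fraction, with $\|(\widehat{\bH}+\lambda\bP)^{-1}\|_\textop\lessp(\epsilon_1/2+\lambda\psi_{\min})^{-1}$ and $\|\bbeta^\ast\|_2\lesssim1$ by Assumption~\ref{asp:regularityL2}(a),(b),(d).

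\emph{Replacing the inverse.} Next I replace $(\widehat{\bH}+\lambda\bP)^{-1}$ by $\bH^{-1}$ in front of the oracle term. The difference $(\bH+\lambda\bP)^{-1}-\bH^{-1}$ contributes $\lambda\psi_{\max}$ times the oracle term and is absorbed into the first fraction. The difference $(\widehat{\bH}+\lambda\bP)^{-1}-(\bH+\lambda\bP)^{-1}$ is bounded by Lemma~\ref{lemma:rate-penalized-gram}, giving the factor $m_n^{\bH}/\{(\epsilon_1+\lambda\psi_{\min})(\epsilon_1/2+\lambda\psi_{\min})\}$. Multiplied against (b) and (c), this produces the terms $\sqrt n\sum_{j}m_{jn}+\sqrt n\lambda\psi_{\max}$. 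Multiplied against the oracle term, I need a bound on $\|\mG_n[\bb(\phi_n-\phi_d g^\ast)]\|_2$.

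Here I write $\phi_n-\phi_d g^\ast=\widetilde\Omega+\phi_d\,\alpha(\bX;g,\bbeta^\ast)$. The first piece is bounded using $\nu$-th conditional moments of $\widetilde\Omega$ together with the maximal inequality of \citet{belloni2015some}. The plan is to truncate $\widetilde\Omega$ at level $n^{1/\nu}$ and apply Bernstein to the truncated part; this yields the $n^{1/\nu}\sqrt{\log K}$ contribution in $m_{4n}$. For the approximation piece, $|\phi_d|\le3\epsilon_2$, $\|\alpha\|_{\mP,\infty}\le l_Kc_K$ and $\bE\|\bb\|_2^2=\mathrm{tr}(\bG)\lesssim K$ give order $\sqrt K\,l_Kc_K$. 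Together these reproduce $m_{4n}$, which completes the bound on $\text{Rem}_{1n}$.

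\emph{Second display and $\text{Rem}_{2n}$.} For the second display, $\text{Rem}_{2n}\{\widetilde{\bb}(\bx)\}=\widetilde{\bb}(\bx)^\top\bH^{-1}\mG_n[\bb\,\phi_d\,\alpha(\bX;g,\bbeta^\ast)]$. I bound its supremum over $\bx$ as the supremum of an empirical process indexed by $\bx$. The summands are bounded by $\xi_K l_Kc_K$ and have variance at most $\|\bH^{-1}\|_\textop^2\,9\epsilon_2^2\,l_K^2c_K^2\,\|\bG\|_\textop\lesssim l_K^2c_K^2$. Because $\bx\mapsto\widetilde{\bb}(\bx)$ is $\xi_K^L$-Lipschitz and $\text{diam}(\calX)$ is bounded, an $\varepsilon$-net of $\calX$ has log-cardinality $\lesssim\log(\xi_K^L/\varepsilon)=O(\log K)$ under Assumption~\ref{assump:uniform-further-boundedness}. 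A union bound over the net, with scalar Bernstein at each net point, plus a Lipschitz discretization error (negligible for $\varepsilon$ polynomially small in $K$), then gives $\sqrt{\log K}\,l_Kc_K$. The extra Bernstein term $\xi_K l_Kc_K\log K/\sqrt n$ is of smaller order because $\xi_K^{2\nu/(\nu-2)}\log K/n=O(1)$.

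\emph{Main obstacle.} I expect the hardest step to be the maximal-inequality control with unbounded errors. This is the bound on the oracle-vector norm that produces $n^{1/\nu}\sqrt{\log K}$, together with making the truncation, discretization and union-bound argument work simultaneously under only $\nu\ge3$ moments. I would first attempt to invoke the corresponding lemma of \citet{belloni2015some} directly, after verifying that $\widetilde\Omega$ meets its conditional-moment and Lipschitz-basis conditions.
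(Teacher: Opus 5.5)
Your decomposition and your handling of several pieces match the paper: the nuisance and penalty parts via $\sup_{\bx}\|\widetilde{\bb}(\bx)\|_2=1$, the $(\bH+\lambda\bP)^{-1}-\bH^{-1}$ term, and the $\phi_d\alpha$ pieces. Your net-plus-Bernstein bound for $\text{Rem}_{2n}$ is a valid substitute for the paper's symmetrization/Dudley step.

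\textbf{The gap: the Gram-correction term.} This is $\widetilde{\bb}(\bx)^\top\bM\,\mG_n[\bb(\bX)\widetilde\Omega]$ with $\bM=(\widehat{\bH}+\lambda\bP)^{-1}-(\bH+\lambda\bP)^{-1}$. You bound it by $\|\bM\|_\textop\|\mG_n[\bb(\bX)\widetilde\Omega]\|_2$ and claim that truncation plus Bernstein gives $\|\mG_n[\bb(\bX)\widetilde\Omega]\|_2\lessp n^{1/\nu}\sqrt{\log K}$. That is false in general. By independence and centering, $\bE\|\mG_n[\bb(\bX)\widetilde\Omega]\|_2^2=\bE\{\bE(\widetilde\Omega^2\mid\bX)\|\bb(\bX)\|_2^2\}$. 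This is of order $\mathrm{trace}(\bG)\asymp K$ once $\bE(\widetilde\Omega^2\mid\bX)$ is bounded below, and the squared norm concentrates at that level. A $K$-vector with $O_\mP(1)$ coordinates has norm of order $\sqrt K$. That is not dominated by $n^{1/\nu}\sqrt{\log K}$, for example when $\nu$ is large and $K$ grows polynomially in $n$. Going through the operator norm therefore only reproduces a pointwise-type term $m_n^{\bH}\sqrt K$, as with the $m_n^{\bH}\xi_K$ in Proposition~\ref{thm:pointwise-linearization}. It does not give the $m_n^{\bH}n^{1/\nu}\sqrt{\log K}$ inside $m_{4n}$.

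\textbf{The missing idea.} Never leave the linear form. Control $\bx\mapsto\widetilde{\bb}(\bx)^\top\bM\,\mG_n[\bb(\bX)\widetilde\Omega]$ as a process indexed by $\calX$, conditionally on the design so that $\bM$ is frozen.
\begin{itemize}
\item The paper does this in Step~1.1, following Lemma~4.2 of \citet{belloni2015some}, by symmetrization and Dudley's inequality.
\item The empirical $L^2$ increments are at most $\max_i|\widetilde\Omega_i|\,\|\bM\|_\textop\|\widehat{\bG}\|_\textop^{1/2}\,\xi_K^L\|\bx-\bx'\|_2$. So the diameter is free of $K$, and the entropy is $O(\log K)$ because $\log\xi_K^L=O(\log K)$ and $\mathrm{diam}(\calX)$ is bounded.
\item This yields $\sqrt{\log K}\,\max_i|\widetilde\Omega_i|\,\|\bM\|_\textop\|\widehat{\bG}\|_\textop^{1/2}$. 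Then $\bE\max_i|\widetilde\Omega_i|\lesssim n^{1/\nu}$ follows from the $\nu$th conditional moment bound in Assumption~\ref{assump:uniform-further-boundedness}.
\end{itemize}
Equivalently, you can reuse your own $\text{Rem}_{2n}$ device on these scalar forms instead of on the vector norm: a net of $\calX$, a union bound, and truncated scalar Bernstein at each net point. The crude $\sqrt K$ bound on $\|\mG_n[\bb(\bX)\widetilde\Omega]\|_2$ is then needed only for the discretization error, with a mesh polynomially small in $K$.

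Your split $\phi_n-\phi_dg^\ast=\widetilde\Omega+\phi_d\alpha$ suits this conditioning well, because $\bE(\widetilde\Omega\mid A,\bX)=0$. Conditionally on $\{(A_i,\bX_i)\}_{i=1}^n$, the summands stay independent and centered. This conditioning fixes $\widehat{\bH}$ provided the propensity fits use only treatment--covariate data. It is cleaner than conditioning with $\Omega$, whose conditional mean given $(A,\bX)$ is $\kappa(\bX)\omega'\{\pi(\bX)\}\{A-\pi(\bX)\}$, which is nonzero.
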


The next lemma is Proposition 6.1 of \cite{belloni2015some}, which provides a sharper bound for the empirical processes that appear in the proofs of uniform linearization and uniform convergence rates. For ease of reference, we state it without proof.

\begin{lemma}\label{lemma:tight-bound-empirical-process-uniform}
Let $(\epsilon_1,\bX_1),\ldots,(\epsilon_n,\bX_n)$ be i.i.d. random vectors taking values in $\mathbb{R}^{p+1}$. Suppose that $\bE(\epsilon_i | \bX_i)=0$ and $\sigma^2:=\sup_{\bx\in\mathcal{X}}\bE(\epsilon_i^2| \bX_i=\bx)<\infty$, where $\mathcal{X}$ denotes the support of $\bX_1$. Let $\mathcal{F}$ be a class of functions on $\mathbb{R}^p$ such that $\bE\{f(\bX_1)^2\}=1$ and $\|f\|_\infty\le b$ for all $f\in\mathcal{F}$. Define $\mathcal{G}:=\{(\epsilon,\bx)\mapsto \epsilon f(\bx): f\in\mathcal{F}\}$. Suppose that there exist constants $A>e^2$ and $V\ge 2$ such that $\sup_{\mathbb{Q}} N\{\mathcal{G},L^2(\mathbb{Q}),\varepsilon\|G\|_{L^2(\mathbb{Q})}\}\le (A/\varepsilon)^V$ for all $0<\varepsilon\le 1$, where the envelope is $G(\epsilon,\bx):=|\epsilon|b$. If $\bE(|\epsilon_1|^\nu)<\infty$ for some $\nu>2$, then $\bE\bigl[\sup_{f\in\mathcal{F}}|\sum_{i=1}^n \epsilon_i f(\bX_i)|\bigr]\le C\bigl[(\sigma+\sqrt{\bE|\epsilon_1|^\nu})\sqrt{nV\log(Ab)}+Vb^{\nu/(\nu-2)}\log(Ab)\bigr]$, where $C$ is a universal constant.

\end{lemma}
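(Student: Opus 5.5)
We would prove this maximal inequality by truncating the errors at a level $\tau$ chosen as a power of $b$. This splits $\sum_i\epsilon_i f(\bX_i)$ into a bounded empirical process and a heavy-tailed remainder. Each piece is then handled by the uniform-entropy (VC-type) maximal inequality of \citet{chernozhukov2014gaussian}, Corollary 5.1 there. That inequality says that for a VC-type class with envelope $F$, $\sigma_{\mathcal G}^2\ge\sup_{h\in\mathcal G}\bE h^2$, and $M=\max_i F(\cdot_i)$,
\begin{align*}
\bE\Bigl\|\sum_{i=1}^n\{h(\cdot_i)-\bE h\}\Bigr\|_{\mathcal G}\lesssim \sqrt{nV\sigma_{\mathcal G}^2\log(A\|F\|_{\mP,2}/\sigma_{\mathcal G})}+V\|M\|_{\mP,2}\log(A\|F\|_{\mP,2}/\sigma_{\mathcal G}).
\end{align*}
Because $\bE(\epsilon_i\mid\bX_i)=0$, the process $\sum_i\epsilon_if(\bX_i)$ is already centered. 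We therefore lose nothing by writing
\[
\epsilon_i=\epsilon_i^{\le}+\epsilon_i^{>},\qquad \epsilon_i^{\le}:=\epsilon_i\mathbb{I}(|\epsilon_i|\le\tau)-\bE\{\epsilon_i\mathbb{I}(|\epsilon_i|\le\tau)\mid\bX_i\},
\]
with $\epsilon_i^{>}$ the analogously conditionally centered tail part. We then bound $\bE\sup_f|\sum_i\epsilon_i^{\le}f(\bX_i)|$ and $\bE\sup_f|\sum_i\epsilon_i^{>}f(\bX_i)|$ separately.

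\emph{Bounded part.} The class $\{\epsilon^{\le}f\}$ inherits the covering-number bound $(A/\varepsilon)^V$ up to constants. Multiplying by a fixed function, truncating and conditionally centering change entropy only by constant factors; this follows from the stability of VC-type classes under Lipschitz operations, using the envelope $2\tau b$. Its variance is at most $\sup_f\bE\{\bE(\epsilon^2\mid\bX)f(\bX)^2\}\le\sigma^2\bE f^2=\sigma^2$, since $\bE f^2=1$. The envelope is deterministic and bounded by $2\tau b$. The displayed inequality therefore gives
\[
\bE\sup_f\Bigl|\sum_i\epsilon_i^{\le}f(\bX_i)\Bigr|\lesssim\sigma\sqrt{nV\log(Ab)}+V\tau b\log(Ab).
\]
We choose $\tau=b^{2/(\nu-2)}$, so that $\tau b=b^{\nu/(\nu-2)}$. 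This is exactly the second term of the claimed bound, and it explains the exponent.

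\emph{Tail part.} With the same $\tau$, the variance of $\{\epsilon^{>}f\}$ satisfies
\[
\sup_f\bE\{(\epsilon^{>})^2f^2\}\le b^2\,\bE\{\epsilon^2\mathbb{I}(|\epsilon|>\tau)\}\le b^2\tau^{2-\nu}\bE|\epsilon|^\nu=\bE|\epsilon|^\nu .
\]
Plugging $\sigma_{\mathcal G}=\sqrt{\bE|\epsilon_1|^\nu}$ into the maximal inequality produces the term $\sqrt{\bE|\epsilon_1|^\nu}\sqrt{nV\log(Ab)}$. This accounts for the $(\sigma+\sqrt{\bE|\epsilon_1|^\nu})$ factor. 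The normalization $\log(A\|F\|/\sigma_{\mathcal G})\lesssim\log(Ab)$ follows because $A>e^2$ and both $\|F\|$ and $\sigma_{\mathcal G}$ are comparable up to factors of $b$.

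\emph{Main obstacle.} The delicate step is the envelope term of the tail part, $V\|\max_i b|\epsilon_i|\mathbb{I}(|\epsilon_i|>\tau)\|_{\mP,2}\log(Ab)$. Bounding it crudely by $b\,n^{1/\nu}(\bE|\epsilon|^\nu)^{1/\nu}$ introduces a factor $n^{1/\nu}$ that is absent from the claimed bound. We would first check whether that factor is dominated by the $\sqrt n$ term, which it is when $b^{2}n^{2/\nu-1}\lesssim1$. If domination fails in general, we would instead use the refined version of the maximal inequality with $\|M\|_{\mP,2}$ replaced by $\|M\|_{\mP,2}/\sqrt n$ inside a square root, or apply Hoffmann-J{\o}rgensen's inequality. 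Either route should let the tail envelope be absorbed into the $\sqrt{nV\bE|\epsilon|^\nu\log(Ab)}$ term, leaving only the truncated envelope $b^{\nu/(\nu-2)}$. If neither absorption works, we would follow the proof of Proposition 6.1 in \cite{belloni2015some} directly. Summing the two parts and enlarging the constant $C$ completes the argument.
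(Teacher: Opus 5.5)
The paper does not prove this lemma; it quotes Proposition 6.1 of \cite{belloni2015some}. Your truncation at $\tau=b^{2/(\nu-2)}$ and your treatment of the bounded part are a sound start. However, the argument has a genuine gap at exactly the step you flag, and none of the routes you list closes it.

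The localized inequality of \citet{chernozhukov2014gaussian}, applied to the tail class $\{\epsilon^{>}f\}$, always carries the term $V\log(\cdot)\,\|M^{>}\|_{\mP,2}$ with $M^{>}=b\max_i|\epsilon_i^{>}|$. The natural bound is $\|M^{>}\|_{\mP,2}\le b\{n\bE(\epsilon^{>})^2\}^{1/2}\le b\{n\tau^{2-\nu}\bE|\epsilon|^\nu\}^{1/2}=\sqrt{n\bE|\epsilon|^\nu}$, and in general one cannot do better. This term is therefore of order $V\log(Ab)\sqrt{n\bE|\epsilon|^\nu}$, which is a factor of order $\sqrt{V\log(Ab)}$ larger than the claimed $\sqrt{\bE|\epsilon|^\nu}\sqrt{nV\log(Ab)}$. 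Your $n^{1/\nu}$ bound is worse still. The routes you list do not fix this:
\begin{itemize}
\item The condition $b^2n^{2/\nu-1}\lesssim1$ is not a hypothesis of the lemma.
\item Raising $\tau$ to $\{b^2V\log(Ab)\}^{1/(\nu-2)}$ repairs the tail term but inflates the bounded-part term $V\tau b\log(Ab)$ by $\{V\log(Ab)\}^{1/(\nu-2)}$.
\item Hoffmann-J{\o}rgensen's inequality itself produces an $\bE\max_i$ term, so it cannot remove one.
\item Deferring to the proof in \cite{belloni2015some} is just citing the result.
\end{itemize}

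The missing idea is that the tail part should not be localized at all. Since $\bE(\epsilon^{>}\mid\bX)=0$, you can apply the global uniform-entropy bound $\bE\sup_{h\in\mathcal{H}}|\mG_n h|\lesssim J(1,\mathcal{H})\|H\|_{\mP,2}$, for a class $\mathcal{H}$ with envelope $H$. This is Theorem 2.14.1 of \cite{vanderVaartWellner1996}; equivalently, use symmetrization plus Dudley's integral taken up to the empirical envelope norm. Take the envelope $F^{>}=b\{|\epsilon|\mathbb{I}(|\epsilon|>\tau)+\bE(|\epsilon|\mathbb{I}(|\epsilon|>\tau)\mid\bX)\}$; then $J(1)\lesssim\sqrt{V\log A}$ by the inherited covering bound. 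This gives $\bE\sup_f|\sum_i\epsilon_i^{>}f(\bX_i)|\lesssim\sqrt{nV\log A}\,\|F^{>}\|_{\mP,2}$, with $\|F^{>}\|_{\mP,2}\le 2b\tau^{1-\nu/2}(\bE|\epsilon|^\nu)^{1/2}=2\sqrt{\bE|\epsilon|^\nu}$, and no envelope-maximum term appears. Since $b\ge1$, $\log A\le\log(Ab)$. The identity $b\tau^{1-\nu/2}=1$ is exactly what produces the non-scale-invariant factor $\sqrt{\bE|\epsilon_1|^\nu}$ in the statement.

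There is also a smaller slip in the bounded part. With the deterministic envelope $2\tau b$, the logarithm is $\log(2A\tau b/\sigma)$. This is not $\lesssim\log(Ab)$, because it involves $1/\sigma$ and a $\nu$-dependent multiple of $\log b$. Instead use $F=b\{|\epsilon|\mathbb{I}(|\epsilon|\le\tau)+|\bE(\epsilon\mathbb{I}(|\epsilon|\le\tau)\mid\bX)|\}$. Its $L^2(\mP)$ norm is at most $2b\sigma$, while $\max_iF(\epsilon_i,\bX_i)\le2\tau b$. Then take $\sigma_{\mathcal{G}}=\max\{\sup_g\|g\|_{\mP,2},\|F\|_{\mP,2}/b\}\le2\sigma$, which gives $\log(A\|F\|_{\mP,2}/\sigma_{\mathcal{G}})\le\log(Ab)$.
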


Finally, the next two lemmas, Lemmas 4 and 6 of \cite{van2024combining}, provide local maximal inequalities used to bound the empirical processes in the analysis of targeted learning in Theorem \ref{prop:uniform-convergence-drift-term-targeted-learning}. We omit the proofs, which can be found in \cite{van2024combining}, which generalizes the results in \cite{van2011local}. Define $\|\mathcal{F}\|_{\mP,q}:=\sup_{f\in\mathcal{F}}\|f\|_{\mP,q}$ as the envelope for the function class $\mathcal{F}$ under the $L^q(\mP)$ norm.

\begin{lemma}\label{lemma:local-maximam-inequality-1}
Suppose $J(\infty,\mathcal{F},\|\cdot\|_{\mP,\infty}) < \infty$. Then
\[
\bE\left[
\sup_{f\in\mathcal{F}} |(\mP_n-\mP)(f)|
\right]
\lesssim
n^{-1/2}J(\delta,\mathcal{F},\|\cdot\|_{\mP,\infty}),
\]
for any $\delta \geq \|\mathcal{F}\|_{\mP,2} + n^{-1/2}$.   
\end{lemma}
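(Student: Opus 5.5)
The plan is to prove the bound by chaining directly in the uniform norm. The point of working in the uniform norm is that every increment of the chain is then bounded almost surely, so a scalar Bernstein inequality together with a union bound controls each link of the chain. No symmetrization or bracketing argument is needed.

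\textbf{Construction of the chain.} Fix $\delta\ge \|\mathcal F\|_{\mP,2}+n^{-1/2}$ and set $\rho_j=2^{-j}\delta$ for $j=0,1,\ldots,J$. For each $j$, let $\mathcal N_j\subset\mathcal F$ be a $\rho_j$-net of $\mathcal F$ in $\|\cdot\|_{\mP,\infty}$, with cardinality at most $N(\rho_j/2,\mathcal F,\|\cdot\|_{\mP,\infty})$. Let $\pi_j f$ denote a nearest element of $\mathcal N_j$. Decompose
\[
(\mP_n-\mP)f=(\mP_n-\mP)\pi_0f+\sum_{j=1}^{J}(\mP_n-\mP)(\pi_jf-\pi_{j-1}f)+(\mP_n-\mP)(f-\pi_Jf).
\]
\begin{enumerate}[(i)]
\item \emph{Chaining links.} For $j\ge1$ the increment $\pi_jf-\pi_{j-1}f$ has uniform norm at most $2\rho_{j-1}$, and its $L^2(\mP)$ norm obeys the same bound. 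Bernstein's inequality plus a union bound over at most $N_j^2$ pairs gives
\[
\bE\max\bigl|(\mP_n-\mP)(\pi_jf-\pi_{j-1}f)\bigr|\lesssim \rho_{j-1}\sqrt{\frac{\log N_j}{n}}+\rho_{j-1}\frac{\log N_j}{n}.
\]
Summing the first term over $j$ and comparing the sum with the integral gives $\lesssim n^{-1/2}J(\delta,\mathcal F,\|\cdot\|_{\mP,\infty})$.
\item \emph{Terminal remainder.} The last term is bounded deterministically by $2\rho_J$.
\end{enumerate}

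\textbf{Controlling the Bernstein linear terms (main obstacle).} Since $\sqrt{\log N(\rho)}$ is nonincreasing in $\rho$, we have $\rho\sqrt{\log N(\rho)}\le J(\rho)$. Hence the $j$th linear term is at most $n^{-1/2}\rho_{j-1}\sqrt{\log N_j}\times\sqrt{\log N_j}/\sqrt n$, and $\sqrt{\log N_j}/\sqrt n\lesssim J(\rho_j)/(\rho_j\sqrt n)$. The linear terms are therefore dominated by the subgaussian terms as long as $J(\rho_j)\lesssim \rho_j\sqrt n$. This dictates the stopping rule: take $J$ to be the first level at which $\rho_J\sqrt n\lesssim J(\rho_J)$ fails to hold in the reverse direction, i.e.\ stop where $\rho_J\sqrt n\asymp J(\rho_J)$. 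Then the terminal remainder satisfies
\[
2\rho_J\lesssim J(\rho_J)/\sqrt n\le J(\delta)/\sqrt n .
\]
If no such level exists, chain all the way down, which is possible because $J(\infty)<\infty$. Getting this stopping rule right, so that the two Bernstein regimes are balanced without an extra envelope assumption, is the delicate part of the argument.

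\textbf{The coarsest level and expected difficulty.} The elements $\pi_0f$ lie in $\mathcal F$, so each has $L^2(\mP)$ norm at most $\|\mathcal F\|_{\mP,2}\le\delta$. The difficulty is that we have no a priori sup-norm bound on them to feed into Bernstein. I would first try the following route, and I expect this to be the hardest step. Finiteness of $J(\infty,\mathcal F,\|\cdot\|_{\mP,\infty})$ forces $N(R,\mathcal F,\|\cdot\|_{\mP,\infty})=1$ for some finite $R$. So I would re-center at a single $f_0\in\mathcal F$ and chain the class $\mathcal F-f_0$, which is uniformly bounded, starting from the scale $R$ and continuing down to $\delta$. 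The levels between $R$ and $\delta$ contribute only to the part of the entropy integral that lies beyond $\delta$, so I would need to absorb that contribution into $J(\delta,\mathcal F,\|\cdot\|_{\mP,\infty})$. The single remaining term is bounded by $\bE|(\mP_n-\mP)f_0|\le n^{-1/2}\|f_0\|_{\mP,2}\le n^{-1/2}\delta$. Finally, $n^{-1/2}\delta\lesssim n^{-1/2}J(\delta,\mathcal F,\|\cdot\|_{\mP,\infty})$ whenever $\mathcal F$ is nontrivial at scale $\delta$, since then $\log N(\rho)\ge\log2$ for $\rho<\delta$. If this re-centering argument fails, the fallback is to cite the local maximal inequality of \cite{van2011local} as extended in \cite{van2024combining}. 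That result provides exactly this bound, using the fact that uniform-norm covers dominate both the $L^2(\mP)$ covers and the $L^2(\mP_n)$ covers.
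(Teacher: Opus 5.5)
The coarsest level is a genuine gap, and re-centering cannot close it. Your chaining below scale $\delta$ is sound, provided you stop just before the first index with $J(\rho_j)>\rho_j\sqrt n$, so that every link used has $\log N_j\lesssim n$.

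After re-centering at $f_0$, the jump $\pi_0f-f_0$ ranges over at most $N(\delta/2)$ functions. Their $L^2(\mP)$ norm is at most $2\delta$, but their sup norm can be as large as $2R$, where $R$ is the sup-norm radius of $\mathcal F$. Bernstein plus a union bound gives $\delta\sqrt{\log N(\delta/2)/n}+R\log N(\delta/2)/n$. The first piece is at most $2n^{-1/2}J(\delta)$. The second scales with $R$, not $\delta$, and nothing in the hypotheses ties $R$ to $\delta\sqrt n$. Chaining through the scales between $R$ and $\delta$ instead only produces terms $\rho_j\log N(\rho_j/2)/n$ with $\rho_j\gg\delta$. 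So the obstruction is the Bernstein linear term at the top, not the entropy integral above $\delta$.

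No argument can absorb this term, because the displayed inequality fails with any constant free of $\mathcal F$. That is how the paper uses it, for classes that shrink with $n$. Partition the sample space into $n^2$ cells $A_k$ of probability $n^{-2}$ and set $\mathcal F=\{R\,\mathbb{I}_{A_k}\}$ with $1\le R\le\sqrt n$.
\begin{enumerate}[(i)]
\item $\|\mathcal F\|_{\mP,2}=R/n$, and $N(\rho)=n^2$ for $\rho<R/2$ while $N(\rho)=1$ otherwise, so $J(\infty)<\infty$.
\item With $\delta=R/n+n^{-1/2}$, the right-hand side is $n^{-1/2}J(\delta)\asymp\sqrt{\log n}/n$.
\item Some cell always contains an observation, so $\sup_f|(\mP_n-\mP)f|\ge R(n^{-1}-n^{-2})$ surely. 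Taking $R=\log n$ breaks the inequality.
\end{enumerate}
With $R=1$ and $n$ cells of mass $1/n$, the $\log n/\log\log n$ maximal load of balls in bins breaks it even for classes bounded by one. A singleton class, where $J\equiv0$, is why you needed your proviso in the last step.

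Handling the top jump by Bernstein as above, and using $\delta^2\log N(\delta/2)\le 4J(\delta)^2$, your argument does prove
\[
\bE\sup_{f\in\mathcal F}|(\mP_n-\mP)f|\lesssim n^{-1/2}\{J(\delta)+\delta\}+\frac{R\,J(\delta)^2}{n\,\delta^2}.
\]
This is the sup-norm analogue of the bracketing local maximal inequality (Lemma 3.4.2 of \cite{vanderVaartWellner1996}). Sup-norm covers let you skip truncation at all finer levels, but they cannot remove the top-level term.

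The paper gives no proof of its own: it states the lemma and cites Lemma 4 of \cite{van2024combining}, building on \cite{van2011local}. Your fallback is therefore the paper's route, but your description of it is inaccurate. The inequality of \cite{van2011local} carries the analogous factor $1+J(\delta)/(\delta^2\sqrt n)$ in its own normalization, and domination of $L^2(\mP_n)$ covers by uniform covers does not eliminate it. Using the lemma downstream therefore requires one of two things: keep this extra term and check $R\,J(\delta)\lesssim\delta^2\sqrt n$ in each application, or add a hypothesis that forces it.
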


\begin{lemma}\label{lemma:local-maximam-inequality-2}
Let $\mathcal{H}$ be a uniformly bounded function class satisfying
$J(\infty,\mathcal{H},\|\cdot\|_{\mP,\infty}) < \infty$, and let $\mathcal{G}$ be
a function class with $J(\delta,\mathcal{G},\|\cdot\|_{\mP,2})
\lesssim
\delta \sqrt{\check{K}\log(1/\delta)}$ where $\log(1/\|\mathcal{G}\|_{\mP,2}) + \log(1/\|\mathcal{H}\|_{\mP,2})
\lesssim \log n$. Then
\begin{align*}
 &\bE\lb\sup_{f\in\mathcal{H}\mathcal{G}}|\mG_n(f)|\rb
\\
&\lesssim
\|\mathcal{G}\|_{\mP,2}
J
\left(
\frac{\max\{\|\mathcal{H}\|_{\mP,2},n^{-1/2}\}}
{\|\mathcal{G}\|_{\mP,2}},
\mathcal{H},\|\cdot\|_{\mP,\infty}
\right)
+
\|\mathcal{H}\|_{\mP,\infty}
\sqrt{\check{K}\log n}
\max\left\{
\|\mathcal{G}\|_{\mP,2},
\sqrt{\frac{\check{K}\log n}{n}}
\right\},   
\end{align*}
where $\mathcal{H}\mathcal{G}:=\{hg:h\in\mathcal{H},g\in\mathcal{G}\}$.
\end{lemma}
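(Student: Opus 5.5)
The plan is to split the product class with the triangle inequality. One piece carries all the variation in $\mathcal{H}$ and is handled by chaining in $\|\cdot\|_{\mP,\infty}$. The other piece is a single sieve-indexed class $h_0\mathcal{G}$ and is handled by a local maximal inequality under $L^2$ entropy. Fix an arbitrary anchor $h_0\in\mathcal{H}$. For every $f=hg\in\mathcal{H}\mathcal{G}$ write $hg=h_0g+(h-h_0)g$, so that
\begin{align*}
\sup_{f\in\mathcal{H}\mathcal{G}}|\mG_n(f)|\leq\sup_{g\in\mathcal{G}}|\mG_n(h_0g)|+\sup_{h\in\mathcal{H},g\in\mathcal{G}}|\mG_n\{(h-h_0)g\}|.
\end{align*}
The second term should produce the entropy-integral term of the lemma, and the first the $\|\mathcal{H}\|_{\mP,\infty}\sqrt{\check K\log n}$ term.

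\textbf{Step 1 (the $\mathcal{H}$-chaining term).} The key observation is that sup-norm distances in $\mathcal{H}$ transfer to $L^2(\mP)$ distances in the product class: $\|(h-h')g\|_{\mP,2}\leq\|h-h'\|_{\mP,\infty}\|\mathcal{G}\|_{\mP,2}$. Hence an $\rho$-net of $\mathcal{H}$ in sup norm induces, for every $g$, an $\rho\|\mathcal{G}\|_{\mP,2}$-net of $\{(h-h_0)g\}$ in $L^2(\mP)$, and the covering numbers do not depend on $g$. I would run the usual chaining argument (Bernstein at each link, as in the proof of Lemma \ref{lemma:local-maximam-inequality-1}, i.e., \cite{van2011local}) over nets $\mathcal{H}_j$ at scales $2^{-j}\delta$ with $\delta=\max\{\|\mathcal{H}\|_{\mP,2},n^{-1/2}\}/\|\mathcal{G}\|_{\mP,2}$. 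After rescaling by $\|\mathcal{G}\|_{\mP,2}$, the sum of link increments is bounded by $\|\mathcal{G}\|_{\mP,2}J(\delta,\mathcal{H},\|\cdot\|_{\mP,\infty})$. The coarse level below $\delta$ is absorbed because $\|(h-h_0)g\|_{\mP,2}\lesssim\|\mathcal{H}\|_{\mP,2}$-type control holds up to the $n^{-1/2}$ floor. For each fixed link $(h_j,h_{j-1})$ the supremum over $g$ is still present, and I would control it using the uniform boundedness of $\mathcal{H}$ together with the sieve structure of $\mathcal{G}$. Its contribution is of lower order and folds into the second term below.

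\textbf{Step 2 (the sieve term).} For fixed $h_0$, the class $h_0\mathcal{G}$ has envelope at most $\|\mathcal{H}\|_{\mP,\infty}$ times that of $\mathcal{G}$. Its $L^2$ radius is at most $\|\mathcal{H}\|_{\mP,\infty}\|\mathcal{G}\|_{\mP,2}$, and its $L^2$ covering numbers are those of $\mathcal{G}$ up to the factor $\|\mathcal{H}\|_{\mP,\infty}$, so $J(\delta,h_0\mathcal{G},\|\cdot\|_{\mP,2})\lesssim\|\mathcal{H}\|_{\mP,\infty}\delta\sqrt{\check K\log(1/\delta)}$. I would then apply the local maximal inequality with Bernstein correction,
\begin{align*}
\bE\sup|\mG_n|\lesssim J(\delta)\{1+J(\delta)/(\delta^2\sqrt n)\},
\end{align*}
at radius $\delta=\max\{\|\mathcal{G}\|_{\mP,2},\sqrt{\check K\log n/n}\}$. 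Using the condition $\log(1/\|\mathcal{G}\|_{\mP,2})\lesssim\log n$ to replace $\log(1/\delta)$ by $\log n$, both pieces reduce to $\|\mathcal{H}\|_{\mP,\infty}\sqrt{\check K\log n}\,\delta$, which is the second term of the lemma.

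\textbf{Main obstacle.} The delicate point is Step 1: the chaining is over $\mathcal{H}$ while $g$ remains free. The Bernstein bound at each link needs a sup-norm bound on $(h_j-h_{j-1})g$, and $g$ is only controlled in $L^2$. My first attempt would avoid a uniform envelope on $\mathcal{G}$ by truncating the chain at the scale where the Bernstein (sup-norm) part matches the subgaussian part. The leftover tail would be bounded crudely by $\|\mathcal{H}\|_{\mP,\infty}$ times the Step 2 bound, which is allowed since that term already appears in the statement. The condition $\log(1/\|\mathcal{H}\|_{\mP,2})\lesssim\log n$ ensures that the number of chaining levels, and hence the accumulated logarithmic factors, is only $O(\log n)$.
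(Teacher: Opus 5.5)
The paper does not prove this lemma; it imports it as Lemma 6 of \cite{van2024combining}. Judged on its own, your proposal has a genuine gap, and it sits in Step 1, which carries essentially all of the lemma's content.

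\textbf{The anchor split does not decouple anything.} The class $\{(h-h_0)g:h\in\mathcal{H},g\in\mathcal{G}\}$ is again a product class indexed jointly by $(h,g)$. Chaining over $\mathcal{H}$ leaves, at every level $j$, the quantity $\bE\max_{\text{links}}\sup_{g\in\mathcal{G}}|\mG_n\{(h_j-h_{j-1})g\}|$. Bounding it requires adding the entropy of $\mathcal{G}$, of order $\check K\log n$, to $\log N_j$ at each level. Summed over levels, this contributes in general a term of order $\sigma\sqrt{\check K\log n}$, where $\sigma=\sup_{f\in\mathcal{H}\mathcal{G}}\|f\|_{\mP,2}\le\|\mathcal{H}\|_{\mP,\infty}\|\mathcal{G}\|_{\mP,2}$. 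That is the same order as the lemma's second term, not lower order. You assert this contribution away rather than bounding it.

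\textbf{The proposed repair fails.} Step 2 works only because the multiplier $h_0$ is fixed, so $h_0\mathcal{G}$ inherits the covering numbers of $\mathcal{G}$. In the tail $\sup_{h,g}|\mG_n\{(h-\pi_Jh)g\}|$ the multiplier varies with the index. The tail therefore still carries the entropy of $\mathcal{H}$ below the truncation scale. Nothing bounds it by $\|\mathcal{H}\|_{\mP,\infty}$ times a bound for the $\mathcal{G}$-process, and no contraction principle applies to an index-dependent multiplier.

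\textbf{The obstacle is also misdiagnosed.} Your Step 2 Bernstein correction already needs $\|\mathcal{G}\|_{\mP,\infty}\lesssim1$, which holds in the paper's application because $\mathcal{T}_q$ is a bounded sieve. So $g$ is bounded in Step 1 too. The real difficulty is to get $\|\mathcal{G}\|_{\mP,2}$ rather than $\|\mathcal{G}\|_{\mP,\infty}$ in front of $J(\cdot,\mathcal{H},\|\cdot\|_{\mP,\infty})$ at fine scales. There, a Bernstein sum $\sum_j\epsilon_j\log N_j/\sqrt n$ is not controlled by a finite entropy integral. This forces symmetrization and control of the empirical norms $\sup_g\|g\|_{\mP_n,2}$, which is the true source of the $\sqrt{\check K\log n/n}$ floor.

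\textbf{The missing idea} is to bound the entropy of the product class directly. For any $Q$ and $h,h'\in\mathcal{H}$, $g,g'\in\mathcal{G}$,
\[
\|hg-h'g'\|_{Q,2}\leq\|h-h'\|_{\mP,\infty}\|g\|_{Q,2}+\|\mathcal{H}\|_{\mP,\infty}\|g-g'\|_{Q,2}.
\]
Hence $\log N(\epsilon,\mathcal{H}\mathcal{G},L^2(Q))$ is at most $\log N\{\epsilon/(2\|\mathcal{G}\|_{Q,2}),\mathcal{H},\|\cdot\|_{\mP,\infty}\}+\log N\{\epsilon/(2\|\mathcal{H}\|_{\mP,\infty}),\mathcal{G},L^2(Q)\}$. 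The entropy integral of $\mathcal{H}\mathcal{G}$ at radius $\sigma$ then splits additively into two pieces:
\begin{itemize}
\item an $\mathcal{H}$ piece, $\|\mathcal{G}\|_{Q,2}J(\sigma/\|\mathcal{G}\|_{Q,2},\mathcal{H},\|\cdot\|_{\mP,\infty})$;
\item a $\mathcal{G}$ piece, $\|\mathcal{H}\|_{\mP,\infty}J(\sigma/\|\mathcal{H}\|_{\mP,\infty},\mathcal{G})\lesssim\sigma\sqrt{\check K\log(\|\mathcal{H}\|_{\mP,\infty}/\sigma)}$.
\end{itemize}
Now apply a single local maximal inequality to $\mathcal{H}\mathcal{G}$ at radius $\sigma\le\min\{\|\mathcal{H}\|_{\mP,\infty}\|\mathcal{G}\|_{\mP,2},\|\mathcal{H}\|_{\mP,2}\|\mathcal{G}\|_{\mP,\infty}\}$. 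Either use the uniform-entropy inequality of \cite{van2011local}, or use symmetrization plus Dudley in $L^2(\mP_n)$ together with $\bE\sup_g\|g\|_{\mP_n,2}^2\lesssim\|\mathcal{G}\|_{\mP,2}^2+\check K\log n/n$. The $\mathcal{H}$ piece gives the first term. The $\mathcal{G}$ piece, with its Bernstein correction $\|\mathcal{H}\|_{\mP,\infty}\check K\log n/\sqrt n$, gives the second. Your Step 2 is essentially that $\mathcal{G}$ piece, but without the joint entropy bound you cannot combine it with the $\mathcal{H}$-chaining.

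This also explains why the hypothesis involves $\log(1/\|\mathcal{H}\|_{\mP,2})$. The $\mathcal{G}$-entropy, and its Bernstein correction, are evaluated at the joint radius $\sigma$, which can be of size $\|\mathcal{H}\|_{\mP,2}$, not merely at $\|\mathcal{G}\|_{\mP,2}$ as in your Step 2. In your proposal that condition is used only to count chaining levels.
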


\section{Proofs for all technical results}\label{sec:proofs-all-technical}
For simplicity, we refer to the sequential ignorability assumption as SI.

\subsection{Proof of Proposition \ref{prop:identification-CCME}}
\begin{proof}
{For any $(a_1,a_2)\in\{(0,0),(1,0),(1,1)\}$, the LOTE gives
\begin{align*}
\theta_{a_1a_2}(\bX)
=&\int \bE\{Y(a_1,m)|  M(a_2)=m,\bX\}
 f_{M(a_2)| \bX}(m| \bX)\,dm.
\end{align*}
By Assumption \ref{assump:SI}(i), the pair $\{Y(a_1,m),M(a_2)\}$ is independent of $A$ conditional on $\bX$, and Assumption \ref{assump:SI}(ii) implies $Y(a_1,m)\perp M(a_2)|  A=a_2,\bX$. Therefore,
\begin{align*}
\bE\{Y(a_1,m)|  M(a_2)=m,\bX\}
=&\bE\{Y(a_1,m)|  M(a_2)=m,A=a_2,\bX\}\\
=&\bE\{Y(a_1,m)|  A=a_2,\bX\}\\
=&\bE\{Y(a_1,m)| \bX\},
\end{align*}
where the last equality again follows from Assumption \ref{assump:SI}(i). Moreover,
\begin{align*}
\bE\{Y(a_1,m)| \bX\}
=&\bE\{Y(a_1,m)|  A=a_1,\bX\}\\
=&\bE\{Y(a_1,m)|  M(a_1)=m,A=a_1,\bX\}\\
=&\bE(Y|  M=m,A=a_1,\bX)\\
=&\mu_{a_1}(m,\bX),
\end{align*}
where the first equality follows from Assumption \ref{assump:SI}(i), the second from Assumption \ref{assump:SI}(ii), and the third from consistency. Similarly, Assumption \ref{assump:SI}(i) and consistency yield
\begin{align*}
f_{M(a_2)| \bX}(m| \bX)
=f_{M(a_2)|  A,\bX}(m|  a_2,\bX)
=f(m|  a_2,\bX).
\end{align*}
Substitution gives
\begin{align*}
\theta_{a_1a_2}(\bX)
=\int \mu_{a_1}(m,\bX)f(m|  a_2,\bX)\,dm,
\end{align*}
as required.
}
\end{proof}

\subsection{Proof of Theorem \ref{thm:EIF-smoothed-checkg}}
\subsubsection{Some preliminaries for deriving the EIF}
We present some preliminaries standard in the semiparametric literature \citep{tsiatis2006semiparametric} that are useful for deriving the nonparametric EIFs in the next subsection.

Denote $\theta_d=\bE\{w(\bX)\}$ and $\theta_n=\theta_d\check{g}_{w}=\bE[w(\bX)\{\eta_{11}(\bX)-\eta_{10}(\bX)\}]$. 
We consider a submodel $\mathcal{P}$ parameterized by a univariate parameter $\epsilon$, such that $\mathcal{P}=\{f_\epsilon(\mO):\epsilon\in \mathcal{T}\subseteq\mathbb{R}\}$, with the true distribution attained at $\epsilon=0$. Let $\theta_{r,\epsilon},r\in\{d,n\}$ be the value of $\theta_r$ within the submodel $\mathcal{P}$, with $\theta_r=\theta_{r,0}$ at the truth. Consider the following orthogonal factorization for the joint density $f_\epsilon(\mO)$ within the submodel $\mathcal{P}$
\begin{align*}
    f_\epsilon(\mO)=f_\epsilon(Y|M,A,\bX)f_\epsilon(M|A,\bX)f_\epsilon(A|\bX)f_\epsilon(\bX),
\end{align*}
and the corresponding score functions
\begin{align*}
    S(\mO)=&\frac{\partial \log f_\epsilon(\mO)}{\partial \epsilon}|_{\epsilon=0},\\
    S(Y|M,A,\bX)=&\frac{\partial \log f_\epsilon(Y|M,A,\bX)}{\partial \epsilon}|_{\epsilon=0},\\
    S(M|A,\bX)=&\frac{\partial \log f_\epsilon(M|A,\bX)}{\partial \epsilon}|_{\epsilon=0},\\
    S(A|\bX)=&\frac{\partial \log f_\epsilon(A|\bX)}{\partial \epsilon}|_{\epsilon=0},\\
    S(\bX)=&\frac{\partial \log f_\epsilon(\bX)}{\partial \epsilon}|_{\epsilon=0}.
\end{align*}
Under the nonparametric model, the score $S(\mO)$ lies in the full Hilbert space of mean-zero functions with finite second moments, with $\mathcal{H}:=\{S(\mO):\bE\{S(\mO)\}=0,\bE\{S(\mO)^2\}<\infty\}$. Following \cite{tsiatis2006semiparametric,kennedy2022semiparametric}, the nonparametric EIF, denoted as $\varphi_r^{\text{NIE}}(\mO;\theta_r),r\in\{d,n\}$, for $\theta_r$ is the unique solution, if it exists, to the following differential equation:
\begin{align*}
    \nabla_{\epsilon=0}\theta_{r,\epsilon}=\bE\{\varphi_r^{\text{NIE}}(\mO;\theta_r)S(\mO)\},
\end{align*}
where $\nabla_{\epsilon=0} h_\epsilon=\partial h_\epsilon/\partial \epsilon|_{\epsilon=0}$ denotes the partial derivative with respect to $\epsilon$ evaluated at $\epsilon=0$.

\subsubsection{Main proof of Theorem \ref{thm:EIF-smoothed-checkg}}
\begin{proof}
To begin, we derive the nonparametric EIF for the denominator $\theta_d$. By the chain rule, we have that 
\begin{align*}
    \dot{\theta}_d=T_1+T_2,
\end{align*}
where 
\begin{align*}
    T_1=&\bE\lb \omega(\pi(\bX))S(\bX)\rb,\\
    T_2=&\bE\lb \omega'(\pi(\bX))\dot{\pi}(\bX)\rb.
\end{align*}
For the first term, we have that 
\begin{align*}
    T_1=\bE\left[\lb\omega(\pi(\bX))-\theta_d\rb S(\bX)\right]=\bE\left[\lb\omega(\pi(\bX))-\theta_d\rb S(\mO)\right],
\end{align*}
because $\bE\lb S(\bX)\rb=0$. For the second term, we have that 
\begin{align*}
    T_2=&\bE\lb \omega'(\pi(\bX))\bE\lb AS(A|\bX)|\bX\rb\rb\\
    =&\bE\lb \omega'(\pi(\bX))AS(A|\bX)\rb\\
    =&\bE\left[ \omega'(\pi(\bX))\lb A-\pi(\bX)\rb S(A|\bX)\right]\\
    =&\bE\left[ \omega'(\pi(\bX))\lb A-\pi(\bX)\rb S(\mO)\right].
\end{align*}
To summarize, the nonparametric EIF for $\theta_d$ is given by
\begin{align*}
    \varphi_d^{\text{NIE}}=-\theta_d+\phi_d^{\text{NIE}},
\end{align*}
where
\begin{align*}
    \phi_d^{\text{NIE}}=&\omega(\pi(\bX))+\omega'(\pi(\bX))\lb A-\pi(\bX)\rb.
\end{align*}

Next, we derive the nonparametric EIF for the numerator $\theta_n$. By Proposition \ref{prop:identification-CCME}, we can express $\theta_n$ using the following integral:
\begin{align*}
    \theta_n=&\int \omega(\pi(\bx)) \lb\eta_{11}(\bx) -\eta_{10}(\bx)\rb f(\bx)d\bx.
\end{align*}
Therefore, the chain rule implies that
\begin{align*}
    \dot{\theta}_n=T_4+T_5+T_6-T_7,
\end{align*}
where
\begin{align*}
    T_4=&\int \omega(\pi(\bx)) \lb\eta_{11}(\bx) -\eta_{10}(\bx)\rb \dot{f}(\bx)d\bx,\\
    T_5=&\int \omega'(\pi(\bx))\dot{\pi}(\bx) \lb\eta_{11}(\bx) -\eta_{10}(\bx)\rb f(\bx)d\bx,\\
    T_6=&\int \omega(\pi(\bx)) \dot{\eta}_{11}(\bx) f(\bx)d\bx,\\
    T_7=&\int \omega(\pi(\bx)) \int \lb \dot{\mu}_1(m,\bx)f(m|0,\bx)+\mu_1(m,\bx)\dot{f}(m|0,\bx)\rb dm f(\bx)d\bx.
\end{align*}
We analyze the term $T_4$ as follows: 
\begin{align*}
    T_4=&\int \omega(\pi(\bx)) \lb\eta_{11}(\bx) -\eta_{10}(\bx)\rb S(\bx)f(\bx)d\bx\\
    =&\bE\left[  w(\bX)\lb\eta_{11}(\bX)-\eta_{10}(\bX)\rb S(\bX)\right]\\
    =&\bE\left[  \left[w(\bX)\lb\eta_{11}(\bX)-\eta_{10}(\bX)\rb-\theta_n\right] S(\bX)\right],
\end{align*}
where the last equality follows from the fact that $\bE\lb S(\bX)\rb=0$. 
The analysis of the term $T_2$ implies that
\begin{align*}
    T_5=&\bE\left[ \omega'(\pi(\bX))\lb\eta_{11}(\bX) -\eta_{10}(\bX)\rb\lb A-\pi(\bX)\rb S(\mO)\right].
\end{align*}
We then analyze the term $T_6$ as follows:
\begin{align*}
    T_6=&\int\int \omega(\pi(\bx))  y \lb S(y,1,\bx)-S(1,\bx)\rb f(y|1,\bx) f(\bx)dyd\bx\\
    =&\int\int \omega(\pi(\bx))  y S(y,1,\bx) f(y|1,\bx) f(\bx)dyd\bx-\int \omega(\pi(\bx)) \eta_{11}(\bx) S(1,\bx) f(\bx) d\bx\\
    =&\bE\lb \frac{A}{\pi(\bX)}w(\bX)Y S(Y,A,\bX)\rb-\bE\lb \frac{A}{\pi(\bX)}w(\bX)\eta_{11}(\bX)S(A,\bX)\rb\\
    =&\bE\lb \frac{A}{\pi(\bX)}w(\bX)Y S(\mO)\rb-\bE\lb \frac{A}{\pi(\bX)}w(\bX)\eta_{11}(\bX)S(\mO)\rb\\
    =&\bE\left[ \frac{A}{\pi(\bX)}w(\bX)\lb Y-\eta_{11}(\bX)\rb S(\mO)\right].
\end{align*}
where the fourth equality follows from the fact that $S(Y,A,\bX)=\bE\{S(\mO)|Y,A,\bX\}$. We finally analyze the term $T_7$ as follows: 
\begin{align*}
    T_7=&\int  \int  \int w(\bx) y \lb S(y,m,1,\bx)-S(m,1,\bx)\rb f(y|m,1,\bx) f(m|0,\bx)f(\bx)dy  dm d\bx+\\
    &\int \int w(\bx) \mu_1(m,\bx)\lb S(m,0,\bx)-S(0,\bx)\rb f(m|0,\bx) f(\bx)dm d\bx\\
    =&\bE\lb \frac{A}{\pi(\bX)}w(\bX) r(M,\bX) YS(\mO)\rb-\bE\lb \frac{A}{\pi(\bX)}\mu_1(M,\bX)w(\bX) r(M,\bX) S(\mO)\rb+\\
    &\bE \lb \frac{1-A}{1-\pi(\bX)}w(\bX)\mu_1(M,\bX) S(\mO)\rb-\bE \lb \frac{1-A}{1-\pi(\bX)}w(\bX)\eta_{10}(\bX) S(\mO)\rb\\
    =&\bE\left[w(\bX)\left[\frac{A}{\pi(\bX)}r(M,\bX)\lb Y-\mu_1(M,\bX)\rb+\frac{1-A}{1-\pi(\bX)}\lb \mu_1(M,\bX)-\eta_{10}(\bX)\rb\right]S(\mO)\right].
\end{align*}
Finally, the nonparametric EIF for $\theta_n$ is given by 
\begin{align*}
    \varphi_n^{\text{NIE}}=-\theta_n+\phi_n^{\text{NIE}},
\end{align*}
where
\begin{align*}
\phi_n^{\text{NIE}}=&
\lb\eta_{11}(\bX) -\eta_{10}(\bX)\rb\lb \phi_d^{\text{NIE}}-w(\bX)\rb+w(\bX)\lb \phi_{11}(\mO)-\phi_{10}(\mO)\rb.
\end{align*}

Finally, the nonparametric EIF for $\check{g}_w$ follows from the product rule for the influence-functions operator; see, for example, Trick 2a in Section 3.4.3 of \cite{kennedy2022semiparametric}.

\end{proof}

\subsection{Proof of Theorem \ref{prop:uniform-convergence-drift-term-targeted-learning}}
\begin{proof}
\textbf{Step 0: }We consider the following decomposition:
\begin{align}
    &\sup_{g\in\calG}|\mP_n\{\Delta(\mO;\widehat{\pi},\widehat{f}_0,\widehat{f}_1,\widehat{\mu}_1^\ast,g)\}|\nonumber\\
    \leq&\sup_{g\in\calG}|\mP_n\{\Delta(\mO;\widehat{\pi},\widehat{f}_0,\widehat{f}_1,\widehat{\mu}_1^\ast,\Pi(g))\}|+\sup_{g\in\calG}|\mP_n\{\Delta(\mO;\widehat{\pi},\widehat{f}_0,\widehat{f}_1,\widehat{\mu}_1^\ast,g-\Pi(g))\}|\nonumber\\
    \leq&\sup_{g\in\calG}|\mP_n\{\Delta(\mO;\widehat{\pi},\widehat{f}_0,\widehat{f}_1,\widehat{\mu}_1^\ast,\Pi(g))\}|+\label{eq:targeted-decomposition-term1}\\
    &\sup_{g\in\calG}|\mP_n\{\Delta(\mO;\pi,f_0,f_1,\mu_1,g-\Pi(g))\}|\label{eq:targeted-decomposition-term2}+\\
    &\sup_{g\in\calG}|\mP_n\{\Delta(\mO;\widehat{\pi},\widehat{f}_0,\widehat{f}_1,\widehat{\mu}_1^\ast,g-\Pi(g))-\Delta(\mO;\pi,f_0,f_1,\mu_1,g-\Pi(g))\}|\label{eq:targeted-decomposition-term3}.
\end{align}  
We analyze these three terms in \eqref{eq:targeted-decomposition-term1}-\eqref{eq:targeted-decomposition-term3}.

\textbf{Step 1: }We show that $\eqref{eq:targeted-decomposition-term1}=0$. To see this, we define the empirical projection, characterized by the weighted least squares loss in Algorithm \ref{alg:targeted-learner}, as $\check{\bb}(\bX)^\top\widehat{\boldsymbol{\epsilon}}$, where
\begin{align}\label{eq:definition-hat-epsilon-targeted-flucuation}
    \widehat{\boldsymbol{\epsilon}}=\arg\min_{\boldsymbol{\epsilon}}\mP_n\lbb \frac{A\omega\{\widehat{\pi}(\bX)\}\widehat{r}(M,\bX)}{\widehat{\pi}(\bX)}\lb Y-\widehat{\mu}_1(M,\bX)-\check{\bb}(\bX)^\top\boldsymbol{\epsilon}\rb^2\rbb.
\end{align}
The corresponding score equation for the empirical projection is given by 
\begin{align*}
    \mathbf{0}=&\mP_n\lbb \frac{A\omega\{\widehat{\pi}(\bX)\}\widehat{r}(M,\bX)}{\widehat{\pi}(\bX)}\lb Y-\widehat{\mu}_1(M,\bX)-\check{\bb}(\bX)^\top\widehat{\boldsymbol{\epsilon}}\rb\check{\bb}(\bX)\rbb\\
    =&\mP_n\lbb \frac{A\omega\{\widehat{\pi}(\bX)\}\widehat{r}(M,\bX)}{\widehat{\pi}(\bX)}\lb Y-\widehat{\mu}^\ast_1(M,\bX)\rb\check{\bb}(\bX)\rbb.
\end{align*}
Suppose that $\Pi(g)(\bx)=\check{\bb}(\bx)^\top\boldsymbol{\epsilon}^\ast$ for some fixed  coordinate vector $\boldsymbol{\epsilon}^\ast$. Therefore, it follows that
\begin{align}
    \mathbf{0}^\top\boldsymbol{\epsilon}^\ast=0=&\mP_n\lbb \frac{A\omega\{\widehat{\pi}(\bX)\}\widehat{r}(M,\bX)}{\widehat{\pi}(\bX)}\lb Y-\widehat{\mu}^\ast_1(M,\bX)\rb\check{\bb}(\bX)^\top\boldsymbol{\epsilon}^\ast\rbb\nonumber\\
    =&\mP_n\lb \Delta(\mO;\widehat{\pi},\widehat{f}_0,\widehat{f}_1,\widehat{\mu}_1^\ast,\Pi(g))\rb.\label{eq:score-equation-empirical-projection}
\end{align}
By \eqref{eq:score-equation-empirical-projection}, we obtain $\eqref{eq:targeted-decomposition-term1}=0$. 

\textbf{Step 2: }We analyze the term in \eqref{eq:targeted-decomposition-term2}. Now, we treat $\Delta:g\mapsto \Delta(\mO;\pi,f_0,f_1,\mu_1,T(g))$ as an operator taking values in $\mathcal{G}$. Assumption~\ref{asp:regularityL2}(d) and condition (vii) of Theorem~\ref{prop:uniform-convergence-drift-term-targeted-learning} imply
\[
\left\|\frac{Aw(\bX)r(M,\bX)}{\pi(\bX)}
\{Y-\mu_1(M,\bX)\}\right\|_{\mP,\infty}
\leq \epsilon_3\epsilon_2^2/\epsilon_1^2\leq C_0,
\]
where $C_0:=\max\{1,C_e\epsilon_2^2/\epsilon_1^2\}$. Consider the following class of functions:
\begin{align*}
    \mathcal{R}:=\{\mO\mapsto \Delta(\mO;\pi,f_0,f_1,\mu_1,T(g)):g\in\mathcal{G}\},
\end{align*}
where $T(g)=g-\Pi(g)$ denotes the residual operator induced by $\Pi$. We derive the entropy for $\mathcal{R}$. We note, for any $g_1,g_2\in\mathcal{G}$, $\|\Delta(\mO;\pi,f_0,f_1,\mu_1,T(g_1))-\Delta(\mO;\pi,f_0,f_1,\mu_1,T(g_2))\|_{\mP,\infty}\leq C_0\|T(g_1)-T(g_2)\|_{\mP,\infty}\leq C_0(1+\Lambda_{\check{K}})\|g_1-g_2\|_{\mP,\infty}$ under Assumption \ref{asp:regularityL2}(d), which implies the operator $\Delta$ is $C_0(1+\Lambda_{\check{K}})$-Lipschitz under the sup norm. Therefore, an $\rho/\{C_0(1+\Lambda_{\check{K}})\}$-cover of $\mathcal{G}$ is an $\rho$-cover of $\mathcal{R}$, which suggests that
\begin{align}\label{eq:inequality-convering-number}
    N(\rho,\mathcal{R},\|
\cdot\|_{\mP,\infty})\leq N(\rho/\{C_0(1+\Lambda_{\check{K}})\},\mathcal{G},\|
\cdot\|_{\mP,\infty}).
\end{align}
Further, the inequality \eqref{eq:inequality-convering-number} implies that
\begin{align}
    J(\delta,\mathcal{R},\|
\cdot\|_{\mP,\infty})=&\int_0^\delta \sqrt{\log N(\rho,\mathcal{R},\|
\cdot\|_{\mP,\infty})} d\rho\nonumber\\
\leq&\int_0^\delta \sqrt{\log N(\rho/\{C_0(1+\Lambda_{\check{K}})\},\mathcal{G},\|
\cdot\|_{\mP,\infty})}d\rho\nonumber\\
=&C_0(1+\Lambda_{\check{K}})\int_0^{\delta/\{C_0(1+\Lambda_{\check{K}})\}} \sqrt{\log N(\rho',\mathcal{G},\|
\cdot\|_{\mP,\infty})}d\rho'\nonumber\\
=&C_0(1+\Lambda_{\check{K}})J(\delta/\{C_0(1+\Lambda_{\check{K}})\},\mathcal{G},\|
\cdot\|_{\mP,\infty})\nonumber\\
\lesssim&(1+\Lambda_{\check{K}})\{\delta/(1+\Lambda_{\check{K}})\}^{1-1/(2\tau)}=\delta^{1-1/(2\tau)}(1+\Lambda_{\check{K}})^{1/(2\tau)},\label{eq:bound-on-entropy-integral-delta-R-this}
\end{align}
where the second equality is due to the change of variables $\rho'=\rho/\{C_0(1+\Lambda_{\check{K}})\}$ and the second inequality is due to $J(\delta,\mathcal{G},\|
\cdot\|_{\mP,\infty})\lesssim\delta^{1-1/(2\tau)}$.
Moreover, for $\mathcal{R}$, under Assumption \ref{asp:regularityL2}(d), we have $\sup_{f\in\mathcal{R}}\|f\|_{\mP,2}\leq C_0 c_{\check{K}}$. 
Since $C_0\geq1$, the choice $\delta=C_0(c_{\check K}+n^{-1/2})$ satisfies
$\delta\geq\sup_{f\in\mathcal R}\|f\|_{\mP,2}+n^{-1/2}$.
Lemma~\ref{lemma:local-maximam-inequality-1} therefore gives
\begin{align*}
    \bE\{\eqref{eq:targeted-decomposition-term2}\}=&\frac{1}{\sqrt{n}}\bE\lbb\sup_{g\in\calG}|\mG_n\{\Delta(\mO;\pi,f_0,f_1,\mu_1,g-\Pi(g))\}|\rbb\\
    \lesssim&n^{-1/2}J\{C_0(c_{\check{K}}+n^{-1/2}),\mathcal{R},\|
\cdot\|_{\mP,\infty}\}\\
\lesssim&n^{-1/2}(c_{\check{K}}+n^{-1/2})^{1-1/(2\tau)}(1+\Lambda_{\check{K}})^{1/(2\tau)},
\end{align*}
where the first equality follows from the fact that, by the LOTE, 
\begin{align*}
    &\bE\{\Delta(\mO;\pi,f_0,f_1,\mu_1,g-\Pi(g))\}\\
    =&\bE\lbb  w(\bX)\frac{A}{\pi(\bX)}r(M,\bX)\{Y-\mu_1(M,\bX)\}\lb g(\bX)-\Pi(g)(\bX)\rb\rbb=0.
\end{align*}
Finally, by Markov's inequality, we obtain $\eqref{eq:targeted-decomposition-term2}\lessp n^{-1/2}(c_{\check{K}}+n^{-1/2})^{1-1/(2\tau)}(1+\Lambda_{\check{K}})^{1/(2\tau)}$.

For use in Step 3.2.1, define the class
$\mathcal R_0:=\{\bX\mapsto T(g)(\bX):g\in\mathcal G\}$. Similar arguments imply 
\begin{align}\label{eq:bound-on-entropy-integral-residual-class}
J(\delta,\mathcal R_0,\|\cdot\|_{\mP,\infty})
\leq&(1+\Lambda_{\check K})J\left(\frac{\delta}{1+\Lambda_{\check K}},\mathcal G,\|\cdot\|_{\mP,\infty}\right)\nonumber\\
\lesssim&\delta^{1-1/(2\tau)}(1+\Lambda_{\check K})^{1/(2\tau)}.
\end{align}

\textbf{Step 3: }We analyze the term in \eqref{eq:targeted-decomposition-term3}. We consider a further decomposition of \eqref{eq:targeted-decomposition-term3} as follows:
\begin{align}
    \eqref{eq:targeted-decomposition-term3}\leq&\sup_{g\in\calG}|\mP\{\Delta(\mO;\widehat{\pi},\widehat{f}_0,\widehat{f}_1,\widehat{\mu}_1^\ast,g-\Pi(g))-\Delta(\mO;\pi,f_0,f_1,\mu_1,g-\Pi(g))\}|+\label{eq:targeted-decomposition-term3-1}\\
    &\sup_{g\in\calG}|(\mP_n-\mP)\{\Delta(\mO;\widehat{\pi},\widehat{f}_0,\widehat{f}_1,\widehat{\mu}_1^\ast,g-\Pi(g))-\Delta(\mO;\pi,f_0,f_1,\mu_1,g-\Pi(g))\}|\label{eq:targeted-decomposition-term3-2}.
\end{align}
We first bound the estimation error for the refined nuisance $\widehat{\mu}^\ast_1$ using the estimation error for the first-stage nuisance $\widehat{\mu}_1$. To do this, we define $\underline{\bG}:=\bE\{\check{\bb}(\bX)\check{\bb}(\bX)^\top\}$, $\underline{\widehat{\bH}}:=\mP_n\{A\widehat{w}(\bX)\widehat{r}(M,\bX)/\widehat{\pi}(\bX)\check{\bb}(\bX)\check{\bb}(\bX)^\top\}$, and $\underline{\widehat{\bh}}:=\mP_n[A\widehat{w}(\bX)\allowbreak\widehat{r}(M,\bX)/\widehat{\pi}(\bX)\check{\bb}(\bX)\{Y-\hat{\mu}_1(M,\bX)\}]$. Then we have
\begin{align}
    \|\mu_1-\widehat{\mu}_1^\ast\|_{\mP,2}\leq&\|\widehat{\mu}_1-\mu_1\|_{\mP,2}+\|\widehat{\mu}_1^\ast-\widehat{\mu}_1\|_{\mP,2}\nonumber\\
    =&\|\widehat{\mu}_1-\mu_1\|_{\mP,2}+\|\check{\bb}(\bX)^\top\widehat{\boldsymbol{\epsilon}}\|_{\mP,2}\nonumber\\
    \leq&d_2(\mu_1)+\{\widehat{\boldsymbol{\epsilon}}^\top\underline{\bG}\widehat{\boldsymbol{\epsilon}}\}^{1/2},\nonumber\\
    \lesssim&d_2(\mu_1)+\|\widehat{\boldsymbol{\epsilon}}\|_2,\nonumber\\
    \leq&d_2(\mu_1)+\|\underline{\widehat{\bH}}^{-1}\|_\textop\|\underline{\widehat{\bh}}\|_2\nonumber\\
    =&d_2(\mu_1)+\frac{\|\underline{\widehat{\bh}}\|_2}{\psi_1(\underline{\widehat{\bH}})}\nonumber\\
    \lessp&d_2(\mu_1)+\|\underline{\widehat{\bh}}\|_2\nonumber\\
    \leq&d_2(\mu_1)+\nonumber\\
    &\|\mP_n[A\widehat{w}(\bX)\allowbreak\widehat{r}(M,\bX)/\widehat{\pi}(\bX)\check{\bb}(\bX)\{Y-\mu_1(M,\bX)\}]\|_2+\label{eq:bound-mu1ast-term1}\\
    &\|\mP_n[A\widehat{w}(\bX)\allowbreak\widehat{r}(M,\bX)/\widehat{\pi}(\bX)\check{\bb}(\bX)\{\mu_1(M,\bX)-\widehat{\mu}_1(M,\bX)\}]\|_2.\label{eq:bound-mu1ast-term2}
\end{align}
Here, the first inequality follows from the triangle inequality, and the first equality follows from
$\widehat{\mu}_1^\ast(M,\bX)-\widehat{\mu}_1(M,\bX)
=\check{\bb}(\bX)^\top\widehat{\boldsymbol{\epsilon}}$.
The second inequality uses the definition of $d_2(\mu_1)$ and the identity $\|\check{\bb}(\bX)^\top\widehat{\boldsymbol{\epsilon}}\|_{\mP,2}
=
\left\{
\widehat{\boldsymbol{\epsilon}}^\top
\underline{\bG}
\widehat{\boldsymbol{\epsilon}}
\right\}^{1/2}$, where the fitted coefficient vector is held fixed when taking the population norm.
The third inequality follows from
$\widehat{\boldsymbol{\epsilon}}^\top\underline{\bG}
\widehat{\boldsymbol{\epsilon}}
\leq
\psi_{\check K}(\underline{\bG})
\|\widehat{\boldsymbol{\epsilon}}\|_2^2$
and Assumption~\ref{asp:regularityL2}(a), applied to $\check{\bb}$.
The fourth inequality follows from the normal equations
$\widehat{\boldsymbol{\epsilon}}
=\underline{\widehat{\bH}}^{-1}\underline{\widehat{\bh}}$
and the operator norm inequality.
The second equality uses
$\|\underline{\widehat{\bH}}^{-1}\|_\textop
=1/\psi_1(\underline{\widehat{\bH}})$
on the event that $\underline{\widehat{\bH}}$ is positive definite.
The fifth inequality uses
$\psi_1(\underline{\widehat{\bH}})^{-1}=O_{\mP}(1)$,
as justified below.
Finally, the last inequality follows from the triangle inequality after substituting
$Y-\widehat{\mu}_1=(Y-\mu_1)+(\mu_1-\widehat{\mu}_1)$
into the definition of $\underline{\widehat{\bh}}$.

To justify the fifth inequality and the preceding inverse matrix identities, define
$\widehat a_i
=A_i\widehat w(\bX_i)\widehat r(M_i,\bX_i)/
\widehat\pi(\bX_i)$.
The derivative bounds in Assumption~\ref{asp:regularityL2}(d) imply, for $t,p\in[\epsilon_1,1-\epsilon_1]$,
\[
\omega(t)\geq
\omega(p)\min\left\{\frac{t}{p},\frac{1-t}{1-p}\right\}.
\]
Taking $t=\widehat\pi(\bX_i)$ and $p=\pi(\bX_i)$, and using the bounds on the propensity scores, mediator densities, and $w$ in the same assumption, gives
$cA_i\leq\widehat a_i\leq CA_i$
for some positive constants $c,C$.
Consequently, $\underline{\widehat{\bH}}
\succeq
c\mP_n\{A\check{\bb}(\bX)\check{\bb}(\bX)^\top\}$ and $\bE\{A\check{\bb}(\bX)\check{\bb}(\bX)^\top\}
\succeq
\epsilon_1\underline{\bG}$.
The growth condition (vi) ensures that $\left\|
\mP_n\{A\check{\bb}(\bX)\check{\bb}(\bX)^\top\}
-
\bE\{A\check{\bb}(\bX)\check{\bb}(\bX)^\top\}
\right\|_\textop
=o_{\mP}(1)$, by the same matrix concentration argument used for the unweighted Gram matrix.
Since Assumption~\ref{asp:regularityL2}(a), applied to $\check{\bb}$, bounds
$\psi_1(\underline{\bG})$ away from zero, the preceding inequalities imply that
$\underline{\widehat{\bH}}$ is positive definite with probability tending to one and that
$\psi_1(\underline{\widehat{\bH}})^{-1}=O_{\mP}(1)$.

For each fold $q$, write, for $i\in\mathcal F_q$,
\[
\boldsymbol Z_i^q=
\frac{A_i\widehat w^q(\bX_i)\widehat r^q(M_i,\bX_i)}{\widehat\pi^q(\bX_i)}
\check{\bb}(\bX_i)\{Y_i-\mu_1(M_i,\bX_i)\}.
\]
Conditional on the training sample $\mathcal F_q^c$, these vectors are independent and centered. Jensen's inequality across folds and the conditional second moment identity within each fold imply
\begin{align*}
\bE\left\|\sum_{q=1}^Q\frac{n_q}{n}\mP_{n_q}\boldsymbol Z^q\right\|_2^2
\leq&\sum_{q=1}^Q\frac{n_q}{n}\bE\|\mP_{n_q}\boldsymbol Z^q\|_2^2\\
=&\sum_{q=1}^Q\frac{n_q}{n}\frac{1}{n_q}\bE\|\boldsymbol Z^q\|_2^2
\lesssim\frac{Q\check K}{n}\lesssim\frac{\check K}{n}.
\end{align*}
Thus, Markov's inequality gives $\eqref{eq:bound-mu1ast-term1}\lessp\sqrt{\check K/n}$.

By Step 2.1 of the proof in Section \ref{ss:proof-of-L2-rate}, we obtain
\begin{align*}
    \eqref{eq:bound-mu1ast-term2}\lessp d_2(\mu_1)+\frac{\xi_{\check{K}}}{\sqrt{n}}d_2(\mu_1).
\end{align*}
Combining the preceding bounds yields
\begin{align*}
    \|\mu_1-\widehat{\mu}_1^\ast\|_{\mP,2}\lessp \lsb1+\frac{\xi_{\check{K}}}{\sqrt{n}}\rsb d_2(\mu_1)+\sqrt{\frac{\check{K}}{n}}:=r_{1n}.
\end{align*}
It remains to analyze the terms in \eqref{eq:targeted-decomposition-term3-1} and \eqref{eq:targeted-decomposition-term3-2}.

\textbf{Step 3.1: }For the term in \eqref{eq:targeted-decomposition-term3-1}, we have
\begin{align*}
    &\eqref{eq:targeted-decomposition-term3-1}\\
    \leq&\sup_{g\in\calG}\left|\bE\lbb A\lb \frac{\widehat{w}(\bX)}{\widehat{\pi}(\bX)}\widehat{r}(M,\bX)-\frac{w(\bX)}{\pi(\bX)}r(M,\bX)\rb\{Y-\mu_1(M,\bX)\}\{g-\Pi(g)\}\rbb\right|+\\
    &\sup_{g\in\calG}\left|\bE\lbb A \frac{\widehat{w}(\bX)}{\widehat{\pi}(\bX)}\widehat{r}(M,\bX)\{\mu_1(M,\bX)-\widehat{\mu}_1^\ast(M,\bX)\}\{g-\Pi(g)\}\rbb\right|\\
    =&\sup_{g\in\calG}\left|\bE\lbb A \frac{\widehat{w}(\bX)}{\widehat{\pi}(\bX)}\widehat{r}(M,\bX)\{\mu_1(M,\bX)-\widehat{\mu}_1^\ast(M,\bX)\}\{g-\Pi(g)\}\rbb\right|\\
    \leq&\sup_{g\in\calG}\bE\lbb A \frac{\widehat{w}(\bX)^2}{\widehat{\pi}(\bX)^2}\widehat{r}(M,\bX)^2\{\mu_1(M,\bX)-\widehat{\mu}_1^\ast(M,\bX)\}^2\rbb^{1/2}\left\|g-\Pi(g)\right\|_{\mP,2}\\
    \lesssim&\sup_{g\in\calG}\|\mu_1-\widehat{\mu}_1^\ast\|_{\mP,2}\left\|g-\Pi(g)\right\|_{\mP,2},
\end{align*}
where the first equality follows from the LOTE, the second inequality follows from the Cauchy-Schwarz inequality, and the last inequality follows from Assumption \ref{asp:regularityL2}(d). 

\textbf{Step 3.2: }Due to sample splitting, we consider $\eqref{eq:targeted-decomposition-term3-2}\leq\sum_{q=1}^Q(n_q/n)Z_q$, where
\begin{align*}
   Z_q:=&\sup_{g\in\calG}|(\mP_{n_q}-\mP)\{\Delta(\mO;\widehat{\pi}^q,\widehat{f}_0,\widehat{f}_1,\widehat{\mu}_1^\ast,g-\Pi(g))-\Delta(\mO;\pi,f_0,f_1,\mu_1,g-\Pi(g))|\mathcal{F}_q^c\}|\\
    \leq&\sup_{g\in\calG}|(\mP_{n_q}-\mP) [\{V^{(1)}_{n,q}(\mO;\widehat{\boldsymbol{\epsilon}})+V^{(2)}_{n,q}(\mO)\}\{g-\Pi(g)\}|\mathcal{F}_q^c]|,
\end{align*}
where 
\begin{align*}
    V^{(1)}_{n,q}(\mO;\widehat{\boldsymbol{\epsilon}}):=&A\frac{\widehat{w}^q(\bX)}{\widehat{\pi}^q(\bX)}\widehat{r}^q(M,\bX)\{\mu_1(M,\bX)-\widehat{\mu}^q_1(M,\bX)-\check{\bb}(\bX)^\top\widehat{\boldsymbol{\epsilon}}\},\\
    V^{(2)}_{n,q}(\mO):=&A\lb \frac{\widehat{w}^q(\bX)}{\widehat{\pi}^q(\bX)}\widehat{r}^q(M,\bX)-\frac{w(\bX)}{\pi(\bX)}r(M,\bX)\rb\{Y-\mu_1(M,\bX)\}.
\end{align*}

\textbf{Step 3.2.1: }To bound the term involving $V^{(1)}_{n,q}$, we note that, conditional on $\mathcal{F}_q^c$, the randomness of $V^{(1)}_{n,q}$ comes from $\widehat{\boldsymbol{\epsilon}}$. To remove the randomness induced by the sieve coefficients $\widehat{\boldsymbol{\epsilon}}$, we embed $V^{(1)}_{n,q}$ into the following deterministic (conditional on $\mathcal{F}_q^c$) class of functions: for some sufficiently large constant $M_{\upsilon}$ that depends on $\upsilon\in(0,1)$, 
\begin{align*}
    \mathcal{T}_q:=\lb\mO\mapsto V^{(1)}_{n,q}(\mO;{\boldsymbol{\epsilon}}):\|\boldsymbol{\epsilon}\|_\infty\leq M_{\upsilon},\|V^{(1)}_{n,q}\|_{\mP,2}\leq M_{\upsilon}\lb \lsb1+\frac{\xi_{\check{K}}}{\sqrt{n}}\rsb d_2(\mu_1)+\sqrt{\frac{\check{K}}{n}}\rb\rb.
\end{align*}

Next, we will show that, with high probability at least $1-\upsilon$, one can always find a sufficiently large $M_{\upsilon}$ and define the corresponding $\calT_q$ such that $V^{(1)}_{n,q}(\mO)\in\mathcal{T}_q$. 

First, let $c_G:=\inf_n\psi_1(\underline{\bG})>0$, as guaranteed by Assumption~\ref{asp:regularityL2}(a) for the targeting basis $\check{\bb}$. For every realization of $\widehat{\boldsymbol{\epsilon}}$, $\|\check{\bb}(\bX)^\top\widehat{\boldsymbol{\epsilon}}\|_{\mP,2}^2
=\widehat{\boldsymbol{\epsilon}}^\top\underline{\bG}\widehat{\boldsymbol{\epsilon}}
\geq c_G\|\widehat{\boldsymbol{\epsilon}}\|_2^2
\geq c_G\|\widehat{\boldsymbol{\epsilon}}\|_\infty^2$.
Consequently,
\begin{align*}
\|\widehat{\boldsymbol{\epsilon}}\|_\infty
\leq& c_G^{-1/2}\|\check{\bb}(\bX)^\top\widehat{\boldsymbol{\epsilon}}\|_{\mP,2}\\
\leq& c_G^{-1/2}\|\check{\bb}(\bX)^\top\widehat{\boldsymbol{\epsilon}}\|_{\mP,\infty}\\
\leq& c_G^{-1/2}\left(\|\widehat{\mu}_1^\ast\|_{\mP,\infty}
+\|\widehat{\mu}_1\|_{\mP,\infty}\right)\lesssim1,
\end{align*}
where the last step uses the assumed uniform boundedness of both outcome regressions. Thus one may choose a constant $M_{\upsilon,1}$ such that $\|\widehat{\boldsymbol{\epsilon}}\|_\infty\leq M_{\upsilon,1}$ on that boundedness event.

Second, we note 
\begin{align*}
    \| V^{(1)}_{n,q}(\mO;\widehat{\boldsymbol{\epsilon}})\|_{\mP,2}\lesssim& \|\widehat{\mu}_1^q-\mu_1+\check{\bb}(\bX)^\top\widehat{\boldsymbol{\epsilon}}\|_{\mP,2}\\
    \leq&\|\widehat{\mu}_1^q-\mu_1\|_{\mP,2}+ \|\check{\bb}(\bX)^\top\widehat{\boldsymbol{\epsilon}}\|_{\mP,2}\\
    \lessp&\lsb1+\frac{\xi_{\check{K}}}{\sqrt{n}}\rsb d_2(\mu_1)+\sqrt{\frac{\check{K}}{n}},
\end{align*}
where the last inequality follows from \eqref{eq:bound-mu1ast-term1} and \eqref{eq:bound-mu1ast-term2}. Thus, for all $\upsilon\in(0,1)$, one can pick $M_{\upsilon,2}$ such that $\Pr(\| V^{(1)}_{n,q}(\mO;\widehat{\boldsymbol{\epsilon}})\|_{\mP,2}\leq M_{\upsilon,2}\lsb1+{\xi_{\check{K}}}/{\sqrt{n}}\rsb d_2(\mu_1)+\sqrt{{\check{K}}/{n}})\geq 1-\upsilon$. Finally, set $M_{\upsilon}=\max(M_{\upsilon,1},M_{\upsilon,2})$.

Moreover, the class $\calT_q$ is a subset of a $(\check{K}_n+1)$-dimensional linear space with uniformly bounded coefficients. Following the arguments in Lemma 7 of \cite{van2024combining}, which uses Lemma 2.6.15 and Theorem 2.10.20 of \cite{vanderVaartWellner1996}, the class $\calT_q$ is VC-subgraph with dimension $O(\check{K})$, and its entropy integral satisfies $J(\delta,\calT_q,\|\cdot\|_{\mP,\infty})\lesssim \delta \sqrt{\check{K}\log(1/\delta)}$. Thus, we have verified all conditions required to apply Lemma \ref{lemma:local-maximam-inequality-2}. Therefore, by Markov's inequality and Lemma \ref{lemma:local-maximam-inequality-2}, we obtain
\begin{align*}
   &\sup_{g\in\mathcal{G}}\left|(\mP_{n_q}-\mP)\lbb V_{n,q}^{(1)}(\mO;\widehat{\boldsymbol{\epsilon}})\{g-\Pi(g)\}\rbb\right|\\
   \lessp&\frac{\|\mathcal{T}_q\|_{\mP,2}}{\sqrt{n}}
J
\left(
\frac{c_{\check{K}}\vee n^{-1/2}}
{\|\mathcal{T}_q\|_{\mP,2}},
\mathcal{R}_0,\|\cdot\|_{\mP,\infty}
\right)
+
l_{\check{K}}c_{\check{K}}
\sqrt{\frac{\check{K}\log n}{n}}
\left(
\|\mathcal{T}_q\|_{\mP,2}\vee
\sqrt{\frac{\check{K}\log n}{n}}
\right)\\
\lesssim&n^{-1/2}\{\|\calT_q\|_{\mP,2}(1+\Lambda_{\check{K}})\}^{1/(2\tau)}
\lsb{c_{\check{K}}\vee n^{-1/2}}\rsb^{1-1/(2\tau)}
+
l_{\check{K}}c_{\check{K}}
\sqrt{\frac{\check{K}\log n}{n}}
\left(
\|\calT_q\|_{\mP,2}\vee
\sqrt{\frac{\check{K}\log n}{n}}
\right),
\end{align*}
where the last inequality follows from \eqref{eq:bound-on-entropy-integral-residual-class}. By the definition of $\calT_q$, $\|\calT_q\|_{\mP,2}\lesssim r_{1n}$.

\textbf{Step 3.2.2: }
To bound the term involving $V^{(2)}_{n,q}(\mO)$, we note that, conditional on $\mathcal{F}_q^c$, $V^{(2)}_{n,q}$ is a deterministic function of the data $\mO$. We follow arguments analogous to those used in Step 2 and apply Lemma \ref{lemma:local-maximam-inequality-1} to obtain the bound. Conditional on $\mathcal{F}_q^c$, we consider the following class of functions:
\begin{align*}
    \mathcal{R}^\ast:=\{\mO\mapsto V^{(2)}_{n,q}(\mO)T(g):g\in\mathcal{G}\},
\end{align*}
We derive the entropy for $\mathcal{R}^\ast$. We note, for any $g_1,g_2\in\mathcal{G}$ and some sufficiently large constant $C>0$ depending only on $\epsilon_1,\epsilon_2,\epsilon_3$, 
\begin{align*}
    &\|V^{(2)}_{n,q}(\mO)T(g_1)-V^{(2)}_{n,q}(\mO)T(g_2)\|_{\mP,\infty}\\
    \leq &C\lsb\|\widehat{\pi}^q-\pi\|_{\mP,\infty}+\sum_{a=0,1}\|\widehat{f}_a^q-f_a\|_{\mP,\infty}\rsb\|T(g_1)-T(g_2)\|_{\mP,\infty}\\
    \leq &C\lb d_\infty(\pi)+\sum_{a=0,1}d_\infty(f_a)\rb(1+\Lambda_{\check{K}})\|g_1-g_2\|_{\mP,\infty}
\end{align*}
under Assumption \ref{asp:regularityL2}(d) and conditions (i) and (vii) of Theorem
\ref{prop:uniform-convergence-drift-term-targeted-learning}, which implies the operator $V^{(2)}_{n,q}(\mO)T(g)$ is $L_K^V:=C\{ d_\infty(\pi)+\sum_{a=0,1}d_\infty(f_a)\}(1+\Lambda_{\check{K}})$-Lipschitz under the sup norm. Therefore, a $\rho/L_K^V$-cover of $\mathcal{G}$ is a $\rho$-cover of $\mathcal{R}^\ast$, which suggests that
\begin{align*}
    N(\rho,\mathcal{R}^\ast,\|
\cdot\|_{\mP,\infty})\leq N(\rho/L_K^V,\mathcal{G},\|
\cdot\|_{\mP,\infty}),
\end{align*}
and 
\begin{align}
    J(\delta,\mathcal{R}^\ast,\|
\cdot\|_{\mP,\infty})
\leq&L_K^VJ(\delta/L_K^V,\mathcal{G},\|
\cdot\|_{\mP,\infty})\nonumber\\
\lesssim&\{ d_\infty(\pi)+\sum_{a=0,1}d_\infty(f_a)\}(1+\Lambda_{\check{K}})\left\{\frac{\delta}{\{d_\infty(\pi)+\sum_{a=0,1}d_\infty(f_a)\}(1+\Lambda_{\check{K}})}\right\}^{1-1/(2\tau)}\nonumber\\
=&\delta^{1-1/(2\tau)}[\{ d_\infty(\pi)+\sum_{a=0,1}d_\infty(f_a)\}(1+\Lambda_{\check{K}})]^{1/(2\tau)}.\nonumber
\end{align}

If $d_\infty(\pi)+\sum_{a=0,1}d_\infty(f_a)=0$, then $V^{(2)}_{n,q}=0$ and the corresponding entropy contribution is zero; the calculation above is for a nonzero nuisance error. Moreover, for $\mathcal{R}^\ast$, under Assumption \ref{asp:regularityL2}(d) and condition (vii), we have $\sup_{f\in\mathcal{R}^\ast}\|f\|_{\mP,2}\lesssim\{ d_\infty(\pi)+\sum_{a=0,1}d_\infty(f_a)\}c_{\check{K}}$ and $\sup_{f\in\mathcal{R}^\ast}\|f\|_{\mP,2}\lesssim\{ d_2(\pi)+\sum_{a=0,1}d_2(f_a)\}l_{\check{K}}c_{\check{K}}$, which implies $\sup_{f\in\mathcal{R}^\ast}\|f\|_{\mP,2}\lesssim r_{2n}:=[\{ d_\infty(\pi)+\sum_{a=0,1}d_\infty(f_a)\}c_{\check{K}}]\wedge [\{ d_2(\pi)+\sum_{a=0,1}d_2(f_a)\}l_{\check{K}}c_{\check{K}}]$.

To apply Lemma
\ref{lemma:local-maximam-inequality-1} conditional on $\mathcal{F}_q^c$, define, for $u\in\{2,\infty\}$,
\[
D_{u,q}
:=
\|\widehat{\pi}^q-\pi\|_{\mP,u}
+\sum_{a=0,1}\|\widehat{f}_a^q-f_a\|_{\mP,u},
\]
and $r_{2n,q}
:=
(D_{\infty,q}c_{\check{K}})
\wedge
(D_{2,q}l_{\check{K}}c_{\check{K}})$. These quantities are nonrandom conditional on $\mathcal{F}_q^c$.
The preceding entropy and $L^2$ bounds also hold with
$d_u(\pi)+\sum_{a=0,1}d_u(f_a)$ and $r_{2n}$ replaced by
$D_{u,q}$ and $r_{2n,q}$, respectively.
If $D_{\infty,q}=0$, then $V^{(2)}_{n,q}=0$ and the corresponding
empirical process term is zero.

Finally, we apply Lemma \ref{lemma:local-maximam-inequality-1}
conditionally on $\mathcal{F}_q^c$, with
$\delta=C'(r_{2n,q}+n_q^{-1/2})$
for some sufficiently large constant $C'$ to obtain
\begin{align*}
    &\bE\left[
    \sup_{g\in\mathcal{G}}
    \left|
    (\mP_{n_q}-\mP)
    \lbb V_{n,q}^{(2)}(\mO)\{g-\Pi(g)\}\rbb
    \right|
    \,\middle|\,\mathcal{F}_q^c
    \right]\\
    \lesssim&
    n_q^{-1/2}
    J\left(
    C'(r_{2n,q}+n_q^{-1/2}),
    \mathcal{R}^\ast,\|\cdot\|_{\mP,\infty}
    \right)\\
    \lesssim&
    n_q^{-1/2}
    (r_{2n,q}+n_q^{-1/2})^{1-1/(2\tau)}
    [D_{\infty,q}(1+\Lambda_{\check{K}})]^{1/(2\tau)}.
\end{align*}
Conditional Markov's inequality, together with
$D_{u,q}\leq d_u(\pi)+\sum_{a=0,1}d_u(f_a)$,
$r_{2n,q}\leq r_{2n}$, and $n_q\asymp n$ for fixed $Q$, gives
\begin{align*}
    &\sup_{g\in\mathcal{G}}
    \left|
    (\mP_{n_q}-\mP)
    \lbb V_{n,q}^{(2)}(\mO)\{g-\Pi(g)\}\rbb
    \right|\\
    \lessp&
    n^{-1/2}
    (r_{2n}+n^{-1/2})^{1-1/(2\tau)}
    [\{d_\infty(\pi)+\sum_{a=0,1}d_\infty(f_a)\}
    (1+\Lambda_{\check{K}})]^{1/(2\tau)}.
\end{align*}


\end{proof}

\subsection{Proof of Theorem \ref{thm:L2-rate-of-convergence}}\label{ss:proof-of-L2-rate}

\begin{proof}
\textbf{Step 1:} By the triangle inequality, it follows that
\begin{align*}
  \| \widehat{g}-g\|_{\mathbb{P},2}=&  \| \widehat{g}-g^\ast+g^\ast-g\|_{\mathbb{P},2}\\
  \leq&\|\widehat{g}-g^\ast\|_{\mathbb{P},2} +\|g^\ast-g\|_{\mathbb{P},2}\\
  \leq&\|\widehat{g}-g^\ast\|_{\mathbb{P},2} +c_K~~~(\text{Assumption \ref{asp:regularityL2}(b)}).
\end{align*}
Following \cite{newey1997convergence}, Assumption \ref{asp:regularityL2}(a) implies that, without loss of generality, we may orthonormalize $\bG$ to the identity matrix to simplify the exposition. Then the first term equals 
\begin{align*}
    \|\widehat{g}-g^\ast\|_{\mathbb{P},2}=&\|\bb(\bX)^\top(\widehat{\bbeta}-\bbeta^\ast)\|_{\mathbb{P},2}\\
    =&\left[\bE_{\bX}[\{\bb(\bX)^\top(\widehat{\bbeta}-\bbeta^\ast)\}^2]\right]^{1/2}\\
    =&\left[\bE_{\bX}[(\widehat{\bbeta}-\bbeta^\ast)^\top\bb(\bX)\bb(\bX)^\top(\widehat{\bbeta}-\bbeta^\ast)]\right]^{1/2}\\
    =&\lb(\widehat{\bbeta}-\bbeta^\ast)^\top\bG(\widehat{\bbeta}-\bbeta^\ast)\rb ^{1/2}\\
    =&\|\widehat{\bbeta}-\bbeta^\ast\|_2.
\end{align*}
It remains to analyze $\|\widehat{\bbeta}-\bbeta^\ast\|_2$, the $\ell_2$ estimation error of $\widehat{\bbeta}$ relative to its projection $\bbeta^\ast$.

\textbf{Step 2:} We first show that all eigenvalues of the estimated Gram matrix $\widehat{\bH}$ are bounded below by $\epsilon_1/2$ with probability approaching one. Because
\[
\bE(\phi_d| \bX)=w(\bX)+\omega'\{\pi(\bX)\}\bE\{A-\pi(\bX)| \bX\}=w(\bX),
\]
the smallest eigenvalue of the Gram matrix $\bH$ satisfies
\begin{align}
    \psi_1(\bH)=&\inf_{\|v\|_2=1} v^\top \bH v\nonumber\\
    =&{\inf_{\|v\|_2=1} \bE\left[ w(\bX)\lb v^\top\bb(\bX)\rb^2\right]}\nonumber\\
    \geq&\epsilon_1 \inf_{\|v\|_2=1}\bE\lb v^\top\bb(\bX)\bb(\bX)^\top v\rb=\epsilon_1,\label{eq:bounded-eigven-gram-H}
\end{align}
where the inequality follows from $\inf_{\bX}w(\bX)\geq\epsilon_1$ in Assumption \ref{asp:regularityL2}(d), and the last equality follows from the assumption that $\bG$ is the identity matrix after orthonormalization. By Weyl's inequality, it follows that
\begin{align*}
    \psi_1(\widehat{\bH})\geq& \psi_1(\bH)-\|\widehat{\bH}-\bH\|_\textop\\
    \geq&\epsilon_1-\|\widehat{\bH}-\bH\|_\textop.
\end{align*}
Since Theorem \ref{thm:rate-gram-norm} implies that $\|\widehat{\bH}-\bH\|_\textop=o_\mP(1)$ under Assumption \ref{asp:regularityL2}(c), it follows that 
\begin{align*}
    \lim_{n\to\infty}\Pr\left(\|\widehat{\bH}-\bH\|_\textop\leq \epsilon_1/2\right)=1.
\end{align*}
Because the event $\|\widehat{\bH}-\bH\|_\textop\leq \epsilon_1/2$ implies $\psi_1(\widehat{\bH})\geq\epsilon_1-\|\widehat{\bH}-\bH\|_\textop\geq\epsilon_1/2$, we obtain 
\begin{align*}
    \lim_{n\to\infty}\Pr\left\{\psi_1(\widehat{\bH})\geq\epsilon_1/2\right\}\geq\lim_{n\to\infty}\Pr\left(\|\widehat{\bH}-\bH\|_\textop\leq \epsilon_1/2\right)=1.
\end{align*}

Next, we analyze the primary quantity of interest remaining from Step 1, $\left\|  \widehat{\bbeta}-\bbeta^\ast\right\|_2$. By the triangle inequality and {the submultiplicative property of the spectral norm}, it follows that
\begin{align}
        \left\|  \widehat{\bbeta}-\bbeta^\ast\right\|_2
        =&\left\|  \lsb\widehat{\bH}+\lambda \bP\rsb^{-1} \{\widehat{\bh}-(\widehat{\bH}+\lambda \bP)\bbeta^\ast\}\right\|_2\nonumber\\
        \leq&\left\|  \lsb\widehat{\bH}+\lambda \bP\rsb^{-1}\right\| _{\text{op}}\left\|  \widehat{\bh}-(\widehat{\bH}+\lambda \bP)\bbeta^\ast\right\|_2\nonumber\\
        \leq&\frac{\left\|  \widehat{\bh}-\widehat{\bH}\bbeta^\ast\right\|_2+\lambda\left\|  \bP\bbeta^\ast\right\|_2}{\psi_1(\widehat{\bH}+\lambda\bP)}\nonumber\\
        \leq&\frac{\left\|  \widehat{\bh}-\widehat{\bH}\bbeta^\ast\right\|_2+\lambda\left\|  \bP\bbeta^\ast\right\|_2}{\psi_1(\widehat{\bH})+\lambda\psi_1(\bP)}\nonumber\\
        \lessp&2\frac{\left\|  \widehat{\bh}-\widehat{\bH}\bbeta^\ast\right\|_2+\lambda\left\|  \bP\bbeta^\ast\right\|_2}{\epsilon_1+2\lambda\psi_1(\bP)}.\label{eq:bound-hatbeta-beta-1}
\end{align} 
To proceed, we first bound the term $\left\|  \widehat{\bh}-\widehat{\bH}\bbeta^\ast\right\|_2$. The triangle inequality implies that
\begin{align}
     \left\|  \widehat{\bh}-\widehat{\bH}\bbeta^\ast\right\|_2=&\left\|\mP_n\left[ \bb(\bX)\lb\widehat{\phi}_n-\widehat{\phi}_d\bb(\bX)^\top\bbeta^\ast\rb\right]\right\|_2\nonumber\\
     \leq&\left\|  \mP_n\{\bb(\bX)(\widehat{\phi}_n-\phi_n)\}\right\|_2 +\label{eq:L2-term1}\\
     &\left\|  \mP_n\left[\bb(\bX)\lb\phi_n-w(\bX)g(\bX)\rb\right]\right\|_2+\label{eq:L2-term2}\\
     &\left\|  \mP_n\{\bb(\bX)w(\bX)\alpha(\bX;g,\bbeta^\ast)\}\right\|_2+\label{eq:L2-term3}\\
     &\left\|  \mP_n\left[\bb(\bX)\{w(\bX)-\phi_d\}\bb(\bX)^\top\bbeta^\ast\right]\right\|_2+\label{eq:L2-term4}\\
     &\left\|  \mP_n\left[\bb(\bX)(\phi_d-\widehat{\phi}_d)\bb(\bX)^\top\bbeta^\ast\right]\right\|_2.\label{eq:L2-term5}
\end{align}
Subsequently, we analyze these five terms. 

\textbf{Step 2.1:} Following convention in empirical process theory, we use $\mP$ to denote the expectation. By sample cross-fitting and triangle inequality, the first term in \eqref{eq:L2-term1} can be written as
\begin{align}
    \eqref{eq:L2-term1}=& \left\|  \sum_{q=1}^Q\frac{n_q}{n}\mP_{n_q}\{\bb(\bX)(\widehat{\phi}_n-\phi_n)\}\right\|_2\nonumber\\
\leq& \left\|  \sum_{q=1}^Q\frac{n_q}{n}\mP\{\bb(\bX)(\widehat{\phi}_n-\phi_n)|\widehat{\bGamma}^q\}\right\|_2+\label{eq:L2-term1-2}\\
&\left\|  \sum_{q=1}^Q\frac{n_q}{n}\left[(\mP_{n_q}-\mP)\{\bb(\bX)(\widehat{\phi}_n-\phi_n)|\widehat{\bGamma}^q\}\right]\right\|_2,\label{eq:L2-term1-1}
\end{align}
where $n_q=|\mathcal{F}_q|$ is the sample size for the testing fold $\mathcal{F}_q$ and $\mP_{n_q}(V)=n_q^{-1}\sum_{i\in\mathcal{F}_q}V_i$ denotes the empirical mean across the testing fold $\mathcal{F}_q$. To avoid cumbersome notation, we sometimes suppress dependence on the data. For example, we write $\bE(\widehat{\pi})$ to denote $\bE\{\widehat{\pi}(\bX)\}$ and simply use $f_1$ to denote $f(M|1,\bX)$. 

Following \cite{tchetgen2012semiparametric}, the expected bias of the triply robust estimator for the NIE, conditional on the training sample, is given by
\begin{align}
    &\bE\lbb\bb(\bX)\lb\widehat{\zeta}(\mO)-\kappa(\bX)\rb|\widehat{\bGamma}^q\rbb\nonumber\\
    =&\bE\lb\bb(\bX)\frac{\widehat{\pi}-\pi}{\widehat{\pi}}\int(\widehat{\mu}_1-\mu_1)f_1dm|\widehat{\bGamma}^q\rb+\bE\lb\bb(\bX)\frac{\widehat{\pi}-\pi}{\widehat{\pi}}\int\widehat{\mu}_1(\widehat{f}_1-f_1)dm|\widehat{\bGamma}^q\rb+\nonumber\\
    &\bE\lb\bb(\bX)\int(\widehat{\mu}_1-\mu_1)\frac{\widehat{f}_0}{\widehat{\pi}}(\widehat{\pi}-\pi)dm|\widehat{\bGamma}^q\rb+\bE\lb\bb(\bX)\int(\widehat{\mu}_1-\mu_1)\frac{\pi\widehat{f}_0}{\widehat{\pi}\widehat{f}_1}(\widehat{f}_1-f_1)dm|\widehat{\bGamma}^q\rb-\nonumber\\
    &\bE\lb \bb(\bX)\int(\widehat{\mu}_1-\mu_1)(\widehat{f}_0-f_0)dm|\widehat{\bGamma}^q\rb+\bE\lb \bb(\bX)\frac{\pi-\widehat{\pi}}{1-\widehat{\pi}}\int \widehat{\mu}_1(\widehat{f}_0-f_0)dm|\widehat{\bGamma}^q\rb.\label{eq:decomposition-expected-bias-nie}
\end{align}
To bound the term \eqref{eq:L2-term1-2}, we note that
\begin{align}
    &\mP\{\bb(\bX)(\widehat{\phi}_n^q-\phi_n)|\widehat{\bGamma}^q\}\nonumber\\
    =&\bE\left[\bb(\bX)\left[\omega'\{\widehat{\pi}^q(\bX)\}\left\{\pi(\bX)-\widehat{\pi}^q(\bX)\right\}\widehat{\kappa}^q(\bX)+\omega\{\widehat{\pi}^q(\bX)\}\widehat{\zeta}^q(\mO)-w(\bX)\kappa(\bX)\right]|\widehat{\bGamma}^q\right]\nonumber\\
    =&\bE\left[\bb(\bX)\left[\omega'\{\widehat{\pi}^q(\bX)\}\{\pi(\bX)-\widehat{\pi}^q(\bX)\}\{\widehat{\kappa}^q(\bX)-\kappa(\bX)\}-\frac{1}{2}\omega''\{\widetilde{\pi}^q(\bX)\}\{\widehat{\pi}^q(\bX)-\pi(\bX)\}^2\kappa(\bX)\right]|\widehat{\bGamma}^q\right]+\nonumber\\
    &\bE\left[\bb(\bX)\omega\{\widehat{\pi}^q(\bX)\}\{\widehat{\zeta}^q(\mO)-\kappa(\bX)\}|\widehat{\bGamma}^q\right],\label{eq:pre-bound22}
\end{align}
where the first equality follows from the law of total expectation and the second equality follows from a pointwise Taylor expansion with $\widetilde{\pi}^q(\bX)$ lying between $\pi(\bX)$ and $\widehat{\pi}^q(\bX)$. Combining \eqref{eq:decomposition-expected-bias-nie} and \eqref{eq:pre-bound22} yields 
\begin{align}
    \eqref{eq:L2-term1-2}\leq&Q^{-1}\sum_{q=1}^Q\left\|\bE\lb\bb(\bX)(\widehat{\phi}^q_n-\phi_n)|\widehat{\bGamma}^q\rb\right\|\nonumber\\
    =&Q^{-1}\sum_{q=1}^Q\left[\sum_{k=1}^K \left[\bE\lb b_k(\bX)(\widehat{\phi}^q_n-\phi_n)|\widehat{\bGamma}^q\rb\right]^2\right]^{1/2}\nonumber\\
    \lesssim&\sqrt{K}\max_q\left[\|\widehat{\pi}^q-\pi\|_{\mP,2}\left(\|\widehat{\mu}^q_1-\mu_1\|_{\mP,2}+\sum_{a=0}^1\|\widehat{f}^q_a-f_a\|_{\mP,2}+\|\widehat{\pi}^q-\pi\|_{\mP,2}\right)+\right.\nonumber\\
    &\left. \|\widehat{\mu}^q_1-\mu_1\|_{\mP,2}\sum_{a=0}^1\|\widehat{f}^q_a-f_a\|_{\mP,2}\right],\label{eq:error-bound-term23}
\end{align}
where the first inequality follows from the fact that $n_q/n\approx Q^{-1}$ 
and the third inequality follows from Jensen's inequality, the Cauchy-Schwarz, and Assumption \ref{asp:regularityL2}(d). Alternatively, it is possible to drop $\sqrt{K}$ in the bound \eqref{eq:error-bound-term23} by replacing the $L^2$ nuisance error with the $L^4$ nuisance error. To see this, we have
\begin{align}
    \eqref{eq:L2-term1-2}\leq&\max_q \sup_{\|v\|_2=1}\left| v^\top\bE\lb\bb(\bX)(\widehat{\phi}^q_n-\phi_n)|\widehat{\bGamma}^q\rb\right|\nonumber\\
    \lesssim&\max_q \sup_{\|v\|_2=1}\lbb \bE\lbb \lb v^\top\bb(\bX)\rb^2\rbb\rbb^{1/2}\left[\|\widehat{\pi}^q-\pi\|_{\mP,4}\left(\|\widehat{\mu}^q_1-\mu_1\|_{\mP,4}+\sum_{a=0}^1\|\widehat{f}^q_a-f_a\|_{\mP,4}+\|\widehat{\pi}^q-\pi\|_{\mP,4}\right)+\right.\nonumber\\
    &\left. \|\widehat{\mu}^q_1-\mu_1\|_{\mP,4}\sum_{a=0}^1\|\widehat{f}^q_a-f_a\|_{\mP,4}\right]\nonumber\\
    \lesssim&\max_q\left[\|\widehat{\pi}^q-\pi\|_{\mP,4}\left(\|\widehat{\mu}^q_1-\mu_1\|_{\mP,4}+\sum_{a=0}^1\|\widehat{f}^q_a-f_a\|_{\mP,4}+\|\widehat{\pi}^q-\pi\|_{\mP,4}\right)+\right.\nonumber\\
    &\left. \|\widehat{\mu}^q_1-\mu_1\|_{\mP,4}\sum_{a=0}^1\|\widehat{f}^q_a-f_a\|_{\mP,4}\right],\label{eq:error-bound-term23-alternative}    
\end{align}
where the first inequality is because the $\ell_2$ norm is self-dual, the second inequality follows from the Cauchy-Schwarz inequality, Assumption \ref{asp:regularityL2}(d), and the decompositions in  \eqref{eq:decomposition-expected-bias-nie} and \eqref{eq:pre-bound22}, and the last inequality follows from the fact that $\sup_{\|v\|_2=1}\bE\lbb \lb v^\top\bb(\bX)\rb^2\rbb=\sup_{\|v\|_2=1}|v^\top \bG v|=\|\bG\|_\textop\lesssim1$. To analyze the empirical process term \eqref{eq:L2-term1-1}, we note that 
\begin{align}
    \eqref{eq:L2-term1-1}\leq&Q^{-1}\sum_{q=1}^Q \left\|(\mP_{n_q}-\mP)\lb \bb(\bX)(\widehat{\phi}^q_n-\phi_n)|\widehat{\bGamma}^q\rb\right\|_2.\label{eq:bound-term1-2-intermediate-0}
\end{align}
Thus, by Markov's inequality, it is sufficient to bound $\bE\left[\left\|(\mP_{n_q}-\mP)\lb \bb(\bX)(\widehat{\phi}_n-\phi_n)|\widehat{\bGamma}^q\rb\right\|_2^2\right]$. Because $\bE\left\{\|(\mP_n-\mP)(\mathbf{V})\|_2^2\right\}
= n^{-1}\bE\left\{\|\mathbf{V}-\bE(\mathbf{V})\|_2^2\right\}$
for i.i.d. copies of any random vector $\mathbf{V}$, it follows that
\begin{align}
    &\bE\left[\left\|(\mP_{n_q}-\mP)\lb \bb(\bX)(\widehat{\phi}^q_n-\phi_n)|\widehat{\bGamma}^q\rb\right\|_2^2\right]\nonumber\\
    =&n_q^{-1}\bE\left[\left\|\bb(\bX)(\widehat{\phi}^q_n-\phi_n)-\bE\lb \bb(\bX)(\widehat{\phi}^q_n-\phi_n)|\widehat{\bGamma}^q\rb\right\|_2^2|\widehat{\bGamma}^q\right]\nonumber\\
    \leq&n_q^{-1}\bE\left[\left\|\bb(\bX)\right\|_2^2(\widehat{\phi}^q_n-\phi_n)^2|\widehat{\bGamma}^q\right]\nonumber\\
    \leq&n_q^{-1}\xi_K^2\|\widehat{\phi}^q_n-\phi_n\|_{\mP,2}^2,\label{eq:bound-term1-2-intermediate}
\end{align}
where the first inequality follows from the fact that the coordinatewise variance is bounded above by the corresponding coordinatewise second moment. Combining \eqref{eq:bound-term1-2-intermediate-0} and \eqref{eq:bound-term1-2-intermediate} yields
\begin{align}
    \eqref{eq:L2-term1-1}\lessp& Q^{-1}\sum_{q=1}^Qn_q^{-1/2}\xi_K\|\widehat{\phi}_n^q-\phi_n\|_{\mP,2}\nonumber\\
    \lesssim&n^{-1/2}\xi_K\max_q\|\widehat{\phi}_n^q-\phi_n\|_{\mP,2}\nonumber\\
    \lesssim&n^{-1/2}\xi_K\max_q \left(\|\widehat{\mu}^q_1-\mu_1\|_{\mP,2}+\sum_{a=0}^1\|\widehat{f}^q_a-f_a\|_{\mP,2}+\|\widehat{\pi}^q-\pi\|_{\mP,2}\right)=m_{2n}^{(1)}.\label{eq:bound-term-23}
\end{align}
where the last inequality follows from standard arguments in the debiased machine learning literature (e.g., \cite{kennedy2022semiparametric}) and Assumption \ref{asp:regularityL2}(d). Eventually, by Markov's inequality, we obtain $\eqref{eq:L2-term1}\lessp m_{1n}+m_{2n}^{(1)}$.

\textbf{Step 2.2: }Because $\bE\{\bb(\bX)\Omega\}=0$ and the observations are independent,
\begin{align*}
\bE\{\eqref{eq:L2-term2}^2\}
&=n^{-1}\bE\left[\mP_n\{\|\bb(\bX)\|_2^2\Omega^2\}\right]\\
&=n^{-1}\bE\{\|\bb(\bX)\|_2^2\Omega^2\}\\
&\leq\epsilon_2n^{-1}\operatorname{trace}(\bG)\lesssim K/n,
\end{align*}
where the first equality uses the vanishing cross terms in the squared norm of a mean of independent, centered vectors, the inequality uses the conditional second moment bound in Assumption~\ref{asp:regularityL2}(d), and the final bound uses Assumption~\ref{asp:regularityL2}(a). Thus, by Markov's inequality, we obtain $\eqref{eq:L2-term2}\lessp\sqrt{K/n}$.

\textbf{Step 2.3: }There are two alternative ways to bound the term \eqref{eq:L2-term3}. First, we can bound the term \eqref{eq:L2-term3} as follows. We note 
\begin{align*}
    \bE\{\eqref{eq:L2-term3}^2\}\leq&n^{-1}\bE\lb w(\bX)^2\alpha(\bX;g,\bbeta^\ast)^2\bb(\bX)^\top\bb(\bX)\rb\\
    \leq&\epsilon_2^2l_K^2c_K^2n^{-1}\text{trace}(\bG)\\
    \lesssim&l_K^2c_K^2n^{-1}K,
\end{align*}
where the second inequality follows from Assumptions \ref{asp:regularityL2}(b) and \ref{asp:regularityL2}(d), and the third inequality follows from Assumption \ref{asp:regularityL2}(a). Thus, by Markov's inequality, we obtain $\eqref{eq:L2-term3}\lessp l_Kc_K\sqrt{K/n}$. Second, we can also bound the term \eqref{eq:L2-term3} as follows. We note 
\begin{align*}
    \bE\{\eqref{eq:L2-term3}^2\}\leq&n^{-1}\bE\lb w(\bX)^2\alpha(\bX;g,\bbeta^\ast)^2\|\bb(\bX)\|^2_2\rb\\
    \lesssim&n^{-1}\xi_K^2\|\alpha(\bX;g,\bbeta^\ast)\|^2_{\mP,2}\\
    \lesssim&n^{-1}\xi_K^2c_K^2,
\end{align*}
where the second inequality follows from Assumption \ref{asp:regularityL2}(b). Thus, by Markov's inequality, we obtain $\eqref{eq:L2-term3}\lessp \xi_Kc_K/\sqrt{n}$. Combining the above two alternative bounds yields that 
\begin{align*}
    \eqref{eq:L2-term3}\lessp \min\left(l_Kc_K\sqrt{\frac{K}{n}},\frac{\xi_Kc_K}{\sqrt{n}}\right).
\end{align*}

\textbf{Step 2.4: } We bound the term in \eqref{eq:L2-term4}. By the Pythagorean theorem (applicable because $w(\bX)>0$ almost surely), it follows that 
\begin{align}
    \bE\lb w(\bX)g(\bX)^2\rb=&\bE\left[w(\bX)\lb \bb(\bX)^\top\bbeta^\ast\rb^2\right]+\bE\left[w(\bX)\lb g(\bX)-\bb(\bX)^\top\bbeta^\ast\rb^2\right]\nonumber\\
    \geq&\bE\left[w(\bX)\lb \bb(\bX)^\top\bbeta^\ast\rb^2\right],\label{eq:Pythagorean-theorem}
\end{align}
which, together with Assumption \ref{asp:regularityL2}(d), implies
\begin{align}
    \epsilon_1\bE \left[\lb \bb(\bX)^\top\bbeta^\ast\rb^2\right]\leq\bE\left[w(\bX)\lb \bb(\bX)^\top\bbeta^\ast\rb^2\right]\leq\bE\lb w(\bX)g(\bX)^2\rb\leq\epsilon_2 \bE\lb g(\bX)^2\rb\label{eq:eq:Pythagorean-theorem-bound}.
\end{align}
Then we obtain
\begin{align*}
    \bE\{\eqref{eq:L2-term4}^2\}=&n^{-1}\bE\left(\check{\Omega}^2\left\{\bb(\bX)^\top\bbeta^\ast\right\}^2\left\|\bb(\bX)\right\|_2^2\right)\\
    \leq&\epsilon_2^2/\epsilon_1n^{-1}\xi_K^2\bE\lb g^\ast(\bX)^2\rb\\
    \leq&\epsilon_2^2/\epsilon_1n^{-1}\xi_K^2\bE\lb g(\bX)^2\rb\lesssim n^{-1}\xi_K^2,
\end{align*}
where the first inequality follows from Assumptions \ref{asp:regularityL2}(b) and \ref{asp:regularityL2}(d), and the second inequality follows from \eqref{eq:eq:Pythagorean-theorem-bound} and Assumption \ref{asp:regularityL2}(d). Thus, by Markov's inequality, we obtain $\eqref{eq:L2-term4}\lessp \xi_K/\sqrt{n}$.

\textbf{Step 2.5: } To bound the term \eqref{eq:L2-term5}, we apply a similar argument to that used for bounding the term \eqref{eq:L2-term1}. We use the decomposition:
\begin{align}
    \eqref{eq:L2-term5}=& \left\|  \sum_{q=1}^Q\frac{n_q}{n}\mP_{n_q}\left[\bb(\bX)(\phi_d-\widehat{\phi}_d)\bb(\bX)^\top\bbeta^\ast\right]\right\|_2\nonumber\\
\leq& \left\|  \sum_{q=1}^Q\frac{n_q}{n}\mP\{\bb(\bX)(\phi_d-\widehat{\phi}_d^q)g^\ast(\bX)|\widehat{\bGamma}^q\}\right\|_2+\label{eq:L2-term5-2}\\
&\left\|  \sum_{q=1}^Q\frac{n_q}{n}\left[(\mP_{n_q}-\mP)\{\bb(\bX)(\phi_d-\widehat{\phi}_d^q)g^\ast(\bX)|\widehat{\bGamma}^q\}\right]\right\|_2.\label{eq:L2-term5-1}
\end{align}
By Equation \eqref{eq:expected-bias-phid}, it follows that
\begin{align}
    \eqref{eq:L2-term5-2}\leq&Q^{-1}\sum_{q=1}^Q\left\|\bE\lb\bb(\bX)g^\ast(\bX)(\widehat{\phi}^q_d-\phi_d)|\widehat{\bGamma}^q\rb\right\|_2\nonumber\\
    \lesssim&\sum_{q=1}^Q\left\|\bE\left[\bb(\bX)g^\ast(\bX)\frac{1}{2}\omega''\{\widetilde{\pi}^q(\bX)\}\lb\widehat{\pi}^q(\bX)-\pi(\bX)\rb^2|\widehat{\bGamma}^q\right]\right\|_2\nonumber\\
    \leq&\sum_{q=1}^Q\left[\sum_{k=1}^K\left[\bE\left[b_k(\bX)g^\ast(\bX)\frac{1}{2}\omega''\{\widetilde{\pi}^q(\bX)\}\lb\widehat{\pi}^q(\bX)-\pi(\bX)\rb^2|\widehat{\bGamma}^q\right]\right]^2\right]^{1/2}\nonumber\\
    \lesssim&\max_q\left[\sum_{k=1}^K\bE\lb g^\ast(\bX)^2\rb\bE\left[\lb\widehat{\pi}^q(\bX)-\pi(\bX)\rb^4|\widehat{\bGamma}^q\right]\right]^{1/2}\nonumber\\
    \lesssim&\sqrt{K}\max_q\|\widehat{\pi}^q-\pi\|_{\mP,4}^2\nonumber,
\end{align}
where the second inequality follows from \eqref{eq:expected-bias-phid} and the LOTE, the fourth inequality follows from Assumption \ref{asp:regularityL2}(d) and the Cauchy-Schwarz inequality, and the last inequality follows from Assumption \ref{asp:regularityL2}(b) and \eqref{eq:eq:Pythagorean-theorem-bound}. Applying a similar argument to that used for bounding the term \eqref{eq:L2-term1-1} yields:
\begin{align*}
    \eqref{eq:L2-term5-1}\lessp n^{-1/2}\xi_K\max_q \|\widehat{\pi}^q-\pi\|_{\mP,4}:=m_{2n}^{(2)},
\end{align*}
which implies that $\eqref{eq:L2-term5}\lessp m_{2n}^{(2)}+m_{3n}$.

\textbf{Step 3: }Finally, we bound the regularization bias term $\lambda\left\| \bP\bbeta^\ast\right\|_2$ as follows. By \eqref{eq:Pythagorean-theorem}, it follows that
\begin{align}
    \bE\left[w(\bX)g^\ast(\bX)^2\right]=&{\bbeta^\ast}^\top \bH\bbeta^\ast\nonumber\\
    \leq&\bE\lb w(\bX)g(\bX)^2\rb\leq\epsilon_2\bE \lb g(\bX)^2\rb.\label{eq:bounded-gast-eq1}
\end{align}
Moreover, we have
\begin{align}
    {\bbeta^\ast}^\top \bH\bbeta^\ast\geq \psi_1(\bH)\|\bbeta^\ast\|_2^2\geq\epsilon_1\|\bbeta^\ast\|_2^2,\label{eq:bounded-gast-eq2}
\end{align}
where the last inequality follows from \eqref{eq:bounded-eigven-gram-H}. Therefore, \eqref{eq:bounded-gast-eq1} and \eqref{eq:bounded-gast-eq2} imply 
\begin{align}
    \|\bbeta^\ast\|_2^2\leq \epsilon_2/\epsilon_1\bE\{g(\bX)^2\}\lesssim1.\label{eq:bounded-betaast-norm}
\end{align}
Finally, we obtain
\begin{align*}
    \lambda\| \bP\bbeta^\ast\|_2
    \leq &\lambda\|\bP\|_\textop\|\bbeta^\ast\|_2\\
    \lesssim&\lambda \psi_K(\bP),
\end{align*}
where 
the last inequality follows from \eqref{eq:bounded-betaast-norm}.

Finally, we conclude by combining the preceding results from Steps 1-3.

\end{proof}

\subsection{Proof of Proposition \ref{thm:pointwise-linearization}}\label{ss:proof-propositon2}
\begin{proof}
\textbf{Step 0: }The estimator satisfies the exact identity
\[
\sqrt{n}(\widehat{\bbeta}-\bbeta^\ast)
=(\widehat{\bH}+\lambda\bP)^{-1}
\left\{\sqrt{n}(\widehat{\bh}-\widehat{\bH}\bbeta^\ast)
-\sqrt{n}\lambda\bP\bbeta^\ast\right\}.
\]
By \eqref{eq:L2-term1}--\eqref{eq:L2-term5}, we obtain the following decomposition:
\begin{align*}
    &\sqrt{n}\widetilde{\bb}^\top\lsb\widehat{\bbeta}-\bbeta^\ast\rsb\\
    =&\sqrt{n}\widetilde{\bb}^\top\lsb\widehat{\bH}+\lambda \bP\rsb^{-1}\lbb\mP_n\lb\bb(\bX)\widehat{\phi}_n\rb-(\widehat{\bH}+\lambda \bP)\bbeta^\ast\rbb\\
    =&\widetilde{\bb}^\top\lsb\widehat{\bH}+\lambda \bP\rsb^{-1}\bigg[\sqrt{n}\mP_n\{\bb(\bX)(\widehat{\phi}_n-\phi_n)\}+\mG_n\lbb\bb(\bX)\lb\phi_n-w(\bX)g(\bX)\rb\rbb+ \\
    &\mG_n\{\bb(\bX)w(\bX)\alpha(\bX;g,\bbeta^\ast)\}+\mG_n\lbb\bb(\bX)\lb w(\bX)-\phi_d\rb g^\ast(\bX)\rbb+\\
    &\sqrt{n}\mP_n\lb \bb(\bX)\lsb \phi_d-\widehat{\phi}_d\rsb g^\ast(\bX)\rb-\sqrt{n}\lambda\bP \bbeta^\ast\bigg],
\end{align*}
which further implies 
\begin{align}
   &\text{Rem}_{1n}(\widetilde{\bb})\nonumber\\
    =&\widetilde{\bb}^\top\{(\widehat{\bH}+\lambda \bP)^{-1}-(\bH+\lambda \bP)^{-1}\}\mG_n\lbb\bb(\bX)\lb\phi_n-w(\bX)g(\bX)\rb\rbb+\label{eq:point-linear-term1}\\
    &\widetilde{\bb}^\top\{(\widehat{\bH}+\lambda \bP)^{-1}-(\bH+\lambda \bP)^{-1}\}\mG_n\{\bb(\bX)w(\bX)\alpha(\bX;g,\bbeta^\ast)\}+\label{eq:point-linear-term2}\\  
    &\widetilde{\bb}^\top\lb(\bH+\lambda \bP)^{-1}-\bH^{-1}\rb\mG_n\left[\bb(\bX)\lb\phi_n-\phi_d g^\ast(\bX)\rb\right]+\label{eq:point-linear-term3}\\
    &\widetilde{\bb}^\top\{(\widehat{\bH}+\lambda \bP)^{-1}-(\bH+\lambda \bP)^{-1}\}\mG_n\lbb\bb(\bX)\lb w(\bX)-\phi_d\rb g^\ast(\bX)\rbb+\label{eq:point-linear-term4}\\  
    &\sqrt{n}\widetilde{\bb}^\top(\bH+\lambda\bP)^{-1}\mP_n\{\bb(\bX)(\widehat{\phi}_n-\phi_n)\}+\label{eq:point-linear-term5}\\
    &\sqrt{n}\widetilde{\bb}^\top(\bH+\lambda\bP)^{-1}\mP_n\{\bb(\bX)(\phi_d-\widehat{\phi}_d)g^\ast(\bX)\}+\label{eq:point-linear-term6}\\
    &\sqrt{n}\widetilde{\bb}^\top\{(\widehat{\bH}+\lambda \bP)^{-1}-(\bH+\lambda \bP)^{-1}\}\mP_n\{\bb(\bX)(\widehat{\phi}_n-\phi_n)\}+\label{eq:point-linear-term7}\\
    &\sqrt{n}\widetilde{\bb}^\top\{(\widehat{\bH}+\lambda \bP)^{-1}-(\bH+\lambda \bP)^{-1}\}\mP_n\{\bb(\bX)(\phi_d-\widehat{\phi}_d)g^\ast(\bX)\}-\label{eq:point-linear-term8}\\
    &\sqrt{n}\lambda\widetilde{\bb}^\top\{(\widehat{\bH}+\lambda \bP)^{-1}-(\bH+\lambda \bP)^{-1}\}\bP\bbeta^\ast-\label{eq:point-linear-term9}\\
    &\sqrt{n}\lambda\widetilde{\bb}^\top(\bH+\lambda \bP)^{-1}\bP\bbeta^\ast.\label{eq:point-linear-term10}
\end{align}
It suffices to analyze the ten terms in \eqref{eq:point-linear-term1} through \eqref{eq:point-linear-term10}.

\textbf{Step 1.1: } 
We can bound the term in \eqref{eq:point-linear-term1} as follows:
\begin{align*}
    |\eqref{eq:point-linear-term1}|\leq&\sqrt{n}\left\|\widetilde{\bb}\right\|_2\left\|\lsb\widehat{\bH}+\lambda \bP\rsb^{-1}-(\bH+\lambda \bP)^{-1}\right\|_\textop\left\|\mP_n\lbb\bb(\bX)\lb\phi_n-w(\bX)g(\bX)\rb\rbb\right\|_2\\
    \lessp&\frac{\sqrt{K}m_n^{\bH}}{\lb\epsilon_1+\lambda\psi_1(\bP)\rb\lb\epsilon_1/2+\lambda\psi_1(
    \bP
    )\rb},
\end{align*}
where the last inequality follows from Lemma \ref{lemma:rate-penalized-gram} and the bound for the term in \eqref{eq:L2-term2}. 

\textbf{Step 1.2: }We can bound the term in \eqref{eq:point-linear-term2} as follows:
\begin{align*}
    |\eqref{eq:point-linear-term2}|\leq&\sqrt{n}\left\|\widetilde{\bb}\right\|_2\left\|\lsb\widehat{\bH}+\lambda \bP\rsb^{-1}-(\bH+\lambda \bP)^{-1}\right\|_\textop\left\|\mP_n\{\bb(\bX)w(\bX)\alpha(\bX;g,\bbeta^\ast)\}\right\|_2\\
    \lessp&\frac{m_n^{\bH}\min\left(l_Kc_K\sqrt{K},\xi_Kc_K\right)}{\lb\epsilon_1+\lambda\psi_1(\bP)\rb\lb\epsilon_1/2+\lambda\psi_1(
    \bP
    )\rb},
\end{align*}
where the last inequality follows from Lemma \ref{lemma:rate-penalized-gram} and the bound for the term in \eqref{eq:L2-term3}.

\textbf{Step 1.3: }To bound the term in \eqref{eq:point-linear-term3}, we note
\begin{align}
  \eqref{eq:point-linear-term3}=&  \sqrt{n}\widetilde{\bb}^\top\lb(\bH+\lambda \bP)^{-1}-\bH^{-1}\rb\mP_n\left[\bb(\bX)\phi_d\alpha(\bX;g,\bbeta^\ast)\right]+\label{eq:point-linear-term3-1}\\
  &\sqrt{n}\widetilde{\bb}^\top\lb(\bH+\lambda \bP)^{-1}-\bH^{-1}\rb\mP_n\left[\bb(\bX)w(\bX)\lb\zeta(\mO)-\kappa(\bX)\rb\right]\label{eq:point-linear-term3-2}.
\end{align}
To bound the term in \eqref{eq:point-linear-term3-1}, we note
\begin{align*}
    &\bE\lbb\left\|\mP_n\left[\bb(\bX)\phi_d\alpha(\bX;g,\bbeta^\ast)\right]\right\|_2^2\rbb\\
    =&\frac{1}{n}\bE\lb\phi_d^2\alpha(\bX;g,\bbeta^\ast)^2\left\|\bb(\bX)\right\|_2^2\rb\\
    \lesssim&\min\left(l_K^2c_K^2\frac{K}{n},\frac{\xi_K^2c_K^2}{n}\right),
\end{align*}
where the equality follows from the fact that $\bE\{\bb(\bX)\phi_d\alpha(\bX;g,\bbeta^\ast)\}=0$ and the inequality is due to \eqref{eq:bounded-root-phid} and the proof for bounding the term in \eqref{eq:L2-term3}. By the resolvent identity, we obtain $(\bH+\lambda \bP)^{-1}-\bH^{-1}=-\lambda(\bH+\lambda\bP)^{-1}\bP\bH^{-1}$, which implies that $\|(\bH+\lambda \bP)^{-1}-\bH^{-1}\|_\textop\leq\lambda \|(\bH+\lambda\bP)^{-1}\|_\textop\|\bP\|_\textop\|\bH^{-1}\|_\textop=\lambda\psi_K(\bP)/\{\psi_1(\bH+\lambda\bP)\psi_1(\bH)\}\leq\lambda\psi_K(\bP)/[\epsilon_1\{\epsilon_1+\lambda\psi_1(\bP)\}]$. For the term in \eqref{eq:point-linear-term3-2}, we have 
\begin{align*}
    &\bE\lbb\left\|\mP_n\left[\bb(\bX)w(\bX)\lb\zeta(\mO)-\kappa(\bX)\rb\right]\right\|_2^2\rbb\\
    =&\frac{1}{n}\bE\lbb\{w(\bX)\}^2\lb\zeta(\mO)-\kappa(\bX)\rb^2\left\|\bb(\bX)\right\|_2^2\rbb\\
    \lesssim&n^{-1}\text{trace}(\bG)=\frac{K}{n},
\end{align*}
where the inequality uses the bounded conditional second moment of $\zeta(\mO)-\kappa(\bX)$ implied by Assumption~\ref{asp:regularityL2}(d). Thus, combining the above yields 
\begin{align*}
    |\eqref{eq:point-linear-term3}|\lessp&\sqrt{n}\lb \min\left(l_Kc_K\sqrt{\frac{K}{n}},\frac{\xi_Kc_K}{\sqrt{n}}\right)+\sqrt{\frac{K}{n}}\rb\frac{\lambda \psi_K(\bP)}{\epsilon_1+\lambda \psi_1(\bP)}.
\end{align*}

\textbf{Step 1.4: }We can bound the term in \eqref{eq:point-linear-term4} as follows:
\begin{align*}
    |\eqref{eq:point-linear-term4}|\leq&\sqrt{n}\left\|\widetilde{\bb}\right\|_2\left\|\lsb\widehat{\bH}+\lambda \bP\rsb^{-1}-(\bH+\lambda \bP)^{-1}\right\|_\textop\left\|\mP_n\lbb\bb(\bX)\lb w(\bX)-\phi_d\rb g^\ast(\bX)\rbb\right\|_2\\
    \lessp&\frac{\xi_Km_n^{\bH}}{\lb\epsilon_1+\lambda\psi_1(\bP)\rb\lb\epsilon_1/2+\lambda\psi_1(
    \bP
    )\rb},
\end{align*}
where the last inequality follows from Lemma \ref{lemma:rate-penalized-gram} and the bound for the term in \eqref{eq:L2-term4}.

\textbf{Step 1.5: }We can bound the term in \eqref{eq:point-linear-term5} as follows:
\begin{align*}
    |\eqref{eq:point-linear-term5}|\leq&\sqrt{n}\left\|\widetilde{\bb}\right\|_2\left\|(\bH+\lambda \bP)^{-1}\right\|_\textop\left\|\mP_n\{\bb(\bX)(\widehat{\phi}_n-\phi_n)\}\right\|_2\\
    \lessp&\frac{\sqrt{n}\lsb m_{1n}+m_{2n}^{(1)}\rsb}{\epsilon_1+\lambda\psi_1(\bP)},
\end{align*}
where the last inequality follows from the bound for the term in \eqref{eq:L2-term1}.

\textbf{Step 1.6: }We can bound the term in \eqref{eq:point-linear-term6} as follows:
\begin{align*}
    |\eqref{eq:point-linear-term6}|\leq&\sqrt{n}\left\|\widetilde{\bb}\right\|_2\left\|(\bH+\lambda \bP)^{-1}\right\|_\textop\left\|\mP_n\{\bb(\bX)(\phi_d-\widehat{\phi}_d)g^\ast(\bX)\}\right\|_2\\
    \lessp&\frac{\sqrt{n}\left(m_{2n}^{(2)}+m_{3n}\right)}{\epsilon_1+\lambda\psi_1(\bP)},
\end{align*}
where the last inequality follows from the bound for the term in \eqref{eq:L2-term5}.

\textbf{Step 1.7: }We can bound the term in \eqref{eq:point-linear-term7} as follows:
\begin{align*}
    |\eqref{eq:point-linear-term7}|\leq&\sqrt{n}\left\|\widetilde{\bb}\right\|_2\left\|\lsb\widehat{\bH}+\lambda \bP\rsb^{-1}-(\bH+\lambda \bP)^{-1}\right\|_\textop\left\|\mP_n\{\bb(\bX)(\widehat{\phi}_n-\phi_n)\}\right\|_2\\
    \lessp&\frac{m_n^{\bH}\sqrt{n}\lsb m_{1n}+m_{2n}^{(1)}\rsb}{\lb\epsilon_1+\lambda\psi_1(\bP)\rb\lb\epsilon_1/2+\lambda\psi_1(
    \bP
    )\rb},
\end{align*}
where the last inequality follows from the bound for the term in \eqref{eq:L2-term1} and Lemma \ref{lemma:rate-penalized-gram}.

\textbf{Step 1.8: }We can bound the term in \eqref{eq:point-linear-term8} as follows:
\begin{align*}
    |\eqref{eq:point-linear-term8}|\leq&\sqrt{n}\left\|\widetilde{\bb}\right\|_2\left\|\lsb\widehat{\bH}+\lambda \bP\rsb^{-1}-(\bH+\lambda \bP)^{-1}\right\|_\textop\left\|\mP_n\{\bb(\bX)(\phi_d-\widehat{\phi}_d)g^\ast(\bX)\}\right\|_2\\
    \lessp&\frac{m_n^{\bH}\sqrt{n}\left(m_{2n}^{(2)}+m_{3n}\right)}{\lb\epsilon_1+\lambda\psi_1(\bP)\rb\lb\epsilon_1/2+\lambda\psi_1(
    \bP
    )\rb},
\end{align*}
where the last inequality follows from the bound for the term in \eqref{eq:L2-term5} and Lemma \ref{lemma:rate-penalized-gram}.

\textbf{Step 1.9: }We can bound the term in \eqref{eq:point-linear-term9} as follows:
\begin{align*}
    |\eqref{eq:point-linear-term9}|\leq&\sqrt{n}\lambda\left\|\widetilde{\bb}\right\|_2\left\|\lsb\widehat{\bH}+\lambda \bP\rsb^{-1}-(\bH+\lambda \bP)^{-1}\right\|_\textop\left\|\bP\bbeta^\ast\right\|_2\\
    \lessp&\frac{m_n^{\bH}\sqrt{n}\lambda\psi_K(\bP)}{\lb\epsilon_1+\lambda\psi_1(\bP)\rb\lb\epsilon_1/2+\lambda\psi_1(
    \bP
    )\rb},
\end{align*}
where the last inequality follows from \eqref{eq:bounded-betaast-norm} and Lemma \ref{lemma:rate-penalized-gram}.

\textbf{Step 1.10: }We can bound the term in \eqref{eq:point-linear-term10} as follows:
\begin{align*}
    |\eqref{eq:point-linear-term10}|\leq&\sqrt{n}\lambda\left\|\widetilde{\bb}\right\|_2\left\|(\bH+\lambda \bP)^{-1}\right\|_\textop\left\|\bP\bbeta^\ast\right\|_2\\
    \lessp&\frac{\sqrt{n}\lambda\psi_K(\bP)}{\epsilon_1+\lambda\psi_1(\bP)},
\end{align*}
where the last inequality follows from \eqref{eq:bounded-eigven-gram-H} and \eqref{eq:bounded-betaast-norm}.

Combining the bounds in Steps 1.1--1.10 yields the following bound for the remainder term. Under Assumption \ref{asp:regularityL2}, it follows that
\begin{align*}
    \text{Rem}_{1n}(\widetilde{\bb})\lessp&\frac{\sqrt{n}\sum_{j=1}^3m_{jn}+\lambda \psi_K(\bP)(\sqrt{n}+\sqrt{K}+\sqrt{n}m_{0n})}{\epsilon_1+\lambda\psi_1(\bP)}+\\
    &\frac{m^{\bH}_n\lb \xi_K+\sqrt{n}\sum_{j=0}^3m_{jn}+\sqrt{n}\lambda\psi_K(\bP)\rb}{\lb\epsilon_1+\lambda\psi_1(\bP)\rb\lb\epsilon_1/2+\lambda\psi_1(
    \bP
    )\rb}. 
\end{align*}

\end{proof}

\subsection{Proof of Proposition \ref{thm:uniform-linearization}}\label{ss:proof-proposition-uniform-linearization}
\begin{proof}
We reuse the decomposition results in \eqref{eq:point-linear-term1}--\eqref{eq:point-linear-term10}, replacing $\widetilde{\bb}$ with $\widetilde{\bb}(\bx)$, because $\widetilde{\bb}(\bx)$ has unit norm by definition.

\textbf{Step 1.1: } The objective is to bound the sup norm of the term
\begin{align*}
    \widetilde{\bb}(\bx)^\top\{(\widehat{\bH}+\lambda \bP)^{-1}-(\bH+\lambda \bP)^{-1}\}\mG_n\lbb\bb(\bX)\lb\phi_n-w(\bX)g(\bX)\rb\rbb.
\end{align*}
The proof strategy follows Step 1 of the proof of Lemma 4.2 in \cite{belloni2015some}, with some modifications. To proceed, conditional on the data, we define
\begin{align*}
    t_{\bx}(\mO_i)=\widetilde{\bb}(\bx)^\top\{(\widehat{\bH}+\lambda \bP)^{-1}-(\bH+\lambda \bP)^{-1}\}\bb(\bX_i)\lb\phi_{n}(\mO_i)-w(\bX_i)g(\bX_i)\rb.
\end{align*}
and $\bt_{\bx}=(t_{\bx}(\mO_1),\ldots,t_{\bx}(\mO_n))^\top$. Let $\calT:=\{t_\bx:\bx\in\calX\}$ be the index set. Define $\|\bt_{\bx}\|_{n,2}:=\{n^{-1}\sum_{i=1}^nt_{\bx}(\mO_i)^2\}^{1/2}$. For any $\bx'\neq \bx$, we define $\Delta_{\widetilde{\bb}}:=\widetilde{\bb}(\bx)-\widetilde{\bb}(\bx')$ and obtain
\begin{align*}
    &\|\bt_{\bx}-\bt_{\bx'}\|_{n,2}\\
    =&\lbb n^{-1}\sum_{i=1}^n \lbb \Delta_{\widetilde{\bb}}^\top\{(\widehat{\bH}+\lambda \bP)^{-1}-(\bH+\lambda \bP)^{-1}\}\bb(\bX_i)\lb\phi_{n}(\mO_i)-w(\bX_i)g(\bX_i)\rb\rbb^2\rbb^{1/2}\\
    \leq&\max_{1\leq i\leq n}|\Omega(\mO_i)|\lbb n^{-1}\sum_{i=1}^n \lbb \Delta_{\widetilde{\bb}}^\top\{(\widehat{\bH}+\lambda \bP)^{-1}-(\bH+\lambda \bP)^{-1}\}\bb(\bX_i)\rbb^2\rbb^{1/2}\\
    =&\max_{1\leq i\leq n}|\Omega(\mO_i)|\lbb \Delta_{\widetilde{\bb}}^\top\{(\widehat{\bH}+\lambda \bP)^{-1}-(\bH+\lambda \bP)^{-1}\}\widehat{\bG}\{(\widehat{\bH}+\lambda \bP)^{-1}-(\bH+\lambda \bP)^{-1}\}\Delta_{\widetilde{\bb}}\rbb^{1/2}\\
    \leq&\max_{1\leq i\leq n}|\Omega(\mO_i)|\left\|\lsb\widehat{\bH}+\lambda \bP\rsb^{-1}-(\bH+\lambda \bP)^{-1}\right\|_\textop\|\widehat{\bG}\|_\textop^{1/2}\|\Delta_{\widetilde{\bb}}\|_2\\
    \leq&\xi_K^L\max_{1\leq i\leq n}|\Omega(\mO_i)| \left\|\lsb\widehat{\bH}+\lambda \bP\rsb^{-1}-(\bH+\lambda \bP)^{-1}\right\|_\textop\|\widehat{\bG}\|_\textop^{1/2}\|\bx-\bx'\|_2,
\end{align*}
where the last inequality follows from Assumption \ref{assump:uniform-further-boundedness}. Therefore, the function $\bt_\bx$ is Lipschitz with Lipschitz constant $L_{\bt}:=\xi_K^L\max_{1\leq i\leq n}|\Omega(\mO_i)| \|(\widehat{\bH}+\lambda \bP)^{-1}-(\bH+\lambda \bP)^{-1}\|_\textop\|\widehat{\bG}\|_\textop^{1/2}$. Since $\bt(\bx)$ is Lipschitz and Assumption \ref{assump:uniform-further-boundedness} assumes that $\text{diam}(\calX)$ is uniformly bounded from above, it follows that the covering number for $\calT$ equipped with $\|\bt_\bx\|_{n,2}$ satisfies
\begin{align}\label{eq:convering-number-T}
    N(\calT,\|\cdot\|_{n,2},\epsilon)\leq \lsb \text{Constant }\epsilon^{-1}L_{\bt}\rsb^d.
\end{align}
We note 
\begin{align}
    &\bE\lsb \max_{1\leq i\leq n}|\Omega(\mO_i)||\{\bX_i\}_{i=1}^n\rsb\nonumber\\
    \leq&\bE\lb \lsb \sum_{i=1}^n|\Omega(\mO_i)|^\nu\rsb^{1/\nu}|\{\bX_i\}_{i=1}^n\rb\nonumber\\
    \leq&\lb  \sum_{i=1}^n\bE\lsb|\Omega(\mO_i)|^\nu|\bX_i\rsb\rb^{1/\nu}\nonumber\\
    \lesssim&n^{1/\nu}\label{eq:bound-on-max-omega},
\end{align}
where the second inequality is due to Jensen’s inequality and the third inequality follows from Assumption \ref{assump:uniform-further-boundedness}. 
By \eqref{eq:convering-number-T} and Dudley's inequality, we obtain that, conditional on the data,  
\begin{align*}
    &\bE_{\iota}\lb \sup_{\bx}\left|\frac{1}{\sqrt{n}}\sum_{i=1}^n\iota_it_{\bx}(\mO_i)\right||\{\mO_i\}_{i=1}^n\rb\\
    =&\bE_{\iota}\lb \sup_{t_{\bx}\in\calT}\left|\frac{1}{\sqrt{n}}\sum_{i=1}^n\iota_it_{\bx}(\mO_i)\right||\{\mO_i\}_{i=1}^n\rb\\
    \lesssim&\int_{0}^{2\sup_{t_{\bx}\in\calT}\|\bt_{\bx}\|_{n,2}} \sqrt{\log N(\calT,\|\cdot\|_{n,2},\epsilon)}d\epsilon\\
    \leq&\int_{0}^{2 \max_{1\leq i\leq n}|\Omega(\mO_i)|\|(\widehat{\bH}+\lambda \bP)^{-1}-(\bH+\lambda \bP)^{-1}\|_\textop\|\widehat{\bG}\|_\textop^{1/2}} \sqrt{\log N(\calT,\|\cdot\|_{n,2},\epsilon)}d\epsilon\\
    \lesssim&\sqrt{\log K}  \max_{1\leq i\leq n}|\Omega(\mO_i)|\|(\widehat{\bH}+\lambda \bP)^{-1}-(\bH+\lambda \bP)^{-1}\|_\textop\|\widehat{\bG}\|_\textop^{1/2}\\
    \lessp&\frac{\sqrt{\log K} n^{1/\nu}m_n^{\bH}}{\lb\epsilon_1+\lambda\psi_1(\bP)\rb\lb\epsilon_1/2+\lambda\psi_1(
    \bP
    )\rb},
\end{align*}
where $\{\iota_i\}_{i=1}^n$ are independent Rademacher random variables, taking values in $\{-1,1\}$ with equal probability, the third inequality follows from a change of variables and the fact that $\log\xi_K^L\lesssim\log K$ under Assumption \ref{assump:uniform-further-boundedness}, and the last inequality follows from Assumption \ref{asp:regularityL2}(c), Lemma \ref{lemma:rate-penalized-gram}, \eqref{eq:bound-on-max-omega}, and Markov's inequality.

\textbf{Step 1.2: }We bound the sup norm of the term as follows:
\begin{align*}
    &|\widetilde{\bb}(\bx)^\top\{(\widehat{\bH}+\lambda \bP)^{-1}-(\bH+\lambda \bP)^{-1}\}\mG_n\{\bb(\bX)w(\bX)\alpha(\bX;g,\bbeta^\ast)\}|\\
    \leq&\sqrt{n}\left\|\widetilde{\bb}(\bx)\right\|_2\left\|\lsb\widehat{\bH}+\lambda \bP\rsb^{-1}-(\bH+\lambda \bP)^{-1}\right\|_\textop\left\|\mP_n\{\bb(\bX)w(\bX)\alpha(\bX;g,\bbeta^\ast)\}\right\|_2\\
    \lessp&\frac{m_n^{\bH}l_Kc_K\sqrt{K}}{\lb\epsilon_1+\lambda\psi_1(\bP)\rb\lb\epsilon_1/2+\lambda\psi_1(
    \bP
    )\rb},
\end{align*}
Therefore, the supremum norm of the sum of the first two terms is of stochastic order $\lessp m_n^{\bH}m_{4n}/\{\epsilon_1/2+\lambda\psi_1(\bP)\}/\{\epsilon_1+\lambda\psi_1(\bP)\}$.

\textbf{Step 1.3: }Using $g=\kappa$ and the definition of $\phi_n$, we have the exact decomposition
\[
\phi_n-\phi_dg^\ast(\bX)
=\phi_d\alpha(\bX;g,\bbeta^\ast)
+w(\bX)\{\zeta(\mO)-\kappa(\bX)\}.
\]
The Cauchy--Schwarz and triangle inequalities therefore give
\begin{align*}
&\sup_{\bx}\left|\widetilde{\bb}(\bx)^\top
\{(\bH+\lambda\bP)^{-1}-\bH^{-1}\}
\mG_n[\bb(\bX)\{\phi_n-\phi_dg^\ast(\bX)\}]\right|\\
\leq&\|(\bH+\lambda\bP)^{-1}-\bH^{-1}\|_{\mathrm{op}}
\Big(\|\mG_n\{\bb(\bX)\phi_d\alpha(\bX;g,\bbeta^\ast)\}\|_2+\|\mG_n[\bb(\bX)w(\bX)\{\zeta(\mO)-\kappa(\bX)\}]\|_2\Big)\\
\lesssim_{\mP}&
\frac{\lambda\psi_K(\bP)}{\epsilon_1+\lambda\psi_1(\bP)}
\left(\sqrt{n}m_{0n}+\sqrt{K}\right),
\end{align*}
where the last inequality follows from the second moment bounds in Step 1.3 of Section~\ref{ss:proof-propositon2} and the resolvent identity.

\textbf{Step 1.4: }We note 
\begin{align}
    &\bE\lsb \max_{1\leq i\leq n}|\lb w(\bX_i)-\phi_d(A_i,\bX_i)\rb g^\ast(\bX_i)|\rsb\nonumber\\
    \leq&3\epsilon_2^2\bE\lb \lsb \sum_{i=1}^n| g^\ast(\bX_i)|^\nu\rsb^{1/\nu}\rb\nonumber\\
    \lesssim&\lb  \sum_{i=1}^n\bE(|g^\ast(\bX_i)|^\nu)\rb^{1/\nu}\nonumber\\
    \lesssim&n^{1/\nu}\label{eq:bound-on-max-w-phidtimesgast},
\end{align}
where the first inequality follows from \eqref{eq:bounded-root-phid} and Assumption \ref{asp:regularityL2}(d), the second inequality  is due to Jensen's inequality, and the last inequality follows from Assumption \ref{assump:uniform-further-boundedness}. We then apply empirical process arguments similar to those used in Step 1.1 and obtain 
\begin{align*}
    &\sup_{\bx}|\widetilde{\bb}(\bx)^\top\{(\widehat{\bH}+\lambda \bP)^{-1}-(\bH+\lambda \bP)^{-1}\}\mG_n\lbb\bb(\bX)\lb w(\bX)-\phi_d\rb g^\ast(\bX)\rbb|\\
    \lesssim&\sqrt{\log K}  \max_{1\leq i\leq n}|\lb w(\bX_i)-\phi_d(A_i,\bX_i)\rb g^\ast(\bX_i)|\|(\widehat{\bH}+\lambda \bP)^{-1}-(\bH+\lambda \bP)^{-1}\|_\textop\|\widehat{\bG}\|_\textop^{1/2}\\
    \lessp&\frac{\sqrt{\log K}n^{1/\nu}m_n^{\bH}}{\lb\epsilon_1+\lambda\psi_1(\bP)\rb\lb\epsilon_1/2+\lambda\psi_1(
    \bP
    )\rb},
\end{align*}
where the last inequality follows from \eqref{eq:bound-on-max-w-phidtimesgast} and Lemma \ref{lemma:rate-penalized-gram}.

\textbf{Step 1.5: }For the fifth term, we have that 
\begin{align*}
    &\sup_{\bx}|\sqrt{n}\widetilde{\bb}(\bx)^\top(\bH+\lambda\bP)^{-1}\mP_n\{\bb(\bX)(\widehat{\phi}_n-\phi_n)\}|\\
    \leq&\sqrt{n}\sup_{\bx}\left\|\widetilde{\bb}(\bx)\right\|_2\left\|(\bH+\lambda \bP)^{-1}\right\|_\textop\left\|\mP_n\{\bb(\bX)(\widehat{\phi}_n-\phi_n)\}\right\|_2\\
    \lessp&\frac{\sqrt{n}\lsb m_{1n}+m_{2n}^{(1)}\rsb}{\epsilon_1+\lambda\psi_1(\bP)},
\end{align*}
where the last inequality follows from Step 1.5 of the proof in Section \ref{ss:proof-propositon2}.

\textbf{Step 1.6: }For the sixth term, it follows that
\begin{align*}
    &\sup_{\bx}|\sqrt{n}\widetilde{\bb}(\bx)^\top(\bH+\lambda\bP)^{-1}\mP_n\{\bb(\bX)(\phi_d-\widehat{\phi}_d)g^\ast(\bX)\}|\\
    \leq&\sqrt{n}\sup_{\bx}\left\|\widetilde{\bb}(\bx)\right\|_2\left\|(\bH+\lambda \bP)^{-1}\right\|_\textop\left\|\mP_n\{\bb(\bX)(\phi_d-\widehat{\phi}_d)g^\ast(\bX)\}\right\|_2\\
    \lessp&\frac{\sqrt{n}\left(m_{2n}^{(2)}+m_{3n}\right)}{\epsilon_1+\lambda\psi_1(\bP)},
\end{align*}
where the last inequality follows from Step 1.6 of the proof in Section \ref{ss:proof-propositon2}.

\textbf{Step 1.7: }For the seventh term, it follows that
\begin{align*}
    &\sup_{\bx}|\sqrt{n}\widetilde{\bb}(\bx)^\top\{(\widehat{\bH}+\lambda \bP)^{-1}-(\bH+\lambda \bP)^{-1}\}\mP_n\{\bb(\bX)(\widehat{\phi}_n-\phi_n)\}|\\
    \leq&\sqrt{n}\sup_{\bx}\left\|\widetilde{\bb}(\bx)\right\|_2\left\|\lsb\widehat{\bH}+\lambda \bP\rsb^{-1}-(\bH+\lambda \bP)^{-1}\right\|_\textop\left\|\mP_n\{\bb(\bX)(\widehat{\phi}_n-\phi_n)\}\right\|_2\\
    \lessp&\frac{\sqrt{n}\lsb m_{1n}+m_{2n}^{(1)}\rsb m_n^{\bH}}{\lb\epsilon_1+\lambda\psi_1(\bP)\rb\lb\epsilon_1/2+\lambda\psi_1(
    \bP
    )\rb},
\end{align*}
where the last inequality follows from Step 1.7 of the proof in Section \ref{ss:proof-propositon2}.

\textbf{Step 1.8: }For the eighth term, it follows that
\begin{align*}
    &\sup_{\bx}|\sqrt{n}\widetilde{\bb}(\bx)^\top\{(\widehat{\bH}+\lambda \bP)^{-1}-(\bH+\lambda \bP)^{-1}\}\mP_n\{\bb(\bX)(\phi_d-\widehat{\phi}_d)g^\ast(\bX)\}|\\
    \leq&\sqrt{n}\sup_{\bx}\left\|\widetilde{\bb}(\bx)\right\|_2\left\|\lsb\widehat{\bH}+\lambda \bP\rsb^{-1}-(\bH+\lambda \bP)^{-1}\right\|_\textop\left\|\mP_n\{\bb(\bX)(\phi_d-\widehat{\phi}_d)g^\ast(\bX)\}\right\|_2\\
    \lessp&\frac{\sqrt{n}\left(m_{2n}^{(2)}+m_{3n}\right)m_n^{\bH}}{\lb\epsilon_1+\lambda\psi_1(\bP)\rb\lb\epsilon_1/2+\lambda\psi_1(
    \bP
    )\rb},
\end{align*}
where the last inequality follows from Step 1.8 of the proof in Section \ref{ss:proof-propositon2}.

\textbf{Step 1.9: }For the ninth term, it follows that
\begin{align*}
    &\sup_{\bx}|\sqrt{n}\lambda\widetilde{\bb}(\bx)^\top\{(\widehat{\bH}+\lambda \bP)^{-1}-(\bH+\lambda \bP)^{-1}\}\bP\bbeta^\ast|\\
    \leq&\sqrt{n}\lambda\sup_{\bx}\left\|\widetilde{\bb}(\bx)\right\|_2\left\|\lsb\widehat{\bH}+\lambda \bP\rsb^{-1}-(\bH+\lambda \bP)^{-1}\right\|_\textop\left\|\bP\bbeta^\ast\right\|_2\\
    \lessp&\frac{\sqrt{n}\lambda\psi_K(\bP)m_n^{\bH}}{\lb\epsilon_1+\lambda\psi_1(\bP)\rb\lb\epsilon_1/2+\lambda\psi_1(
    \bP
    )\rb},
\end{align*}
where the last inequality follows from Step 1.9 of the proof in Section \ref{ss:proof-propositon2}.

\textbf{Step 1.10: }For the tenth term, it follows that
\begin{align*}
    &\sup_{\bx}|\lambda\widetilde{\bb}(\bx)^\top(\bH+\lambda \bP)^{-1}\bP\bbeta^\ast|\\
    \leq&\lambda\sup_{\bx}\left\|\widetilde{\bb}(\bx)\right\|_2\left\|(\bH+\lambda \bP)^{-1}\right\|_\textop\left\|\bP\bbeta^\ast\right\|_2\\
    \lessp&\frac{\lambda\psi_K(\bP)}{\epsilon_1+\lambda\psi_1(\bP)},
\end{align*}
where the last inequality follows from Step 1.10 of the proof in Section \ref{ss:proof-propositon2}.

\textbf{Step 1.11: } Finally, the term $\sup_\bx|\text{Rem}_{2n}(\widetilde{\bb}(\bx))|$ can be analyzed using arguments analogous to those in Step 1.1:
\begin{align*}
    &\bE_{\iota}\lb \sup_\bx|\text{Rem}_{2n}(\widetilde{\bb}(\bx))||\{\mO_i\}_{i=1}^n\rb\\
    \lesssim&\sqrt{\log K}  \max_{1\leq i\leq n}|\phi_d(A_i,\bX_i)\alpha(\bX_i;g,\bbeta^\ast)|\|\bH^{-1}\|_\textop\|\widehat{\bG}\|_\textop^{1/2}\\
    \lessp&\sqrt{\log K}l_Kc_K\sup_{A,\bX}|\phi_d|\frac{1}{\psi_1(\bH)}\\
    \leq&\sqrt{\log K}l_Kc_K\frac{3\epsilon_2}{\epsilon_1}\\
    \lesssim&\sqrt{\log K}l_Kc_K,
\end{align*}
where the penultimate inequality follows from \eqref{eq:bounded-root-phid} and \eqref{eq:bounded-eigven-gram-H}.


\end{proof}

\subsection{Proof of Theorem \ref{thm:uniform-rate}}
\begin{proof}We first establish the following uniform bound for the linearized form in Proposition \ref{thm:uniform-linearization}:
\begin{align}
    &\sup _\bx |\widetilde{\bb}(\bx)^\top\bH^{-1}\mG_n\left[\bb(\bX)\lb\phi_n-\phi_dg(\bX)\rb\right]|\nonumber\\
    \lessp&\sqrt{\log K}+\frac{\xi_K^{\nu/(\nu-2)}\log K}{\sqrt{n}}\lesssim \sqrt{\log K}.\label{eq:proof-uniform-rate-of-convergence-preliminary}
\end{align}
The proof of \eqref{eq:proof-uniform-rate-of-convergence-preliminary} closely follows the proof of Theorem 4.3 in \cite{belloni2015some}, with some modifications. We reproduce the argument below. We will apply Lemma \ref{lemma:tight-bound-empirical-process-uniform}. To proceed, we define $\varepsilon:=\phi_n(\mO)-\phi_d(\mO)g(\bX)$ 
and consider $\mathcal{G}:=\{(\varepsilon,\bx)\mapsto\varepsilon\widetilde{\bb}(\bv)^\top\bH^{-1}\bb(\bx):\bv\in\calX\}$. It is straightforward to verify that $\bE(\varepsilon|\bX)=0$ and $\sup_{\bx}\bE(|\varepsilon|^\nu|\bX=\bx)\lesssim1$ under Assumption \ref{assump:uniform-further-boundedness}. Also, we note (i) $|\widetilde{\bb}(\bv)^\top\bH^{-1}\bb(\bx)|\leq\xi_K$; (ii) $\bE_\bX[\{\widetilde{\bb}(\bv)^\top\bH^{-1}\bb(\bX)\}^2]=\widetilde{\bb}(\bv)^\top\bH^{-1}\bG\bH^{-1}\widetilde{\bb}(\bv)=\|\widetilde{\bb}(\bv)^\top\bH^{-1}\|_2^2\leq \|\bH^{-1}\|_\textop^2=1/\psi_1(\bH)^2\leq1/\epsilon_1^2\lesssim1$; and (iii) by Assumption \ref{assump:uniform-further-boundedness}, 
\begin{align*}
    |\{\widetilde{\bb}(\bv)-\widetilde{\bb}(\bv')\}^\top\bH^{-1}\bb(\bx)\varepsilon|\leq|\varepsilon|\|\bb(\bx)\|_2\|\widetilde{\bb}(\bv)-\widetilde{\bb}(\bv')\|_2\leq |\varepsilon|\xi_K\xi_K^L\|\bv-\bv'\|_2.
\end{align*}
Furthermore, taking $G(\varepsilon,\bx):=|\varepsilon|\xi_K$ yields
\begin{align*}
    \sup_{\mathbb{Q}} N\{\mathcal{G},L^2(\mathbb{Q}),\varepsilon\|G\|_{L^2(\mathbb{Q})}\}\leq \lsb \frac{C\xi_K^L}{\varepsilon}\rsb^p.
\end{align*}
The preceding verification of the conditions permits application of Lemma \ref{lemma:tight-bound-empirical-process-uniform}, which implies
\begin{align*}
    \bE\lb \sup _\bx |\widetilde{\bb}(\bx)^\top\bH^{-1}\mG_n\left[\bb(\bX)\lb\phi_n-\phi_dg(\bX)\rb\right]|\rb\lesssim \sqrt{\log K}+\sqrt{\log K}\sqrt{\frac{\xi_K^{2\nu/(\nu-2)}\log K}{n}}\lesssim\sqrt{\log K},
\end{align*}
where the last two inequalities use the two conditions in Assumption \ref{assump:uniform-further-boundedness}: $\log \xi_K^L\lesssim\log K$ and $\xi_K^{2\nu/(\nu-2)}\log K/n\lesssim1$. Eventually, we obtain
\begin{align}
     &\sup_\bx |\widehat{g}(\bx)-g^\ast(\bx)|\nonumber\\
    =&\sup_\bx \|\bb(\bx)\|_2|\widetilde{\bb}(\bx)^\top(\widehat{\bbeta}-\bbeta^\ast)|\nonumber\\
    \leq&\xi_K\sup_\bx |\widetilde{\bb}(\bx)^\top(\widehat{\bbeta}-\bbeta^\ast)|\nonumber\\
    \leq&\xi_K/\sqrt{n}\sup_\bx |\sqrt{n}\widetilde{\bb}(\bx)^\top(\widehat{\bbeta}-\bbeta^\ast)|\nonumber\\
    \leq&\xi_K/\sqrt{n}\left\{\sup _\bx |\widetilde{\bb}(\bx)^\top\bH^{-1}\mG_n\left[\bb(\bX)\lb\phi_n^{\text{NIE}}-\phi_d^{\text{NIE}}g(\bX)\rb\right]|+\sup_\bx\sum_{j=1}^2|\text{Rem}_{jn}\{\widetilde{\bb}(\bx)\}|\right\}\nonumber\\
\lessp &\frac{\xi_K}{\sqrt{n}}\Bigg[\frac{\sqrt{n}\sum_{j=1}^3m_{jn}+\lambda \psi_K(\bP)(\sqrt{n}+\sqrt{K}+\sqrt{n}m_{0n})}{\epsilon_1+\lambda\psi_1(\bP)}+\frac{m^{\bH}_n\lb \sqrt{n}\sum_{j=0}^3m_{jn}+m_{4n}+\sqrt{n}\lambda\psi_K(\bP)\rb}{\lb\epsilon_1+\lambda\psi_1(\bP)\rb\lb\epsilon_1/2+\lambda\psi_1(
    \bP
    )\rb}+\nonumber\\
&\sqrt{\log K}(1+l_Kc_K)\Bigg],\label{eq:uniform-bound:hatg-gast}
\end{align}
where the last inequality follows from \eqref{eq:proof-uniform-rate-of-convergence-preliminary} and Proposition \ref{thm:uniform-linearization}. Finally, we obtain the uniform rate of convergence as follows:
\begin{align*}
&\sup_\bx |\widehat{g}(\bx)-g(\bx)|\\
\leq&\sup_\bx |\bb(\bx)^\top(\widehat{\bbeta}-\bbeta^\ast
)+\alpha(\bx;g,\bbeta^\ast)|\\
\leq&\sup_\bx |\bb(\bx)^\top(\widehat{\bbeta}-\bbeta^\ast
)|+\sup_\bx|\alpha(\bx;g,\bbeta^\ast)|\\
\leq&\sup_\bx |\bb(\bx)^\top(\widehat{\bbeta}-\bbeta^\ast)|+l_Kc_K\\
\lessp &\frac{\xi_K}{\sqrt{n}}\Bigg\{\frac{\sqrt{n}\sum_{j=1}^3m_{jn}+\lambda \psi_K(\bP)(\sqrt{n}+\sqrt{K}+\sqrt{n}m_{0n})}{\epsilon_1+\lambda\psi_1(\bP)}+\frac{m^{\bH}_n\lb \sqrt{n}\sum_{j=0}^3m_{jn}+m_{4n}+\sqrt{n}\lambda\psi_K(\bP)\rb}{\lb\epsilon_1+\lambda\psi_1(\bP)\rb\lb\epsilon_1/2+\lambda\psi_1(
    \bP
    )\rb}+\\
&\sqrt{\log K}(1+l_Kc_K)\Bigg\}+l_Kc_K.
\end{align*}
where the third inequality follows from Assumption \ref{asp:regularityL2}(b). 
\end{proof}

\subsection{Proof of Theorem \ref{thm:pointwise-normal}}
\begin{proof}
 Under the additional condition (i) in Theorem \ref{thm:pointwise-normal}, the remainder term in Proposition \ref{thm:pointwise-linearization} satisfies $\text{Rem}_{1n}(\widetilde{\bb})=o_{\mP}(1)$. Therefore, we obtain
\begin{align*}
    \frac{\sqrt{n}\widetilde{\bb}^\top}{\|\widetilde{\bb}^\top\mathbb{V}^{1/2}\|_2}\left(\widehat{\bbeta}-\bbeta^\ast\right)=&\frac{\widetilde{\bb}^\top}{\|\widetilde{\bb}^\top\mathbb{V}^{1/2}\|_2}\bH^{-1}\mG_n\left[\bb(\bX)\lb\phi_n-\phi_dg^\ast(\bX)\rb\right]+o_{\mP}(1)\\
    =&\frac{\widetilde{\bb}^\top}{\|\widetilde{\bb}^\top\mathbb{V}^{1/2}\|_2}\bH^{-1}\mG_n\left[\bb(\bX)\lb\widetilde{\Omega}+\phi_d\alpha(\bX;g,\bbeta^\ast)\rb\right]+o_{\mP}(1)\\
    =&\sum_{i=1}^n\underline{\omega}_{in}\lb \widetilde{\Omega}_i+\phi_d(A_i,\bX_i)\alpha(\bX_i;g,\bbeta^\ast)\rb+o_{\mP}(1),    
\end{align*}
where 
\begin{align*}
    \underline{\omega}_{in}:=\frac{\widetilde{\bb}^\top\bH^{-1}\bb(\bX_i)}{\|\widetilde{\bb}^\top\mathbb{V}^{1/2}\|_2\sqrt{n}}.
\end{align*}
The additional condition (iii) in Theorem \ref{thm:pointwise-normal} implies $1\lesssim \|\widetilde{\bb}^\top\mathbb{V}^{1/2}\|_2$. We also observe that $|\underline{\omega}_{in}|\lesssim \xi_K/\sqrt{n}$ and $|\widetilde{\Omega}_i+\phi_d(A_i,\bX_i)\alpha(\bX_i;g,\bbeta^\ast)|\leq |\widetilde{\Omega}_i|+3\epsilon_2 l_Kc_K$. Therefore, we can follow the proof of Theorem 4.2 in \cite{belloni2015some} to verify the Lindeberg condition for the CLT by setting $\omega_{in}$ in \cite{belloni2015some} to $\underline{\omega}_{in}$.

\end{proof}

\subsection{Proof of Theorem \ref{thm:uniform-strong-gaussian}}
\begin{proof}
 Under the additional condition (i) in Theorem \ref{thm:uniform-strong-gaussian}, the remainder term in Proposition \ref{thm:uniform-linearization} satisfies $\sup_{\bx}\text{Rem}_{1n}\{\widetilde{\bb}(\bx)\}=o_{\mP}(a_n^{-1})$. Therefore, we obtain
 \begin{align*}
     \sqrt{n}\widetilde{\bb}(\bx)^\top\left(\widehat{\bbeta}-\bbeta^\ast\right)=&\widetilde{\bb}(\bx)^\top\bH^{-1}\mG_n\left[\bb(\bX)\lb\widetilde{\Omega}+\phi_d\alpha(\bX;g,\bbeta^\ast)\rb\right]+o_{\mP}(a_n^{-1}).
 \end{align*}
To apply arguments analogous to those used in the proof of Theorem 4.4 of \cite{belloni2015some}, it remains to verify the following condition:
\begin{align*}
    &\bE\lbb\left\|\mathbb{V}^{-1/2}\bH^{-1}\bb(\bX)\lb\widetilde{\Omega}+\phi_d\alpha(\bX;g,\bbeta^\ast)\rb\right\|_2^3\rbb\\
    \lesssim&\bE\lbb\left\|\bb(\bX)\lb\widetilde{\Omega}+\phi_d\alpha(\bX;g,\bbeta^\ast)\rb\right\|^3\rbb\\
\lesssim&\bE\lbb\|\bb(\bX)\|^3\lb|\widetilde{\Omega}|^3+|\phi_d\alpha(\bX;g,\bbeta^\ast)|^3\rb\rbb\\
\lesssim&\bE\lbb\|\bb(\bX)\|^3\rbb(1+l_K^3c_K^3)\\
\leq&\xi_K(1+l_K^3c_K^3)\bE\lbb\|\bb(\bX)\|^2\rbb\\
\leq&\text{trace}(\bG)\xi_K(1+l_K^3c_K^3)\asymp K\xi_K(1+l_K^3c_K^3).
\end{align*}
Here, for the first inequality, write $c_\Omega:=\inf_{\bx}\bE(\widetilde{\Omega}^2\mid\bX=\bx)>0$. At the true nuisance functions, $\bE(\widetilde{\Omega}\mid A,\bX)=0$, so the cross term with $\phi_d\alpha$ vanishes conditionally on $\bX$. Hence, under the orthonormalization $\bG=\boldsymbol{I}_K$, $\mathbb{V}\succeq c_\Omega\bH^{-1}\bG\bH^{-1}
\succeq (3\epsilon_2)^{-2}c_\Omega\bG^{-1}
=(3\epsilon_2)^{-2}c_\Omega\boldsymbol{I}_K$. Together with $\|\bH^{-1}\|_{\mathrm{op}}\leq\epsilon_1^{-1}$, this bounds $\|\mathbb{V}^{-1/2}\bH^{-1}\|_{\mathrm{op}}$. In particular, the centered vector $\mathbb{V}^{-1/2}\bH^{-1}\bb(\bX)\{\widetilde{\Omega}+\phi_d\alpha(\bX;g,\bbeta^\ast)\}$ has covariance $\boldsymbol{I}_K$, as required by the Gaussian coupling argument.
 
\end{proof}

\subsection{Proof of Theorem \ref{thm:honest-cov-gauss-boot-ci}}
\begin{proof}
\textbf{Step 1:} By Proposition \ref{thm:uniform-linearization} and arguments analogous to those used in the proofs of Theorems 5.4 and 5.5 of \cite{belloni2015some}, we obtain the following approximation for the supremum of the $t$-statistics: under the same set of conditions as outlined in Theorem~\ref{thm:honest-cov-gauss-boot-ci} of the main manuscript, we obtain 
\begin{align}
\sup_{\bx}|T_n(\bx)|=_d\sup_{\bx}\left|\frac{\bb(\bx)^\top\mathbb{V}^{1/2}}{\| \mathbb{V}^{1/2}\bb(\bx)\|_2}\mathcal{N}(0,\boldsymbol{I}_K)\right|+o_{\mP}\left(\frac{1}{\sqrt{\log K}}\right).\label{eq:strong-approximation-suprema-t-statistics}
\end{align}

\textbf{Step 2:} We derive the convergence rate for the covariance matrix estimator, that is, an upper bound for $\| \widehat{\mathbb{V}}-\mathbb{V}\|_{\text{op}}$. To proceed, we first derive an upper bound for $\|\widehat{\Upsilon}-\Upsilon\|_{\textop}$, where 
\begin{align*}
    &\widehat{\Upsilon}:=\mP_n[\{\widehat{\phi}_n-\widehat{\phi}_d\widehat{g}(\bX)\}^2\bb(\bX)\bb(\bX)^\top],\\
    &\Upsilon:=\bE[\{\widetilde{\Omega}+\phi_d\alpha(\bX;g,\bbeta^\ast)\}^2\bb(\bX)\bb(\bX)^\top].
\end{align*}
Consider the decomposition:
\begin{align}
    \widehat{\Upsilon}-\Upsilon=&\mP_n[[\{\widehat{\phi}_n-\widehat{\phi}_d\widehat{g}(\bX)\}^2-\{\widetilde{\Omega}+\phi_d\alpha(\bX;g,\bbeta^\ast)\}^2]\bb(\bX)\bb(\bX)^\top]\label{eq:matrix-decomposition-1}\\
    +&(\mP_n-\mP)[\{\widetilde{\Omega}+\phi_d\alpha(\bX;g,\bbeta^\ast)\}^2\bb(\bX)\bb(\bX)^\top].\label{eq:matrix-decomposition-2}
\end{align}

Let $v_n:=\{\bE(\max_{1\leq i\leq n}\widetilde{\Omega}_i^2)\}^{1/2}$. By the proof of Theorem 4.6 in \cite{belloni2015some}, it follows that $\bE\{\eqref{eq:matrix-decomposition-2}\}\lesssim {m_n^{\bG}}^2(v_n^2+l_K^2c_K^2)+{m_n^{\bG}}\sqrt{v_n^2+l_K^2c_K^2}\|\Upsilon\|_{\textop}^{1/2}$. We then show that (i) $\|\Upsilon\|_{\textop}\lesssim1$ and (ii) ${m_n^{\bG}}^2(v_n^2+l_K^2c_K^2)=o(1)$. To show (i), because $\sqrt{\log K}l_Kc_K\lesssim\sqrt{\log K}$, we get $l_Kc_K\lesssim1$. Since $m_n^\bG=o(1)$ by Assumption \ref{asp:regularityL2}(c), we get $l_K^2c_K^2{m_n^\bG}^2=o(1)$. Since $\bE(\widetilde{\Omega}|A,\bX)=0$ and $\phi_d$ is a function of $(A,\bX)$, we obtain $\bE[\{\widetilde{\Omega}+\phi_d\alpha(\bX;g,\bbeta^\ast)\}^2|\bX]=\bE(\widetilde{\Omega}^2|\bX)+\bE(\phi_d^2|\bX)\alpha(\bX;g,\bbeta^\ast)^2\lesssim1+l_K^2c_K^2$ by Assumption \ref{asp:regularityL2}(b), Assumption \ref{asp:regularityL2}(d), and \eqref{eq:bounded-root-phid}. Thus, it follows that, for some constant $C>0$,  $\|\Upsilon\|_{\textop}=\sup_{\|v\|_2=1}|v^\top\Upsilon v|\leq C (1+l_Kc_K)^2  \|\bG\|_{\textop}\leq C(1+l_Kc_K)^2 \lesssim1$. To show (ii), because $m_n^\bG=o(1)$ under Assumption \ref{asp:regularityL2}(c) and $l_Kc_K\lesssim1$, we obtain ${m_n^\bG}^2l_K^2c_K^2=o(1)$. By \eqref{eq:bound-on-max-omega}, with $\nu$ replaced by $\nu/2$, we obtain $v_n\lesssim n^{1/\nu}$, which implies ${m_n^\bG}\nu_n\lesssim m_n^\bG n^{1/\nu}=o(1)$. Therefore, by Markov's inequality, we obtain $\eqref{eq:matrix-decomposition-2}\lessp (v_n \vee 1+l_Kc_K)m_n^\bG$.

We then analyze the term in \eqref{eq:matrix-decomposition-1}. Let $\Psi=\widetilde{\Omega}+\phi_d\alpha(\bX;g,\bbeta^\ast)$, $\widehat{\Psi}:=\widehat{\phi}_n-\widehat{\phi}_d\widehat{g}(\bX)$, and $\Delta_\phi:=\widehat{\Psi}-\Psi$. Then $\widehat{\Psi}^2-\Psi^2=\Delta_\phi^2+2\Psi\Delta_\phi$. Thus, we obtain
\begin{align*}
    &\|\eqref{eq:matrix-decomposition-1}\|_\textop=\|\mP_n\{(\widehat{\Psi}^2-\Psi^2)\bb(\bX)\bb(\bX)^\top\}\|_\textop\\
    \leq &\|\mP_n\{\Delta_\phi^2\bb(\bX)\bb(\bX)^\top\}\|_\textop+2\|\mP_n\{\Psi\Delta_\phi\bb(\bX)\bb(\bX)^\top\}\|_\textop\\
    \leq &(\max_i |\Delta_\phi(\mO_i)|^2+2\max_i|\Psi(\mO_i)|\max_i|\Delta_\phi(\mO_i)|) \|\widehat{\bG}\|_\textop\\
    \lessp&\max_i |\Delta_\phi(\mO_i)|^2+\max_i|\Psi(\mO_i)|\max_i|\Delta_\phi(\mO_i)|,
\end{align*}
where the last inequality follows from the fact that $\|\widehat{\bG}\|_\textop\lessp1$. We note $\max_i|\Psi(\mO_i)|\leq\max_i|\widetilde{\Omega}(\mO_i)|+3\epsilon_2l_Kc_K\lesssim n^{1/\nu}+l_Kc_K$. We derive an upper bound for $\max_i|\Delta_\phi(\mO_i)|$ as follows:
\begin{align*}
   &\max_i|\Delta_\phi(\mO_i)|\\
   \leq&\max_i|\widehat{\phi}_n(\mO_i)-\phi_n(\mO_i)|+\max_i|\{\widehat{\phi}_d(\mO_i)-\phi_d(\mO_i)\}\bb(\bX_i)^\top\bbeta^\ast|+\\
   &\max_i|\phi_d(\mO_i)\bb(\bX_i)^\top(\widehat{\bbeta}-\bbeta^\ast)|+\max_i|\{\widehat{\phi}_d(\mO_i)-\phi_d(\mO_i)\}\bb(\bX_i)^\top(\widehat{\bbeta}-\bbeta^\ast)|\\
   \leq&\max_i|\widehat{\phi}_n(\mO_i)-\phi_n(\mO_i)|+\max_i|\{\widehat{\phi}_d(\mO_i)-\phi_d(\mO_i)\}|(\|g-g^\ast\|_{\mP,\infty}+\|g\|_{\mP,\infty})+\\
   &(3\epsilon_2+\max_i|\widehat{\phi}_d(\mO_i)-\phi_d(\mO_i)|)\|\widehat{g}-g^\ast\|_{\mP,\infty}\\
   \leq&d_\infty^{\Sigma}+d_\infty(\pi)(l_Kc_K+\|g\|_{\mP,\infty})+\{3\epsilon_2+d_\infty(\pi)\}\|\widehat{g}-g^\ast\|_{\mP,\infty}\\
   \lesssim&d_\infty^{\Sigma}+d_\infty(\pi)(l_Kc_K+\|g\|_{\mP,\infty})+\{1+d_\infty(\pi)\}m_n^\bG\\
   \lesssim&d_\infty^{\Sigma}+m_n^\bG,
\end{align*}
where the penultimate inequality follows from \eqref{eq:uniform-bound:hatg-gast} and the conditions in Theorem \ref{thm:honest-cov-gauss-boot-ci}. Thus, we obtain $\|\eqref{eq:matrix-decomposition-1}\|_\textop\lessp (d_\infty^{\Sigma}+m_n^\bG)^2+(d_\infty^{\Sigma}+m_n^\bG)(n^{1/\nu}+l_Kc_K)\lesssim (d_\infty^{\Sigma}+m_n^\bG)(n^{1/\nu}+l_Kc_K)$. Eventually, we have $\|\widehat{\Upsilon}-\Upsilon\|_{\textop}\lessp (d_\infty^{\Sigma}+m_n^\bG)(n^{1/\nu}+l_Kc_K)$. Finally, we obtain
\begin{align*}
    &\|\widehat{\mathbb{V}}-\mathbb{V}\|_\textop\\
    \lesssim&\|(\widehat{\bH}^{-1}-\bH^{-1})\widehat{\Upsilon}\widehat{\bH}^{-1}\|_\textop+\|\bH^{-1}(\widehat{\Upsilon}-\Upsilon)\widehat{\bH}^{-1}\|_\textop+\|\bH^{-1}\Upsilon(\widehat{\bH}^{-1}-\bH^{-1})\|_{\textop}\\
    \lessp&\|\widehat{\Upsilon}-\Upsilon\|_{\textop}+\|\widehat{\bH}-\bH\|_{\textop}\lessp (d_\infty^{\Sigma}+m_n^\bG)(n^{1/\nu}+l_Kc_K)+m_n^\bG,
\end{align*}
where the last inequality follows from Theorem \ref{thm:rate-gram-norm} and the conditions in Theorem \ref{thm:honest-cov-gauss-boot-ci}.

Thus, by Equation (A.57) in \cite{belloni2015some}, we obtain 
\begin{align}
    &\sup_{\bx}\left|\frac{\| \widehat{\mathbb{V}}^{1/2}\bb(\bx)\|_2}{\| \mathbb{V}^{1/2}\bb(\bx)\|_2}-1\right|\nonumber\\
    \lessp &\| \widehat{\mathbb{V}}-\mathbb{V}\|_{\text{op}}\nonumber\\
    \lessp&(d_\infty^{\Sigma}+m_n^\bG)(n^{1/\nu}+l_Kc_K)+m_n^\bG=o_\mP(1).\label{eq:covariance-matrix-rate-int-1}
\end{align}

\textbf{Step 3:} Finally, we conclude by arguments analogous to those used in the proof of Theorem 5.6 of \cite{belloni2015some}, replacing their Theorem 5.5 and Lemma 5.1 with \eqref{eq:strong-approximation-suprema-t-statistics} and \eqref{eq:covariance-matrix-rate-int-1}, respectively.

\end{proof}

\section{Supporting information for simulation studies}\label{supp;sec:supporting-info-simulation}

The complete simulation setup is provided as follows. We consider a sample of $n=3000$ independent individuals. For each individual $i=1,\ldots,n$, we first generate three covariates
\[
\bX_i=2\Phi(\widetilde{\bX}_i)-1,
\]
where $\widetilde{\bX}_i$ follows a multivariate normal distribution with mean $\mathbf{0}$ and covariance matrix $0.9\mathbf{I}_3+0.1\mathbf{e}_3\mathbf{e}_3^\top$.
Here, $\mathbf{I}_3$ is the $3\times 3$ identity matrix, $\mathbf{e}_3$ is the three-dimensional vector of ones, and $\Phi(\cdot)$ denotes the cumulative distribution function of the standard normal distribution, applied componentwise. This construction yields covariates $\bX_i=(X_{i1},X_{i2},X_{i3})^\top$ supported on $[-1,1]^3$ with marginal uniform distributions and dependence induced through the Gaussian copula.

Let $\operatorname{expit}(x)=\{1+\exp(-x)\}^{-1}$. The binary treatment $A_i$ is generated from
\[
A_i| \bX_i \sim \operatorname{Bernoulli}\{\pi(\bX_i)\},
\]
where the treatment propensity score is specified as
\[
\pi(\bX_i)
=
\operatorname{expit}\left(
0.10
+
0.18X_{i1}
-
0.16X_{i2}
+
0.18X_{i3}
+
0.05X_{i1}X_{i3}
\right).
\]

The mediator $M_i$ is continuous. For $a\in\{0,1\}$, let $\underline{\mu}_a(\bX_i)=\mathbb{E}(M_i| A_i=a,\bX_i)$ denote the true conditional mediator mean. We first define the baseline mediator component
\begin{align*}
h_M(\bX_i)
=&
0.15
+
0.45\sin(2\pi X_{i1})
+
0.35\cos(2\pi X_{i2})
+
0.30\sin(\pi X_{i3})
+
0.28\sin(\pi X_{i1}X_{i2})\\
&\quad
+
0.24\cos(\pi X_{i2}X_{i3})
+
0.22\sin(\pi X_{i1}X_{i3})
+
0.35\operatorname{expit}\{2.0(X_{i1}+X_{i2}-0.25)\}\\
&\quad
-
0.30\operatorname{expit}\{2.0(X_{i2}-X_{i3}+0.15)\}
+
0.22\left(X_{i1}^2-\frac{1}{3}\right)
-
0.18\left(X_{i2}^2-\frac{1}{3}\right)\\
&\quad
+
0.15X_{i1}X_{i2}X_{i3}.
\end{align*}
The mediator treatment effect is
\[
\tau_M(\bX_i)
=
0.60
+
0.06X_{i1}
-
0.05X_{i2}
+
0.04X_{i3}
+
0.03X_{i1}X_{i2}
-
0.02X_{i2}X_{i3}.
\]
The two conditional mediator means are specified as
\[
\underline{\mu}_0(\bX_i)
=
h_M(\bX_i)-\frac{1}{2}\tau_M(\bX_i),
\qquad
\underline{\mu}_1(\bX_i)
=
h_M(\bX_i)+\frac{1}{2}\tau_M(\bX_i).
\]
Thus, $\underline{\mu}_1(\bX_i)-\underline{\mu}_0(\bX_i)=\tau_M(\bX_i)$. The observed mediator is generated as
\[
M_i=\underline{\mu}_{A_i}(\bX_i)+\mathcal{N}(0,1).
\]

The true CNIE is specified as the smooth nonlinear target
\begin{align*}
g(\bX_i)
=&
0.45\Bigg[
0.80\sin\left(\frac{\pi X_{i1}}{2}\right)
-
0.70\cos\left(\frac{\pi X_{i2}}{2}\right)
+
0.60\sin\left(\frac{\pi X_{i3}}{2}\right)\\
&\quad
+
0.45X_{i1}X_{i2}
-
0.40X_{i2}X_{i3}
+
0.35X_{i1}X_{i3}\\
&\quad
+
0.30\sin\left(\frac{\pi X_{i1}}{2}\right)
      \cos\left(\frac{\pi X_{i2}}{2}\right)
+
0.25\cos\left(\frac{\pi X_{i2}}{2}\right)
      \sin\left(\frac{\pi X_{i3}}{2}\right)
\Bigg].
\end{align*}
This target is smooth and nonlinear, with main nonlinear terms and two-way interaction structure.

The continuous outcome $Y_i$ is generated from
\[
Y_i=\mu_{A_i}(M_i,\bX_i)+\mathcal{N}(0,1.2^2),
\]
where, for $a\in\{0,1\}$,
\[
\mu_a(m,\bX_i)
=
h_Y(\bX_i)+a\,d(\bX_i)+\beta_M(\bX_i)m.
\]
The baseline outcome function is
\begin{align*}
h_Y(\bX_i)
=&
0.40
+
0.45X_{i1}
-
0.35X_{i2}
+
0.30X_{i3}
+
0.45\sin(\pi X_{i1})
+
0.35\cos(\pi X_{i2})\\
&\quad
+
0.30\sin(\pi X_{i3})
+
0.28\sin(\pi X_{i1})\cos(\pi X_{i3})
+
0.24\cos(\pi X_{i2})\sin(\pi X_{i3})\\
&\quad
+
0.22\operatorname{expit}\{2.0(X_{i1}+X_{i2}-0.2)\}
-
0.20\operatorname{expit}\{2.0(X_{i2}-X_{i3}+0.1)\}\\
&\quad
+
0.18X_{i1}X_{i2}
-
0.16X_{i2}X_{i3}
+
0.14X_{i1}X_{i2}X_{i3}.
\end{align*}
The direct treatment effect function is
\begin{align*}
d(\bX_i)
=&
-0.40
+
0.18X_{i1}
-
0.16X_{i2}
+
0.14X_{i3}
+
0.24\sin(\pi X_{i1})\sin(\pi X_{i3})\\
&\quad
+
0.20\cos(\pi X_{i2})\sin(\pi X_{i3})
+
0.16\operatorname{expit}\{2.0(X_{i1}+X_{i2}+X_{i3}-0.25)\}.
\end{align*}
The mediator-outcome coefficient is constructed as
\[
\beta_M(\bX_i)
=
\frac{g(\bX_i)}{\tau_M(\bX_i)}.
\]
Therefore, under the above data-generating process, the true conditional natural indirect effect is
\[
\beta_M(\bX_i)
\{\underline{\mu}_1(\bX_i)-\underline{\mu}_0(\bX_i)\}
=
\beta_M(\bX_i)\tau_M(\bX_i)
=
g(\bX_i).
\]
The pointwise metrics are calculated at four randomly chosen representative points: ${\bx_1}=(0.58,0.00,-1.00)^\top$, ${\bx_2}=(-0.83,-0.67,-0.33)^\top$, ${\bx_3}=(0.33,-0.42,0.25)^\top$, and ${\bx_4}=(0.42,-0.08,0.92)^\top$.

\section{Supporting information for the CARDIA analysis and two additional empirical applications}\label{sec:two-more-data-application}
\subsection{Additional information for the CARDIA application}
\label{supp:cardia-additional-information}

The cross-fitted probability of current smoking ranges from 0.0046 to 0.9434, with a median of 0.0250; 68.3\% of the observed profiles have a propensity below 0.05, and none have a propensity above 0.95. Thus, the overlap limitation is one-sided and concentrated near zero. Among the 402 current smokers, the effective sample size increases from 45.1 under the TR learner to 330.5 under the TW learner. Together with its favorable ISE performance and more concentrated ISE distributions in the simulation study, these findings motivate our choice to present the TTW learner in the main manuscript.

We compare the TR, TW, OW, TTR, TTW, and TOW estimates at the same 2,396 observed covariate profiles. Table~\ref{tab:cardia-six-learner-summary} reports each fitted distribution and its individual agreement with TTW. Across all 15 learner pairs, Pearson correlations range from 0.435 to 0.962, Spearman correlations range from 0.409 to 0.943, and sign agreement ranges from 66.0\% to 88.2\%. Agreement is strongest within corresponding untargeted--targeted pairs: the Spearman correlations are 0.863, 0.822, and 0.943 for TR--TTR, TW--TTW, and OW--TOW, respectively, with corresponding sign agreement of 83.8\%, 82.5\%, and 88.2\%. All six learners assign the same sign to 44.1\% of profiles, including 28.2\% positive under every learner and 15.9\% negative under every learner. The modest variation in agreement across methods may reflect statistical uncertainty arising from the low treatment prevalence and limited overlap: only 16.8\% of participants were current smokers at Year 20, and more than 60\% had estimated propensity scores below 0.05.
Relative to TR, TW, and OW, targeting reduces the mean pointwise band width by 60.9\%, 45.8\%, and 50.6\%, respectively, and reduces the mean uniform band width by 59.9\%, 43.9\%, and 49.5\%. 

\begin{table}[ht!]
\centering
\caption{Distribution of the estimated CNIEs and agreement with the TTW learner
across the 2,396 observed CARDIA covariate profiles.}
\label{tab:cardia-six-learner-summary}

\begingroup
\small
\setlength{\tabcolsep}{5pt}
\renewcommand{\arraystretch}{1.08}
\sisetup{table-number-alignment=center}

\begin{tabular}{
@{}l
S[table-format=1.3]
S[table-format=1.3]
c
S[table-format=2.1]
S[table-format=1.3]
S[table-format=3.1]
@{}
}
\toprule
& \multicolumn{4}{c}{Estimated CNIEs}
& \multicolumn{2}{c}{Agreement with TTW} \\
\cmidrule(lr){2-5}
\cmidrule(l){6-7}
Learner
& {Mean}
& {Median}
& {IQR}
& {Positive (\%)}
& {Spearman $\rho$}
& {Same sign (\%)} \\
\midrule
TR  & 0.544 & 0.389 & {$[-1.052,\,1.991]$} & 57.1 & 0.409 & 66.4 \\
TW  & 1.192 & 0.682 & {$[-1.745,\,3.414]$} & 58.1 & 0.822 & 82.5 \\
OW  & 0.610 & 0.085 & {$[-2.228,\,2.757]$} & 51.2 & 0.586 & 68.5 \\
TTR & 0.044 & 0.045 & {$[-0.904,\,0.980]$} & 51.5 & 0.456 & 66.4 \\
TTW & 0.562 & 0.623 & {$[-0.855,\,1.945]$} & 61.6 & 1.000 & 100.0 \\
TOW & 0.288 & 0.100 & {$[-1.433,\,1.876]$} & 51.7 & 0.647 & 72.6 \\
\bottomrule
\end{tabular}
\endgroup
\end{table}


The shallow fit-the-fit summaries identify related low-dimensional modifiers. TR first splits on household income and then baseline high-density lipoprotein cholesterol; TW splits on age and baseline BMI; OW splits on baseline hypertension history and age; TTR splits on household income and prior smoking; TTW splits on prior smoking and household income; and TOW splits on baseline hypertension history and household income. Household income and smoking history recur across multiple learners.

\subsection{Application to the PSACR-002 study}\label{subsec:data-application-PSACR-002}

We illustrate the proposed method using data from the Psychological Science Accelerator's COVID-19 Rapid-Response cognitive-reappraisal experiment (PSACR-002) \citep{wang2021multicountry}. For illustration, we restrict attention to U.S. participants and analyze $n=1238$ participants with complete mediator and outcome data. The treatment \(A\) is assignment to reconstrual, with active control as the comparison condition. Reconstrual instructed participants to reinterpret the COVID-19 situation, whereas active control asked them to reflect on their thoughts and feelings while providing a comparable amount of instruction and engagement. The mediator \(M\) is the mean negative-emotion rating across ten COVID-19-related photographs, and the outcome \(Y\) is the mean of five negative-emotion items assessed after the photograph-viewing task. Both variables are measured on the original 1--5 scale, with lower values indicating less negative emotion. We consider baseline demographic, socioeconomic, emotional, and COVID-19-related characteristics as plausible effect modifiers and mediator--outcome confounders. In the analytic sample, the mean age is 21.92 years, 72.4\% of participants are female, and the mean baseline negative-emotion score is 2.53. The scientific question is whether part of the effect of reconstrual on post-task negative emotion is transmitted through immediate emotional responses to the photographs and whether this pathway differs across participants. 

For illustration, we implement all six orthogonal learners with sieve smoothers satisfying $\check{\mathcal G}_n=\mathcal G_n$. The nuisance functions are estimated using Super Learner with a library comprising \texttt{SL.mean}, \texttt{SL.glm}, \texttt{SL.glmnet}, \texttt{SL.earth}, \texttt{SL.ranger}, and \texttt{SL.rpart}. The second-stage sieve models each of the seven continuous covariates using a cubic penalized B-spline basis of dimension five and includes categorical main effects. To accommodate scientifically plausible heterogeneity while limiting potential overfitting, the sieve includes pairwise tensor-product interactions between baseline negative emotion and baseline positive emotion, emotional worry, physical worry, and perceived manageability of restrictions. It also includes factor-by-smooth terms allowing the baseline-negative-emotion curve to differ under partial and full lockdown. Penalty selection follows the recommendations from the simulation study.

Supplementary Figures~\ref{fig:psacr-prediction-ow}--\ref{fig:psacr-prediction-tpsw} display the ordered CNIE estimates, together with pointwise and uniform confidence bands, for the OW, TOW, TR, TTR, TW, and TTW learners. Notably, the supremum-based critical value makes the uniform bands wider rather than smoother. For the TOW learner, the estimated CNIEs have a mean of \(-0.269\), a median of \(-0.274\), and an interquartile range of \([-0.429,-0.119]\), with 88.4\% being negative. Pointwise intervals lie entirely below zero for 38.9\% of the observed profiles, whereas the uniform band lies entirely below zero for only 1.7\%. Overall, the estimated mediated pathway is negative for most fitted profiles. The three targeted learners yield similar mean CNIEs and substantially stabilize the fitted surface near the boundaries relative to the untargeted learners.

To interpret the fitted heterogeneity, Supplementary Material Figure \ref{fig:psacr-tree-tow} summarizes the TOW-estimated CNIE surface using a shallow regression tree, an approach also known as \emph{fit-the-fit}. The first split is physical worry, suggesting that concern about the physical consequences of COVID-19 is the dominant modifier of the fitted mediation pathway. Among participants with physical-worry scores of at least 3.5, baseline positive emotion provides an additional split, with a more negative estimated CNIE among those with lower baseline positive emotion. The corresponding leaf means are \(-0.397\) and \(-0.229\), compared with \(-0.058\) among participants with physical-worry scores below 3.5. Supplementary Figures \ref{fig:psacr-tree-ow}--\ref{fig:psacr-tree-tpsw} present fit-the-fit summaries for the other learners, in which physical worry and baseline positive emotion also emerge as prominent modifiers. 

A population average analysis suggests a negative mediated pathway from reconstrual through immediate negative emotion to post-task negative emotion. The conditional estimates reveal a more nuanced pattern: the fitted pathway is weak for some baseline profiles and appreciably more negative among participants reporting greater physical worry, particularly those with lower baseline positive emotion.

\subsection{Application to the STAR study}\label{subsec:data-application-STAR}
We apply the proposed method to data from the Tennessee Student/Teacher Achievement Ratio (STAR) experiment \citep{krueger1999experimental}. 
For illustration, we proceed with a complete-case analysis of $n=4020$ students. The treatment ($A$) is assignment to a small kindergarten class, with regular and regular-with-aide classes combined as the comparison group. The mediator ($M$) is kindergarten listening achievement, measured after class assignment, and the outcome ($Y$) is first-grade reading achievement. 
{The baseline covariate vector $\bX$ comprises the pretreatment
characteristics prespecified as plausible effect modifiers and
mediator--outcome confounders: age at kindergarten entry, sex,
race/ethnicity, free-lunch status, and school locale.}
In the analytic sample, the mean age at kindergarten entry is 5.43 years; approximately 50.1\% of students are female, 69.8\% are White or Asian, and 44.3\% receive free lunch. The scientific question is whether part of the effect of small kindergarten classes on later reading achievement is transmitted through early listening achievement, and whether this pathway differs across students. To address this question, we estimate the CNIE on the original reading-score scale. For illustration, and following the recommendation from the simulation study, we implement the six orthogonal learners with sieve smoothers satisfying $\check{\mathcal{G}}_n=\mathcal{G}_n$. The sieve space is constructed using an order-6 spline for age at kindergarten entry, categorical main effects for sex, race/ethnicity, free-lunch status, and school locale, all pairwise categorical interactions, and factor-by-smooth age interactions for each categorical covariate. The penalty is selected by the GCV score. This specification allows nonlinear age-related heterogeneity and subgroup-specific age patterns while avoiding high-dimensional tensor-product overfitting.



Supplementary Material Figures \ref{fig:star-prediction-ow}--\ref{fig:star-prediction-tpsw} display the ordered estimated CNIEs, along with pointwise and uniform confidence bands, for all students under the OW, TOW, TR, TTR, TW, and TTW learners, respectively. Across the six learners, most estimated CNIEs oscillate around zero, whereas the upper tail is clearly positive and the lower tail is moderately negative. Thus, the listening-mediated pathway appears weak for many students but substantially stronger for a subset. All learners yield similar fitted patterns, while the targeted learners generally prevent extreme estimated values but appear to produce slightly wider bands. Notably, the uniform confidence bands based on the Gaussian bootstrap in Algorithm \ref{alg:gaussian-threshold} use a supremum-based critical value, which makes them wider, not smoother; their smoothness is determined by the estimator and standard errors. To interpret the fitted heterogeneity, Supplementary Material Figure \ref{fig:star-tree-tow} summarizes the TOW-estimated CNIE surface using a shallow regression tree, an approach also known as \emph{fit-the-fit}. The first split is school locale, suggesting that school context is the dominant low-dimensional modifier of the estimated mediation pathway. The fitted CNIE is smaller on average among students in inner-city or rural schools and larger among students in urban or suburban schools. Within the latter group, free-lunch status is an important additional modifier, with the largest estimated CNIEs appearing among students receiving free lunch. This pattern suggests that early listening achievement may be an especially important pathway through which small classes improve later reading for economically disadvantaged students in urban or suburban school settings. Supplementary Material Figures \ref{fig:star-tree-ow}--\ref{fig:star-tree-tpsw} present fit-the-fit plots based on the other learners, and the results are similar. 

A single population average indirect effect would suggest a positive but modest mediated pathway. The finer conditional estimates reveal a more nuanced pattern: for many students, the estimated listening-mediated component is small, whereas for particular baseline profiles it is substantially larger. For example, the analysis suggests a coherent and interpretable policy implication: small kindergarten classes may improve later reading partly by strengthening early listening skills, with this pathway most pronounced in school and socioeconomic contexts where early classroom learning may be especially consequential.

\clearpage

\section{Additional tables and figures}\label{supp;sec:additional-figure}

\subsection{Simulation experiments}
\begin{figure}[ht!]
    \centering
    \includegraphics[width=0.93\linewidth]{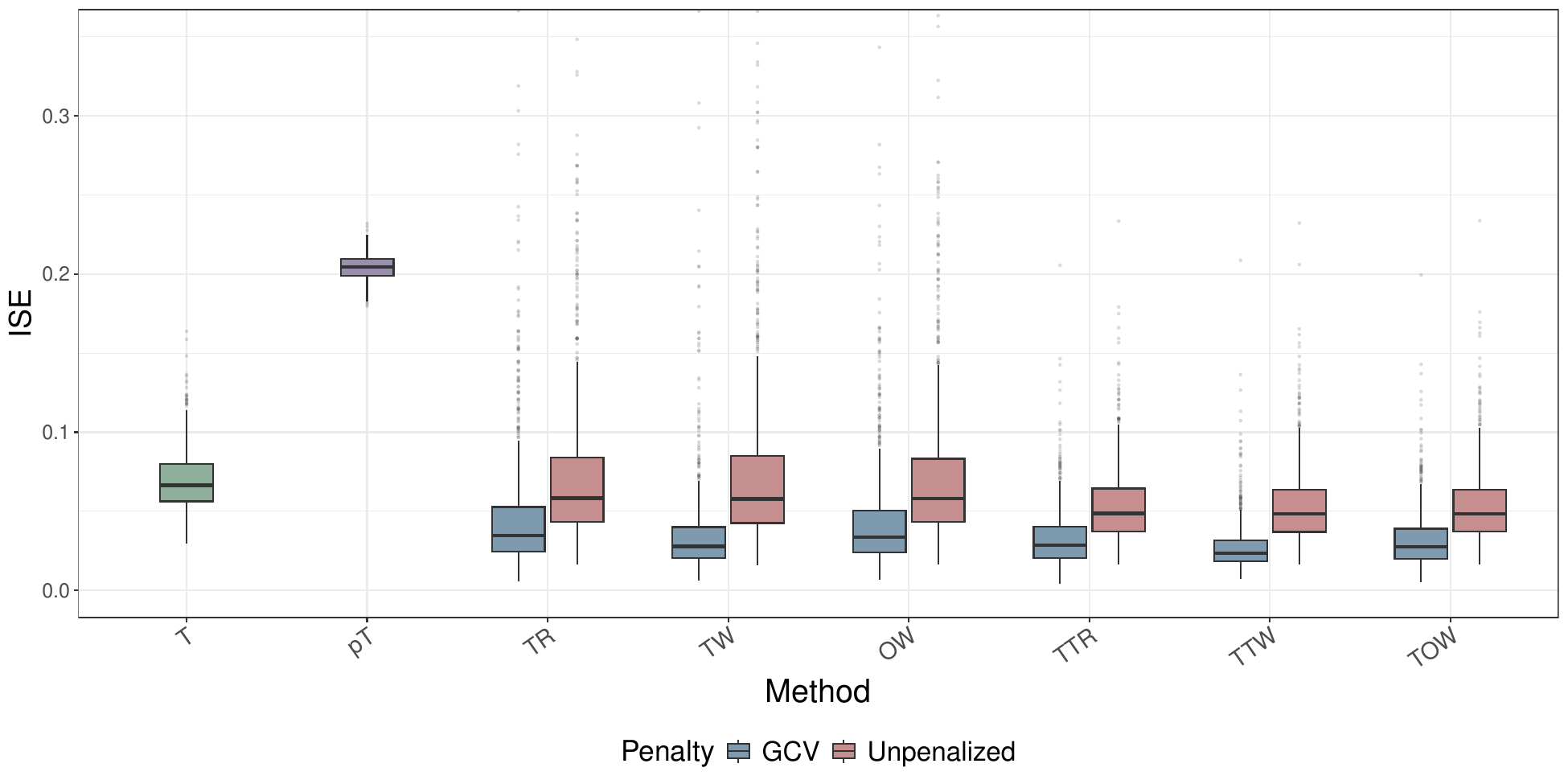}
    \caption{Simulation results presenting box plots of the integrated squared error (ISE) across $10^3$ iterations for the parametric T-learner in \cite{zhao2025estimation}, the nonparametric T-learner, and the six orthogonal learners in Section \ref{sec:6-repre-ortho-learners}. Each orthogonal learner is implemented using $(K_1,K_2)=(5,3)$ and two penalty strategies.}
    \label{fig:sim-results-ise-11,4}
\end{figure}

\begin{figure}[ht!]
    \centering
    \includegraphics[width=0.93\linewidth]{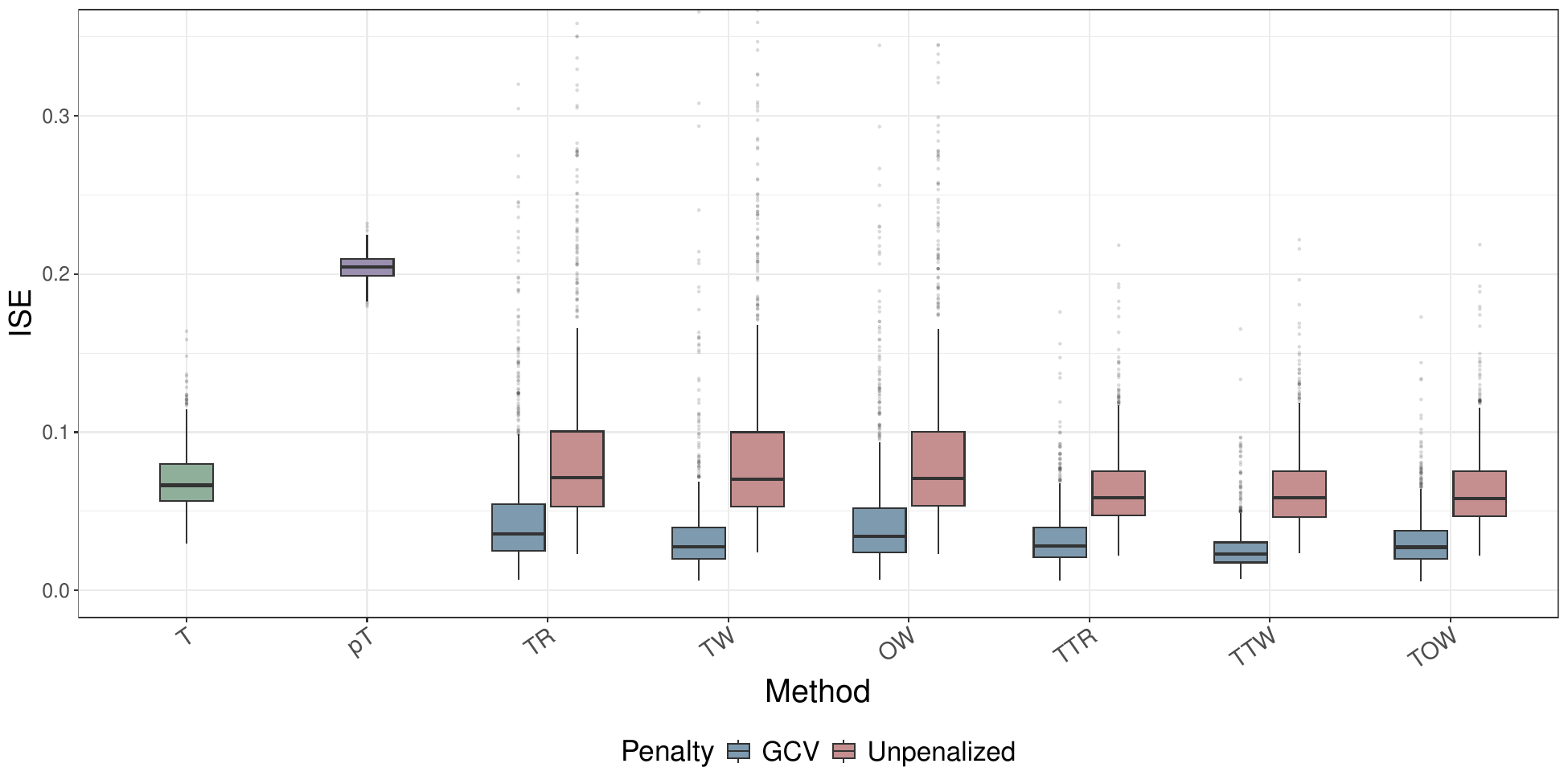}
    \caption{Simulation results presenting box plots of the integrated squared error (ISE) across $10^3$ iterations for the parametric T-learner in \cite{zhao2025estimation}, the nonparametric T-learner, and the six orthogonal learners in Section \ref{sec:6-repre-ortho-learners}. Each orthogonal learner is implemented using $(K_1,K_2)=(7,3)$ and two penalty strategies.}
    \label{fig:sim-results-ise-14,5}
\end{figure}

\begin{table}[ht!]
\centering
\caption{Pointwise bias, Monte Carlo standard deviation (MCSD), average estimated standard error (AESE), and empirical pointwise coverage for the parametric T-learner based on \cite{zhao2025estimation} at four representative points.}
\label{tab:pointwise-parametric-T}
\begingroup
\setlength{\tabcolsep}{6.8pt}
\renewcommand{\arraystretch}{0.70}
\setlength{\aboverulesep}{0.20ex}
\setlength{\belowrulesep}{0.21ex}
\begin{tabular}[t]{crrrr}
\toprule 
$\bx$ & BIAS & MCSD & AESE & CP(\%)\\
\midrule
$\bx_1$ & 0.27 & 0.03 & 0.03 & 0.0\\
$\bx_2$ & 0.50 & 0.03 & 0.03 & 0.0\\
$\bx_3$ & -0.26 & 0.03 & 0.03 & 0.0\\
$\bx_4$ & -0.58 & 0.03 & 0.03 & 0.0\\
\bottomrule
\end{tabular}
\endgroup
\end{table}

\begin{table}[ht!]
\centering
\caption{Pointwise bias, Monte Carlo standard deviation (MCSD), average estimated standard error (AESE), and empirical pointwise coverage for the TR learner at four representative points.
}\label{tab:pointwise-TR}
\begingroup
\setlength{\tabcolsep}{6.8pt}
\renewcommand{\arraystretch}{0.70}
\setlength{\aboverulesep}{0.20ex}
\setlength{\belowrulesep}{0.21ex}
\begin{tabular}[t]{ccrrrrrrrr}
\toprule
\multicolumn{2}{c}{ } & \multicolumn{4}{c}{GCV} & \multicolumn{4}{c}{Unpenalized} \\
\cmidrule(l{3pt}r{3pt}){3-6} \cmidrule(l{3pt}r{3pt}){7-10}
$K$ & $\bx$ & BIAS & MCSD & AESE & CP(\%) & BIAS & MCSD & AESE & CP(\%)\\
\midrule
\multirow{4}{*}{$(4,3)$} & $\bx_1$ & -0.04 & 0.22 & 0.26 & 98.0 & -0.02 & 0.29 & 0.26 & 92.6\\
 & $\bx_2$ & 0.01 & 0.18 & 0.22 & 98.4 & 0.02 & 0.21 & 0.22 & 96.5\\
 & $\bx_3$ & -0.03 & 0.10 & 0.13 & 99.1 & 0.01 & 0.13 & 0.13 & 95.1\\
 & $\bx_4$ & 0.02 & 0.16 & 0.20 & 97.9 & 0.02 & 0.21 & 0.19 & 94.1\\
\cmidrule{1-10}
\multirow{4}{*}{$(5,3)$} & $\bx_1$ & -0.05 & 0.22 & 0.29 & 98.4 & -0.02 & 0.31 & 0.29 & 94.0\\
 & $\bx_2$ & 0.01 & 0.18 & 0.23 & 98.5 & 0.01 & 0.22 & 0.23 & 96.3\\
 & $\bx_3$ & -0.04 & 0.10 & 0.14 & 98.2 & 0.00 & 0.14 & 0.14 & 95.7\\
 & $\bx_4$ & 0.03 & 0.16 & 0.21 & 98.4 & 0.03 & 0.22 & 0.21 & 94.0\\
\cmidrule{1-10}
\multirow{4}{*}{$(7,3)$} & $\bx_1$ & -0.03 & 0.23 & 0.34 & 99.5 & 0.02 & 0.37 & 0.33 & 93.1\\
 & $\bx_2$ & 0.01 & 0.19 & 0.25 & 99.0 & 0.02 & 0.25 & 0.25 & 96.3\\
 & $\bx_3$ & -0.04 & 0.10 & 0.17 & 99.3 & -0.02 & 0.18 & 0.17 & 94.0\\
 & $\bx_4$ & 0.02 & 0.16 & 0.22 & 98.8 & 0.03 & 0.24 & 0.22 & 95.2\\
\bottomrule
\end{tabular}
\endgroup
\end{table}

\begin{table}[ht!]
\centering
\caption{Pointwise bias, Monte Carlo standard deviation (MCSD), average estimated standard error (AESE), and empirical pointwise coverage for the TW learner at four representative points. 
}\label{tab:pointwise-PSW}
\begingroup
\setlength{\tabcolsep}{6.8pt}
\renewcommand{\arraystretch}{0.70}
\setlength{\aboverulesep}{0.20ex}
\setlength{\belowrulesep}{0.21ex}
\begin{tabular}[t]{ccrrrrrrrr}
\toprule
\multicolumn{2}{c}{ } & \multicolumn{4}{c}{GCV} & \multicolumn{4}{c}{Unpenalized} \\
\cmidrule(l{3pt}r{3pt}){3-6} \cmidrule(l{3pt}r{3pt}){7-10}
$K$ & $\bx$ & BIAS & MCSD & AESE & CP(\%) & BIAS & MCSD & AESE & CP(\%)\\
\midrule
\multirow{4}{*}{$(4,3)$} & $\bx_1$ & -0.04 & 0.16 & 0.26 & 99.6 & -0.02 & 0.30 & 0.26 & 93.4\\
 & $\bx_2$ & 0.00 & 0.16 & 0.22 & 99.1 & 0.02 & 0.21 & 0.22 & 96.0\\
 & $\bx_3$ & -0.04 & 0.07 & 0.13 & 99.1 & 0.01 & 0.13 & 0.13 & 95.8\\
 & $\bx_4$ & 0.04 & 0.13 & 0.19 & 99.1 & 0.02 & 0.21 & 0.19 & 93.8\\
\cmidrule{1-10}
\multirow{4}{*}{$(5,3)$} & $\bx_1$ & -0.04 & 0.16 & 0.29 & 99.8 & -0.01 & 0.32 & 0.29 & 93.7\\
 & $\bx_2$ & 0.00 & 0.16 & 0.23 & 99.0 & 0.01 & 0.22 & 0.23 & 96.2\\
 & $\bx_3$ & -0.04 & 0.07 & 0.14 & 99.4 & 0.00 & 0.14 & 0.14 & 95.7\\
 & $\bx_4$ & 0.04 & 0.12 & 0.21 & 99.4 & 0.03 & 0.22 & 0.21 & 94.3\\
\cmidrule{1-10}
\multirow{4}{*}{$(7,3)$} & $\bx_1$ & -0.04 & 0.16 & 0.35 & 99.9 & 0.02 & 0.38 & 0.34 & 92.9\\
 & $\bx_2$ & 0.01 & 0.16 & 0.26 & 99.3 & 0.02 & 0.25 & 0.26 & 96.1\\
 & $\bx_3$ & -0.04 & 0.07 & 0.17 & 99.7 & -0.02 & 0.18 & 0.17 & 94.6\\
 & $\bx_4$ & 0.03 & 0.13 & 0.23 & 99.6 & 0.03 & 0.24 & 0.23 & 94.8\\
\bottomrule
\end{tabular}
\endgroup
\end{table}

\begin{table}[ht!]
\centering
\caption{Pointwise bias, Monte Carlo standard deviation (MCSD), average estimated standard
error (AESE), and empirical pointwise coverage for the OW learner at four representative points.
}\label{tab:pointwise-OW}
\begingroup
\setlength{\tabcolsep}{6.8pt}
\renewcommand{\arraystretch}{0.70}
\setlength{\aboverulesep}{0.20ex}
\setlength{\belowrulesep}{0.21ex}
\begin{tabular}[t]{ccrrrrrrrr}
\toprule
\multicolumn{2}{c}{ } & \multicolumn{4}{c}{GCV} & \multicolumn{4}{c}{Unpenalized} \\
\cmidrule(l{3pt}r{3pt}){3-6} \cmidrule(l{3pt}r{3pt}){7-10}
$K$ & $\bx$ & BIAS & MCSD & AESE & CP(\%) & BIAS & MCSD & AESE & CP(\%)\\
\midrule
\multirow{4}{*}{$(4,3)$} & $\bx_1$ & -0.05 & 0.21 & 0.26 & 98.0 & -0.02 & 0.29 & 0.26 & 92.4\\
 & $\bx_2$ & 0.01 & 0.18 & 0.22 & 98.3 & 0.02 & 0.21 & 0.22 & 96.4\\
 & $\bx_3$ & -0.04 & 0.10 & 0.13 & 98.9 & 0.01 & 0.13 & 0.13 & 95.0\\
 & $\bx_4$ & 0.02 & 0.16 & 0.20 & 98.1 & 0.02 & 0.21 & 0.19 & 94.2\\
\cmidrule{1-10}
\multirow{4}{*}{$(5,3)$} & $\bx_1$ & -0.05 & 0.22 & 0.28 & 98.4 & -0.02 & 0.31 & 0.28 & 93.8\\
 & $\bx_2$ & 0.01 & 0.18 & 0.23 & 98.5 & 0.01 & 0.22 & 0.23 & 96.3\\
 & $\bx_3$ & -0.04 & 0.10 & 0.14 & 98.3 & 0.00 & 0.14 & 0.14 & 95.5\\
 & $\bx_4$ & 0.03 & 0.16 & 0.21 & 98.5 & 0.03 & 0.22 & 0.21 & 93.9\\
\cmidrule{1-10}
\multirow{4}{*}{$(7,3)$} & $\bx_1$ & -0.05 & 0.22 & 0.34 & 99.3 & 0.02 & 0.37 & 0.34 & 93.5\\
 & $\bx_2$ & 0.01 & 0.19 & 0.26 & 98.8 & 0.02 & 0.25 & 0.26 & 96.3\\
 & $\bx_3$ & -0.05 & 0.10 & 0.17 & 99.4 & -0.02 & 0.18 & 0.17 & 94.2\\
 & $\bx_4$ & 0.03 & 0.16 & 0.23 & 99.2 & 0.03 & 0.24 & 0.23 & 95.2\\
\bottomrule
\end{tabular}
\endgroup
\end{table}

\begin{table}[ht!]
\centering
\caption{Pointwise bias, Monte Carlo standard deviation (MCSD), average estimated standard error (AESE), and empirical pointwise coverage for the TTR learner at four representative points. 
}\label{tab:pointwise-TTR}
\begingroup
\setlength{\tabcolsep}{6.8pt}
\renewcommand{\arraystretch}{0.70}
\setlength{\aboverulesep}{0.20ex}
\setlength{\belowrulesep}{0.21ex}
\begin{tabular}[t]{ccrrrrrrrr}
\toprule
\multicolumn{2}{c}{ } & \multicolumn{4}{c}{GCV} & \multicolumn{4}{c}{Unpenalized} \\
\cmidrule(l{3pt}r{3pt}){3-6} \cmidrule(l{3pt}r{3pt}){7-10}
$K$ & $\bx$ & BIAS & MCSD & AESE & CP(\%) & BIAS & MCSD & AESE & CP(\%)\\
\midrule
\multirow{4}{*}{$(4,3)$} & $\bx_1$ & -0.04 & 0.20 & 0.24 & 97.7 & -0.01 & 0.28 & 0.24 & 91.8\\
 & $\bx_2$ & 0.01 & 0.17 & 0.21 & 98.5 & 0.02 & 0.20 & 0.21 & 96.5\\
 & $\bx_3$ & -0.03 & 0.09 & 0.12 & 98.6 & 0.02 & 0.12 & 0.12 & 95.0\\
 & $\bx_4$ & 0.01 & 0.15 & 0.18 & 98.3 & 0.01 & 0.19 & 0.18 & 93.2\\
\cmidrule{1-10}
\multirow{4}{*}{$(5,3)$} & $\bx_1$ & -0.04 & 0.21 & 0.27 & 98.7 & -0.01 & 0.30 & 0.27 & 92.3\\
 & $\bx_2$ & 0.01 & 0.17 & 0.22 & 98.6 & 0.01 & 0.21 & 0.22 & 96.3\\
 & $\bx_3$ & -0.04 & 0.09 & 0.13 & 98.7 & 0.01 & 0.13 & 0.13 & 95.6\\
 & $\bx_4$ & 0.00 & 0.14 & 0.19 & 98.6 & 0.01 & 0.20 & 0.19 & 93.7\\
\cmidrule{1-10}
\multirow{4}{*}{$(7,3)$} & $\bx_1$ & -0.03 & 0.21 & 0.32 & 99.4 & 0.02 & 0.35 & 0.32 & 92.3\\
 & $\bx_2$ & 0.01 & 0.17 & 0.24 & 99.5 & 0.02 & 0.23 & 0.24 & 96.8\\
 & $\bx_3$ & -0.04 & 0.09 & 0.16 & 99.5 & -0.00 & 0.16 & 0.16 & 94.1\\
 & $\bx_4$ & 0.01 & 0.14 & 0.21 & 99.3 & 0.01 & 0.22 & 0.21 & 94.7\\
\bottomrule
\end{tabular}
\endgroup
\end{table}

\begin{table}[ht!]
\centering
\caption{Pointwise bias, Monte Carlo standard deviation (MCSD), average estimated standard error (AESE), and empirical pointwise coverage for the TOW learner at four representative points. 
}\label{tab:pointwise-TOW}
\begingroup
\setlength{\tabcolsep}{6.8pt}
\renewcommand{\arraystretch}{0.70}
\setlength{\aboverulesep}{0.20ex}
\setlength{\belowrulesep}{0.21ex}
\begin{tabular}[t]{ccrrrrrrrr}
\toprule
\multicolumn{2}{c}{ } & \multicolumn{4}{c}{GCV} & \multicolumn{4}{c}{Unpenalized} \\
\cmidrule(l{3pt}r{3pt}){3-6} \cmidrule(l{3pt}r{3pt}){7-10}
$K$ & $\bx$ & BIAS & MCSD & AESE & CP(\%) & BIAS & MCSD & AESE & CP(\%)\\
\midrule
\multirow{4}{*}{$(4,3)$} & $\bx_{1}$ & -0.04 & 0.20 & 0.24 & 97.7 & -0.00 & 0.28 & 0.24 & 91.8\\
 & $\bx_{2}$ & 0.01 & 0.17 & 0.21 & 98.6 & 0.02 & 0.20 & 0.21 & 96.3\\
 & $\bx_{3}$ & -0.04 & 0.09 & 0.12 & 99.0 & 0.02 & 0.12 & 0.12 & 94.9\\
 & $\bx_{4}$ & 0.01 & 0.14 & 0.18 & 98.4 & 0.01 & 0.19 & 0.18 & 93.4\\
\cmidrule{1-10}
\multirow{4}{*}{$(5,3)$} & $\bx_{1}$ & -0.05 & 0.20 & 0.27 & 98.6 & -0.01 & 0.30 & 0.27 & 92.5\\
 & $\bx_{2}$ & 0.01 & 0.17 & 0.22 & 98.6 & 0.01 & 0.21 & 0.22 & 96.2\\
 & $\bx_{3}$ & -0.04 & 0.09 & 0.13 & 98.9 & 0.01 & 0.13 & 0.13 & 95.3\\
 & $\bx_{4}$ & 0.00 & 0.14 & 0.19 & 98.9 & 0.01 & 0.20 & 0.19 & 93.7\\
\cmidrule{1-10}
\multirow{4}{*}{$(7,3)$} & $\bx_{1}$ & -0.04 & 0.20 & 0.32 & 99.4 & 0.02 & 0.35 & 0.31 & 92.5\\
 & $\bx_{2}$ & 0.01 & 0.17 & 0.24 & 99.6 & 0.02 & 0.23 & 0.24 & 96.6\\
 & $\bx_{3}$ & -0.04 & 0.09 & 0.16 & 99.6 & -0.00 & 0.16 & 0.16 & 93.8\\
 & $\bx_{4}$ & 0.01 & 0.14 & 0.21 & 99.6 & 0.01 & 0.22 & 0.21 & 94.6\\
\bottomrule
\end{tabular}
\endgroup
\end{table}

\begin{table}[ht!]
\centering
\caption{Empirical coverage of uniform confidence bands across $25^3$ equally spaced grid points in the cube $[-1,1]^3$. EP denotes empirical simultaneous coverage probability, reported as a percentage. GNC denotes grid noncoverage, defined as $10^4 \times (1-\bar{C}_{\mathrm{grid}})$, where $\bar{C}_{\mathrm{grid}}$ is the average fraction of evaluation-grid points covered by the uniform band. Smaller GNC indicates better grid-point coverage.}
\label{tab:uniform-coverage}

\begingroup
\setlength{\tabcolsep}{6.8pt}        
\renewcommand{\arraystretch}{0.60}   
\setlength{\aboverulesep}{0.20ex}    
\setlength{\belowrulesep}{0.21ex}    

\begin{tabular}{cclcccccc}
\toprule
$(K_1,K_2)$ & Penalty & Metric & TR & TW & OW & TTR & TTW & TOW \\
\midrule
\multirow{4}{*}{$(4,3)$}
& \multirow{2}{*}{GCV}
& EP  & 95.2\% & 94.6\% & 95.4\% & 97.8\% & 98.3\% & 97.9\% \\
& & GNC & 1.644 & 5.349 & 1.646 & 0.304 & 0.178 & 0.349 \\[0.5ex]
& \multirow{2}{*}{Unpenalized}
& EP  & 89.7\% & 89.5\% & 89.5\% & 89.3\% & 89.3\% & 88.8\% \\
& & GNC & 3.253 & 3.166 & 3.452 & 3.744 & 3.560 & 3.980 \\
\cmidrule(lr){2-9}

\multirow{4}{*}{$(5,3)$}
& \multirow{2}{*}{GCV}
& EP  & 96.2\% & 96.4\% & 96.9\% & 99.6\% & 99.9\% & 99.4\% \\
& & GNC & 1.761 & 4.842 & 1.620 & 0.066 & 0.003 & 0.070 \\[0.5ex]
& \multirow{2}{*}{Unpenalized}
& EP  & 90.0\% & 90.4\% & 89.5\% & 89.6\% & 89.3\% & 89.5\% \\
& & GNC & 2.462 & 2.304 & 2.551 & 2.930 & 2.644 & 3.081 \\
\cmidrule(lr){2-9}

\multirow{4}{*}{$(7,3)$}
& \multirow{2}{*}{GCV}
& EP  & 97.4\% & 97.1\% & 97.3\% & 99.9\% & 100.0\% & 99.9\% \\
& & GNC & 1.367 & 3.532 & 1.219 & 0.006 & 0.000 & 0.003 \\[0.5ex]
& \multirow{2}{*}{Unpenalized}
& EP  & 90.2\% & 89.9\% & 90.3\% & 88.2\% & 89.1\% & 88.9\% \\
& & GNC & 1.126 & 0.998 & 1.171 & 1.420 & 1.199 & 1.473 \\
\bottomrule
\end{tabular}
\endgroup
\end{table}

\clearpage

\subsection{CARDIA application}

\begin{figure}[ht!]
   \centering
   \includegraphics[width=0.88\linewidth]
   {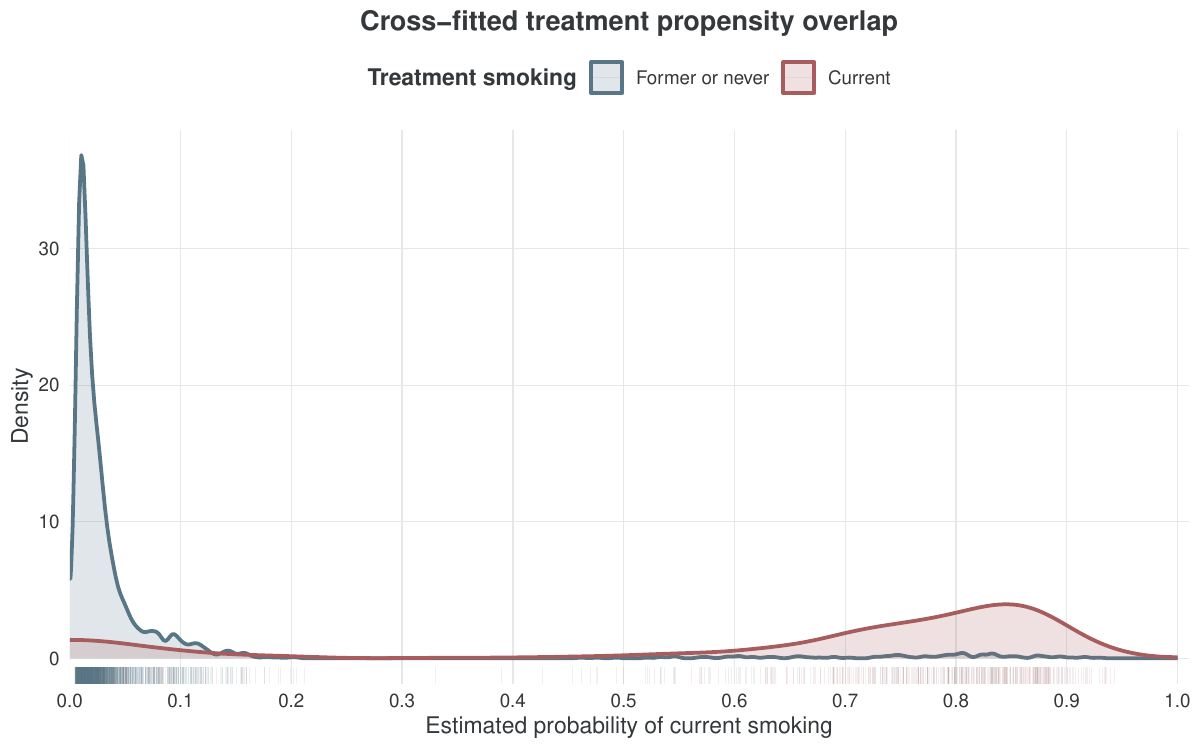}
   \caption{Cross-fitted treatment propensity overlap by Year-20 smoking status. The horizontal axis is the estimated probability of current smoking; rug marks show the observed covariate profiles.}
   \label{fig:cardia-imat-propensity-overlap}
\end{figure}

\begin{figure}[ht!]
   \centering
   \includegraphics[width=0.95\linewidth]
   {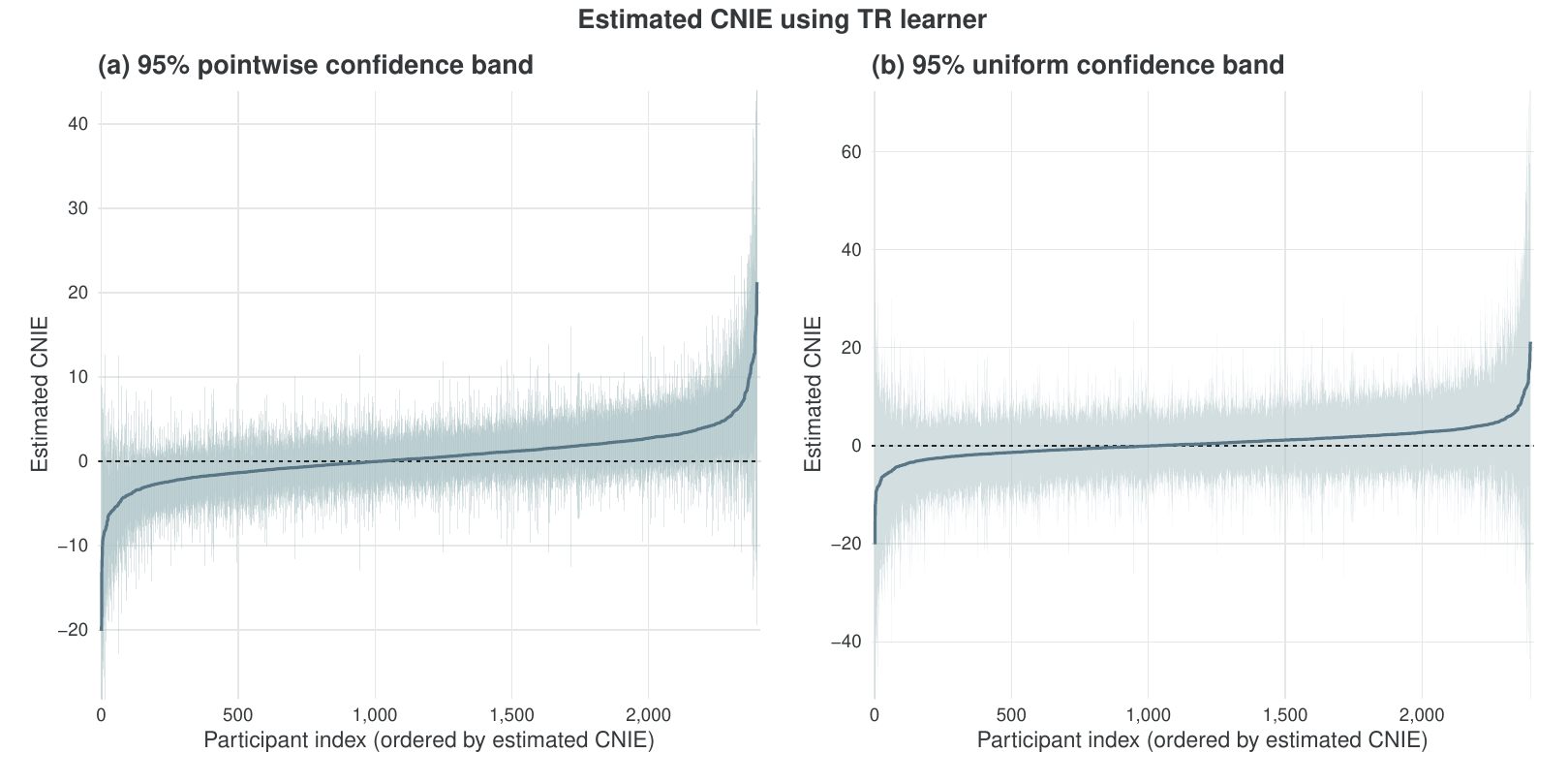}
   \caption{CNIE estimates at the observed covariate profiles in the CARDIA application, obtained using the TR learner, along with 95\% pointwise and uniform confidence bands.}
   \label{fig:cardia-imat-prediction-tr}
\end{figure}

\begin{figure}[ht!]
   \centering
   \includegraphics[width=0.95\linewidth]
   {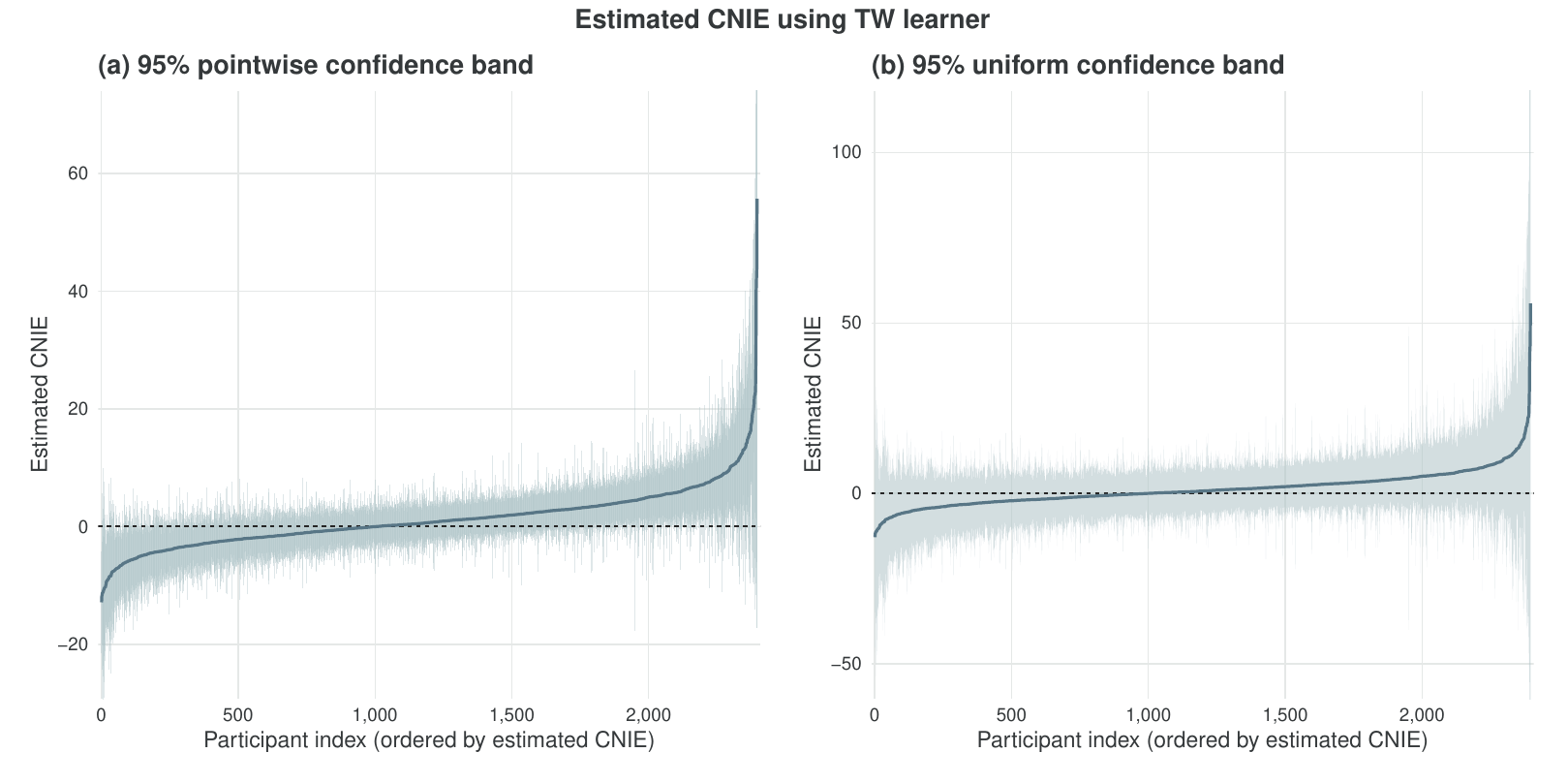}
   \caption{CNIE estimates at the observed covariate profiles in the CARDIA application, obtained using the TW learner, along with 95\% pointwise and uniform confidence bands.}
   \label{fig:cardia-imat-prediction-tw}
\end{figure}

\begin{figure}[ht!]
   \centering
   \includegraphics[width=0.95\linewidth]
   {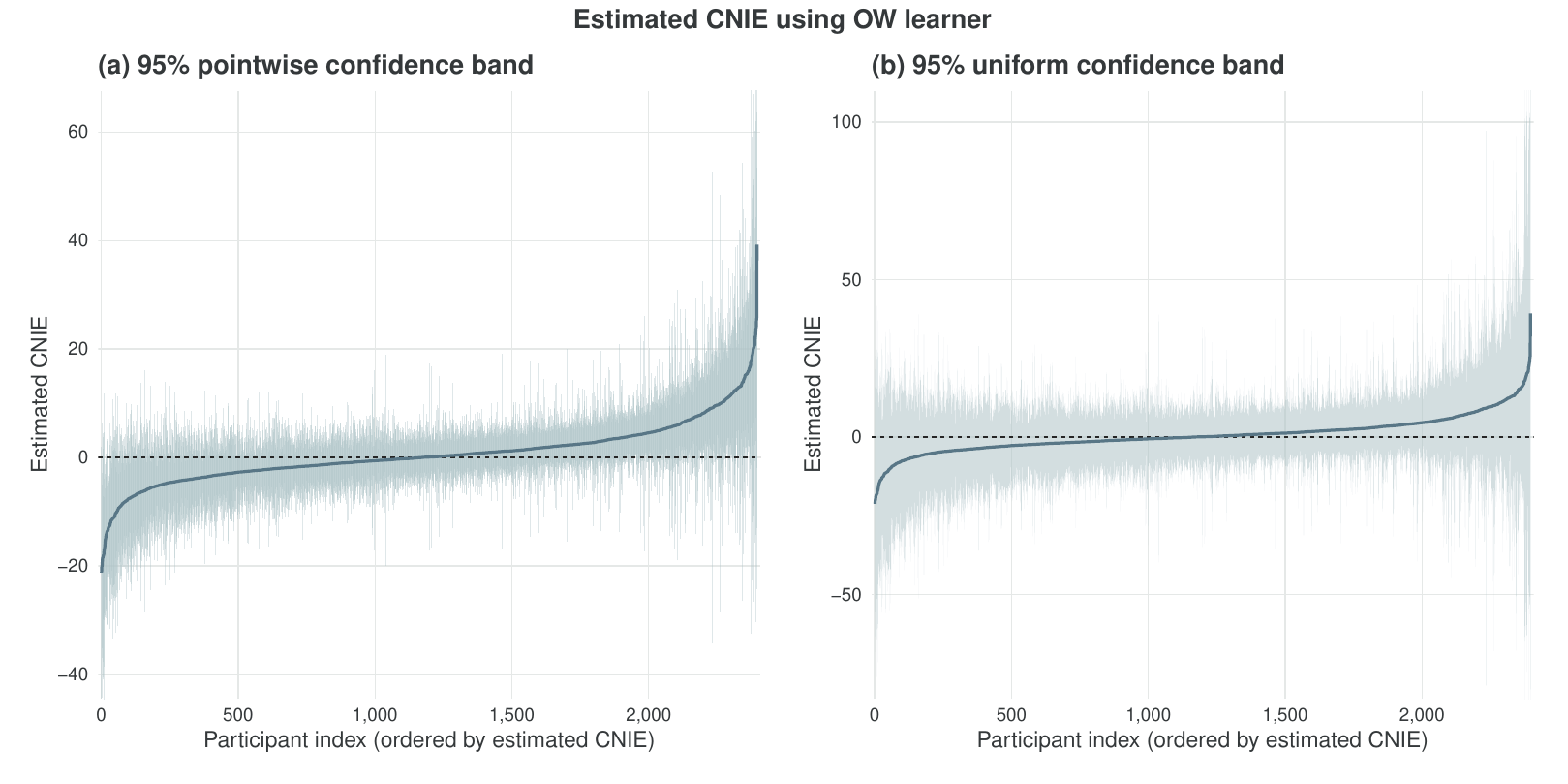}
   \caption{CNIE estimates at the observed covariate profiles in the CARDIA application, obtained using the OW learner, along with 95\% pointwise and uniform confidence bands.}
   \label{fig:cardia-imat-prediction-ow}
\end{figure}

\begin{figure}[ht!]
   \centering
   \includegraphics[width=0.95\linewidth]
   {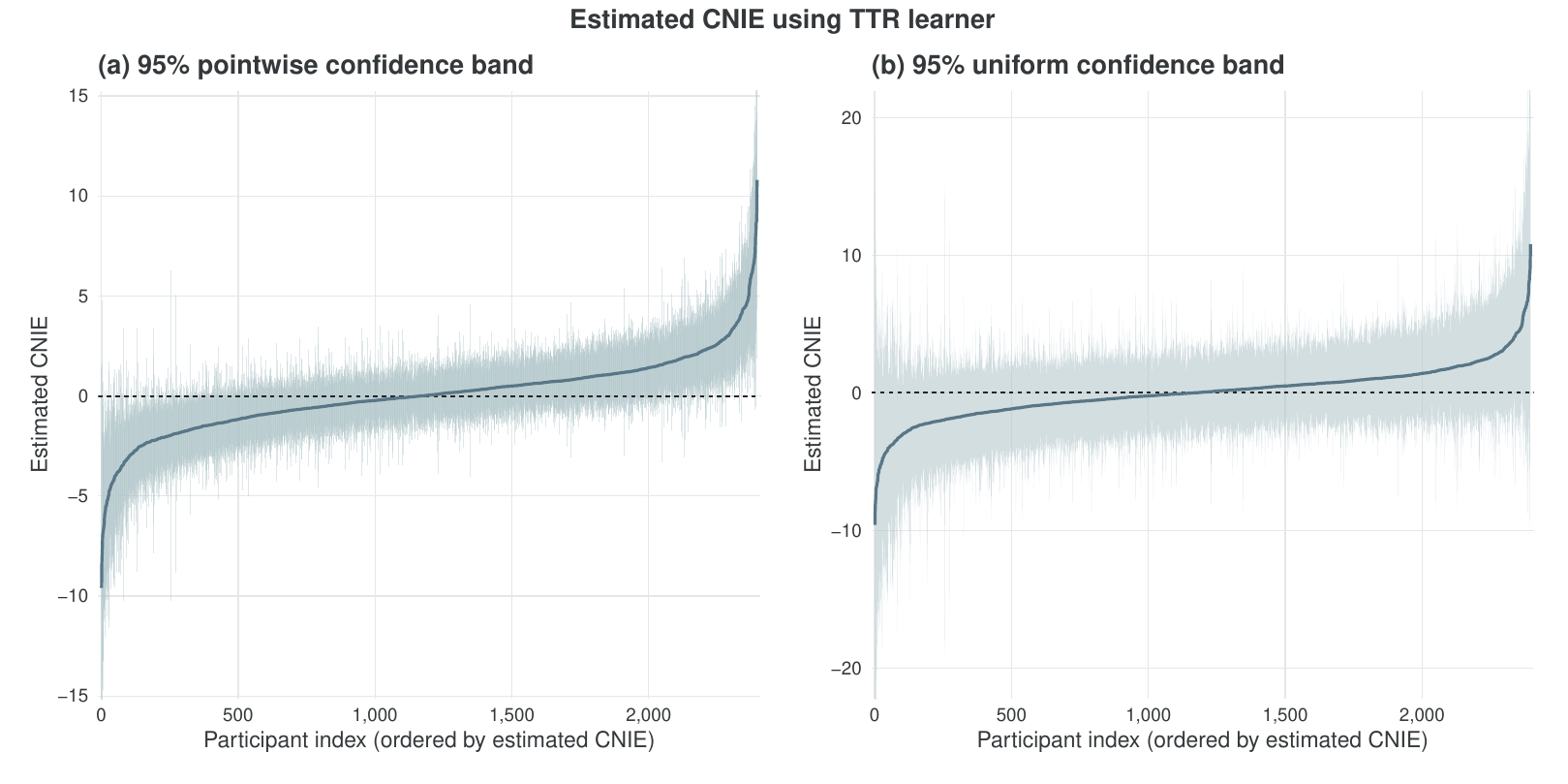}
   \caption{CNIE estimates at the observed covariate profiles in the CARDIA application, obtained using the TTR learner, along with 95\% pointwise and uniform confidence bands.}
   \label{fig:cardia-imat-prediction-ttr}
\end{figure}

\begin{figure}[ht!]
    \centering
    \includegraphics[width=0.95\linewidth]
    {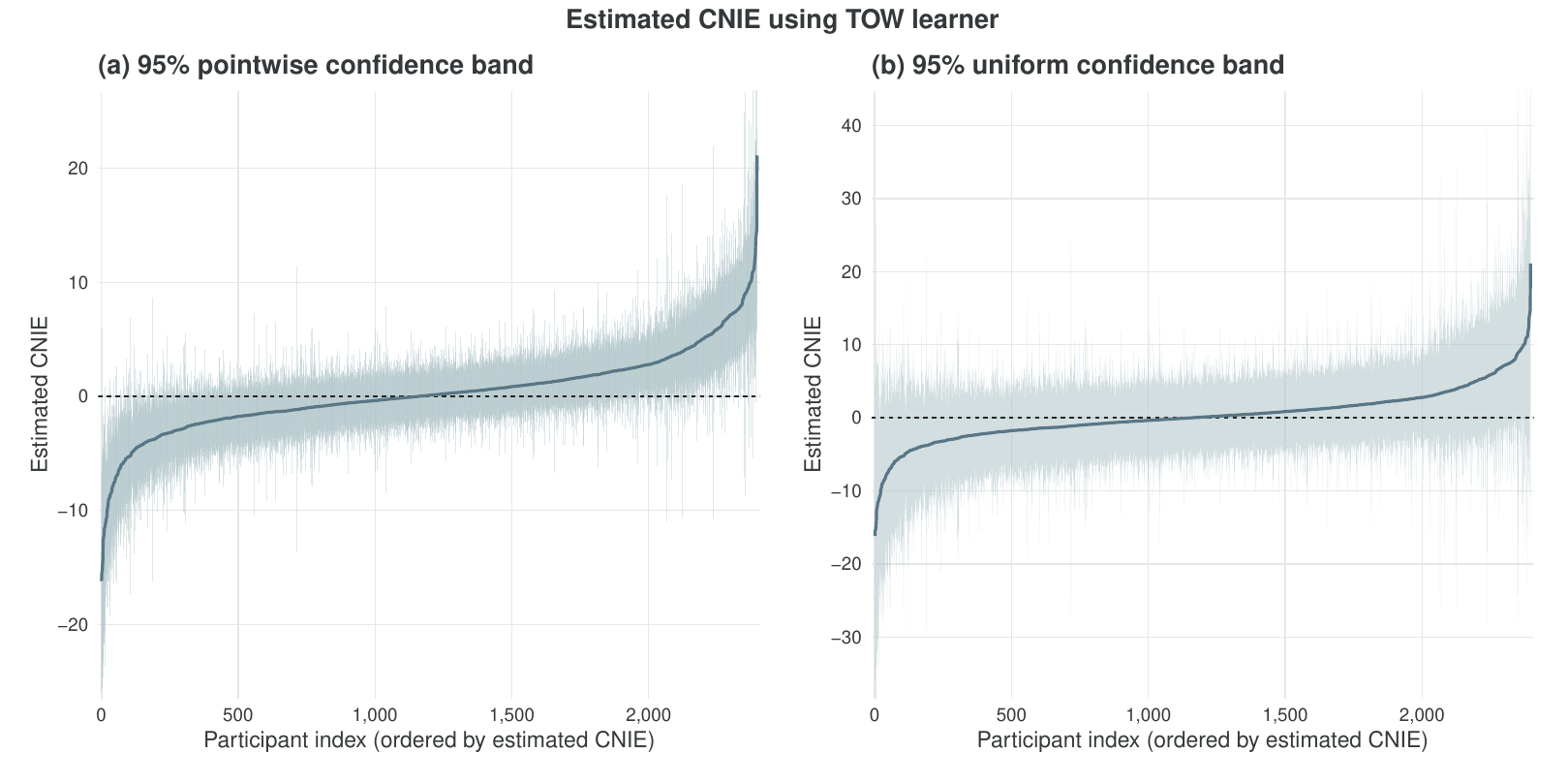}
    \caption{CNIE estimates at the observed covariate profiles in the CARDIA application, obtained using
    the TOW learner, along with 95\% pointwise confidence intervals and a 95\% simultaneous confidence band.}
    \label{fig:cardia-imat-prediction-tow}
\end{figure}

\begin{figure}[ht!]
   \centering
   \includegraphics[width=0.85\linewidth]
   {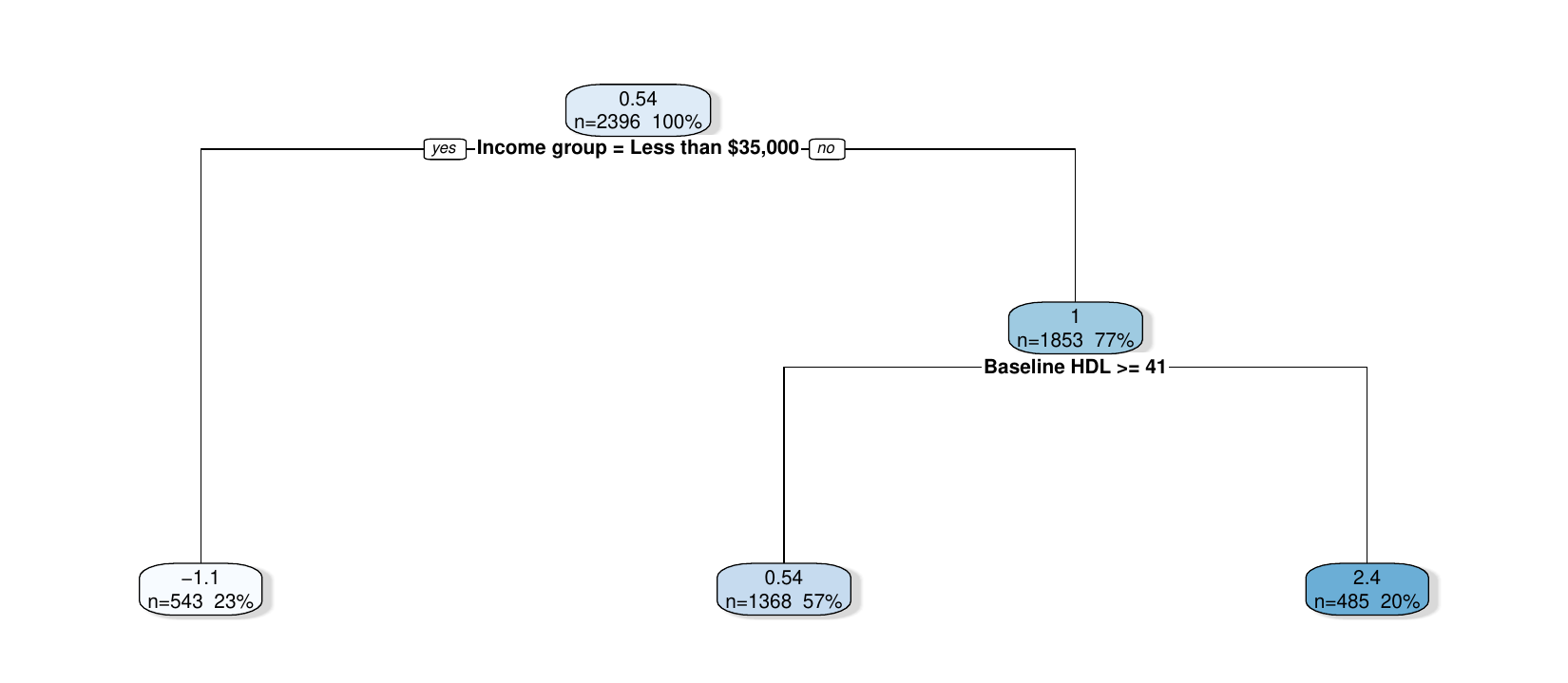}
   \caption{Fit-the-fit plot based on a decision tree summary of the estimated CNIEs from the TR learner in the CARDIA application.}
   \label{fig:cardia-imat-tree-tr}
\end{figure}

\begin{figure}[ht!]
   \centering
   \includegraphics[width=0.85\linewidth]
   {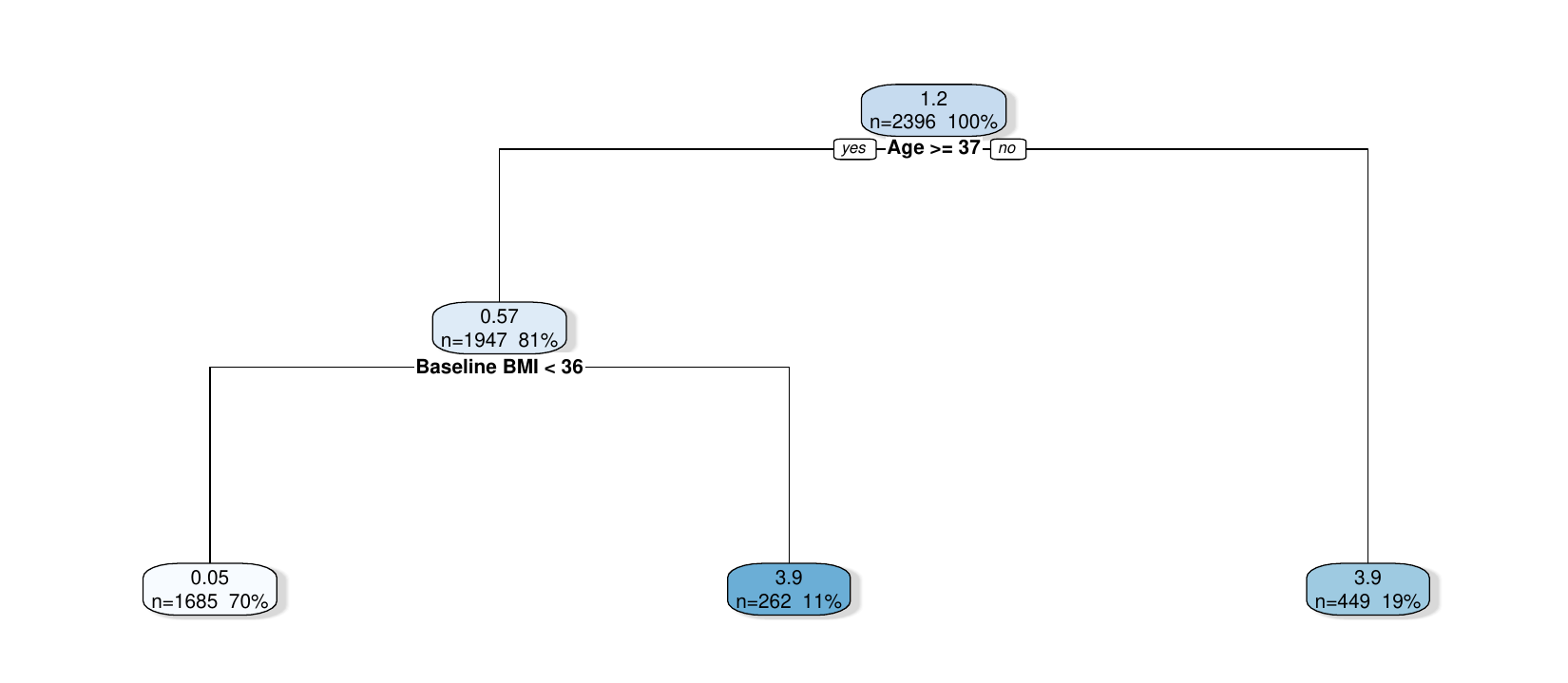}
   \caption{Fit-the-fit plot based on a decision tree summary of the estimated CNIEs from the TW learner in the CARDIA application.}
   \label{fig:cardia-imat-tree-tw}
\end{figure}

\begin{figure}[ht!]
   \centering
   \includegraphics[width=0.85\linewidth]
   {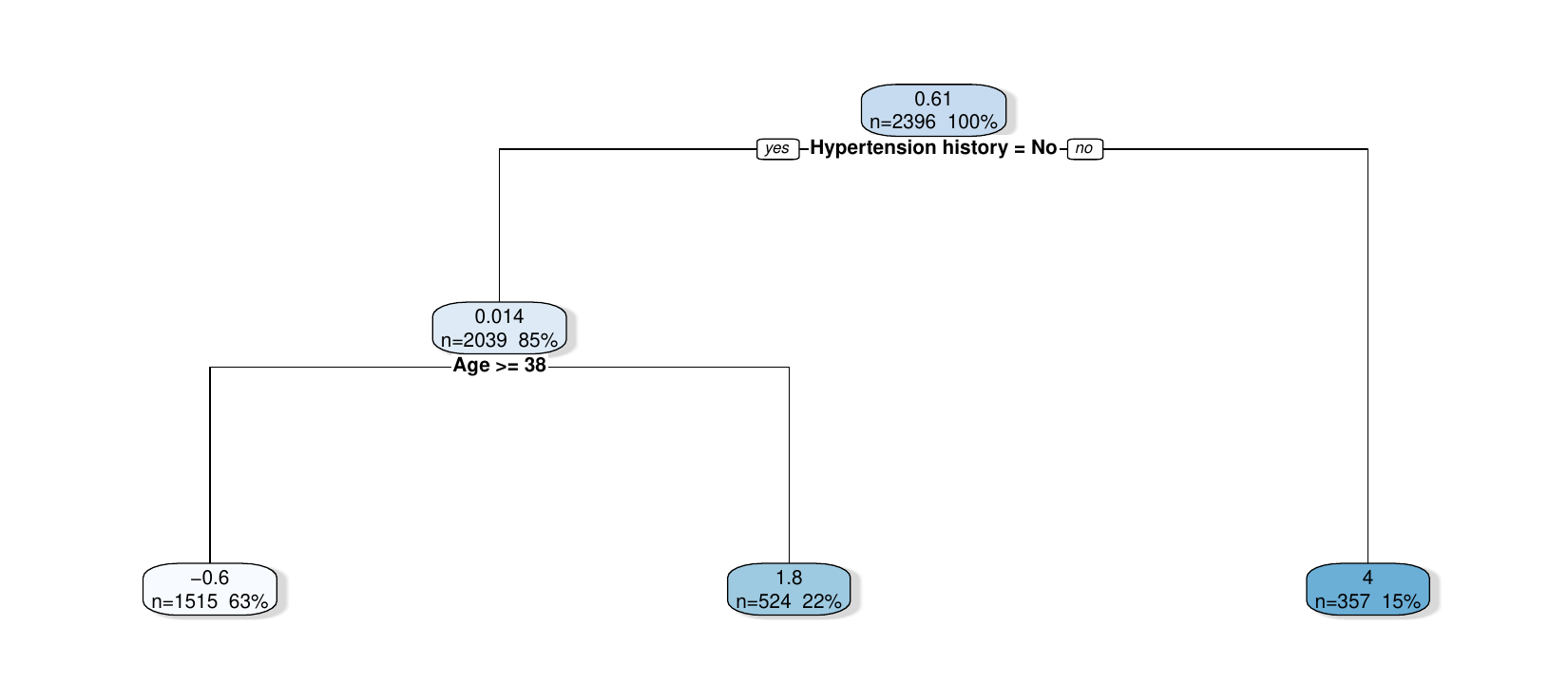}
   \caption{Fit-the-fit plot based on a decision tree summary of the estimated CNIEs from the OW learner in the CARDIA application.}
   \label{fig:cardia-imat-tree-ow}
\end{figure}

\begin{figure}[ht!]
   \centering
   \includegraphics[width=0.85\linewidth]
   {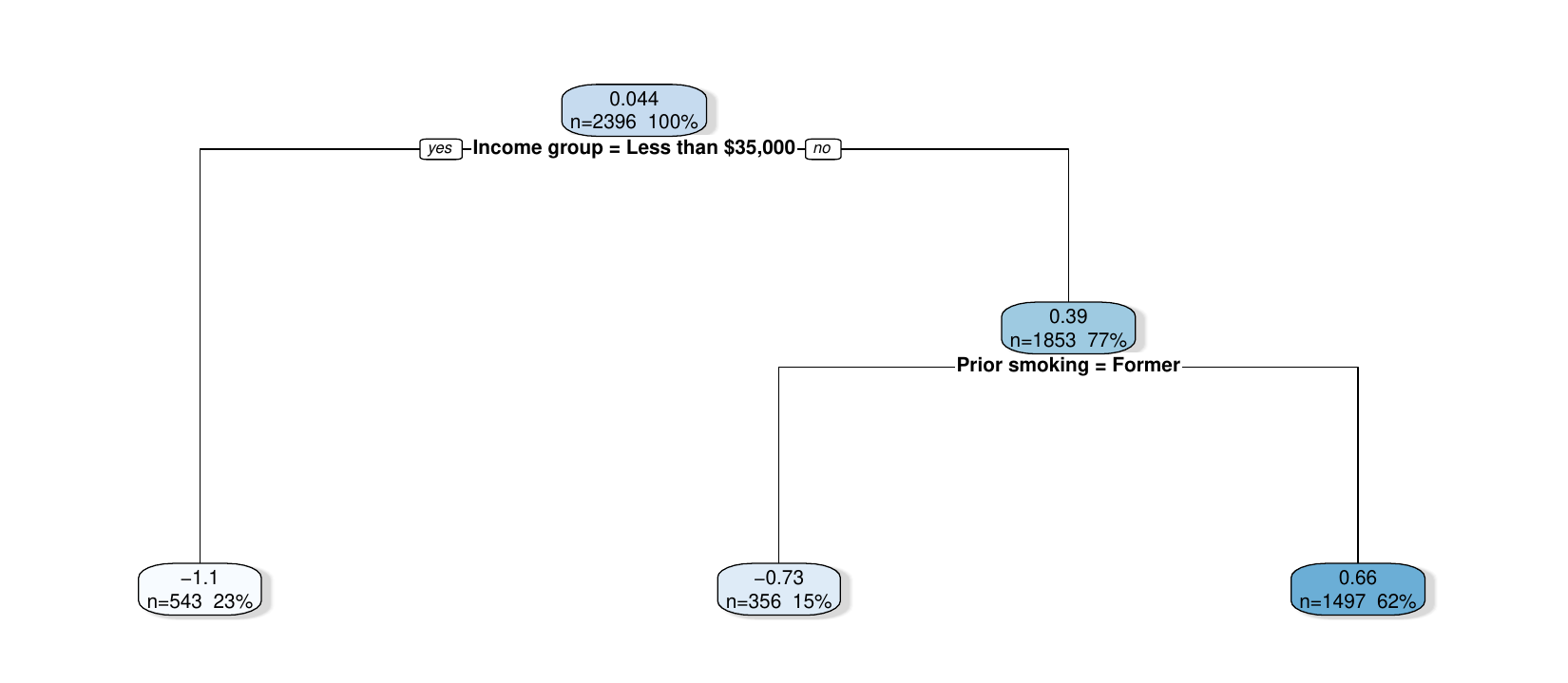}
   \caption{Fit-the-fit plot based on a decision tree summary of the estimated CNIEs from the TTR learner in the CARDIA application.}
   \label{fig:cardia-imat-tree-ttr}
\end{figure}

\begin{figure}[ht!]
    \centering
    \includegraphics[width=0.85\linewidth]
    {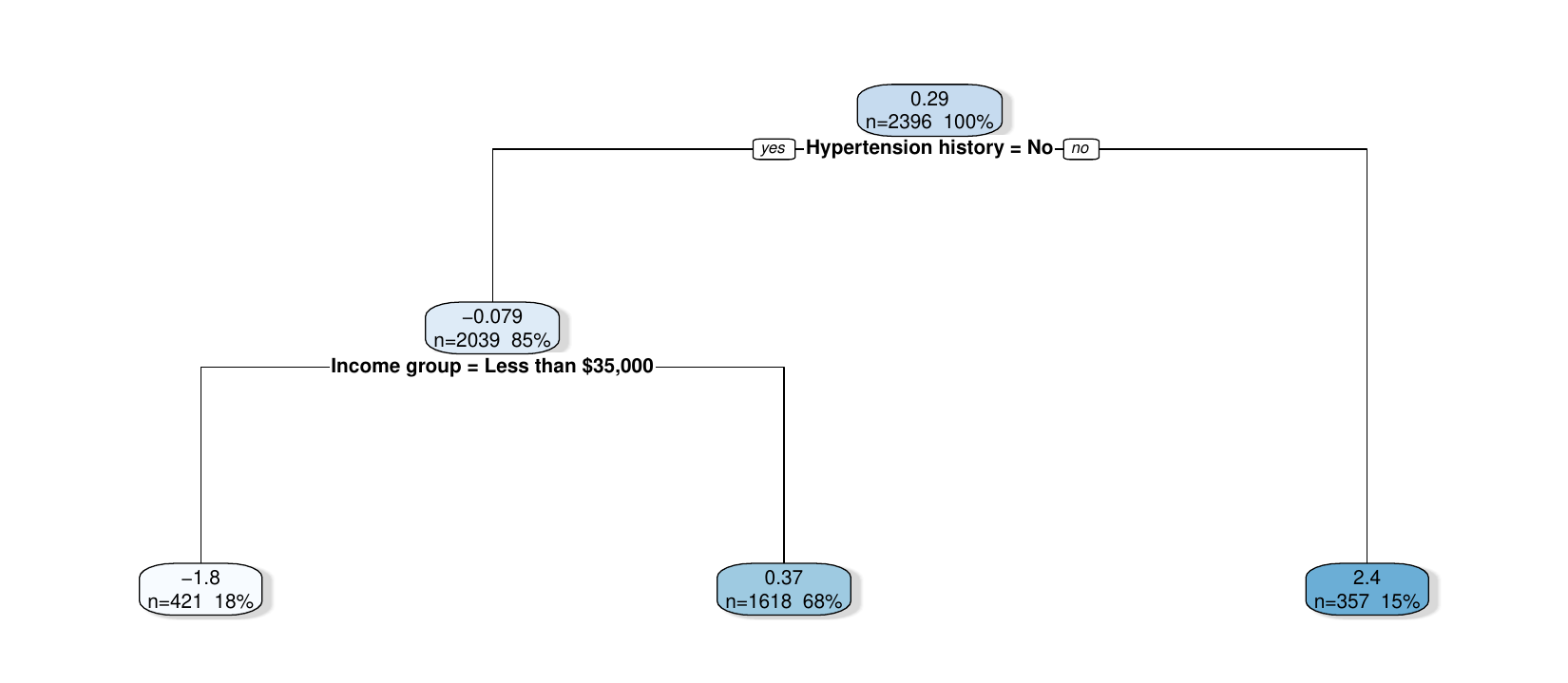}
    \caption{Fit-the-fit plot based on a decision tree summary of the estimated
    CNIEs from the TOW learner in the CARDIA application.}
    \label{fig:cardia-imat-tree-tow}
\end{figure}
\clearpage

\subsection{PSACR application}

\begin{figure}[ht!]
    \centering
    \includegraphics[width=0.95\linewidth]{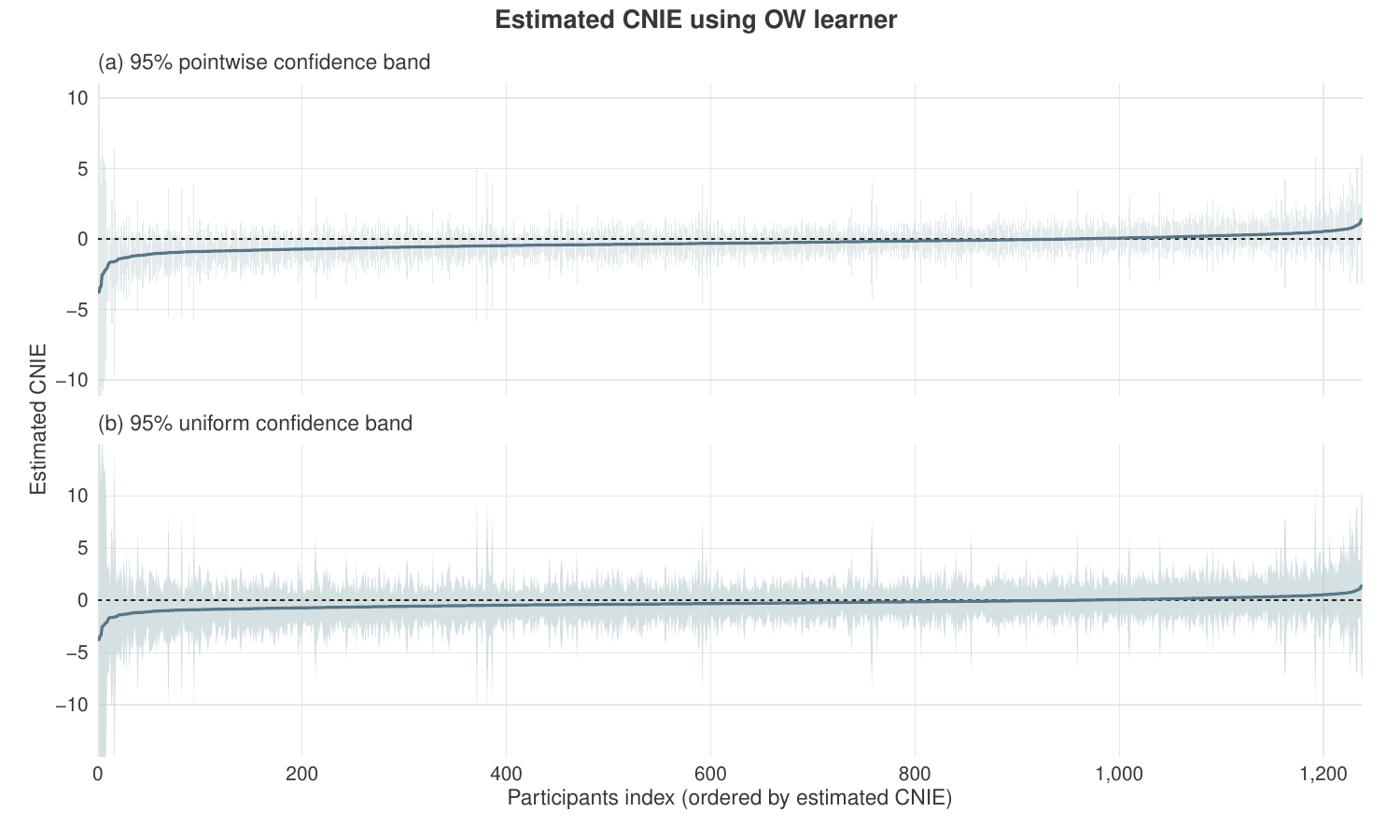}
    \caption{CNIE estimates at the observed covariate profiles in the PSACR application, obtained using
    the OW learner, along with pointwise and uniform confidence bands.}
    \label{fig:psacr-prediction-ow}
\end{figure}

\begin{figure}[ht!]
    \centering
    \includegraphics[width=0.95\linewidth]
    {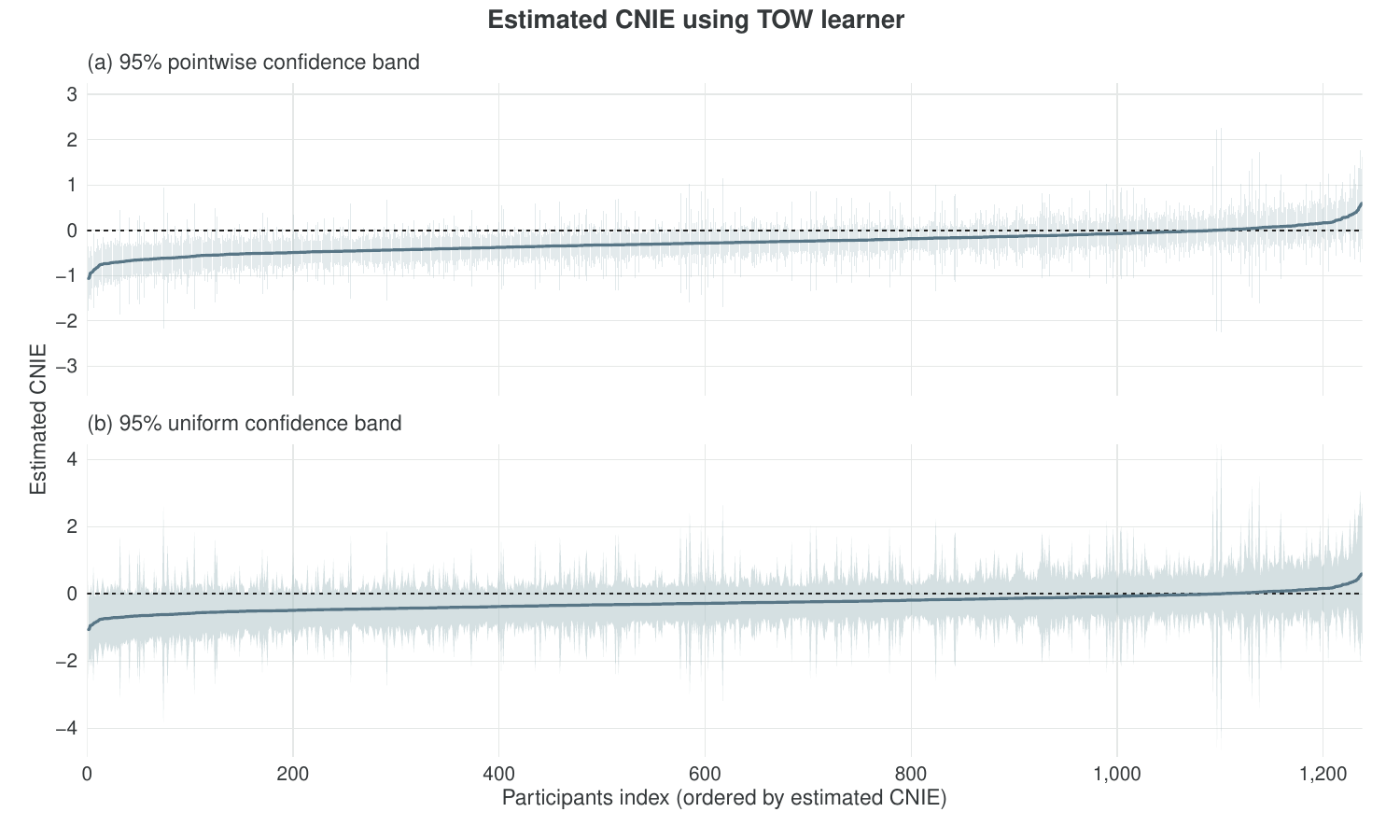}
    \caption{CNIE estimates at the observed covariate profiles in the PSACR application, obtained using
    the TOW learner, along with pointwise and uniform confidence bands.}
    \label{fig:psacr-prediction-tow}
\end{figure}

\begin{figure}[ht!]
    \centering
    \includegraphics[width=0.95\linewidth]{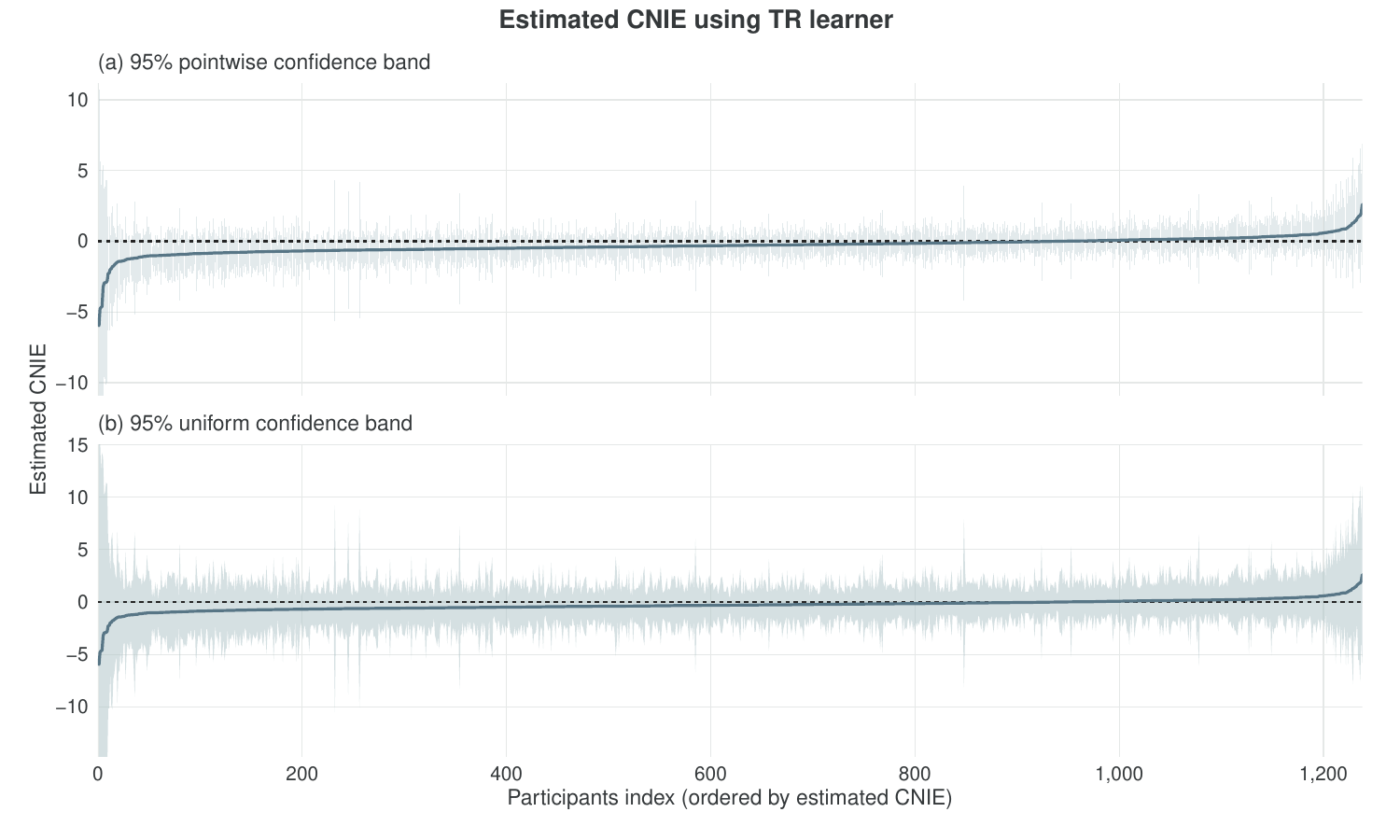}
    \caption{CNIE estimates at the observed covariate profiles in the PSACR application, obtained using
    the TR learner, along with pointwise and uniform confidence bands.}
    \label{fig:psacr-prediction-tr}
\end{figure}

\begin{figure}[ht!]
    \centering
    \includegraphics[width=0.95\linewidth]{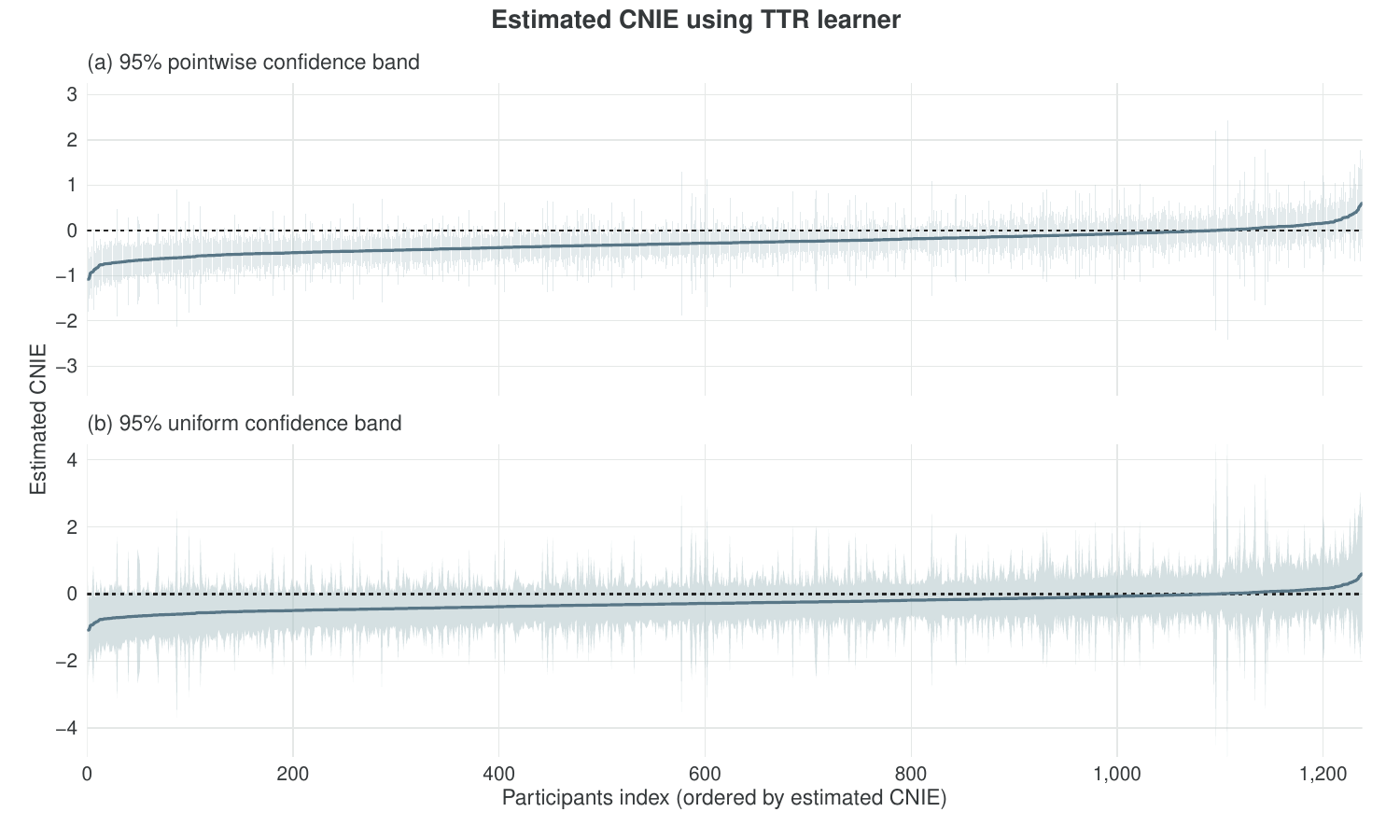}
    \caption{CNIE estimates at the observed covariate profiles in the PSACR application, obtained using
    the TTR learner, along with pointwise and uniform confidence bands.}
    \label{fig:psacr-prediction-ttr}
\end{figure}

\begin{figure}[ht!]
    \centering
    \includegraphics[width=0.95\linewidth]{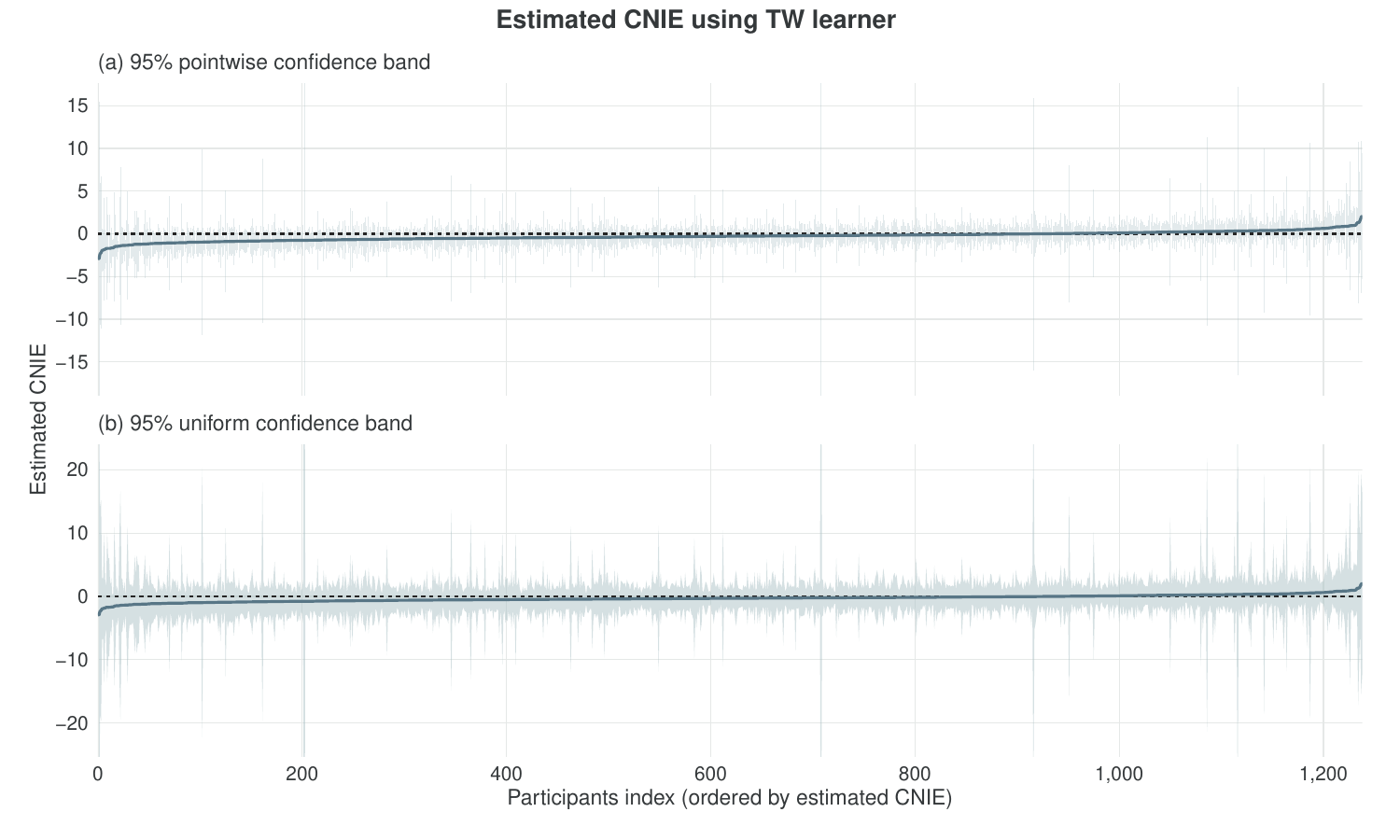}
    \caption{CNIE estimates at the observed covariate profiles in the PSACR application, obtained using
    the TW learner, along with pointwise and uniform confidence bands.}
    \label{fig:psacr-prediction-psw}
\end{figure}

\begin{figure}[ht!]
    \centering
    \includegraphics[width=0.95\linewidth]{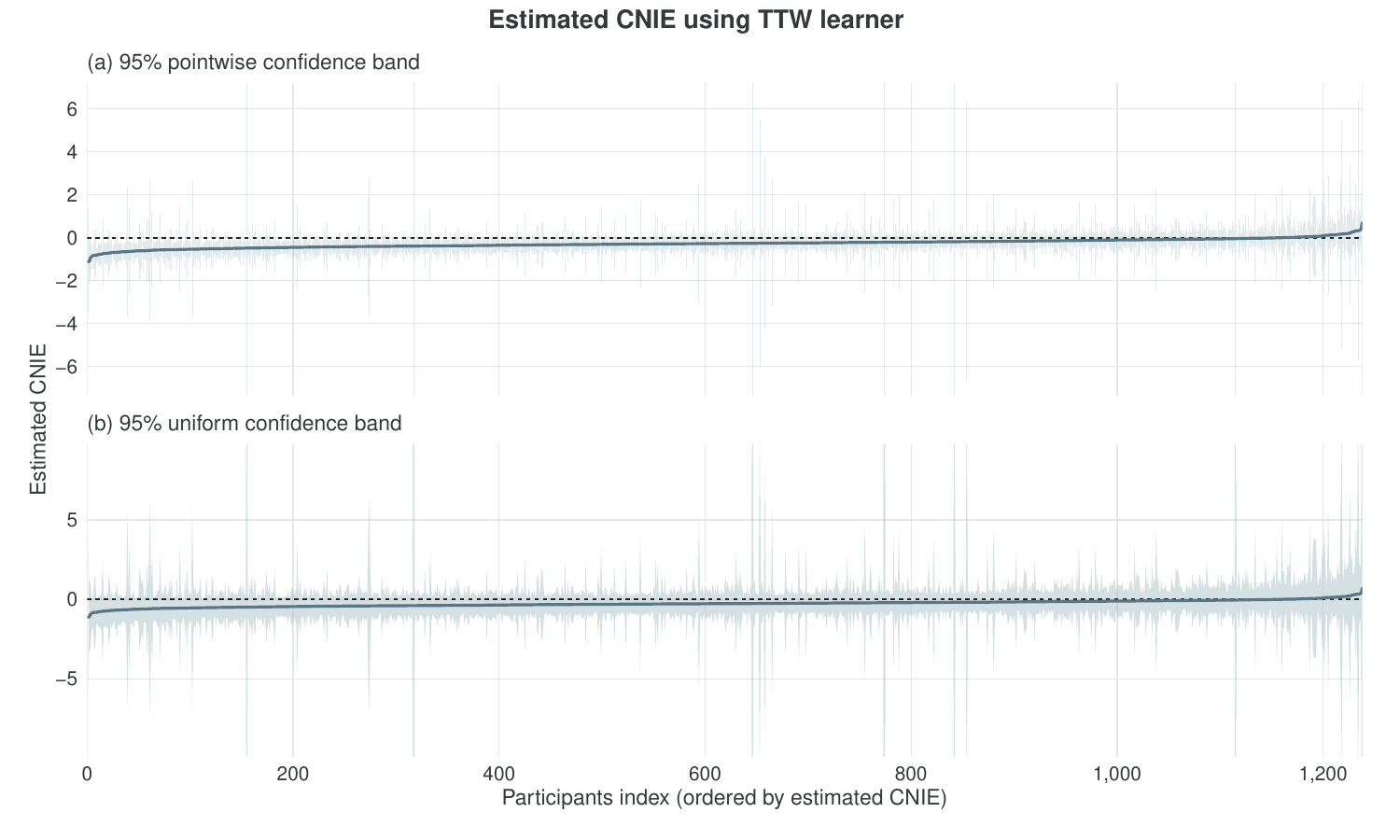}
    \caption{CNIE estimates at the observed covariate profiles in the PSACR application, obtained using
    the TTW learner, along with pointwise and uniform confidence bands.}
    \label{fig:psacr-prediction-tpsw}
\end{figure}

\begin{figure}[ht!]
    \centering
    \includegraphics[width=0.85\linewidth]
    {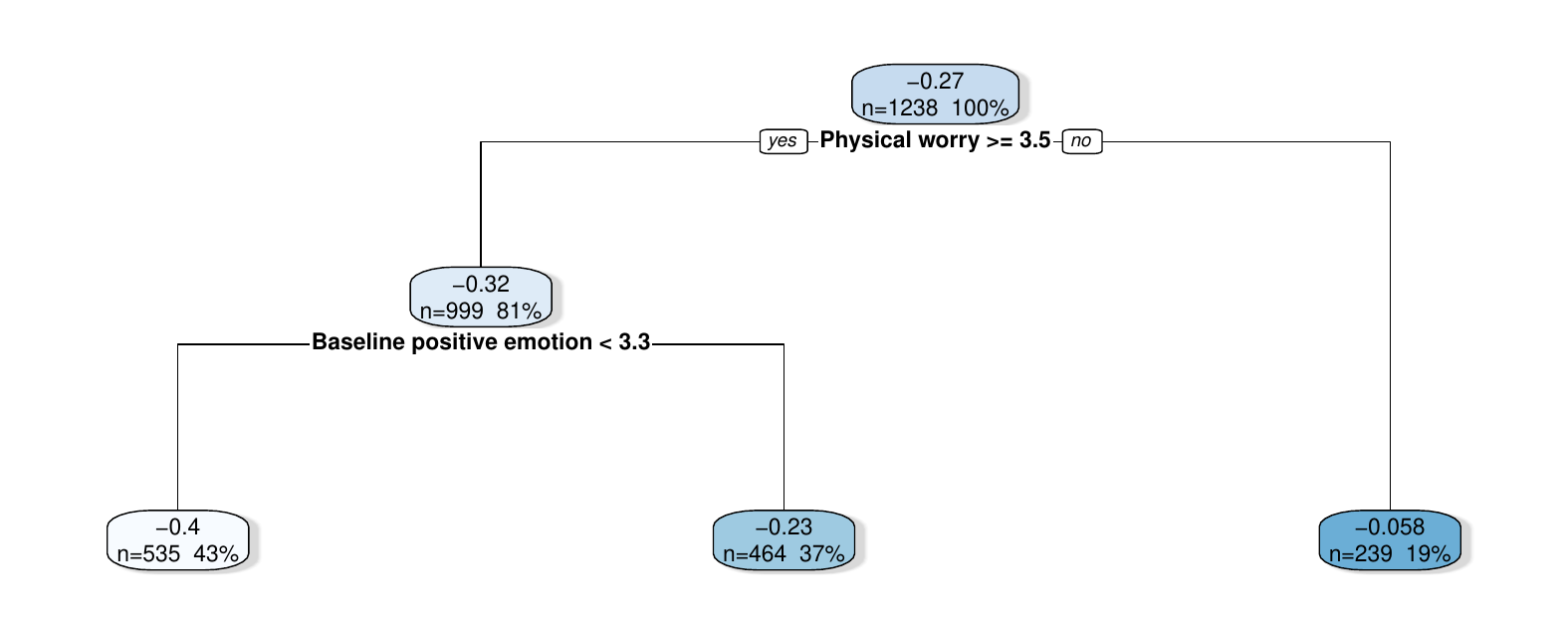}
    \caption{Fit-the-fit plot based on a decision tree summary of the estimated
    CNIEs from the TOW learner in the PSACR application.}
    \label{fig:psacr-tree-tow}
\end{figure}

\begin{figure}[ht!]
    \centering
    \includegraphics[width=0.85\linewidth]{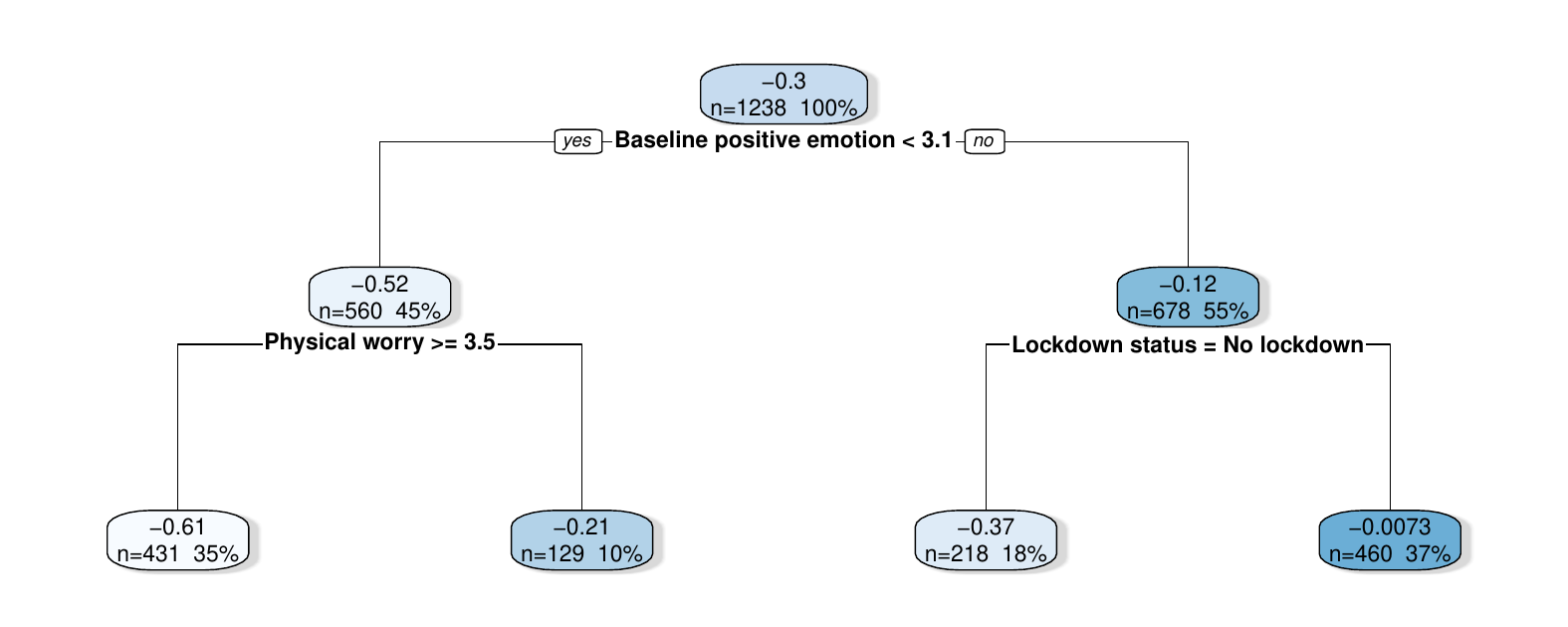}
    \caption{Fit-the-fit plot based on a decision tree summary of the estimated
    CNIEs from the OW learner in the PSACR application.}
    \label{fig:psacr-tree-ow}
\end{figure}

\begin{figure}[ht!]
    \centering
    \includegraphics[width=0.85\linewidth]{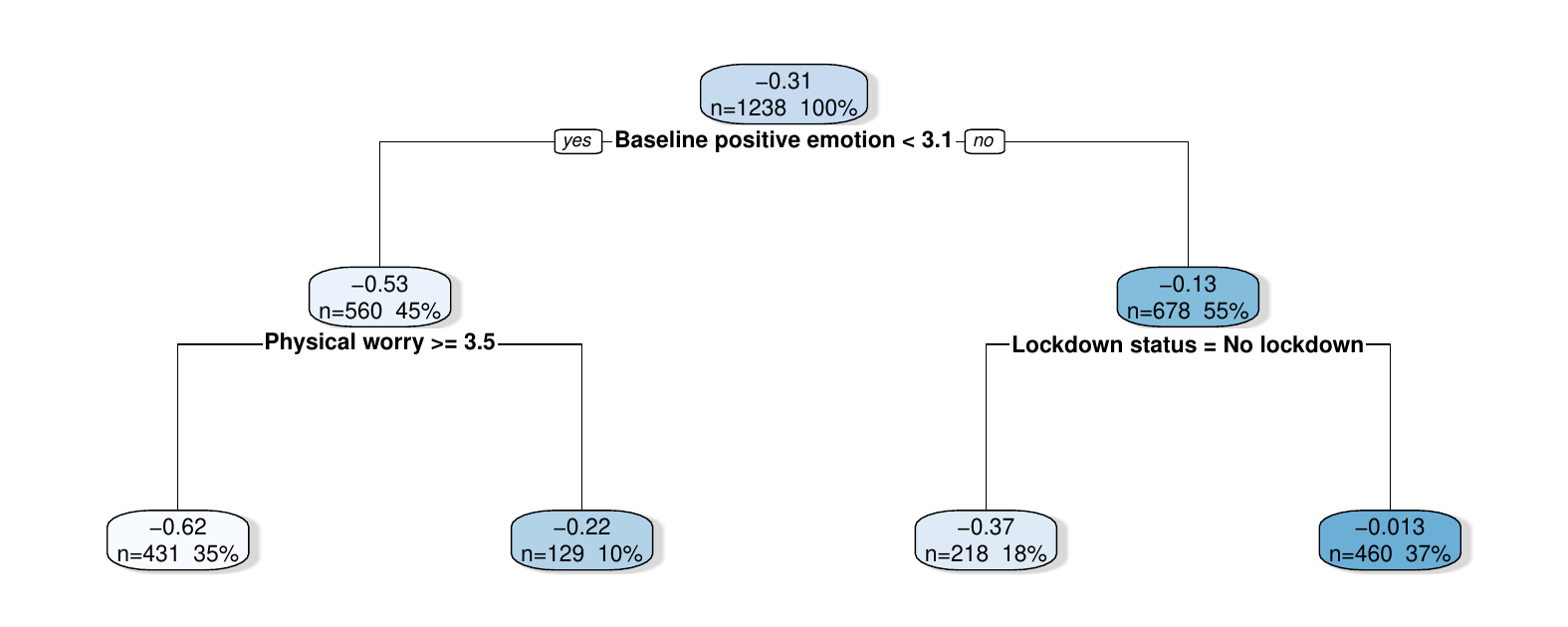}
    \caption{Fit-the-fit plot based on a decision tree summary of the estimated
    CNIEs from the TR learner in the PSACR application.}
    \label{fig:psacr-tree-tr}
\end{figure}

\begin{figure}[ht!]
    \centering
    \includegraphics[width=0.85\linewidth]{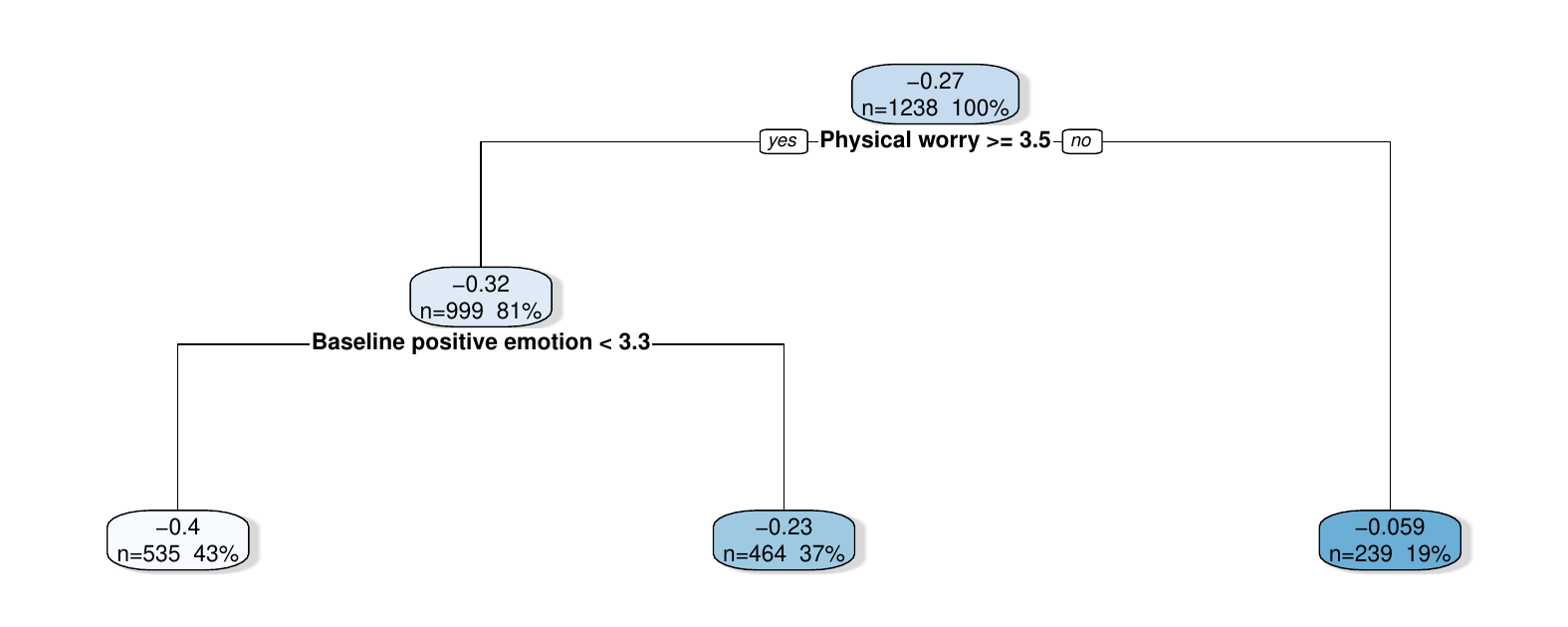}
    \caption{Fit-the-fit plot based on a decision tree summary of the estimated
    CNIEs from the TTR learner in the PSACR application.}
    \label{fig:psacr-tree-ttr}
\end{figure}

\begin{figure}[ht!]
    \centering
    \includegraphics[width=0.85\linewidth]{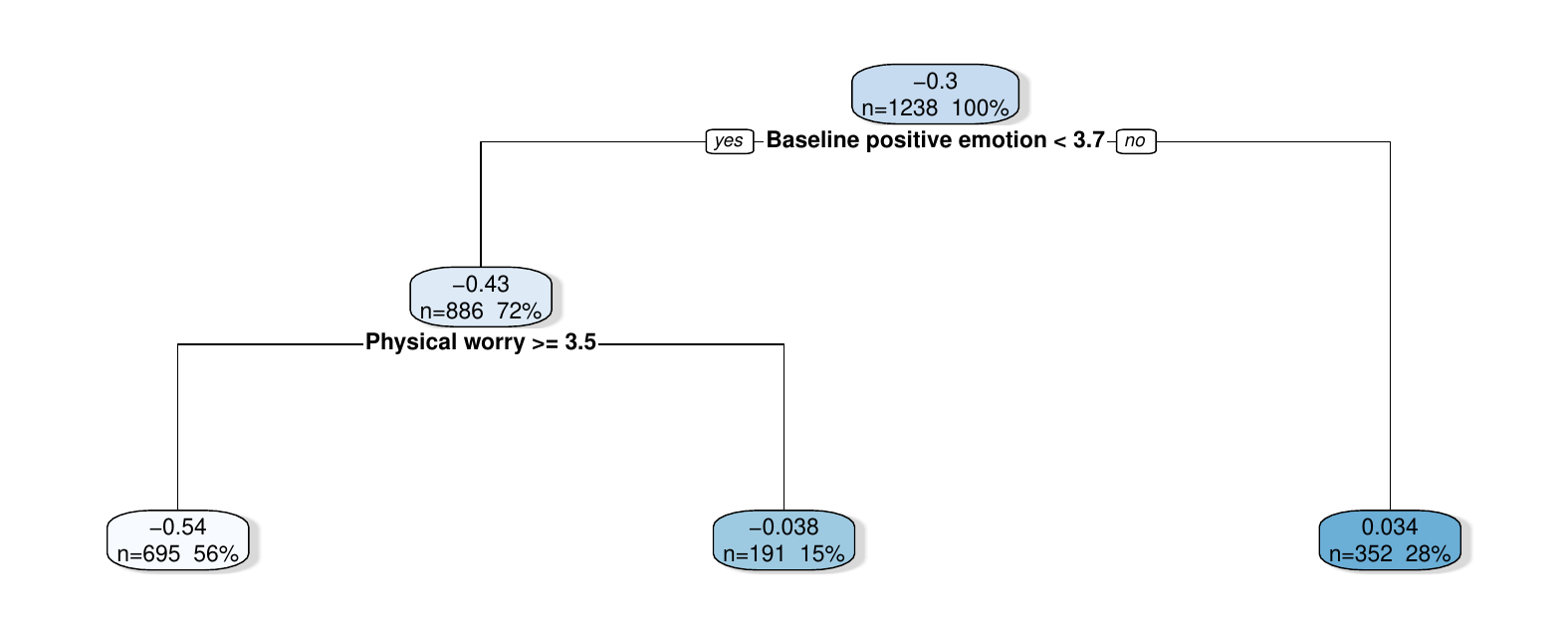}
    \caption{Fit-the-fit plot based on a decision tree summary of the estimated
    CNIEs from the TW learner in the PSACR application.}
    \label{fig:psacr-tree-psw}
\end{figure}

\begin{figure}[ht!]
    \centering
    \includegraphics[width=0.85\linewidth]{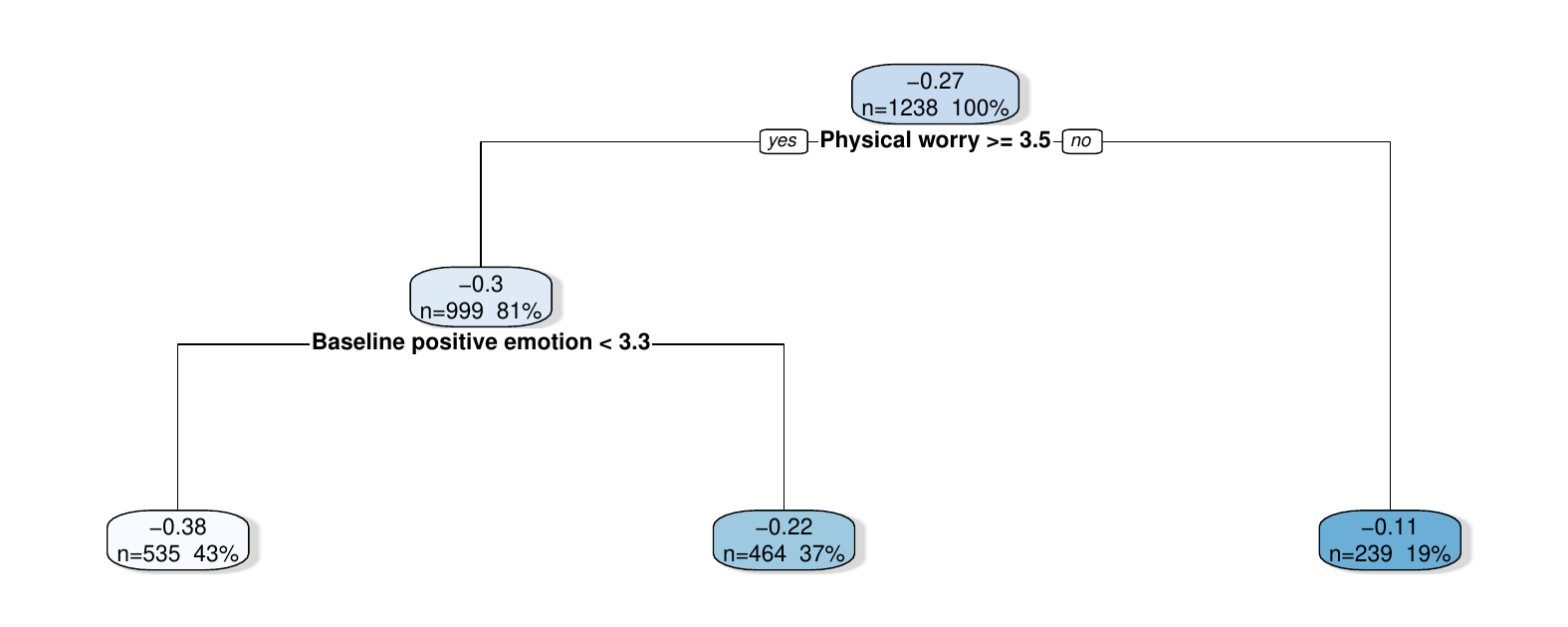}
    \caption{Fit-the-fit plot based on a decision tree summary of the estimated
    CNIEs from the TTW learner in the PSACR application.}
    \label{fig:psacr-tree-tpsw}
\end{figure}

\clearpage

\subsection{STAR application}

\begin{figure}[ht!]
    \centering
    \includegraphics[width=0.95\linewidth]{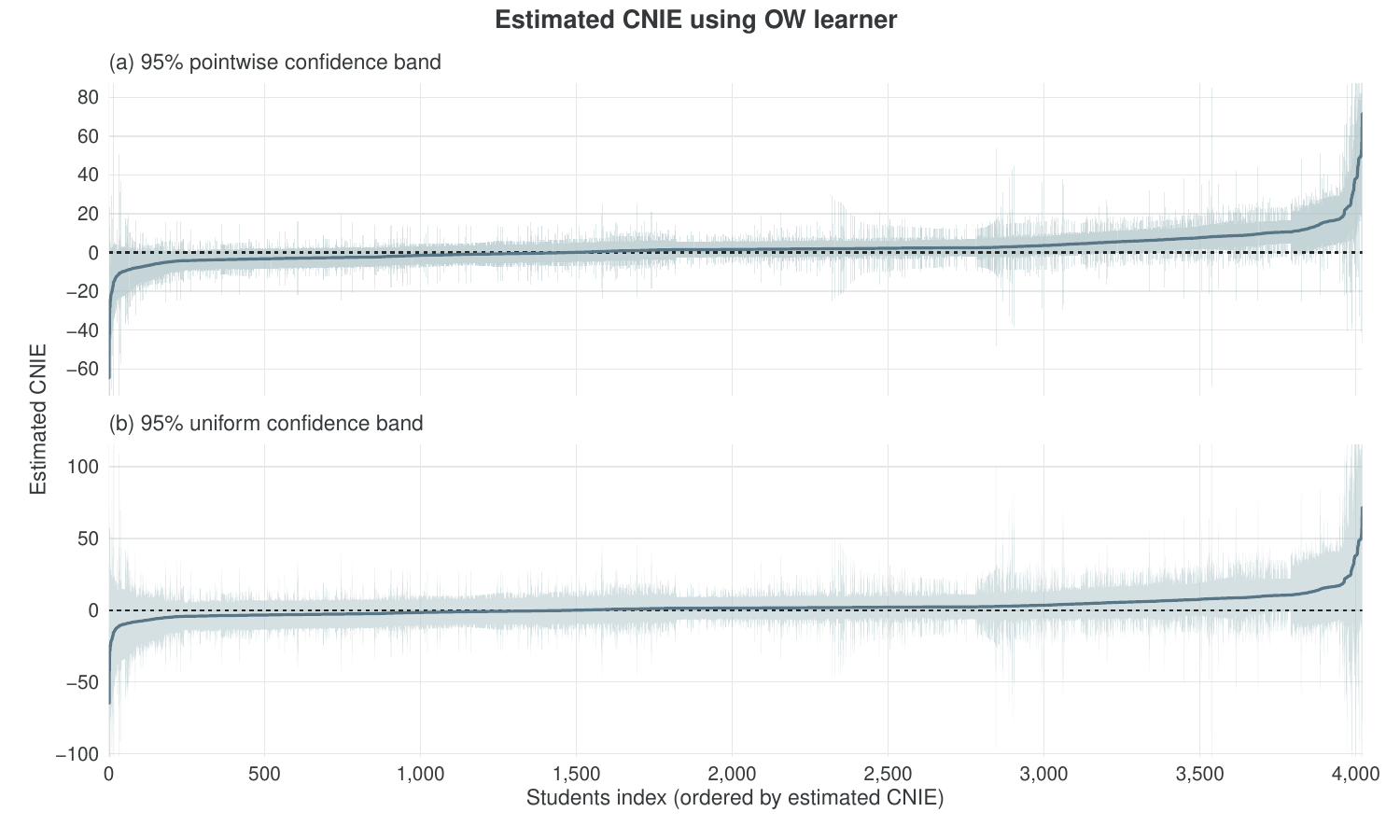}
    \caption{CNIE estimates at the observed covariate profiles in the STAR application, obtained using the OW learner, with pointwise and uniform confidence bands.}
    \label{fig:star-prediction-ow}
\end{figure}

\begin{figure}[ht!]
    \centering
    \includegraphics[width=0.95\linewidth]{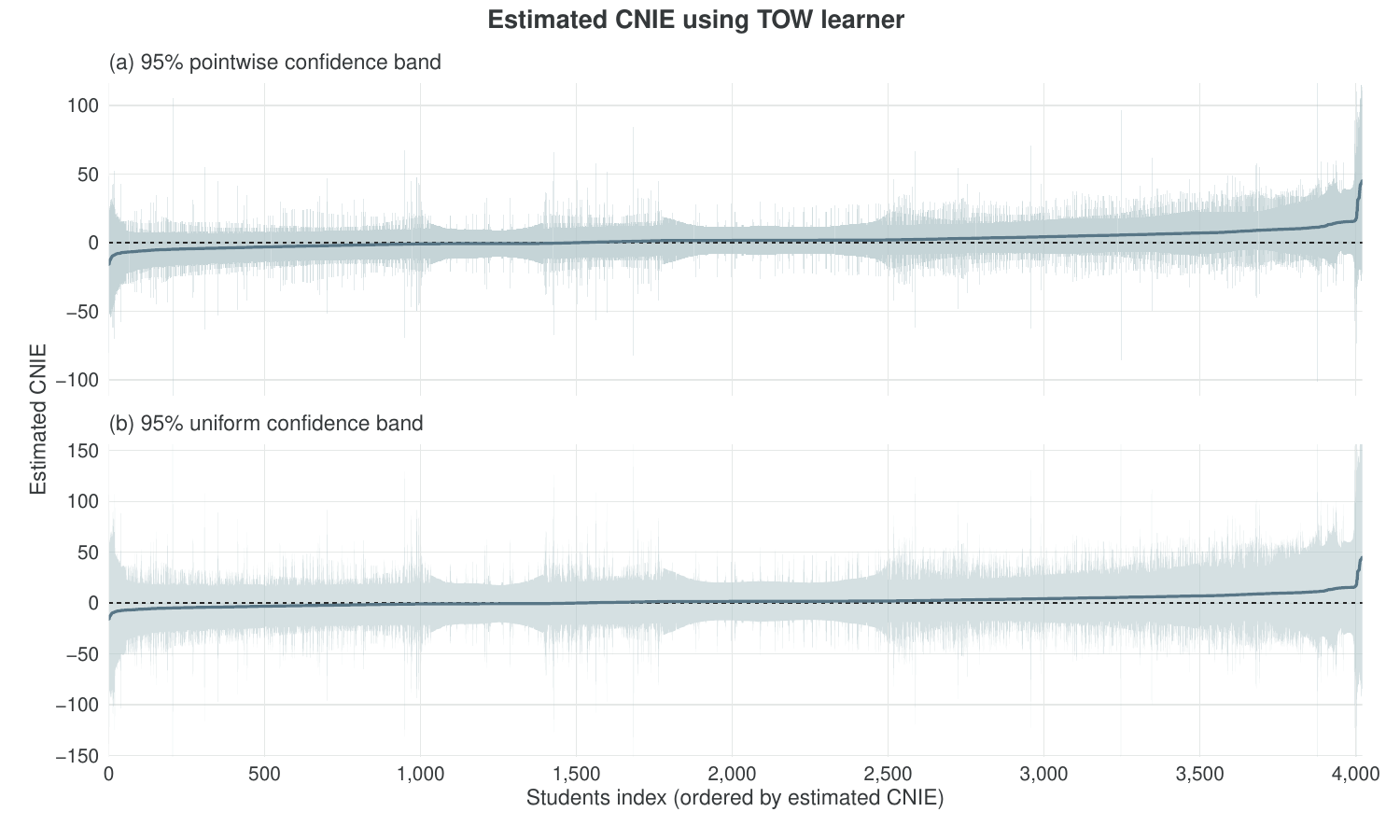}
    \caption{CNIE estimates at the observed covariate profiles in the STAR application, obtained using the TOW learner, with pointwise and uniform confidence bands.}
    \label{fig:star-prediction-tow}
\end{figure}

\begin{figure}[ht!]
    \centering
    \includegraphics[width=0.95\linewidth]{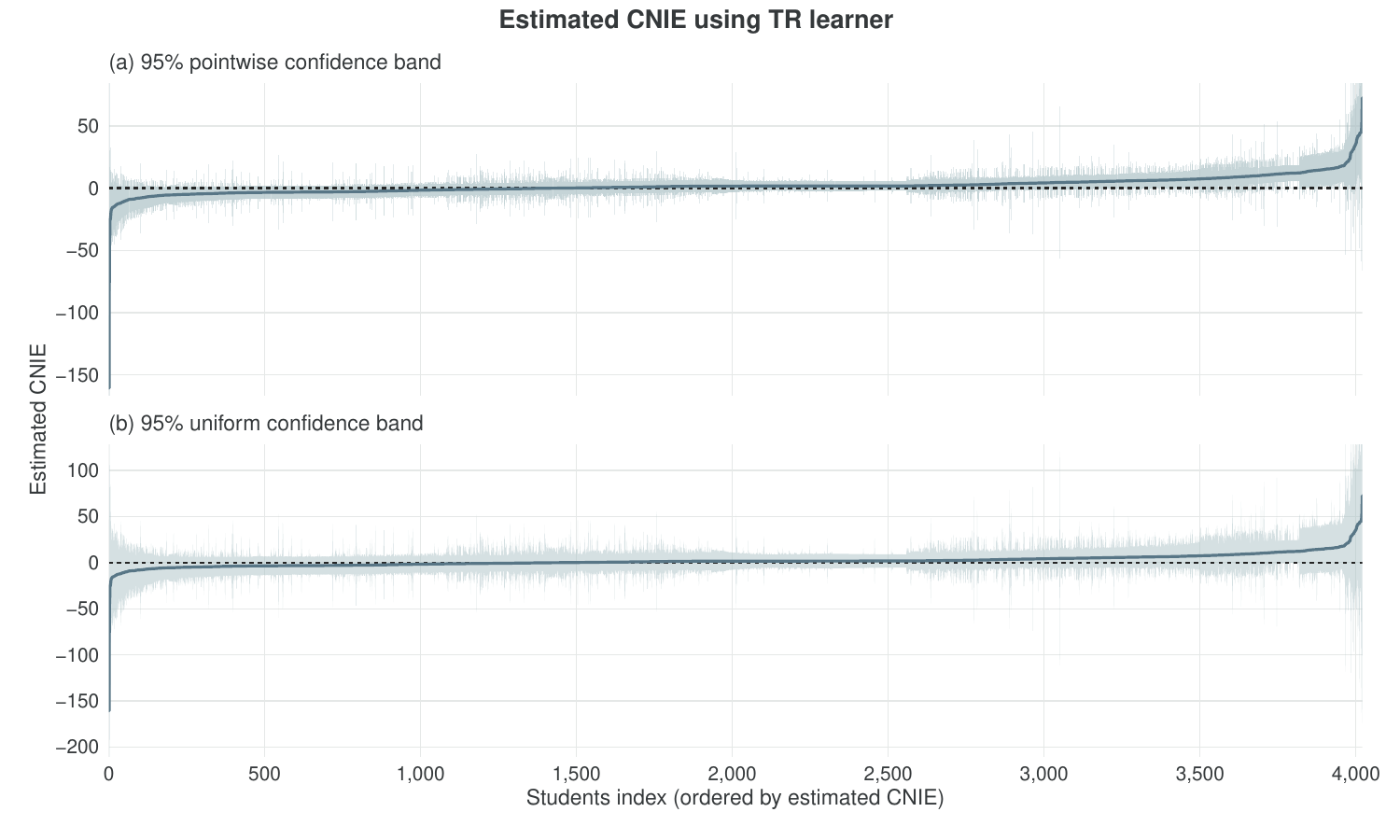}
    \caption{CNIE estimates at the observed covariate profiles in the STAR application, obtained using the TR learner, with pointwise and uniform confidence bands.}
    \label{fig:star-prediction-tr}
\end{figure}

\begin{figure}[ht!]
    \centering
    \includegraphics[width=0.95\linewidth]{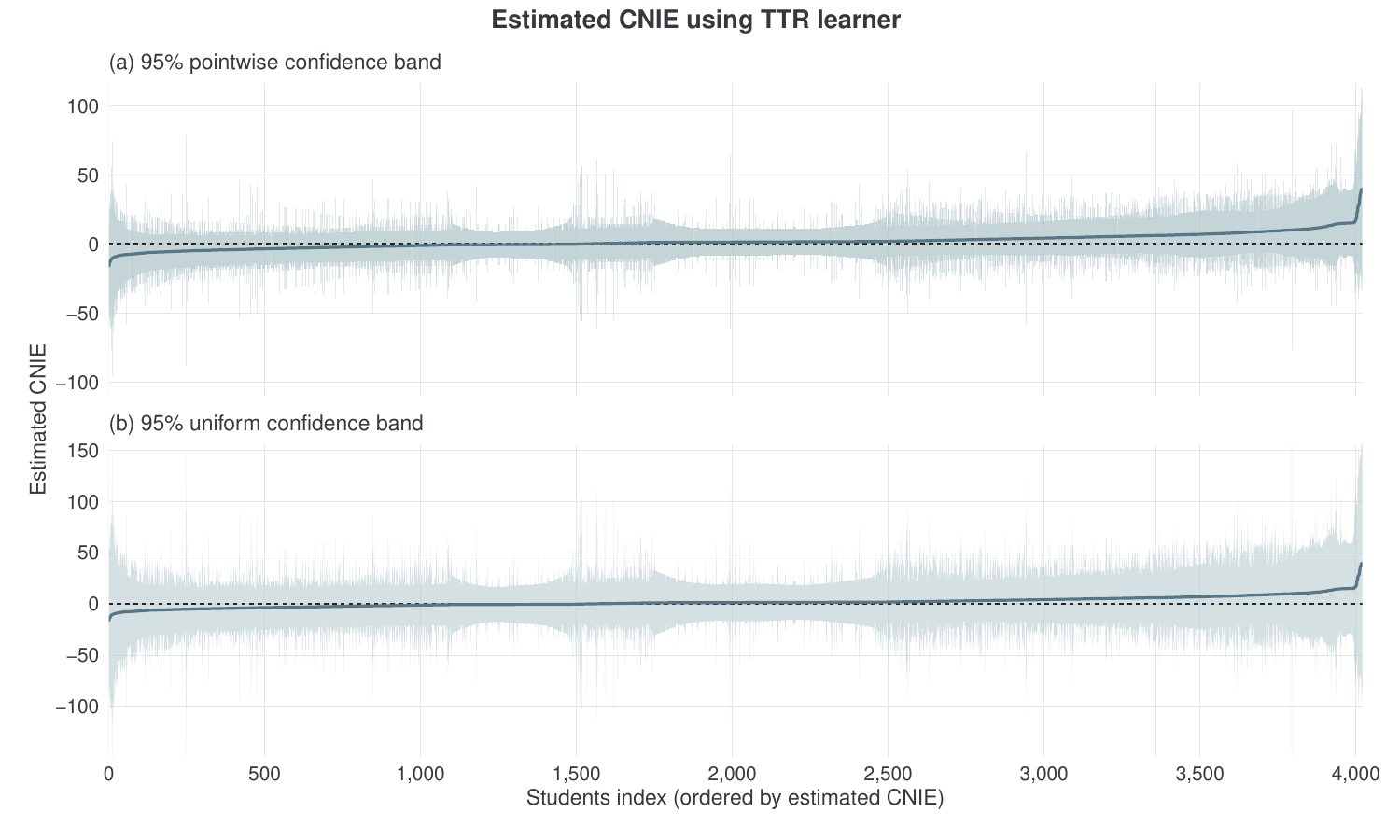}
    \caption{CNIE estimates at the observed covariate profiles in the STAR application, obtained using the TTR learner, with pointwise and uniform confidence bands.}
    \label{fig:star-prediction-ttr}
\end{figure}

\begin{figure}[ht!]
    \centering
    \includegraphics[width=0.95\linewidth]{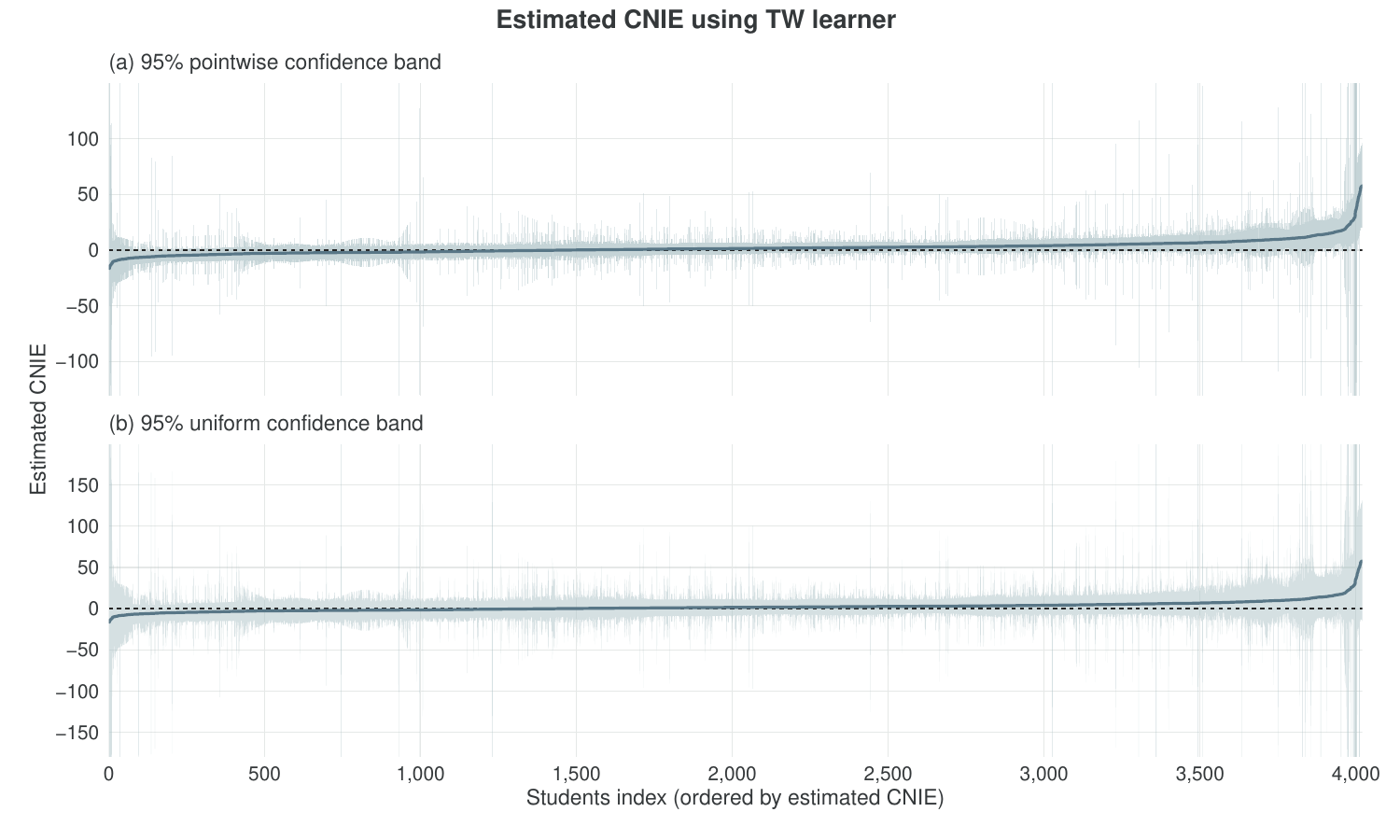}
    \caption{CNIE estimates at the observed covariate profiles in the STAR application, obtained using the TW learner, with pointwise and uniform confidence bands.}
    \label{fig:star-prediction-psw}
\end{figure}

\begin{figure}[ht!]
    \centering
    \includegraphics[width=0.95\linewidth]{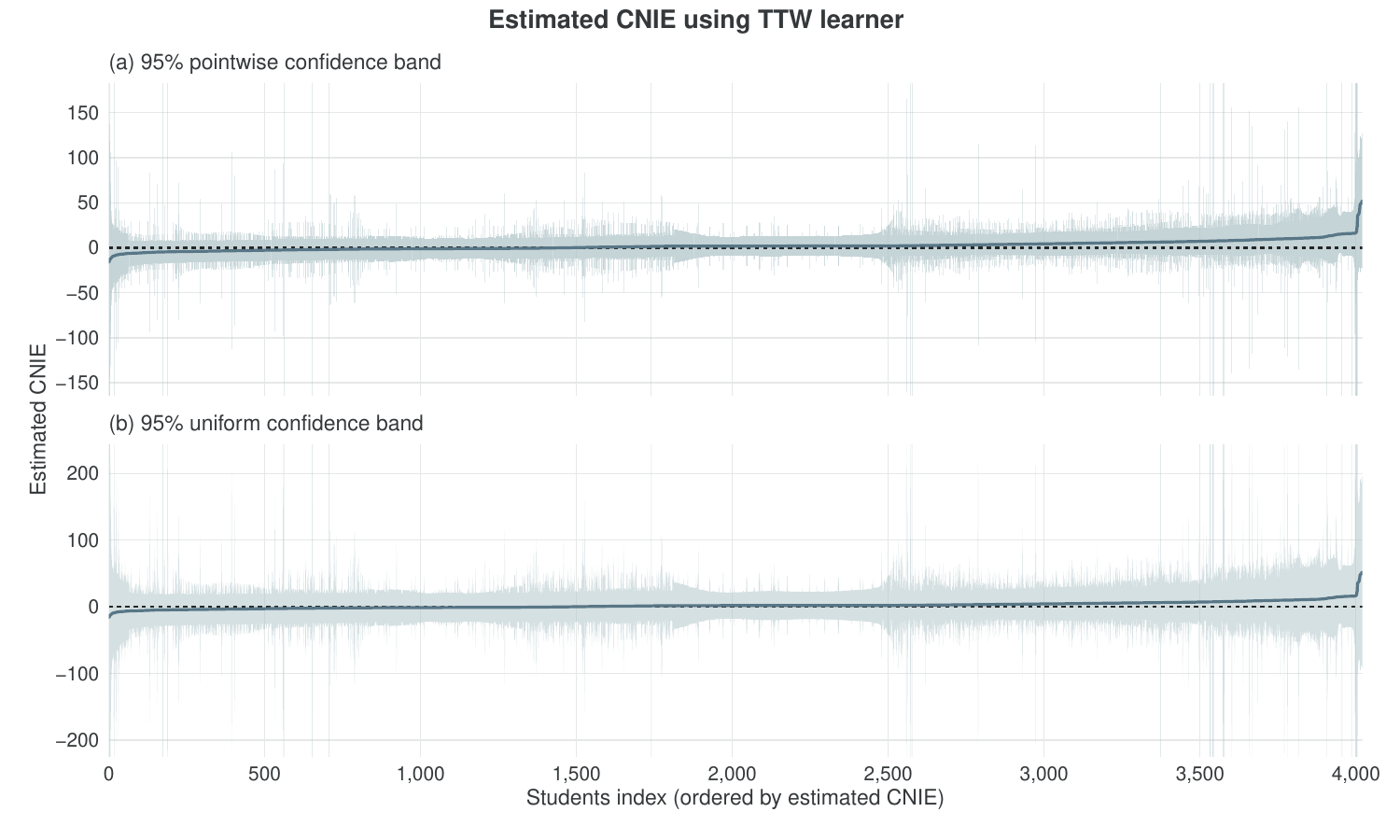}
    \caption{CNIE estimates at the observed covariate profiles in the STAR application, obtained using the TTW learner, with pointwise and uniform confidence bands.}
    \label{fig:star-prediction-tpsw}
\end{figure}

\begin{figure}[ht!]
    \centering
    \includegraphics[width=0.85\linewidth]{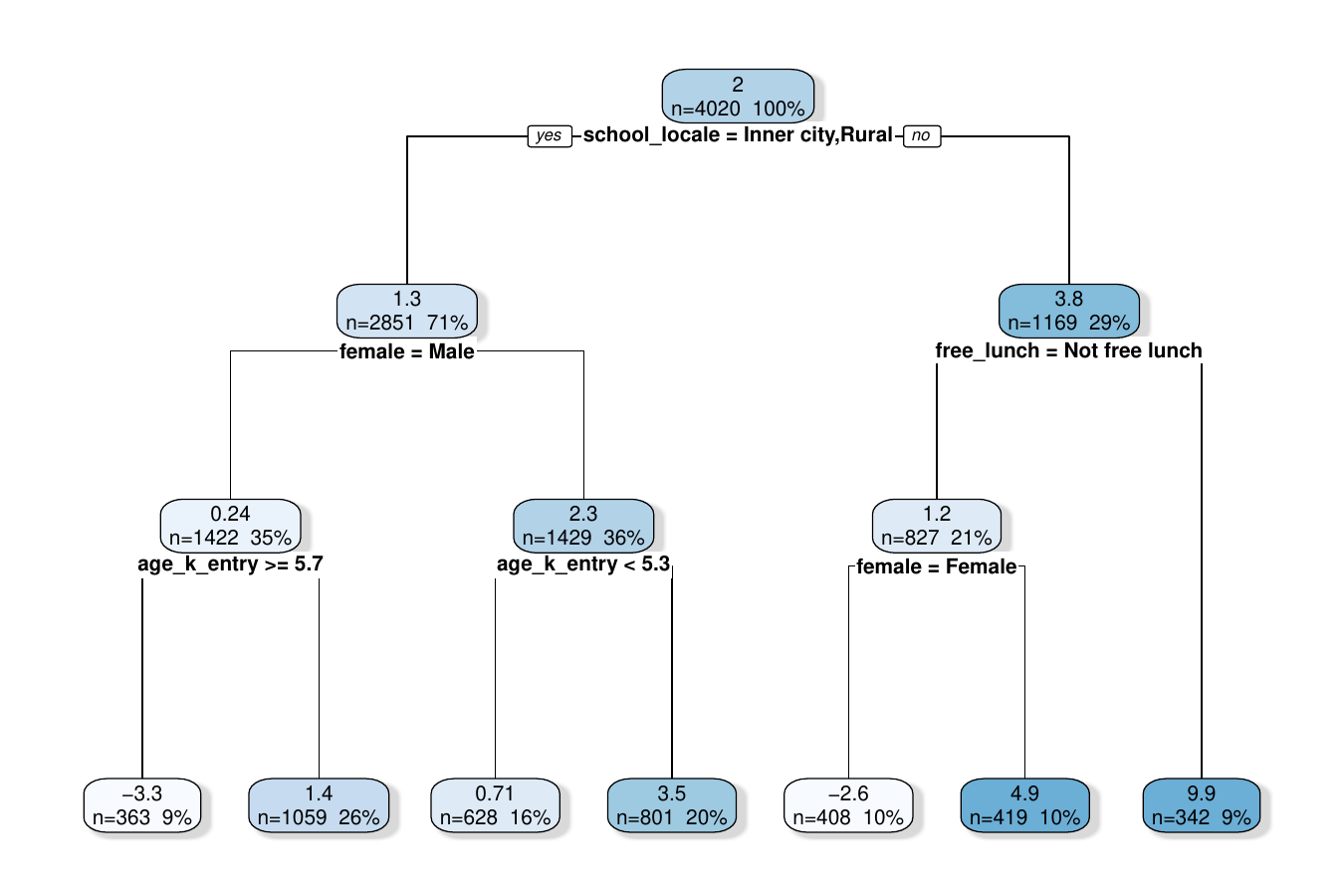}
    \caption{Fit-the-fit plot based on a decision tree summary of the CNIEs estimated using the TOW learner in the STAR application.}
    \label{fig:star-tree-tow}
\end{figure}

\begin{figure}[ht!]
    \centering
    \includegraphics[width=0.85\linewidth]{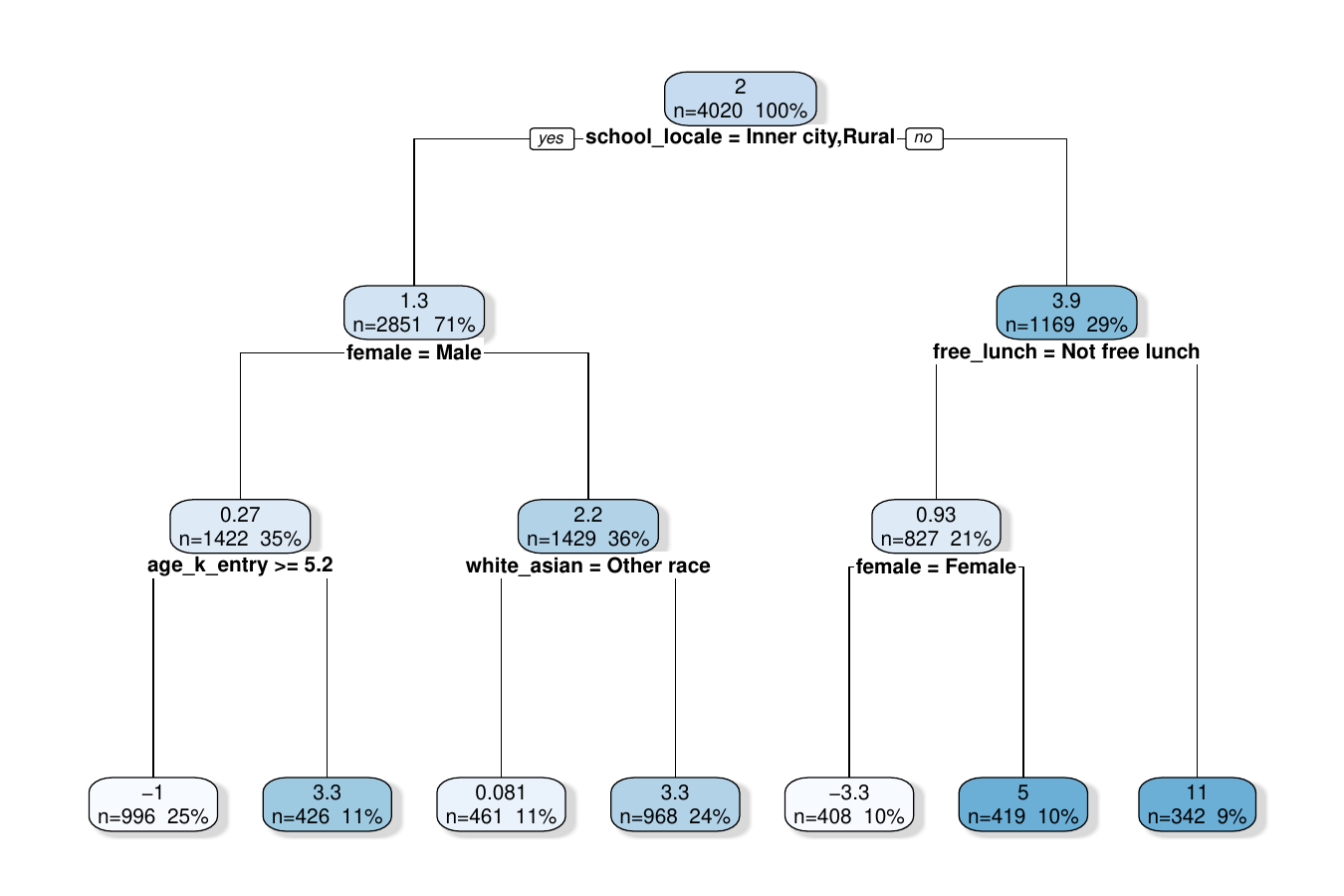}
    \caption{Fit-the-fit plot based on a decision tree summary of the CNIEs estimated using the OW learner in the STAR application.}
    \label{fig:star-tree-ow}
\end{figure}

\begin{figure}[ht!]
    \centering
    \includegraphics[width=0.85\linewidth]{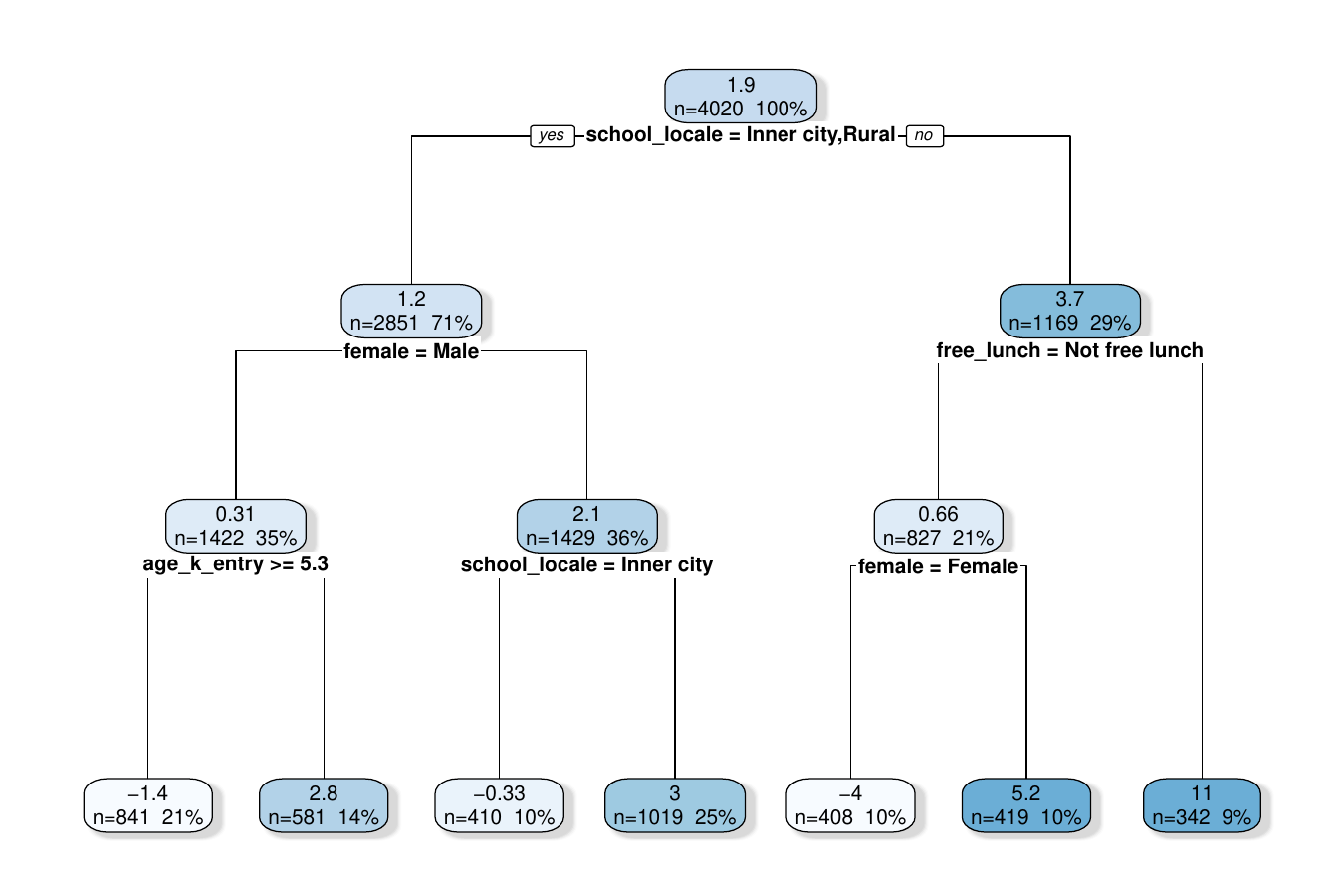}
    \caption{Fit-the-fit plot based on a decision tree summary of the CNIEs estimated using the TR learner in the STAR application.}
    \label{fig:star-tree-tr}
\end{figure}

\begin{figure}[ht!]
    \centering
    \includegraphics[width=0.85\linewidth]{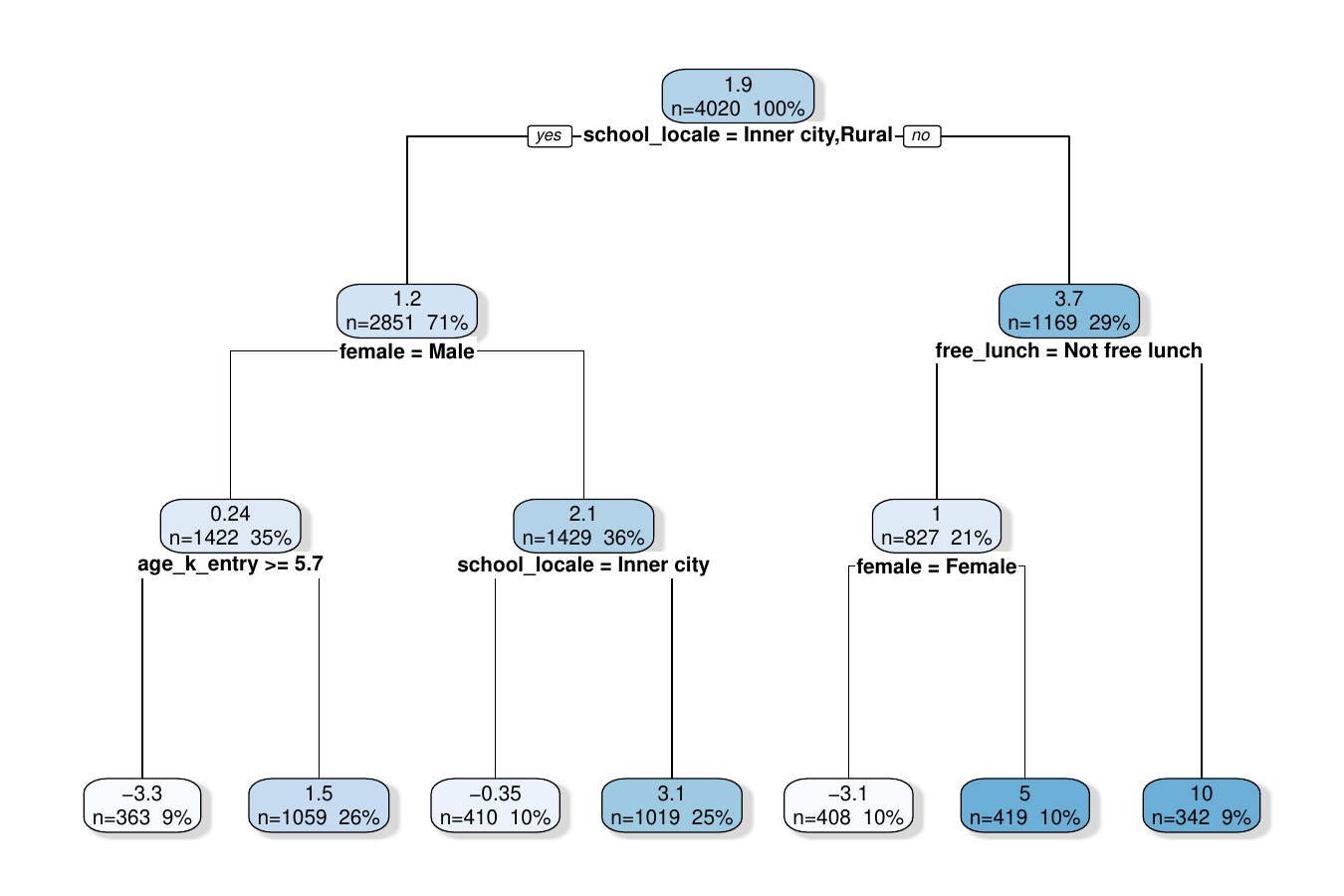}
    \caption{Fit-the-fit plot based on a decision tree summary of the CNIEs estimated using the TTR learner in the STAR application.}
    \label{fig:star-tree-ttr}
\end{figure}

\begin{figure}[ht!]
    \centering
    \includegraphics[width=0.85\linewidth]{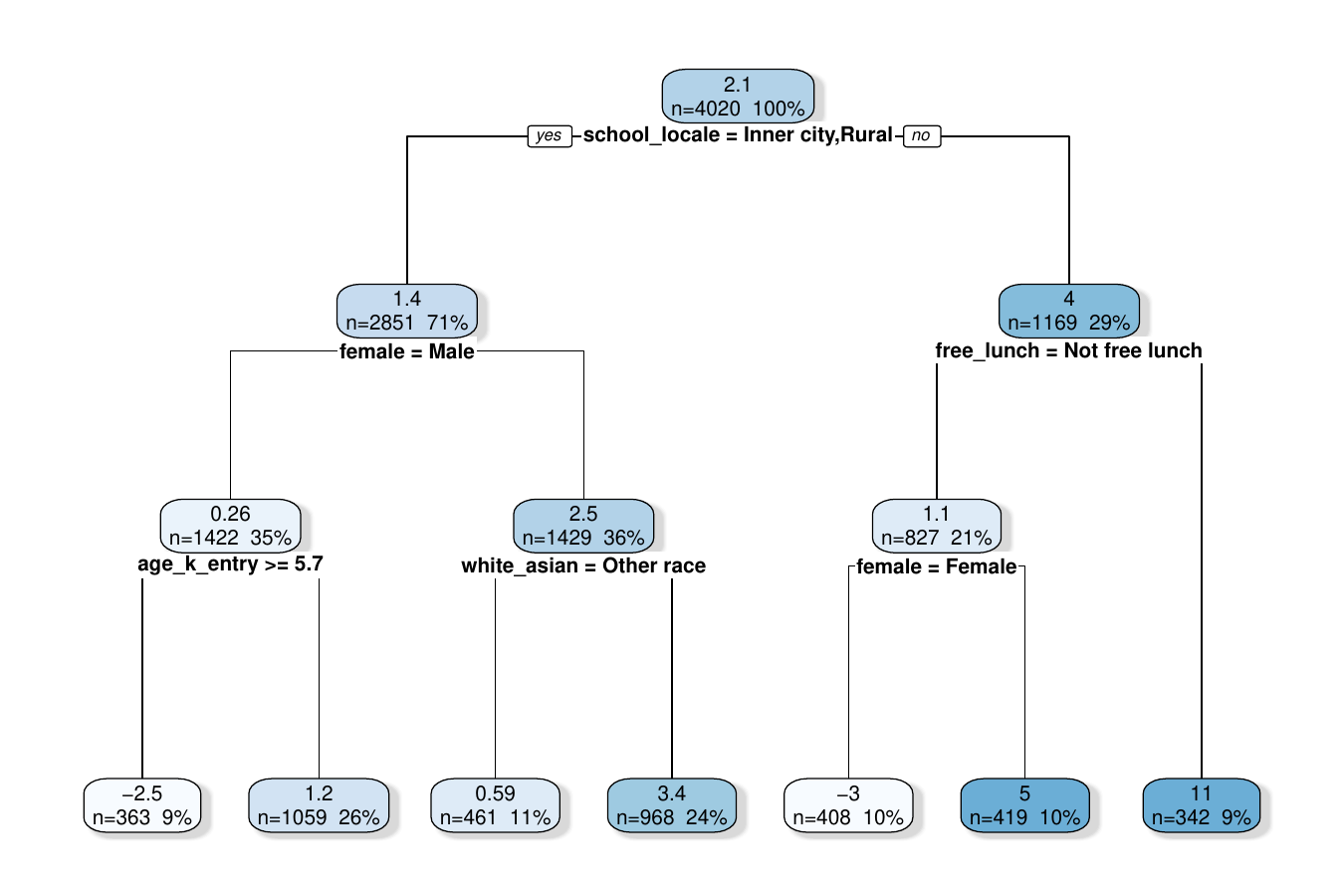}
    \caption{Fit-the-fit plot based on a decision tree summary of the CNIEs estimated using the TW learner in the STAR application.}
    \label{fig:star-tree-psw}
\end{figure}

\begin{figure}[ht!]
    \centering
    \includegraphics[width=0.85\linewidth]{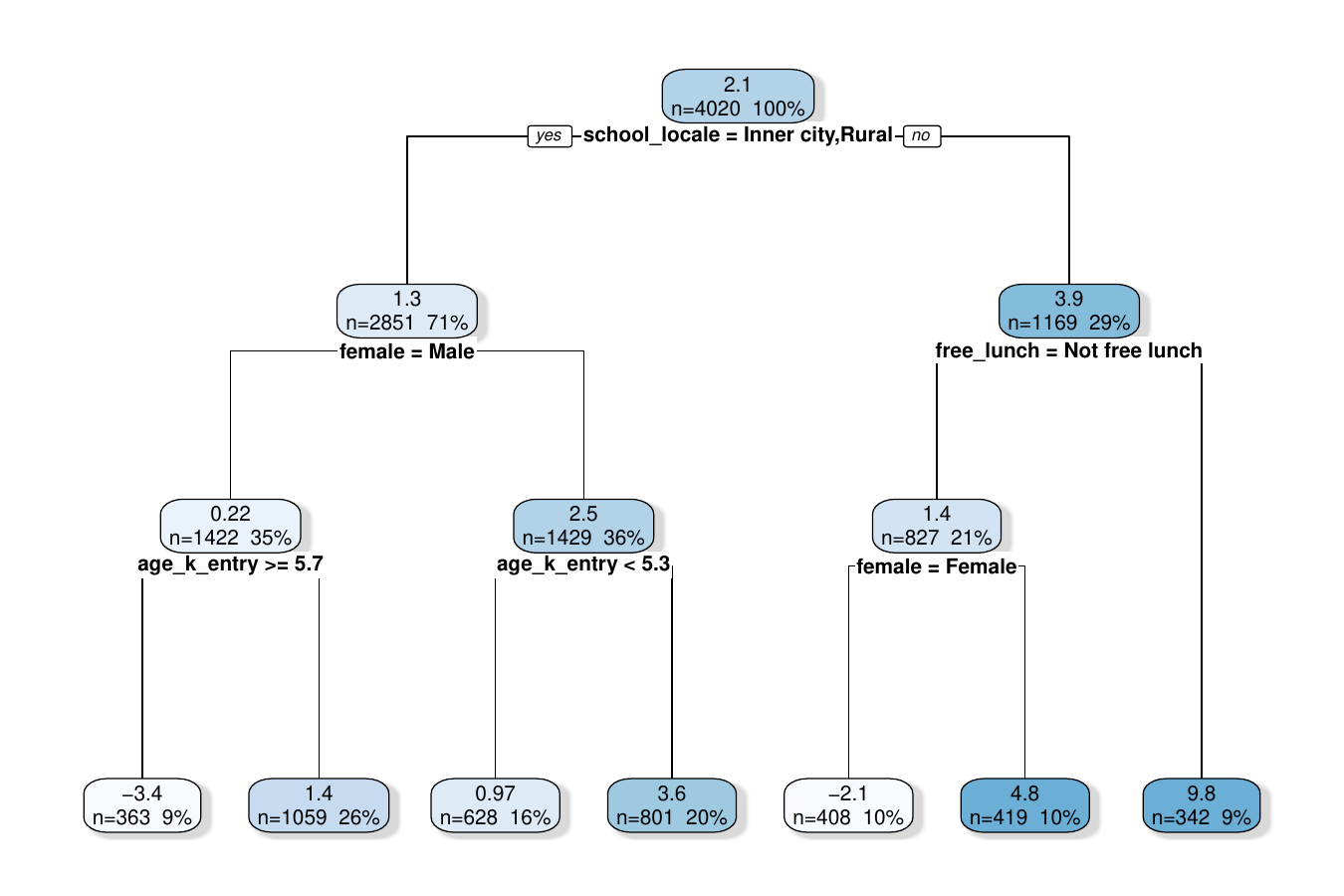}
    \caption{Fit-the-fit plot based on a decision tree summary of the CNIEs estimated using the TTW learner in the STAR application.}
    \label{fig:star-tree-tpsw}
\end{figure}

\clearpage


\end{document}